\newcolumntype{L}[1]{>{\raggedright\let\newline\\\arraybackslash\hspace{0pt}}m{#1}}
\newcolumntype{C}[1]{>{\centering\let\newline\\\arraybackslash\hspace{0pt}}m{#1}}
\newcolumntype{R}[1]{>{\raggedleft\let\newline\\\arraybackslash\hspace{0pt}}m{#1}}
\newtheoremstyle{nopunct}
{\topsep}
{\topsep}
{\itshape}
{}
{\bfseries}
{}
{ }
{\thmname{#1}~\thmnumber{#2} \thmnote{(#3)}}
\theoremstyle{nopunct}
\newtheorem{lem}{Lemma}
\newtheorem{thm}{Theorem}
\newtheorem{prop}{Proposition}
\theoremstyle{definition}
\newtheorem{assumption}{Assumption}
\theoremstyle{remark}
\theoremstyle{plain}
\newtheorem*{assumptiona*}{\assumptionletter}
\providecommand{\assumptionletter}{}
\renewcommand{\@algocf@capt@plain}{above}
\definecolor{darkred}{RGB}{144,0,0}
\definecolor{darkblue}{RGB}{0,0,144}
\newcommand{\var}{\mathbb{V}}
\newcommand{\cov}{\operatorname{Cov}}
\newcommand{\argmax}{\operatornamewithlimits{arg\hspace{0.1em} max}}
\newcommand{\argmin}{\operatornamewithlimits{arg\hspace{0.1em} min}}
\newcommand{\rank}{\operatorname{rank}}
\newcommand{\I}{{I}}
\newcommand{\ind}[1]{{\mathbf{1}\large[{#1}\large]}}
\newcommand{\iid}{\overset{\mathrm{i.i.d.}}\sim}
\newcommand{\tol}{\mathtt{tol}}
\newcommand{\xmat}{{\pmb{X}}}
\newcommand{\zmat}{\pmb{z}}
\newcommand{\X}{X}
\newcommand{\hc}{\beta}
\newcommand{\Q}{\pmb{M}}
\newcommand{\ii}{\mathrm{i}}
\newcommand{\supp}{\operatorname{supp}}
\newcommand{\thetamle}{\hat {\theta}^{\operatorname{MLE}}}
\newcommand{\betamle}{\hat {\beta}^{\operatorname{MLE}}}
\newcommand{\betafeb}{\hat {\beta}^{\operatorname{F-EB}}}
\newcommand*{\dt}[1]{%
  \accentset{\mbox{\large\bfseries .}}{#1}}
\renewcommand{\underbar}[1]{\underaccent{\bar}{#1}}
\newcommand{\NPM}{\mbox{NP\hspace{0.05em}MLE}}
\newcommand{\WFR}{{Wasserstein-Fisher-Rao}}
\newcommand{\wfr}{\mathrm{WFR}}
\newcommand{\was}{\mathrm{W}}
\newcommand{\fr}{\mathrm{FR}}
\newcommand{\dbeta}{{d_{\beta}}}
\newcommand{\ddelta}{{d_{\delta}}}
\newcommand{\dwhole}{{d_{\theta}}}
\newcommand{\dgamma}{{d_{\gamma}}}
\newcommand{\asto}{\overset{\operatorname{a.s.}}{\longrightarrow}}
\newcommand{\K}{\mathcal{K}}
\newcommand{\psub}{\mathcal{P}_{\operatorname{sub}}}
\newcommand{\SX}{\supp(\X_i)}
\newcommand{\oo}{o}
\newcommand{\Tau}{\tau}
\newcommand{\oracle}{*}
\newcommand{\thetapm}{{\theta}^{\oracle}}
\newcommand{\betapm}{{\beta}^{\oracle}}
\newcommand{\taupm}{{\Tau}^{\oracle}}
\newcommand{\EB}{{\mathrm{EB}}}
\newcommand{\taueb}{\hat{\Tau}^{\mathrm{EB}}}
\newcommand{\sigmaeb}{\hat{\sigma}^{2,\mathrm{EB}}}
\newcommand{\rhoeb}{\hat{\rho}^{\mathrm{EB}}}
\newcommand{\citeay}[1]{\citeauthor{{#1}}, \citeyear{{#1}}}
\begin{document}

\title{Empirical Bayes Estimation in Heterogeneous Coefficient Panel Models}
\author{\textsc{Myunghyun Song}\thanks{Department of Economics, Columbia University} \and \textsc{Sokbae Lee}\thanks{Department of Economics, Columbia University and Institute for Fiscal Studies} \and \textsc{Serena Ng}\thanks{Department of Economics, Columbia University and NBER\newline
We thank Jiaying Gu, Roger Koenker, Soonwoo Kwon, Bodhisattva Sen, Kaizheng Wang, and 
seminar participants at Indiana University, the University of Pennsylvania, the University of Toronto, and Yale University for helpful and encouraging comments. 
The third  author would like to thank the National Science Foundation for financial support  (SES: 2018369).}}

\date{February 6. 2026}

\maketitle

\thispagestyle{empty}
\setcounter{page}{0}

\begin{abstract}
We develop an empirical Bayes (EB) G-modeling framework for short-panel linear models with  nonparametric prior for the random intercepts, slopes, dynamics, and non-spherical error variances. We establish identification and consistency of the nonparametric maximum likelihood estimator (NPMLE) under general conditions, and provide low-level sufficient conditions for several models of empirical interest. Conditions for regret consistency of the EB estimators are also established. The NPMLE is computed using a Wasserstein-Fisher-Rao gradient flow algorithm adapted to panel regressions. Using data from the Panel Study of Income Dynamics, we find that the slope coefficient for potential experience is substantially heterogeneous and negatively correlated with the random intercept, and that error variances and autoregressive coefficients vary significantly across individuals. The EB estimates reduce mean squared prediction errors relative to individual maximum likelihood estimates.
\\
\\
\noindent \textsc{Keywords}: G-modeling, nonparametric maximum likelihood, shrinkage estimation, Wasserstein-Fisher-Rao gradient flow, income dynamics 
\end{abstract}

% %\noindent JEL Classification
% \newpage

% \todo[inline]{
% To-do list as of Nov 24, 2025 \\
% - SL will check intro and conclusions again. \\
% %- SL will check section 4 and move some of rate discussion in section 4 to the conclusions section. \\
% %- SL will proofread section 5 again and possibly add a short discussion regarding LMS optimality. \\
% - MS will check section 6.}

\newpage

\section{Introduction}

Understanding differences in individual behavior is the primary goal of many economic analyses. Why do workers with similar experience exhibit different earnings profiles? Why do economic agents with similar observed characteristics respond differently to economic shocks and policy interventions? A challenge for empirical researchers in addressing these questions is the presence of individual-level differences that affect behavior but are not captured by observed variables, commonly referred to as unobserved heterogeneity. While it is not hard to allow for unit-specific intercepts in linear regressions, entertaining heterogeneous slope parameters and innovation variances is more challenging, and in a frequentist setting, this usually requires parametric assumptions to gain tractability at the cost of robustness. 

This paper considers a setting in which we observe $\{ (Y_{it}, X_{2,it}) : t = 1,\ldots,T \}$ for each unit $i = 1,\ldots,N$, assuming they are independent and identically distributed (i.i.d.) across $i$. We propose a framework for short-$T$ linear panel models with heterogeneous intercepts, slope coefficients, and error distributions. To fix ideas, consider the dynamic linear panel model with an exogenous covariate $X_{2,it}$ and AR(1) errors:
\begin{subequations}\label{def:model:hivxd}
\begin{align}
Y_{it} &= a_i + b_i X_{2,it} + u_{it}, \\
u_{it} &= \rho_i u_{it-1} + \sigma_i e_{it}, 
\quad e_{it} \overset{\mathrm{i.i.d.}}{\sim} \mathcal{N}(0,1), \qquad t \ge 1, 
\label{def:model:hivxd:2}
\end{align}
\end{subequations}
where 
$a_i \in \mathbb{R}$ is a random intercept, 
$b_i \in \mathbb{R}$ a random slope, 
$\rho_i \in (-1,1)$ a random autoregressive coefficient,  
$\sigma_i^2 > 0$ a random variance,  
and $u_{i0}$ follows the stationary distribution $\mathcal{N}(0,\sigma_i^2/(1-\rho_i^2))$ conditional on $(\sigma_i^2,\rho_i)$. We refer to the model in \eqref{def:model:hivxd} as HIVDX since it admits heterogeneity in four dimensions: intercepts ($I$), error variances ($V$), dynamics ($D$) through the autoregressive coefficient, and the coefficient on the exogenous covariate $X_2$. The widely studied Gaussian location-scale model $Y_{it} = a_i + \sigma_i e_{it}$,
as well as models that impose $\rho_i = \rho$ and $b_i = b$, are special cases of HIVDX.

We analyze the HIVDX model and the special cases in Section~\ref{sec:model} using a general framework that partitions the random parameters into two sets: regression coefficients $\beta_i = (a_i, b_i)$ of dimension $d_\beta = 2$, and covariance parameters $\delta_i = (\sigma_i^2, \rho_i)$ of dimension $d_\delta = 2$. The parameters can be estimated separately for each $i$ by maximum likelihood. Instead, we assume that the random parameters $\theta_i \equiv (\beta_i, \delta_i)$ are drawn from a true but unknown common distribution $G_*$. Viewing $G_*$ as a prior, the optimal decision rule for each unit under compound squared-error loss is the posterior mean
$\thetapm_i := \mathbb{E}_{G_*}[\theta_i \mid Y_i, \X_{2,i}]$,
where $Y_i = (Y_{i1}, \ldots, Y_{iT})'$,  $X_{2,i} = (X_{2,i1}, \ldots, X_{2,iT})'$, and $\mathbb{E}_{G_*}[\cdot]$ denotes expectation with respect to $\theta_i \sim G_*$.

Specifying that $\theta_i \stackrel{\text{i.i.d.}}{\sim} G_*$ is grounded in the principle of empirical Bayes (EB) introduced by \citet{robbins:56}. This approach is intrinsically frequentist because it estimates the prior distribution from the data. EB estimation improves on individual-level maximum likelihood estimation through shrinkage, borrowing strength across units to learn both the individual parameters $\theta_i$ and their population distribution $G_*$.
Following the terminology of \citet{efron-14}, the procedure is referred to as G-modeling  when $G_*$ is estimated directly.
The resulting EB estimator is given by $\hat{\theta}^{\mathrm{EB}}_i := \mathbb{E}_{\hat{G}}[\theta_i \,|\, Y_i, \X_{2,i}]$, where $\hat{G}$ denotes an estimator of $G_*$. An EB estimator $\hat \Tau_i^{\EB}(\theta_i)=\mathbb E_{\hat G}[\Tau_i \,|\, Y_i,X_{2,i}] $ for general unit-specific parameters $\Tau_i$ defined as functions of $\theta_i$ can also be defined. While parametric G-modeling is convenient, nonparametric modeling guards against misspecification.
Recent reviews of EB and its applications in economics include \citet{KoenkerGu2024:JPEmicro}, \citet{KoenkerGu2025book}, and \citet{Walters:24}. To our knowledge, no nonparametric EB analysis has considered all four sources of heterogeneity present in the HIVDX model.

We study identification and estimation of the HIVDX model within an EB framework. The generality of our model with $d_\beta > 1$ and $d_\delta > 0$ introduces several challenges. First, while assuming the identification of $G_*$ in the Gaussian location model with i.i.d.\ errors  without formal proof may be defensible, this assumption cannot be taken for granted in more complex settings. 
Theorem~\ref{thm:identification} establishes  that identification requires: (i) the design matrix to have full column rank almost surely, and (ii) the covariance structure to remain identifiable after removing the mean component by appropriately differencing the data.
These conditions imply that $d_\beta < T$, leaving $T - d_\beta$ degrees of freedom to identify the covariance structure; this imposes a constraint on the dimension of heterogeneous variance parameters ($d_\delta$).

We apply Theorem~\ref{thm:identification} to the location model ($d_\beta=1$) and illustrate how differencing facilitates identification. Specifically, we show that $T \ge 3$ suffices for AR(1) errors (Proposition~\ref{prop:dyp-identification}), whereas ARMA(1,1) errors require $T \ge 4$ (Proposition~\ref{prop:identification-ARMA}). The analysis is, however, more involved when $d_\beta > 1$, as the mean and variance components are convolved within the outcome distribution. Now whether and how $G_*$ can be identified depends critically on the specific time-series variation of the covariate. 
We examine two canonical examples.
The first example includes models with a common time trend and heterogeneous initial conditions (e.g., potential experience in income dynamics) as a special case. For this example, Proposition \ref{prop:HIVDX-id-T=4} establishes that $G_*$ is nonparametrically identified with $T \ge 4$.
The second example includes a common mean shift as a special case. For this case, Proposition \ref{prop:HIVDX-id-T=5} shows that identification of $G_*$ is possible with $T\ge 5$. Identification must be worked out on a case by case basis.

The second challenge lies in establishing  consistent estimation of the HIVDX model.
The classic proof of \citet{kiefer1956consistency} assumes compactness, but when $\hc_i$ is allowed to be unbounded, the space of mixing distributions is non-compact.
While the case of $\dbeta = 1$ can be resolved by embedding $\mathbb{R}$ into its compact completion $[-\infty, \infty]$, the multidimensional setting ($\dbeta \ge 2$) presents unique difficulties because mass can diverge along infinitely many distinct directions.
To address this, Section~\ref{sec:consistency} adopts a metric induced by the vague topology to effectively compactify the space of prior distributions.
Theorem~\ref{thm:consistency} establishes  almost sure consistency of the
nonparametric maximum likelihood estimator (\NPM) with respect to this metric.
Building on this result, Theorem~\ref{thm:EB-consistency} in Section~\ref{subsec:EB-consistency} provides conditions for regret consistency (equivalently, asymptotic optimality) of the EB estimator for $\Tau_i$, taking into account that $\Tau_i$  may involve functions of the random coefficients and the data. For example,  the optimal (but infeasible) one-step-ahead prediction under quadratic loss is $\taupm_i = \mathbb{E}_{G_*}[a_i  + b_i X_{2,i T+1} + \rho_i Y_{iT} \,|\, Y_{i}, X_{i},X_{2,i T+1}] 
 - \mathbb{E}_{G_*}[\rho_i a_i  +\rho_i b_i X_{2,iT} \,|\, Y_{i}, X_{i}]$. The prediction error $\hat \tau^{\EB}_i-\tau_i^*$ thus involves interaction of the errors in  estimating the random coefficients and $(Y_i,X_i)$. Proposition~\ref{prop:EB-HIVDX} provides low-level sufficient conditions for the EB prediction to be regret-consistent.

The third challenge arises from the computational complexity associated with multiple sources of heterogeneity.
In location models, the NPMLE of $G$ is typically computed by optimizing over a fixed grid of $m$ points.
The location of the grid points and the size of the grid are usually fixed during the iterations.
However, as $d_\theta$ increases, the grid  required to maintain accuracy needs to grow exponentially.
It would be desirable to have an algorithm that allows the support points to move adaptively.
This motivates a shift from Euclidean optimization to optimization on the space of probability measures, where the updates are governed by gradient flows.
In this framework, the choice of geometry is critical.
For example, the Fisher-Rao gradient flow corresponds to reweighting fixed support points, whereas the Wasserstein gradient flow transports mass across the parameter space, effectively moving the support.
Recently, \citet{yan2024learning} demonstrate the theoretical and computational advantages of combining these geometries for the Gaussian location mixture model.
Motivated by this encouraging result, Section~\ref{sec:implementation} develops a Wasserstein-Fisher-Rao (WFR) gradient flow algorithm for panel regressions with multiple sources of heterogeneity.

We use the EB approach to analyze the income data in the Panel Study of Income Dynamics (PSID) in Section~\ref{sec:empirical}.
The model allows earnings $Y_{it}$ to have unit-specific mean  $a_i$ and response $b_i$ to potential experience $X_{2,it}$, with  a heteroskedastic error covariance structure that depends on internal dynamics $\rho_i$ and exposure to shocks through $\sigma_i^2$.  We find that (i) the slope coefficient $b_i$ for potential experience exhibits substantial heterogeneity and is negatively correlated with the random intercept $a_i$, and (ii) there is pronounced cross-sectional heterogeneity in both the variance ($\sigma_i^2$) and the dynamics ($\rho_i$).  
The EB estimates reduce mean squared prediction errors relative to individual maximum likelihood estimates. %{Section~\ref{sec:Monte:Carlo} presents results from Monte Carlo experiments calibrated to the application.}
Section~\ref{sec:Monte:Carlo} shows that   features of the EB estimates and forecast error reduction are replicated in   Monte Carlo experiments calibrated to the application.

Before turning to the main part of the paper, we briefly discuss recent work that, while not directly related to our approach, are conceptually connected to the broader EB literature. 
\citet{KevinChen:2024} analyzes conditionally Gaussian settings with known variances \(\sigma_i^2\). He first  studentized the data by estimating the conditional mean and variance of outcomes given \(\sigma_i^2\), and then estimate the prior to plug into  EB decision rules. 
The paper emphasizes the importance of nonparametric estimation in both stages. 
In contrast, we allow for unobserved \(\sigma_i^2\) and heterogeneous dynamics across multiple time periods.

\citet{Kwon:2025} considers a multivariate Gaussian model with a hierarchical prior, while \citet{Cheng-et-al:2025} propose a parametric EB estimator for two-way effects with assortative matching.
Both methods rely on linear shrinkage rules selected to minimize unbiased risk estimates.
In contrast, we adopt a fully nonparametric EB approach via \NPM. 
\citet{Gaillac2024} develops posterior mean estimators for multiple random coefficient models for cross-sectional regressions. 
In a panel extension, the paper constructs time averages for each unit and allows heterogeneity only in time-invariant regressors, while assuming homogeneous coefficients for time-varying regressors. 
In contrast, our framework focuses on random coefficients associated with regressors that vary over time.

\citet{AGT:EBAdaptive} investigate EB methods in compound adaptive experiments, where the outcome of each arm follows a normal distribution with an unknown mean. They demonstrate that the risk guarantees for G-modeling, originally derived under i.i.d.\ sampling, continue to hold for adaptively collected data without  knowledge of the sampling algorithm. In contrast, F-modeling is shown to yield biased estimates in this setting.

Recent work has extended nonparametric EB methods beyond the classical Gaussian location model to a variety of settings, including heteroskedastic models (e.g., \citeay{Bodhi:JRSSB}), 
high-dimensional or multivariate regression (e.g., \citeay{Bodhi:EB-HD}; \citeay{Wu:EB-HD}; \citeay{Jiang:2025}),  and variance estimation (e.g., \citeay{IgnatiadisSen2025AoS}). These works only consider a static model while our focus is the heterogeneous dynamic panel model presented above as HIVDX.

\section{The Econometric Setup and EB Modeling}\label{sec:model}

We consider a panel of observations on $N$ cross-sectional units over $T$ time periods, focusing on a short-panel framework in which $N$ grows while $T$ remains fixed. 
Our main setup is a general heterogeneous coefficient (HC) model that nests the HIVDX model:
\begin{equation}
\label{eq:model}
    Y_i = X_i \beta_i + P_i e_i ,
\end{equation}
where \(Y_i = (Y_{i1}, \ldots, Y_{iT})'\) is a \(T \times 1\) vector of outcomes, \(X_i = (X_{i1}, \ldots, X_{iT})'\) is a \(T \times d_{\beta}\) matrix of observed covariates, and $e_i = (e_{i1}, \ldots, e_{iT})'$ is a vector of independent Gaussian errors.  
We assume that the first column of $X_i$ is $\iota_T$, the $T$-dimensional vector of ones.
For each $i$, $P_i$ is lower triangular, and the corresponding conditional covariance matrix $\Sigma_i \equiv P_i P_i'$ is positive definite.  To capture its dependence on heterogeneous variance parameters $\delta_i \in \mathbb{R}^{\ddelta}$, we write $P_i = P(\delta_i)$, where $\delta \mapsto P(\delta)$ is a continuously differentiable function of $\delta$.
%We refer to \eqref{eq:model} as the heterogeneous coefficient (HC) model. 

To see that the HIVDX model is  a special case of the HC model for particular $(\beta_i, \delta_i)$, note that since $e_{it}$ are i.i.d. $\mathcal{N}(0,1)$, $u_i = (u_{i1}, \ldots, u_{iT})'$ is normally distributed with conditional covariance matrix $\Sigma_i$, whose $(t,s)$ entry is given by
\begin{equation}\label{def:Sigma:model}
   \Sigma_{i,ts} = \sigma_i^2 \frac{\rho_i^{|t-s|}}{1-\rho_i^2}, \quad t, s = 1, \ldots, T.
\end{equation}
Hence, $P_i$ is defined via the Cholesky decomposition of $\Sigma_i = P_i P_i'$ with $\delta_i = (\sigma_i^2, \rho_i)$.  
 We develop our theoretical framework through the lens of the HC model since it provides a unified way to analyze a broader class of dynamics such as ARMA errors.
Let $I_T$ denote the $T \times T$ identity matrix. The following assumptions will be used throughout.

\begin{assumption}[DGP]\label{asm:DGP}
Let \((Y_i, X_i)_{i=1}^N\) be identically and independently distributed (i.i.d.) data, satisfying \eqref{eq:model}. In addition, the following conditions hold.
\begin{enumerate}

\item[(i)] 
The pairs \((\beta_i, \delta_i)_{i=1}^N\) are i.i.d. draws from an unknown distribution \(G_*\) and are independent of \((X_i)_{i=1}^N\).

\item[(ii)] 
The parameter space for $\delta_i$ is a compact subset $\mathcal{K}_{\delta} \subseteq \mathbb{R}^{\ddelta}$.
For each $\delta_i \in \mathcal{K}_{\delta}$, the covariance matrix $\Sigma_i \equiv P(\delta_i)P(\delta_i)'$ satisfies $\underbar{c}\, I_T \le \Sigma_i \le \bar{c}\, I_T$ for some $0 < \underbar{c} < \bar{c} < \infty$.

\item[(iii)] 
$e_i \sim \mathcal{N}(0, I_T)$.

\end{enumerate}
\end{assumption}

Conditional on \((\beta_i, \delta_i)\) and \(X_i\), the outcome vector \(Y_i\) follows the distribution \(\mathcal{N}(X_i \beta_i, \Sigma_i)\). Individuals in this model are distinguished by types characterized by their heterogeneous features. Given the information structure, individual types cannot be directly uncovered from the data, but their distributions can be identified.  

% \begin{remark}
% For example, the following model with heterogeneous ARMA(1,1) dynamics illustrates how the HC framework extends the simplest AR(1) panel model along a dimension distinct from the HIVDX model:
% \begin{equation*}
% Y_{it} = a_i + u_{it},\quad u_{it} = \rho_i u_{i,t-1} + \sigma_i (e_{it} + \varphi_i e_{i,t-1}),\quad t=1,\ldots, T.
% \end{equation*}
% The additional random coefficient $\varphi_i \in (-1,1)$ captures heterogeneity in the MA component.
% Under stationarity of $u_{it}$, $u_{i}$ is normally distributed with conditional covariance matrix $\Sigma_i$, whose $(t,s)$ entry is
% \begin{equation*}
% \Sigma_{i,ts} = \frac{\sigma_i^2}{1-\rho_i^2} \times \begin{cases}
%  1+\varphi_i^2 + 2 \varphi_i \rho_i & t-s=0 \\
% (\rho_i + \varphi_i) (1 + \rho_i \varphi_i) & |t-s|=1 \\
%  \rho_i^{|t-s|-1}  (\rho_i + \varphi_i) (1 + \rho_i \varphi_i) & |t-s| \ge 2
% \end{cases}.
% \end{equation*}
% Although we do not undertake a formal analysis of this extension here, the ARMA specification can be accommodated using the tools developed in Sections~\ref{sec:identification} to \ref{subsec:EB-consistency}. 
% On the other hand, not every dynamic panel model can be represented in the HC form, since doing so requires specifying a parametric form for $P(\cdot)$.$\qed$
% %In this paper, we focus on the HIVDX model as the key example, deferring the analysis of these extensions to future work.
% \end{remark}

Let $\theta_i := (\hc_i, \delta_i)$ collect all heterogeneous parameters, lying in the space $\Theta := \mathbb{R}^{\dbeta}\times\mathcal{K}_{\delta} \subseteq \mathbb{R}^{\dwhole}$ with $\dwhole = \dbeta + \ddelta$.
If $\theta_i$ were observed, the conditional density of $Y_i$ given $\X_i$ is
\begin{align*}
% \label{eq:likelihood-at-data-points}
   \ell(Y_{i} \,|\, \X_i, \theta_i) :=  \frac{1}{({2\pi})^{T/2}|P(\delta_i)|} \exp \left( - \frac{1}{2}\|P(\delta_i)^{-1} (Y_i - \X_i\hc_i )\|^2 \right).
\end{align*}
When $\theta_i$ is unobserved but its distribution is given as $G$, the marginal density of $Y_i$ given $\X_i$, obtained by integrating out $\theta_i$ from the likelihood, is
\begin{equation*}
    f_{G}(Y_i, \X_i) := \int_{\Theta} \ell(Y_i \,|\, \X_i, \theta) dG(\theta),
\end{equation*}
where the independence between $\theta_i$ and $X_i$ is used.\footnote{We use $G_*$ to denote the unknown true distribution and $G$ to denote a generic distribution in the class of possible distributions that includes $G_*$. The notation $f_{G}(Y_i, \X_i)$ emphasizes that the marginal density of $Y_i$ given $\X_i$ depends on $G$.}

For the estimation of $(\theta_i)_{i=1}^N$, two conceptual frameworks are available.
The first is the \emph{separate} decision framework that treats each $\theta_i$ as fixed and minimizes the individual risk $\mathbb{E}[\|\hat \theta_i - \theta_i\|^ 2 \,|\, \theta_i]$.
This yields the individual MLE,
\begin{equation*}
\thetamle_i := \argmax_{\theta \in \Theta} \ell(Y_i \,|\, \X_i, \theta),
\end{equation*}
which provides an asymptotically efficient estimator for $\theta_i$ as $T \to \infty$.
However, \citet{stein:54} and \citet{james-stein:61} demonstrate  
for independent Gaussian observations
that the MLE is inadmissible under squared-error loss when the dimension of the parameter vector is at least three (here, $N \ge 3$). In this setting, the MLE is dominated by a shrinkage estimator that achieves lower total risk by pooling information across observations.

The second approach is the \emph{compound} decision framework. Unlike the separate framework, which minimizes risk for each unit individually, this approach seeks to minimize the aggregate risk over the ensemble of $N$ units.
While the strict compound decision formulation treats $(\theta_i)_{i=1}^N$ as fixed, \citet{robbins:56} demonstrates that this problem can be effectively solved by adopting an EB perspective.
By modeling the parameters as i.i.d.\ draws from a common latent distribution $G_*$, we can derive an optimal (oracle) decision rule, which takes the form of the posterior mean:
\begin{equation}
\label{eq:posterior-mean}
\thetapm_i := \mathbb{E}_{G_*}[\theta_i \,|\, Y_i, \X_i] = \frac{\int_{\Theta} \theta \ell(Y_i \,|\, \X_i, \theta) dG_*(\theta)}{\int_{\Theta} \ell(Y_i \,|\, \X_i, \theta) dG_*(\theta)}.
\end{equation}
This estimator improves efficiency by pooling information across units, thereby reducing the variance relative to the individual maximum likelihood estimators $\thetamle_i$.
These efficiency gains come at the cost of introducing individual bias (i.e., $\mathbb{E}[\thetapm_i \,|\, \theta_i] \ne \theta_i$), but the reduction in the overall compound mean squared error dominates this cost.
In essence, $\thetapm_i$ shrinks $\thetamle_i$ toward the prior.

In practice, the compound decision rule is infeasible due to the unknown $G_*$.
In Bayesian analysis, the prior $G_*$ is typically specified as a parametric distribution, 
either by fixing its hyperparameters or by assigning them their own distributions.
\citet{efron-morris:73} show that the James-Stein shrinkage estimator can be interpreted as a parametric EB estimator, with $G_*$ serving as a parametric prior whose hyperparameters are estimated from the data.

An Empirical Bayes analysis can approximate the infeasible oracle decision rule in one of two ways: F-modeling or G-modeling.
F-modeling directly computes the shrinkage term from the data without explicit knowledge of $G_*$.
For example, assuming $\Sigma_i \equiv \Sigma$ for some known homogeneous covariance matrix $\Sigma$ (that is, $\rho_i=\rho$ and $\sigma_i^2 = \sigma^2$ for some known $\rho$ and $\sigma^2$), the (infeasible) F-modeling EB estimator is
\begin{equation}\label{eq:Tweedie}
\betafeb_i := \betamle_i + (\X_i'\Sigma^{-1} \X_i)^{-1} \frac{\partial}{\partial \hat{\hc}}\log \hat f_{\hat{\hc}|X} (\betamle_i, \X_i),
\end{equation}
where $\betamle_i = (\X_i'\Sigma^{-1} \X_i)^{-1}\X_i'\Sigma^{-1} Y_i$ 
and $\hat f_{\hat{\hc}|X}(\hat{\hc}, x)$ is an estimator for $f_{\hat{\hc}|X}(\hat{\hc}, x)$ – the conditional density of $\betamle_i$ given $\X_i=x$ under $G_*$.
In practice, $\Sigma$ is typically unknown and must also be estimated. 
Equation~\eqref{eq:Tweedie}, commonly known as \textit{Tweedie's formula} in the literature, shows that the posterior mean of $\hc_i$ adjusts the corresponding MLE by a shrinkage term derived from the cross-sectional distribution of the MLE.\footnote{In the simplest case when $Y \,|\, \theta \overset{\text{ind.}}{\sim} \mathcal{N}(\theta, 1)$ with $T=1$, the Tweedie formula is $\mathbb{E}_{G_*}[\theta| y] = y + \tfrac{d}{dy} \log f_{G_*}(y)$, where $f_{G_*}(\cdot)$ denotes the marginal density of $Y$ under $G_*$.} 
The magnitude of the shrinkage term grows with the variance of $\betamle_i$, as measured by $(\X_i' \Sigma^{-1} \X_i)^{-1}$, and with the informativeness of the prior, as reflected in $f_{\hat{\hc}|X}$.\footnote{ It can be shown that the oracle-decision errors and the shrinkage term are uncorrelated, that is,
$\cov_{G_*} (\betapm_i - \hc_i, \betamle_i - \betapm_i ) = 0$, where $\betapm_i := \mathbb{E}_{G_*}[\beta_i \,|\, Y_i, \X_i]$,
implying that the variance reduction is attributable to the shrinkage term in \eqref{eq:Tweedie}.}

F-modeling typically relies on the assumption that the likelihood belongs to the linear exponential family, offering computational convenience when this condition holds.
\citet{Walters:24} provides a survey of F-modeling applications in labor economics.
\citet{LMS:20} estimate the posterior mean of \(a_i\)  by F-modeling using Tweedie's formula. However, this method does not readily extend to the HIVDX model. 
We adopt the alternative of G-modeling and explicitly estimate the unknown prior $G_*$. \citet{shen2022empirical} demonstrate that G-modeling has superior statistical properties in  Poisson mixture models than F-modeling.

While parametric G-modeling imposes a specific functional form on the prior, nonparametric G-modeling characterizes the true distribution $G_*$ as the maximizer of the population log-likelihood:
\begin{equation}
\label{eq:npmle-population-program}
    G_* \in \argmax_{G \in \mathcal{G}} F(G),
\end{equation}
where $\mathcal{G}$ is the class of all probability distributions on $\Theta$ and $F(G) := \mathbb{E}[\log f_{G}(Y_i, \X_i)]$ denotes the expected marginal log-likelihood (i.e., the negative risk).
This formulation motivates estimating $G_*$ via its sample analog:
\begin{equation}
\label{eq:npmle-sample-program}
    \hat G \in \argmax_{G \in \mathcal{{G}}} F_N(G),
\end{equation}
where the empirical criterion is given by $F_N(G) := N^{-1} \sum_{i=1}^N \log f_{G}(Y_i, \X_i)$.
We refer to $\hat G$ as the \NPM\ of $G_*$.
Substituting $\hat G$ for $G_*$ in \eqref{eq:posterior-mean} yields the G-modeling EB estimator for $\theta_i$:
\begin{equation}\label{eq:posterior-mean-est}
\hat \theta_i^{\EB}:= \frac{\int_{\Theta} \theta \ell(Y_i \,|\, \X_i, \theta) d\hat G(\theta)}{\int_{\Theta} \ell(Y_i \,|\, \X_i, \theta) d\hat G(\theta)}.
\end{equation}
The \NPM\ is built on the same principle as parametric MLE, namely that $G_*$ is the maximizer of the population log-likelihood defined in \eqref{eq:npmle-population-program}. %\footnote{\label{ftnote:modified-mle}Since the \NPM\ is often computed via iterative algorithms, it is sometimes understood as an approximate MLE that satisfies $F_N(\hat G) > \sup_{G \in \mathcal{G}} F_N(G) - O_{\mathrm{a.s.}}(N^{-1})$.}
If $G_*$ is the unique maximizer of $F$ over $\mathcal{G}$, it is said to be (point-)identified.
However, it is known that the \NPM\ in \eqref{eq:npmle-sample-program} may not be unique in multidimensional settings; in such cases, $\hat{G}$ refers to any maximizer.
Thus, before presenting an algorithm to compute $\hat G$ in Section \ref{sec:implementation}, we first establish conditions for the identification of $G_*$ in Section \ref{sec:identification}, and then show in Section \ref{sec:consistency} that consistent estimation is possible despite multiple sources of heterogeneity.

% \textcolor{red}{I think we can get rid of this} {\color{blue}
% Functions of $\theta_i$ are often of central interest. In particular, we consider the function $\Tau(\theta_i)$ corresponding to the one-step-ahead prediction.
% When $\theta_i$ is unknown but $G_*$ is known, the optimal prediction under squared-error loss is the posterior mean of $Y_{i T+1}$, as given in \eqref{eq:forecast-formula}.
% As discussed in the introduction, while posterior cross-moments (e.g., $\mathbb{E}_{G_*}[\rho_i a_i \mid Y_{i}, \X_{i}]$) are challenging to estimate using standard F-modeling, the \NPM\ consistently estimates the joint distribution.
% Hence, the EB prediction $\hat{Y}^{\mathrm{EB}}_{i T+1}$ is constructed by substituting $\hat{G}$ for $G_*$ in \eqref{eq:forecast-formula}.
% Establishing the optimality of this prediction is more delicate than showing the consistency of \NPM. We provide a formal analysis of the mean squared prediction error
% $\mathbb{E}[N^{-1} \sum_{i=1}^N (\hat{Y}^{\mathrm{EB}}_{i T+1} - Y_{i T+1}^{\oracle})^2]$.
% }

\section{Identification}\label{sec:identification}

This section establishes identification of the HC model defined in \eqref{eq:model}.
Let $\mathcal{G}$ denote the class of all distributions on $\Theta \equiv \mathbb{R}^{\dbeta}\times \mathcal{K}_{\delta}$, defined as
\begin{equation}\label{def:G-dist-class}
    \mathcal{G} := \mathcal{P}(\mathbb{R}^{\dbeta}\times \mathcal{K}_{\delta}) = \{ G \in \mathcal{P}(\mathbb{R}^\dwhole) : \supp_G(\delta_i)\subseteq \mathcal{K}_{\delta}\},
\end{equation}
where $\mathcal{K}_{\delta}\subseteq\mathbb{R}^{\ddelta}$ is a compact subset in which $\delta_i$ is supported, as in Assumption~\ref{asm:DGP}.
 Since any solution to the MLE in \eqref{eq:npmle-population-program} must lie in the set
\begin{equation}\label{eq:G-IDset-via-MLE}
    \left\{G \in \mathcal{G} : f_G(Y_i,\X_i) = f_{G_*}(Y_i, \X_i)\ \ \text{a.s.}\right\},
\end{equation}
the identified set of distributions consists of distributions consistent with the observed marginal likelihood.
We now state the conditions on the covariates and the parametric form of $P(\cdot)$ required for the identification of $G_*$.

\begin{assumption}
\label{asm:id-conditions}
There exists a pair $(\xmat,\Q)$ of fixed matrices with $\xmat \in \supp(X_i)$ and $\Q \in \mathbb{R}^{T\times (T-\dbeta)}$ such that the following conditions hold:
\begin{enumerate}

\item[(i)] $\rank(\xmat) = \dbeta \le T-1$, $\rank(\Q) = T - \dbeta \ge 1$, and $\Q'\xmat = 0$.

\item[(ii)] For any $\delta, \tilde{\delta} \in \mathcal{K}_{\delta}$,  
\[
\Q'P(\delta)P(\delta)'\Q = \Q'P(\tilde{\delta})P(\tilde{\delta})'\Q
\quad \text{if and only if} \quad \delta = \tilde{\delta}.
\]
\end{enumerate}
\end{assumption}

Assumption~\ref{asm:id-conditions}(i) requires $\xmat\in \supp(\X_i)$ to have full column rank, paralleling the familiar rank condition in regression. 
This ensures the existence of an annihilator matrix $\Q$ with rank $T - \dbeta$.\footnote{The choice of $\Q$ is unique (up to a nonsingular linear transformation) given $\xmat$, since a different choice of $\Q$ can be written as $\tilde{\Q} = \Q A$ for some nonsingular $A$.}
We further require that $\dbeta \le T-1$ so that $\Q$ has non-null rank and can be used in part~(ii).
This restriction is needed because, in our setting, the covariances $\Sigma_i$ depend on unknown $\delta_i$.

Assumption~\ref{asm:id-conditions}(ii) imposes that $\delta\mapsto \Q'P(\delta)P(\delta)'\Q$ is a one-to-one function of $\delta \in \mathcal{K}_{\delta}$, meaning that the variance parameters remain identifiable after removing the effect of $\xmat \hc_i$.
The intuition is as follows.
Since the concatenated matrix $(\xmat, \Q)$ is of full rank, the information in $Y_i$, conditional on $\X_i = \xmat$, can be expressed as
\begin{equation*}
\left[ \begin{matrix}
    \Q' Y_i \\ 
    \xmat' Y_i     
\end{matrix} \right]
= \left[ \begin{matrix}
    0 \\ 
    \xmat' \xmat \hc_i     
\end{matrix} \right] 
+ 
\left[ \begin{matrix}
    \Q' P_i e_i \\ 
    \xmat' P_i e_i     
\end{matrix} \right],
\end{equation*}
where $\Q 'Y_i$ is a Gaussian mixture with  mean zero and variance $\Q'P_iP_i'\Q=\Q'P(\delta_i)P(\delta_i)'\Q$. The distribution of $\Q'P(\delta_i)P(\delta_i)'\Q$ can therefore be identified from the distribution of $\Q 'Y_i$.
Then, the distribution of $\delta_i$ is identified in view of Assumption~\ref{asm:id-conditions}(ii).

To identify $G_*$, the remaining step is to recover the conditional distribution of $\hc_i$ given $\delta_i$.
Let $\mathcal{E} \in \mathbb{R}^{T\times(T-\dbeta)}$.
The key idea is to express
\begin{align*}
\Q'Y_i + \mathcal{E}' \xmat' Y_i & = (\Q' + \mathcal{E}' \xmat')P(\delta_i) e_i + \mathcal{E}'(\xmat'\xmat) \hc_i
\end{align*}
as a perturbation of $\Q'Y_i$ with noise $\mathcal{E}' \xmat' Y_i$.
We measure the shift in the mixing distribution of $\Q'Y_i + \mathcal{E}' \xmat' Y_i$ relative to the mixing distribution of $\Q'Y_i$ identified earlier.
This comparison reveals the conditional distribution of $\mathcal{E}'(\xmat'\xmat) \hc_i$ given $\delta_i$ as  $\mathcal{E}$ traverses $\mathbb{R}^{T\times(T-\dbeta)}$, which then identifies $\hc_i$ given $\delta_i$ via the Cramer-Wold device.

The following theorem formalizes this result.
The proof of Theorem~\ref{thm:identification} as well as the proofs of all other theorems and propositions are given in Appendix~\ref{sec:proof}.

\begin{thm}
\label{thm:identification}
% Assume that $\Sigma(\delta_i)$ is one-to-one and continuous in $\delta_i$.
Under Assumptions~\ref{asm:DGP} and \ref{asm:id-conditions}, $G_*$ is identified in the class $\mathcal{G} = \mathcal{P}(\mathbb{R}^{\dbeta} \times \mathcal{K}_\delta)$.
\end{thm}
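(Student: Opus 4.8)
The plan is to fix the design matrix $\xmat\in\supp(\X_i)$ and the annihilator $\Q$ supplied by Assumption~\ref{asm:id-conditions}, and to show that the conditional law of $Y_i$ given $\X_i=\xmat$ already pins down $G_*$. This reduction is legitimate: every $G$ in the identified set \eqref{eq:G-IDset-via-MLE} agrees with $f_{G_*}$ for $(Y_i,\X_i)$-almost every point, and since $x\mapsto f_G(\cdot,x)$ is continuous (the Gaussian likelihood is uniformly bounded by Assumption~\ref{asm:DGP}(ii), so dominated convergence applies) while $\xmat\in\supp(\X_i)$, agreement holds at $\xmat$; using $(\hc_i,\delta_i)\indep\X_i$, this says exactly that the mixture $\int_\Theta\mathcal{N}(\xmat\hc,\Sigma(\delta))\,dG(\hc,\delta)$, with $\Sigma(\delta)=P(\delta)P(\delta)'$, coincides for $G$ and $G_*$. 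I would recover $G_*$ in two stages: first its $\delta$-marginal $G_{*,\delta}$, then a regular conditional law $G_{*,\hc\mid\delta}$ of $\hc_i$ given $\delta_i$.

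\emph{Stage 1 (the variance distribution).} Since $\Q'\xmat=0$, conditionally on $\X_i=\xmat$ we have $\Q'Y_i=\Q'P(\delta_i)e_i$, so $\Q'Y_i$ is a centered Gaussian scale mixture $\int\mathcal{N}\!\bigl(0,\Q'\Sigma(\delta)\Q\bigr)\,dG_{*,\delta}(\delta)$ whose mixing covariances $\{\Q'\Sigma(\delta)\Q:\delta\in\mathcal{K}_\delta\}$ form a compact family of positive-definite $(T-\dbeta)\times(T-\dbeta)$ matrices ($\rank(\Q)=T-\dbeta\ge1$ by Assumption~\ref{asm:id-conditions}(i), $\Sigma(\delta)\succ0$ by Assumption~\ref{asm:DGP}(ii)). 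I would show this mixture determines the distribution of $\Q'\Sigma(\delta_i)\Q$: for each unit vector $s_0$ the scalar $s_0'\Q'Y_i$ is a one-dimensional Gaussian scale mixture, so Laplace-transform uniqueness on the compact range identifies the pushforward of $G_{*,\delta}$ under $\delta\mapsto s_0'\Q'\Sigma(\delta)\Q s_0$; combining these pushforwards across $s_0$ — using injectivity of $\delta\mapsto\Q'\Sigma(\delta)\Q$ (Assumption~\ref{asm:id-conditions}(ii)) together with a Stone--Weierstrass density argument on $C(\mathcal{K}_\delta)$ — recovers the distribution of $\Q'\Sigma(\delta_i)\Q$, hence, again by injectivity, $G_{*,\delta}$.

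\emph{Stage 2 (the conditional of $\hc_i$).} For a conformable matrix $\mathcal{E}$ set $\Q_\mathcal{E}:=\Q+\xmat\mathcal{E}$; since the column spaces of $\Q$ and $\xmat$ are orthogonal, $\Q_\mathcal{E}$ still has full column rank, and $\Q_\mathcal{E}'\xmat=\mathcal{E}'\xmat'\xmat$, so conditionally on $\X_i=\xmat$ and $(\hc_i,\delta_i)$,
\[
\Q_\mathcal{E}'Y_i\ \sim\ \mathcal{N}\!\bigl(\mathcal{E}'\xmat'\xmat\,\hc_i,\ \Q_\mathcal{E}'\Sigma(\delta_i)\Q_\mathcal{E}\bigr),
\]
with $\hc_i$ and $e_i$ independent given $\delta_i$. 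Consider the joint law of $(\Q_\mathcal{E}'Y_i,\Q'Y_i)$: it is a $G_{*,\delta}$-mixture of $\delta$-slices whose $\Q'Y_i$-marginal is $\mathcal{N}(0,\Q'\Sigma(\delta)\Q)$, and since Stage~1 has fixed both $G_{*,\delta}$ and the injective continuous map $\delta\mapsto\Q'\Sigma(\delta)\Q$ on the compact $\mathcal{K}_\delta$, this mixture is identified slice by slice. The verification re-runs the Stage~1 reasoning on the $\Q'Y_i$ coordinate after freezing the Fourier argument dual to $\Q_\mathcal{E}'Y_i$ and then continuing analytically in the Fourier argument dual to $\Q'Y_i$; this continuation is permitted because, with the former argument fixed, the dependence on the latter enters only through a Gaussian factor, so the relevant transform is entire there even though the conditional characteristic function of $\hc_i$ need not extend off the real axis (no moment conditions are imposed on $\hc_i$). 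Hence, for $G_{*,\delta}$-a.e.\ $\delta$, the $\delta$-slice law of $\Q_\mathcal{E}'Y_i$ is identified; it equals the law of $\mathcal{E}'\xmat'\xmat\,\hc_i$ given $\delta_i=\delta$ convolved with the now-known nondegenerate Gaussian $\mathcal{N}(0,\Q_\mathcal{E}'\Sigma(\delta)\Q_\mathcal{E})$, so dividing characteristic functions by that never-vanishing Gaussian factor identifies the conditional law of $\mathcal{E}'\xmat'\xmat\,\hc_i$ given $\delta_i$. Taking $\mathcal{E}$ of the form $\tilde v a_0'$, the statistic $\mathcal{E}'\xmat'\xmat\,\hc_i$ is a fixed vector $a_0$ scaled by $v'\hc_i$ with $v=\xmat'\xmat\tilde v$; letting $v$ range over $\mathbb{R}^{\dbeta}$ (possible since $\xmat'\xmat$ is invertible) and applying the Cramer--Wold device recovers $G_{*,\hc\mid\delta}$ for $G_{*,\delta}$-a.e.\ $\delta$. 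Combining the two stages identifies $G_*$.

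\emph{Main obstacle.} The crux is Stage~2: generic multivariate Gaussian location--scale mixtures are \emph{not} identified (surplus variance can be reabsorbed into a Gaussian location component), so one cannot simply read off the conditional law of $\hc_i$ from the single mixture generated by one $\mathcal{E}$. The argument escapes this by using in tandem that Stage~1 has already frozen both $G_{*,\delta}$ and the covariance map $\delta\mapsto\Q'\Sigma(\delta)\Q$, and the freedom to vary $\mathcal{E}$; the delicate technical point — also where Assumption~\ref{asm:id-conditions}(ii) is essential — is the slice-by-slice identifiability claim, whose proof leans on analytic continuation in the ``scale'' Fourier variable, one-dimensional Gaussian-mixture (Laplace-transform) uniqueness, and Stone--Weierstrass density on $C(\mathcal{K}_\delta)$, with compactness of $\mathcal{K}_\delta$ from Assumption~\ref{asm:DGP}(ii) used throughout.
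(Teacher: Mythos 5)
Your overall architecture is the same as the paper's: first recover the $\delta$-marginal of $G_*$ from the annihilated data $\Q'Y_i$, then recover the conditional law of $\hc_i$ given $\delta_i$ by perturbing with $\mathcal{E}$ and applying the Cramer--Wold device, with the reduction to conditioning on $\X_i=\xmat$ handled at the outset. The difference is in the engine. The paper's proof runs entirely on a single external result (Lemma~\ref{lem:Bruni-Koch}, from \citet{bruni1985identifiability}): identifiability of compactly supported Gaussian mean--covariance mixtures, applied once in Part~(i) and, after a Jordan decomposition of a complex measure into four nonnegative parts, again in Part~(ii). You try to replace this with elementary arguments, and that is where the proposal has a genuine gap.

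The gap is in your Stage~1 ``combining'' step. Knowing, for each $s_0$, the pushforward of $G_{*,\delta}$ under $\delta\mapsto s_0'\Q'\Sigma(\delta)\Q s_0$ is knowing a family of one-dimensional marginals of the random matrix $V(\delta_i):=\Q'\Sigma(\delta_i)\Q$, and marginals do not determine a joint law. Stone--Weierstrass does not rescue this: the family $\{\delta\mapsto\exp(-s'V(\delta)s/2):s\in\mathbb{R}^{T-\dbeta}\}$ separates points (by Assumption~\ref{asm:id-conditions}(ii)) but is \emph{not} closed under multiplication in the multivariate case, since $\exp(-s_1'Vs_1/2)\exp(-s_2'Vs_2/2)=\exp(-\tr(V(s_1s_1'+s_2s_2'))/2)$ involves a rank-two argument; the algebra it generates is dense in $C(\mathcal{K}_\delta)$, but you only know the integrals of the \emph{generators}, not of their products, so density of the algebra yields nothing. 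The conclusion you need is true, but it requires a real argument --- e.g., expanding $s\mapsto\int e^{-s'V(\delta)s/2}\,d\lambda(\delta)$ (which is entire by compactness) around $s=0$ and using that $\{(s'Vs)^k:s\}$ spans all degree-$k$ polynomials in the entries of $V$, so that vanishing of the transform forces all moments of $\lambda$ to vanish --- or simply the paper's Lemma~\ref{lem:Bruni-Koch}. The same unproven identifiability claim is deferred again in your Stage~2 ``slice-by-slice'' step, where the mixing weights are now complex-valued; the analytic-continuation sketch is plausible but does not discharge it, whereas the paper handles it explicitly by Jordan decomposition followed by reapplication of the lemma. With the mixture-identifiability input supplied (by citation or by the moment argument above), the rest of your outline --- deconvolving the known nonvanishing Gaussian factor and ranging $\mathcal{E}=\tilde v a_0'$ over all directions --- matches the paper and goes through.
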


The primary challenge in identifying $G_*$ is to distinguish the mean and variance components that are convolved into the outcome distribution.
For instance, assume that $Y_i \sim \mathcal{N}(0,1)$ is a scalar observation.
The observed distribution of $Y_i$ can be rationalized by a continuum of models: $Y_i = a_i + \sigma_i e_i$, where $a_i \sim\mathcal{N}(0,v_a)$ and $\sigma_i\equiv \sqrt{1-v_a}$ for $v_a \in [0,1]$.
Thus, one cannot tell which model generates the observed distribution without imposing further restrictions on $a_i$ or $\sigma_i$.
\citet{bruni1985identifiability} establish general identification results for Gaussian mixtures under a compact support assumption for both the mean and variance components.\footnote{The results of  \citet{bruni1985identifiability} are also used in identification of latent distributions in labor economic analyses (see, e.g., \cite{Pastorino:2024,bunting2024,dePaula2025}), but the context is different from identification of mixture models in econometrics such as discussed in  \cite{Compiani:Kitamura:2016}.}
Using a type of identification-at-infinity argument, they show that both components can be separated by leveraging variation in the tails of $Y_i$.
However, this argument breaks down as soon as the mean component places a slight nonzero mass in the tails, producing observationally indistinguishable yet fundamentally different models, as illustrated above.
In contrast, our identification strategy rests on the rank and one-to-one restrictions that preclude this type of identification failure.

% \subsection{Identification in Models with Homogeneous Slope Parameters}\label{sec:homogeneous-components}
% \textcolor{red}{I moved this subsection which was on p.15 to here, as I forgot what's Theorem 1. I suggest we get rid of the subsection title, since the subsection only has a few lines}

Theorem~\ref{thm:identification} extends to models that include homogeneous slope parameters:
\begin{equation}
\label{eq:extended-model}
    Y_i = X_i \beta_i + W_i \gamma_* + P_i e_i,
\end{equation}
where \(W_i = (W_{i1}, \ldots, W_{iT})'\) is a \(T \times d_{\gamma}\) matrix of additional controls associated with common slopes $\gamma_* \in \mathbb{R}^{\dgamma}$. 
The common parameters $\gamma_*$ can often be identified without knowledge of $G_*$ by pooling cross-sectional information.

To see how $\gamma_*$ can be identified, rewrite \eqref{eq:extended-model} as a pooled-effects model:
\begin{equation*}
Y_{i} = X_{i} \beta_* + W_{i}\gamma_* + u_{i},
\end{equation*}
where $\beta_* := \mathbb{E}_{G_*}[\beta_i]$ and $u_i = X_i(\beta_i - \beta_*) + P_i e_i$ denotes the composite error.  
The parameter $\gamma_*$ is identified as the familiar regression coefficient
\begin{equation*}
    \gamma_* = \mathbb{E}[\tilde{W}_i' W_i]^{-1}\mathbb{E}[\tilde{W}_i' Y_i],
\end{equation*}
provided that $\mathbb{E}[\tilde{W}_i'\tilde{W}_i] > 0$, where $\tilde{W}_i := W_i - X_i \mathbb{E}[X_i'X_i]^{-1}\mathbb{E}[X_i' W_i]$.  
Once $\gamma_*$ is identified, the model can be analyzed using the residuals $Y_i - W_i \gamma_*$.
Theorem~\ref{thm:identification} serves as a general tool for analyzing identification in HC models. Before applying  Theorem~\ref{thm:identification} to the  HIVDX model, we first consider a HIVD model without the covariate $X_{2i}$. We verify Assumption \ref{asm:id-conditions} by demonstrating how to choose a pair $(\xmat,\Q)$.

\subsection{Identification of the HIVD Model}

We consider two variants of the HIVD model: one with AR(1) errors and the other with ARMA(1,1) errors.

\subsubsection{AR(1) Errors}
\label{subsubsec:HIVD-AR1}
Consider the following AR(1) panel model without covariates:
\begin{equation}
\label{eq:simple-dynamic-panel}
\begin{aligned}
    Y_{it} & = a_i + u_{it}, \\
    u_{it} & = \rho_i u_{it-1} + \sigma_i e_{it}, \quad t = 1,\ldots, T,
\end{aligned}
\end{equation}
where the initial condition $u_{i0}$ is assumed to be drawn from the stationary distribution $\mathcal{N}(0, \sigma_i^2/(1-\rho_i^2))$.
%In this model, $\beta_i = a_i$ (with $\dbeta = 1$) and $\delta_i = (\sigma_i^2, \rho_i)$ (with $\ddelta = 2$).
The parameters are collected in $\theta_i = (a_i, \sigma_i^2, \rho_i) \iid G_*$.
% We write $G$ for a generic distribution on the parameter space.

% \textcolor{red}{This likelihood is not new and is not used in the text. Do we need it? Recall the form of the covariance matrix $\Sigma_i$ given in \eqref{def:Sigma:model}.}
% %The corresponding Cholesky factor $P_i$ is given by the $\rho_i$-quasi-differencing matrix with a normalization for $u_{i1}$ in the first row. 
% This yields the likelihood function evaluated at a generic parameter $\theta = (a, \sigma^2, \rho)$:
% \begin{equation*}
% \ell(Y_i \mid \theta) = \frac{(1-\rho^2)^{1/2}}{(2\pi \sigma^2)^{T/2}}
% \exp \left(
% - \frac{(1-\rho^2)(Y_{i1} - a)^2}{2\sigma^2}
% - \sum_{t=2}^T \frac{(Y_{it} - \rho Y_{it-1} - (1-\rho)a)^2}{2\sigma^2}
% \right).
% \end{equation*}

\begin{prop}
\label{prop:dyp-identification}
Assume that $T \ge 3$.
Consider the panel AR(1) model given in \ref{eq:simple-dynamic-panel}.
 $G_*$ is identified in $\mathcal{G} = \mathcal{P}(\Theta)$, where $\Theta = \mathbb{R} \times \K_{(\sigma^2,\rho)}$
and $\K_{(\sigma^2,\rho)}$ is a compact subset of $(0, \infty) \times (-1,1)$.
\end{prop}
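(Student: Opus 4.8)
The plan is to verify Assumption~\ref{asm:id-conditions} for the HIVD model in \eqref{eq:simple-dynamic-panel} with a suitable choice of the pair $(\xmat, \Q)$, and then invoke Theorem~\ref{thm:identification}. Here $\dbeta = 1$ with $\xmat = \iota_T$, the $T$-vector of ones, so $\rank(\xmat) = 1 \le T - 1$ holds as soon as $T \ge 2$. The annihilator $\Q \in \mathbb{R}^{T \times (T-1)}$ must satisfy $\Q' \iota_T = 0$; a natural choice is the first-difference operator, whose rows are $e_{t+1} - e_t$ for $t = 1, \ldots, T-1$, so that $\Q' Y_i = (\Delta Y_{i2}, \ldots, \Delta Y_{iT})'$ and $\rank(\Q) = T - 1 \ge 1$. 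This disposes of part~(i). The substance of the proof is part~(ii): I must show that the map $(\sigma^2, \rho) \mapsto \Q' \Sigma(\sigma^2, \rho) \Q$ is injective on $\K_{(\sigma^2,\rho)} \subseteq (0,\infty) \times (-1,1)$.

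To handle part~(ii), note that differencing $u_{it} = \rho_i u_{it-1} + \sigma_i e_{it}$ kills the random intercept and leaves $\Delta Y_{it} = \Delta u_{it}$, an MA-type process. It is cleanest to compute the relevant entries of $\Q' \Sigma \Q$ directly from \eqref{def:Sigma:model}: for $t \ge 2$, $\var(\Delta Y_{it}) = \Sigma_{tt} + \Sigma_{t-1,t-1} - 2\Sigma_{t,t-1} = \frac{2\sigma^2}{1-\rho^2}(1 - \rho) = \frac{2\sigma^2}{1+\rho}$, and for the first off-diagonal, $\cov(\Delta Y_{it}, \Delta Y_{it+1}) = \Sigma_{t,t+1} - \Sigma_{t,t} - \Sigma_{t-1,t+1} + \Sigma_{t-1,t} = \frac{\sigma^2}{1-\rho^2}(\rho - 1 - \rho^2 + \rho) = -\frac{\sigma^2 (1-\rho)^2}{1-\rho^2} = -\frac{\sigma^2(1-\rho)}{1+\rho}$. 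Since $T \ge 3$, both of these quantities are available from $\Q' \Sigma \Q$ (the latter requires at least two rows of $\Q$, i.e.\ $T - 1 \ge 2$). Their ratio is $\cov(\Delta Y_{it}, \Delta Y_{it+1}) / \var(\Delta Y_{it}) = -(1-\rho)/2$, which is strictly monotone in $\rho$ and hence pins down $\rho$ uniquely on $(-1,1)$; given $\rho$, the diagonal entry $\frac{2\sigma^2}{1+\rho}$ then pins down $\sigma^2$. This establishes that $\Q' \Sigma(\sigma^2,\rho) \Q = \Q' \Sigma(\tilde\sigma^2, \tilde\rho) \Q$ forces $(\sigma^2,\rho) = (\tilde\sigma^2,\tilde\rho)$, which is exactly Assumption~\ref{asm:id-conditions}(ii). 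With both parts verified, Theorem~\ref{thm:identification} yields identification of $G_*$ in $\mathcal{G} = \mathcal{P}(\Theta)$.

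The main obstacle — and the only real computation — is getting the entries of $\Q' \Sigma \Q$ right and confirming that the identifying moments (one diagonal and one off-diagonal entry) are genuinely recoverable when $T = 3$, the tight case; this is why the hypothesis is $T \ge 3$ rather than $T \ge 2$, since with $T = 2$ the matrix $\Q' \Sigma \Q$ is a scalar $\frac{2\sigma^2}{1+\rho}$ and cannot separate $\sigma^2$ from $\rho$. One should also note that $\Q' Y_i$ being a (genuine) mixture of Gaussians with mean zero and covariance $\Q' \Sigma(\delta_i) \Q$ identifies the \emph{distribution} of $\Q'\Sigma(\delta_i)\Q$ — this is the step supplied by the proof of Theorem~\ref{thm:identification}, which reduces identification of the law of $\delta_i$ to the pointwise injectivity just verified, and then recovers the conditional law of $a_i$ given $\delta_i$ via the perturbation/Cramér–Wold argument already in that proof. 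No additional argument beyond invoking Theorem~\ref{thm:identification} is needed once Assumption~\ref{asm:id-conditions} is checked.
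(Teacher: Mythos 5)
Your proposal is correct and follows essentially the same route as the paper: the paper's proof also reduces to verifying Assumption~\ref{asm:id-conditions} with $\xmat = \iota_T$ and $\Q$ the first-difference matrix, computes the same entries $\mathcal{V}_{11} = 2\sigma^2/(1+\rho)$ and $\mathcal{V}_{12} = -(1-\rho)\sigma^2/(1+\rho)$, and inverts them via $\rho = 1 + 2\mathcal{V}_{12}/\mathcal{V}_{11}$ before invoking Theorem~\ref{thm:identification}. Your added remarks on why $T=2$ fails and on the role of Theorem~\ref{thm:identification} in lifting pointwise injectivity to identification of the mixing distribution are accurate but not needed beyond what the paper records.
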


The restriction on $\K_{(\sigma^2,\rho)}$ implies that $\rho_i$ is bounded away from $\{-1,1\}$ and that $\sigma_i^2$ is bounded away from $0$ and infinity.
The condition $T \ge 3$ is not only sufficient but also necessary for the identification of the three-dimensional parameter $\theta_i$.
The variance parameters are identified from the covariance structure after purging the fixed effect $a_i$ by first-differencing $Y_i$ via a suitably chosen $\Q$.
Finally, the compact support for $\delta_i$ ensures that $Y_i$ has non-degenerate and bounded covariance matrices across all $\delta_i$; in particular, this condition rules out unit root processes.
Our choice of $\Q$ can be illustrated with $T=3$.
Let $\xmat$ be the column vector of ones whose rank is $1$. 
Set $\Q$ equal to the first-difference matrix:
\begin{equation*}
    \Q = 
    \begin{pmatrix}
    -1 & 0 \\
    1 & -1 \\
    0 & 1    
    \end{pmatrix}.
\end{equation*}
This choice of $\Q$ yields
\begin{equation*}
   \mathcal{V} := \Q' P(\delta)P(\delta)'\Q 
= \frac{\sigma^2}{1+\rho} 
\begin{pmatrix}
    2 & -1+\rho \\    
    -1+\rho & 2    
\end{pmatrix}.
\end{equation*}
Since $\rho = 1 + 2\mathcal{V}_{21}/\mathcal{V}_{11}$ and $\sigma^2 = \mathcal{V}_{11} + \mathcal{V}_{21}$, there exists a one-to-one relationship between $\mathcal{V}$ and $\delta$, which ensures that $\delta$ is identified from $\Q' P(\delta)P(\delta)'\Q$.

\subsubsection{ARMA(1,1) Errors}
\label{subsubsec:HIVD-ARMA}

Consider the ARMA(1,1) panel model:
\begin{equation*}
\begin{aligned}
    Y_{it} &= a_i + u_{it}, \\
    u_{it} &= \rho_i u_{it-1} + \sigma_i (e_{it} + \varphi_i e_{it-1}), \quad t = 1, \dots, T,
\end{aligned}
\end{equation*}
where $e_{it} \iid \mathcal{N}(0,1)$ and the initial condition $u_{i0}$ is drawn from the stationary distribution.
The additional random coefficient $\varphi_i \in [-1,1]$ captures heterogeneity in the MA component.
%Such an ARMA structure is widely used in the literature to accommodate both persistent and transitory shocks in earnings dynamics.
%For instance, \citet{ejrnaesPersistenttransitoryRepresentationEarnings2014} demonstrate that the ARMA representation serves as a general reduced-form specification for perfectly correlated persistent-transitory (PT) models, and rationalizes uncorrelated PT models provided that $-\rho_i \le \varphi_i \le 0$. We allow these parameters to vary across individuals while assuming normality of the shocks.
%
Under the given assumptions, $u_{i}$ is normally distributed with a conditional covariance matrix $\Sigma_i$, whose $(t,s)$ entry is given by
\begin{equation*}
\Sigma_{i,ts} = \frac{\sigma_i^2}{1-\rho_i^2} \times
\begin{cases}
    1+\varphi_i^2 + 2 \varphi_i \rho_i & \text{if } t=s, \\
    (\rho_i + \varphi_i) (1 + \rho_i \varphi_i) & \text{if } |t-s|=1, \\
    \rho_i^{|t-s|-1} (\rho_i + \varphi_i) (1 + \rho_i \varphi_i) & \text{if } |t-s| \ge 2.
\end{cases}
\end{equation*}

\begin{prop}
\label{prop:identification-ARMA}
Let $\theta_i = (a_i, \sigma_i^2, \rho_i, \varphi_i)$ collect all heterogeneous coefficients in the panel model with ARMA(1,1) errors, with $\theta_i \iid G_*$.
Assume that $T \ge 4$ and  let $G_*(\rho_i + \varphi_i = 0) = 0$ denote that the probability that $\rho_i + \varphi_i = 0$ equals zero under $G_*$.
Then $G_*$ is identified in $\mathcal{G} = \mathcal{P}(\Theta)$, where $\Theta = \mathbb{R} \times \mathcal{K}_{(\sigma^2,\rho,\varphi)}$
and $\mathcal{K}_{(\sigma^2,\rho,\varphi)}$ is a compact subset of $(0,\infty)\times (-1,1)\times [-1,1]$.
\end{prop}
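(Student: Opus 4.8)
The plan is to deduce the statement from Theorem~\ref{thm:identification}, following the template used for the AR(1) case in Proposition~\ref{prop:dyp-identification}; the task then reduces to exhibiting a pair $(\xmat,\Q)$ satisfying Assumption~\ref{asm:id-conditions} for this model, in which $\dbeta=1$. I would take $\xmat=\iota_T$, which has rank $1=\dbeta\le T-1$ since $T\ge 4$, and let $\Q\in\R^{T\times(T-1)}$ be the first-difference matrix, so that $\Q'\iota_T=0$ and $\rank(\Q)=T-1\ge 3$; this delivers Assumption~\ref{asm:id-conditions}(i). For part~(ii) I must show that the map $\delta=(\sigma^2,\rho,\varphi)\mapsto\mathcal V(\delta):=\Q'P(\delta)P(\delta)'\Q$ is one-to-one on $\mathcal K_{(\sigma^2,\rho,\varphi)}$.

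The key observation is that $\Q'Y_i$ is the vector of first differences $(\Delta u_{i2},\ldots,\Delta u_{iT})$, and $\Delta u_{it}$ is an ARMA$(1,2)$ process with autoregressive polynomial $1-\rho L$ and moving-average polynomial proportional to $(1-L)(1+\varphi L)$. Since $u_i$ is stationary, $\mathcal V(\delta)$ is the symmetric Toeplitz matrix generated by the autocovariances $\gamma_w(0),\gamma_w(1),\ldots,\gamma_w(T-2)$ of $\Delta u_i$ (write $\gamma_w(k):=\cov(\Delta u_{it},\Delta u_{i,t-k})$ and $\gamma_u(k):=\cov(u_{it},u_{i,t-k})$), and with $T\ge 4$ these include lags $0,1,2$. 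Setting $A=\gamma_u(0)$, $B=\gamma_u(1)$ and using $\gamma_u(k)=\rho\,\gamma_u(k-1)$ for $k\ge 2$ from the stated form of $\Sigma_i$, one computes $\gamma_w(0)=2(A-B)$, $\gamma_w(1)=(2-\rho)B-A$, and $\gamma_w(2)=-(1-\rho)^2B$. Hence $A=B+\gamma_w(0)/2$, so $2\gamma_w(1)+\gamma_w(0)=2(1-\rho)B$, and combining with $\gamma_w(2)$,
\[
\rho \;=\; 1+\frac{2\,\gamma_w(2)}{2\,\gamma_w(1)+\gamma_w(0)}\qquad\text{whenever }B\ne 0 .
\]
This identifies $\rho$, then $B$, then $A$; finally $(\sigma^2,\varphi)$ is recovered from $A=c\,(1+\varphi^2+2\rho\varphi)$ and $B=c\,(\rho+\varphi)(1+\rho\varphi)$ with $c:=\sigma^2/(1-\rho^2)$, since the ratio $A/B$, viewed as a function of $\varphi$ for fixed $\rho$, is invariant under $\varphi\mapsto 1/\varphi$ and, being a ratio of quadratics, has no other preimages; as $1/\varphi\notin[-1,1]$ when $|\varphi|<1$ and $1/\varphi=\varphi$ when $\varphi=\pm1$, this pins down $\varphi$, after which $c=A/[(\varphi+\rho)^2+1-\rho^2]>0$ and $\sigma^2=c(1-\rho^2)$. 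Thus $\mathcal V$ is injective on $\mathcal K_{(\sigma^2,\rho,\varphi)}\cap\{\rho+\varphi\ne 0\}$.

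The remaining, and most delicate, point is the degenerate slice $\{\rho+\varphi=0\}$, on which the autoregressive and moving-average roots cancel, $u_i$ is white noise, $\mathcal V(\delta)=\sigma^2\,\Q'\Q$ depends only on $\sigma^2$, and $\mathcal V$ fails to be one-to-one; so Assumption~\ref{asm:id-conditions}(ii) does not hold verbatim and this is exactly where the hypothesis $G_*(\rho_i+\varphi_i=0)=0$ is used. Because $\gamma_w(2)=-(1-\rho)^2B\ne 0$ off the slice (as $c>0$, $1+\rho\varphi>0$, $\rho+\varphi\ne 0$) while $\gamma_w(2)=0$ on it, the image of $\mathcal V$ over $\{\rho+\varphi\ne 0\}$ is disjoint from $\{\sigma^2\Q'\Q:\sigma^2>0\}$, so the argument behind Theorem~\ref{thm:identification} carries through with injectivity needed only on a set of full $G_*$-measure: identify the pushforward of $G_*$ under $\mathcal V$ from the law of $\Q'Y_i$, invert it on $\{\rho+\varphi\ne 0\}$ to obtain the $\delta$-marginal of $G_*$, and then run the perturbation/Cramér--Wold step of Theorem~\ref{thm:identification} to recover the conditional law of $a_i$ given $\delta_i$, hence $G_*$. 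I expect the two main obstacles to be (a) the $\varphi$-identification step, where the classical $\varphi\leftrightarrow 1/\varphi$ ambiguity must be resolved via $\varphi\in[-1,1]$ and ruled out as the only coincidence, and (b) the bookkeeping needed to route the proof of Theorem~\ref{thm:identification} around the $G_*$-null degenerate slice (equivalently, a short remark that its proof uses injectivity only on the support of $G_*$). Finally, $T\ge 4$ is sharp: with $T=3$ only $\gamma_w(0)$ and $\gamma_w(1)$ are available, giving two equations for the three unknowns $(\sigma^2,\rho,\varphi)$.
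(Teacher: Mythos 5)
Your proposal is correct and follows essentially the same route as the paper: the paper also takes $\xmat=\iota_T$ and $\Q$ the first-difference matrix, recovers $\rho$ from $\rho = 1 + \mathcal{V}_{13}/(\mathcal{V}_{11}/2+\mathcal{V}_{12})$ (algebraically identical to your $1+2\gamma_w(2)/(2\gamma_w(1)+\gamma_w(0))$), and resolves the reciprocal-root ambiguity in $\varphi$ via the quadratic $\varphi^2-\mathcal{R}\varphi+1=0$ together with $|\varphi|\le 1$. If anything, you are more explicit than the paper about the fact that Assumption~\ref{asm:id-conditions}(ii) fails on the slice $\{\rho+\varphi=0\}$ and that the hypothesis $G_*(\rho_i+\varphi_i=0)=0$ must be threaded through the proof of Theorem~\ref{thm:identification} via the disjoint-image/full-measure argument.
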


The condition $T \ge 4$ is necessary to observe the process over a horizon sufficient to distinguish the two sources of persistence, namely $\rho_i$ and $\varphi_i$.
The restriction $G_*(\rho_i + \varphi_i = 0) = 0$ ensures that the two dynamic components do not perfectly offset one another. 
%If $\rho_i$ and $\varphi_i$ were equal in magnitude but opposite in sign, their effects would cancel out completely, reducing the process to white noise and rendering $G_*$ unidentified.
We provide the specific choice of $(\xmat,\Q)$ for Assumption~\ref{asm:id-conditions} in Appendix~\ref{app:choice:x:m}.
%We provide the specific choice of $(\xmat,\Q)$ for Assumption~\ref{asm:id-conditions} in the arXiv version of this paper \citep{SLN:2026:arXiv}.

\subsection{Identification of the HIVDX Model}

We now return to the HIVDX model given in \eqref{def:model:hivxd}.
In this model, the parameters are partitioned as $\hc_i = (a_i, b_i)$ and $\delta_i = (\sigma_i^2, \rho_i)$, yielding the full parameter vector $\theta_i = (a_i, b_i, \sigma_i^2, \rho_i)$ with dimension $\dwhole = 4$.
Define $\Delta x_t := x_t - x_{t-1}$ as the first difference operator applied to the sequence $(x_t)_{t=1}^T$.
Whether and how $G_*$ can be identified depends critically on the specific time-series variation of the covariates. 
We illustrate this using two processes for $X_{2it}$. In the first case, the parameters of the HIVDX model are identified when $T\ge 4$, whereas identification requires $T\ge 5$ in the second case.

\subsubsection{Case I}

% Suppose that $X_{2,it}$ satisfies either 
% $\Delta X_{2,i2} = \pm \Delta X_{2,i3} \ne 0$ or $\Delta X_{2,i2} \ne \Delta X_{2,i4}$
% with positive probability.
% For example, $X_{2,it} = X_{2,i1} + (t - 1)$ a linear trend with a random initial value. Then, we have $\Delta X_{2,it} \equiv 1$ for all $t$.

% {\color{red}
% As a leading example, consider $X_{2,it} = X_{2,i1} + (t - 1)$: a linear trend with a random initial value. Then, we have $\Delta X_{2,it} \equiv 1$ for all $t$.}

We start with the following proposition. 

\begin{prop}
\label{prop:HIVDX-id-T=4}
Assume that $T\ge 4$ and either
$
\Delta X_{2,i2} = \pm \Delta X_{2,i3} \ne 0  \text{ or } \Delta X_{2,i2} \ne \Delta X_{2,i4} 
$
holds with positive probability.
Then, $G_*$ is identified in $\mathcal{G} = \mathcal{P}(\Theta)$, where $\Theta = \mathbb{R} \times \K_{(\sigma^2,\rho)}$
and $\K_{(\sigma^2,\rho)}$ is a compact subset of $(0, \infty) \times (-1,1)$.
\end{prop}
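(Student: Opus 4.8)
The plan is to verify Assumption~\ref{asm:id-conditions} for the HIVDX model with $T=4$ and then invoke Theorem~\ref{thm:identification}. Here $\dbeta=2$ (the intercept and the slope on $X_{2,i}$), so we need to exhibit, for each realization $\xmat = X_i$ in the relevant subset of $\supp(X_i)$, a matrix $\Q \in \mathbb{R}^{4\times 2}$ with $\Q'\xmat = 0$, $\rank(\Q)=2$, together with the one-to-one property of $\delta\mapsto \Q'P(\delta)P(\delta)'\Q$. Since $\dbeta=2$ and $T=4$, $\Q$ will be $4\times 2$, so $\Q'Y_i$ is a two-dimensional Gaussian mixture and we must recover $\delta_i=(\sigma_i^2,\rho_i)$ from the distribution of its $2\times 2$ covariance matrix.

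First I would construct $\Q$ explicitly as a function of the covariate path. The column space of $\xmat=(\iota_4, X_{2,i})$ is spanned by $\iota_4$ and $X_{2,i}$; its orthogonal complement has dimension $2$. A convenient basis is obtained by taking two "second-difference"-type contrasts that annihilate both a constant and $X_{2,i}$. Concretely, one can look for vectors $q$ with $\sum_t q_t = 0$ and $\sum_t q_t X_{2,it} = 0$; the double-differencing operator applied to $Y$ kills $a_i$, and the remaining one-dimensional freedom is pinned down by orthogonality to $X_{2,i}$. This is where the covariate conditions enter: the stated restrictions ($\Delta X_{2,i2} = \pm\Delta X_{2,i3}\ne 0$, or $\Delta X_{2,i2}\ne\Delta X_{2,i4}$) are exactly what guarantee that the resulting $\Q$ has full column rank $2$ — i.e., that the two natural contrasts are not collinear once we also demand orthogonality to the covariate. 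I would handle the cases corresponding to the two alternative conditions separately, giving an explicit $\Q$ in each, and check $\rank(\Q)=2$ by a short determinant computation.

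Next I would verify Assumption~\ref{asm:id-conditions}(ii): that $\mathcal{V}(\delta) := \Q'P(\delta)P(\delta)'\Q = \Q'\Sigma(\delta)\Q$ is injective in $\delta=(\sigma^2,\rho)$ over the compact set $\K_{(\sigma^2,\rho)}\subseteq(0,\infty)\times(-1,1)$. Using the explicit form of $\Sigma_i$ in \eqref{def:Sigma:model}, $\Q'\Sigma(\delta)\Q$ is $\sigma^2/(1-\rho^2)$ times a matrix whose entries are polynomials in $\rho$ determined by $\Q$. The strategy here mirrors the HIVD/AR(1) computation at the end of Section~\ref{subsubsec:HIVD-AR1}: I would exhibit explicit rational functions of the entries of $\mathcal{V}$ that return $\rho$ and then $\sigma^2$, thereby inverting the map. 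One has to check that the denominators appearing in these formulas do not vanish on $\K_{(\sigma^2,\rho)}$ — this uses $\rho\in(-1,1)$, $\sigma^2>0$, and the fact that $\Q$ genuinely has rank $2$ with columns not orthogonal to the relevant eigen-directions of the AR(1) covariance — so that the recovery map is well-defined throughout the parameter space. With (i) and (ii) in hand, Theorem~\ref{thm:identification} delivers identification of $G_*$ in $\mathcal{G} = \mathcal{P}(\mathbb{R}\times\K_{(\sigma^2,\rho)})$, noting that $\dbeta=2\le T-1=3$ as required.

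The main obstacle I anticipate is the rank-2 verification of $\Q$ under the precise covariate conditions, and in particular showing that the two stated alternatives are the \emph{right} conditions — i.e., that outside them (e.g. $X_{2,i}$ an exact affine function of time with no extra variation) the $4\times 2$ annihilator degenerates and identification genuinely fails at $T=4$, motivating the separate $T\ge 5$ treatment in Case~II. Working out the annihilator in closed form in terms of $\Delta X_{2,it}$, and then tracking which algebraic nondegeneracy of $\Q$ is needed for step (ii) to go through (so that the same $\Q$ works for \emph{both} parts of Assumption~\ref{asm:id-conditions} simultaneously), is the delicate bookkeeping; the rest is a direct application of the general theorem plus the AR(1)-style inversion already illustrated for the HIVD model.
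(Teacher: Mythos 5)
Your overall skeleton --- verify Assumption~\ref{asm:id-conditions} for a suitable $(\xmat,\Q)$ and invoke Theorem~\ref{thm:identification} --- is the same as the paper's, but there is a genuine gap in where you locate the role of the covariate hypotheses, and the substantive part of the argument is missing. You claim the conditions $\Delta X_{2,i2}=\pm\Delta X_{2,i3}\ne 0$ or $\Delta X_{2,i2}\ne\Delta X_{2,i4}$ are ``exactly what guarantee that the resulting $\Q$ has full column rank $2$.'' That is not where they bite: whenever $X_{2,i}$ is non-constant over time, $\rank(\xmat)=2$ and the orthogonal complement of its column space is two-dimensional, so a rank-$2$ annihilator $\Q$ always exists. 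The paper's own counterexample (Appendix~B.2) takes $(\Delta X_{2,i2},\Delta X_{2,i3},\Delta X_{2,i4})=(0,1,0)$, for which a perfectly good rank-$2$ $\Q$ exists, yet identification fails because the induced map reduces to $\mathcal{V}_{11}=2\sigma^2/(1+\rho)$ and $\mathcal{V}_{12}=-\rho(1-\rho)\sigma^2/(1+\rho)$, and $\rho\mapsto\rho(1-\rho)$ is not injective on $(-1,1)$. So the covariate conditions are needed for Assumption~\ref{asm:id-conditions}(ii) --- injectivity of $\delta\mapsto\Q'P(\delta)P(\delta)'\Q$ --- not for part~(i), and your suggestion that the inversion then goes through generically ``as in the AR(1) computation'' once $\Q$ has rank $2$ is exactly the step that fails.

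The actual content of the proof is an exhaustive case analysis over the patterns of $(\Delta x_2,\Delta x_3,\Delta x_4)$, showing injectivity of $\delta\mapsto\mathcal{V}$ separately in each admissible pattern; the excluded symmetric patterns $(0,z,0)$, $(z,0,z)$, and $(z,z',z)$ with $z\ne\pm z'$ are precisely the ones where a fourth period is not enough. The paper obtains Proposition~\ref{prop:HIVDX-id-T=4} as a corollary of the proof of Proposition~\ref{prop:HIVDX-id-T=5}: it checks that $T\ge 5$ is invoked only in Cases~1(c), 2(c), and 3(c) of that proof, identifies the covariate patterns those cases correspond to, and notes that the hypothesis of Proposition~\ref{prop:HIVDX-id-T=4} rules them out. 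Your proposal neither carries out this case analysis nor reduces to it, so as written it does not establish the result.
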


%In our empirical application in Section~\ref{sec:empirical}, we use potential experience as the covariate $X_{2,it}$, which exhibits this exact structure.\footnote{Specifically, since \(X_{2,it}\) denotes years of potential experience, it increases exactly by one whenever \(t\) increases by one.}

To illustrate, suppose that $X_{2,it} = X_{2,i1} + (t - 1)$ is a linear trend with a random initial value. Then, we have $\Delta X_{2,it} \equiv 1$ for all $t$.
%consider the case  when $\Delta X_{2,it} \equiv 1$. 
%
We need to find  an appropriate  $(\xmat,\Q)$.
Pick $X_{2,1}$ in $\supp(X_{2,i1})$ and define
\begin{equation*}
\xmat := \left( \begin{matrix}
    1 & X_{2,1} \\
    1 & X_{2,1}+1 \\
    1 & X_{2,1}+2 \\
    1 & X_{2,1}+3 
\end{matrix} \right)  ,\quad 
\Q := \left( \begin{matrix}
    1 & 0 \\
    -2 & 1 \\
    1 & -2 \\
    0 & 1 
\end{matrix} \right),
\end{equation*}
where $\Q$ represents the second-difference operator in matrix form.
It is clear that $\xmat$ has full column rank.
Furthermore, because $\Q$ annihilates both the intercept (column of ones) and $X_{2, it}$, the transformation $\Q' Y_i$ eliminates the fixed effects $\hc_i$ entirely.
Let
\begin{equation*}
\mathcal{V} := \Q' P(\delta)P(\delta)' \Q \equiv \var(\Q' Y_i \mid \delta_i = \delta, \X_i = \xmat),
\end{equation*}
where $\Q' Y_i \in \mathbb{R}^2$ collects the second differences of the errors, $(\Delta^2 {u}_{i3}, \Delta^2 {u}_{i4})$, conditional on $\X_i=\xmat$.
We obtain the following equations:
\begin{align*}
\mathcal{V}_{11} &= \var(\Delta^2 {u}_{i3} \mid \delta_i = \delta) = \var(u_{i3}-2 u_{i2} + u_{i1} \mid \delta_i = \delta) = \frac{2(3-\rho)}{1+\rho}\sigma^2, \\
\mathcal{V}_{12} &= \cov(\Delta^2 {u}_{i3}, \Delta^2 {u}_{i4} \mid \delta_i = \delta) = \cov(u_{i3}-2 u_{i2} + u_{i1}, u_{i4}-2 u_{i3} + u_{i2} \mid \delta_i = \delta) \\
    &= \frac{(3-\rho)(\rho-1)}{1+\rho}\sigma^2.
\end{align*}
Hence, the system can be inverted:
\begin{equation*}
(\sigma^2, \rho) = \left(\frac{\mathcal{V}_{11}}{2}\frac{1 + \mathcal{V}_{12}/\mathcal{V}_{11}}{1 - \mathcal{V}_{12}/\mathcal{V}_{11}} , \frac{2 \mathcal{V}_{12}}{\mathcal{V}_{11}} + 1\right).
\end{equation*}
This shows that the mapping $\delta \mapsto \mathcal{V}$ is one-to-one , thereby satisfying Assumption~\ref{asm:id-conditions}(ii). 
%However, as shown in Appendix \ref{app:choice:x:m}, Proposition~\ref{prop:HIVDX-id-T=4} will not identify Case II which will now be discussed.
However, as shown in Appendix \ref{app:choice:x:m}, %the arXiv version of this paper \citep{SLN:2026:arXiv}, 
Proposition~\ref{prop:HIVDX-id-T=4} will  not identify the $X_2$ process in  Case II that will now be discussed.

\subsubsection{Case II}

%The following proposition establishes identification under a weaker condition on $X_{2,it}$ than that in Proposition~\ref{prop:HIVDX-id-T=4}, but requires at least one additional time period.

\begin{prop}
\label{prop:HIVDX-id-T=5}
Assume that $T \ge 5$ and that $\Delta X_{2,it}$ is not identically zero for some $t \in \{2,\ldots, T\}$.
Then, $G_*$ is identified in $\mathcal{G} = \mathcal{P}(\Theta)$, where $\Theta = \mathbb{R} \times \K_{(\sigma^2,\rho)}$
and $\K_{(\sigma^2,\rho)}$ is a compact subset of $(0, \infty) \times (-1,1)$.
\end{prop}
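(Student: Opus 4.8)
The plan is to reduce to Theorem~\ref{thm:identification} by exhibiting a fixed pair $(\xmat,\Q)$ satisfying Assumption~\ref{asm:id-conditions}. Because $\Delta X_{2,it}$ is not identically zero for some $t$, the support of $X_i$ contains a point $\xmat=(\iota_T,\xmat_2)$ whose second column $\xmat_2$ is non-constant; take this as $\xmat$, so that $\rank(\xmat)=2=\dbeta$. Since $T\ge 5$, we have $\dbeta\le T-1$ and $T-\dbeta=T-2\ge 3$, so Assumption~\ref{asm:id-conditions}(i) holds for any full-column-rank $\Q\in\mathbb{R}^{T\times(T-2)}$ with $\Q'\xmat=0$; the columns of any such $\Q$ span the annihilator $A:=\{v\in\mathbb{R}^T:v'\iota_T=0,\ v'\xmat_2=0\}$, of dimension $T-2$. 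A change of basis $\Q\mapsto\Q B$ with $B$ invertible replaces $\Q'P(\delta)P(\delta)'\Q$ by $B'\Q'P(\delta)P(\delta)'\Q\,B$, so Assumption~\ref{asm:id-conditions}(ii) is a property of the subspace $A$ alone. Writing $c:=\sigma^2/(1-\rho^2)=\var(u_{it})$ and recalling from \eqref{def:Sigma:model} that $\Sigma(\delta):=P(\delta)P(\delta)'$ has entries $c\,\rho^{|s-t|}$, it therefore remains to show that the bilinear form $(v,w)\mapsto v'\Sigma(\delta)w$ on $A\times A$ determines $(c,\rho)$ — equivalently $(\sigma^2,\rho)$ — over $\K_{(\sigma^2,\rho)}$.

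The core rests on a short dimension count followed by an explicit choice of two annihilator directions. For any window of three consecutive indices the vectors supported there form a three-dimensional space, so the two linear constraints defining $A$ leave a nonzero vector of $A$ supported on that window; for a four-index window the intersection with $A$ has dimension at least two. I then pick $v,w\in A$ supported on short windows for which the three quantities $v'\Sigma(\delta)v,\ v'\Sigma(\delta)w,\ w'\Sigma(\delta)w$ are low-degree polynomials in $\rho$ and such that a ratio of two of them is strictly monotone in $\rho$ on $(-1,1)$: then $\rho$ is read off from that ratio, $c$ from a single entry, and $\sigma^2=c(1-\rho^2)$. The cleanest case is when $\xmat_2$ is constant on three consecutive indices $\{j,j+1,j+2\}$; then the rows of $\Q$ producing $u_{i,j}-u_{i,j+1}$ and $u_{i,j+1}-u_{i,j+2}$ both lie in $A$, and since $\var(u_{i,j}-u_{i,j+1})=2c(1-\rho)$ and $\cov(u_{i,j}-u_{i,j+1},\,u_{i,j+1}-u_{i,j+2})=-c(1-\rho)^2$, one obtains $\rho=1+2\cov/\var$ and then $c$. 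The remaining configurations of $\xmat_2$ — a constant stretch of length exactly two, a window on which $\xmat_2$ is locally polynomial, or none of these — are treated by analogous explicit choices: shift-type annihilators tuned to the local linear recurrence obeyed by $(\iota_T,\xmat_2)$ (for instance the row $(2,-3,1)$ when $\xmat_2$ is locally of the form $\alpha+\beta\,2^{t}$), or low-order difference rows placed so that a cross-covariance entry of $\Q'\Sigma(\delta)\Q$ carries a factor $\rho^{k}$ with $k\ge 1$. In each case $\rho$ is pinned down by a ratio of entries and $c=\var(u_{it})$ by one entry, so Assumption~\ref{asm:id-conditions}(ii) holds and Theorem~\ref{thm:identification} yields identification of $G_*$.

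The main obstacle is the uniformity of this argument across all non-constant covariate patterns, and it is precisely where the step up from $T=4$ to $T\ge 5$ enters. With $T=4$ the annihilator $A$ is only two-dimensional, and for a single level shift $\xmat_2=(0,0,1,1)'$ the only available ratio is proportional to $\rho(1-\rho)$, which is symmetric about $\rho=\tfrac12$ and hence not injective on $(-1,1)$; this is the obstruction underlying Proposition~\ref{prop:HIVDX-id-T=4}. For $T\ge 5$, $\dim A\ge 3$ guarantees access to a cross-covariance entry at lag at least one and removes this degeneracy. The real work, then, is to organize the non-constant covariate patterns into a small number of cases and, in each, verify that the chosen ratio of quadratic-form entries is injective in $\rho$ on any compact $\K_{(\sigma^2,\rho)}\subset(0,\infty)\times(-1,1)$ — kept manageable by exploiting the Toeplitz structure and tridiagonal precision of $\Sigma(\delta)$.
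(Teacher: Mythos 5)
Your overall strategy is exactly the paper's: fix $\xmat\in\supp(X_i)$ with non-constant second column, build an annihilator $\Q$, and verify Assumption~\ref{asm:id-conditions}(ii) by showing that $\delta\mapsto \Q'P(\delta)P(\delta)'\Q$ is injective, then invoke Theorem~\ref{thm:identification}. Your "cleanest case" (three consecutive indices on which $\xmat_2$ is constant, i.e.\ two adjacent zeros in $\Delta\xmat_2$) is the paper's Case~1(a) and is handled correctly.

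However, there is a genuine gap where you write that the remaining configurations "are treated by analogous explicit choices" with $\rho$ "read off" from a ratio that is "strictly monotone in $\rho$ on $(-1,1)$." This is the entire content of the proposition, and the assertion is not merely unproved but fails as stated in the hardest cases. When $\Delta\xmat_2$ has no zeros (the paper's Case~3), the available quadratic-form entries lead to identifying equations that are \emph{quadratic} in $\rho$: for instance, with $\alpha=\kappa+1/\kappa$ where $\kappa$ is the common ratio of consecutive differences, one identifies a quantity of the form $\rho+(\alpha+2)/(\rho-(\alpha+1))$, which is strictly monotone on $(-1,1)$ only when $\alpha<-2$; when $\alpha>2$ the equation has two real roots and one must separately show (using $|\alpha|>2$ and an evaluation of the relevant quadratic at $\rho=\pm1$) that at most one root lies in $(-1,1)$. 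Your taxonomy of covariate patterns (constant stretch of length two, "locally polynomial," "none of these") also does not visibly exhaust the possibilities; the paper's proof organizes the enumeration by the number and location of zeros in $\Delta\xmat_2$ and by whether the ratios $\Delta\xmat_{2,t+1}/\Delta\xmat_{2,t}$ are constant, alternating, or neither, and each branch requires its own explicit covariance computation. Without that exhaustive case analysis and the root-exclusion arguments in the no-zero cases, the proof is incomplete at precisely the step the proposition exists to establish.
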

An example of $X_{2,it}$ that satisfies the condition in the proposition is a  one-time universal level shift: $X_{2,it} = \mathbf{1}\{t \ge 3\}$,
as in a nationwide policy change.
Then,
the sequence of first differences is
$
(\Delta X_{2,i2}, \Delta X_{2,i3}, \Delta X_{2,i4}, \Delta X_{2,i5}) = (0, 1, 0, 0).
$
Then, $\Q$ can now be constructed as
\begin{equation*}
\Q = \begin{pmatrix}
    -1 & 0 & 0 \\
    1 & 0 & 0 \\
    0 & -1 & 0 \\
    0 & 1 & -1 \\
    0 & 0 & 1
\end{pmatrix},
\end{equation*}
which yields $\Q'Y_i = (\Delta u_{i2}, \Delta u_{i4}, \Delta u_{i5})'$. Since the $(1,1)$ and $(2,3)$ elements of the covariance matrix $\mathcal{V} = \var(\Q'Y_i \mid \delta_i = \delta)$ are given by
\begin{align*}
\mathcal{V}_{11} & = \var(\Delta {u}_{i2} \mid \delta_i = \delta) = \frac{2\sigma^2}{1+\rho},\\
\mathcal{V}_{23} & = \cov(\Delta u_{i4}, \Delta u_{i5} \mid \delta_i = \delta) = \frac{(\rho-1)\sigma^2}{1+\rho},
\end{align*}
we can identify $\rho$ via the ratio $2\mathcal{V}_{23}/\mathcal{V}_{11} + 1$, and subsequently recover $\sigma^2$.
This establishes that the mapping $\delta \mapsto \mathcal{V}$ is one-to-one, as required by Assumption~\ref{asm:id-conditions}(ii).
% By Proposition~\ref{prop:HIVDX-id-T=5}, this result extends to the more general specification
% \begin{align*}
% Y_{it} = a_i + b_i\,\mathbf{1}\{t \ge s\} + u_{it}, \quad t = 1,\ldots,5,
% \end{align*}
% for any known switch point $s \ge 2$, provided the data cover both pre- and post-treatment periods. \textcolor{red}{why does treatment come out of the blue?} 
The result in  Proposition~\ref{prop:HIVDX-id-T=5} also  extends to the analysis of heterogeneous treatment implemented at a known time $s \ge 2$, giving
\begin{align*}
Y_{it} = a_i + b_i\,\mathbf{1}\{t \ge s\} + u_{it}, \quad t = 1,\ldots,5.
\end{align*}
Then the term $\mathbf{1}\{t \ge s\}$ represents a common treatment or policy, and $b_i$ captures the individual-specific treatment effect. 

As these examples illustrate, incorporating covariates necessitates a significantly more involved analysis.
This is because the transformation matrix $\Q$ that yields differenced data often obscures the autocovariance structure required to identify $\delta_i$.
It is worth emphasizing that while the sufficient conditions in Proposition~\ref{prop:HIVDX-id-T=5} are simple to verify, the underlying proof is combinatorially complex.
Establishing this general result requires an exhaustive analysis of all distinct patterns of $\Delta X_{2,it}$ to verify the one-to-one mapping between $\delta$ and $\mathcal{V}$ in every possible scenario, as shown in the Appendix.

\section{Consistency of the \NPM}
\label{sec:consistency}

This section establishes that $G_*$ can be consistently estimated by the \NPM\ under Assumption~\ref{asm:id-conditions} and the additional regularity assumptions stated below.

\begin{assumption}\label{asm:consistency}
\begin{enumerate}
    \item [(i)] $\mathbb{E}[\|Y_i\|^2] < \infty$ and $\mathbb{E}[\|\X_i\|^2]<\infty$.
    \item [(ii)] $\mathbb{P}(\rank(\X_i) = \dbeta) = 1$.
\end{enumerate}
\end{assumption}

Assumption~\ref{asm:consistency}(i) imposes standard moment conditions.
Assumption~\ref{asm:consistency}(ii) requires full column rank of each $\X_i$. This ensures that each individual likelihood distinguishes between different values of $\hc_i$, so that the individual MLEs are well-defined.
%\textcolor{red}{this is not necessary}~\ref{asm:consistency}(ii) would render EB estimation of $\hc_i$ ill-posed, as the corresponding likelihood would fail to identify $\hc_i$.

Our proof of the consistency of the \NPM\ builds upon \citet{kiefer1956consistency}, who consider the case where $\dbeta = 1$.
Since we are interested in cases where $\dbeta > 1$, we verify a multivariate version of Assumptions~1--5 in \citet{kiefer1956consistency}.
%{\color{blue} Distinct from the fundamental issue of identification, the multivariate setting presents a topological complexity not found in the scalar case: probability mass can diverge along infinitely many directions.}
Distinct from the issue of identification, the multivariate setting presents a topological complexity not found in the scalar case: probability mass can diverge along multiple directions.
A primary technical challenge therefore lies in finding a metric that renders the space $\mathcal{G}$ totally bounded (thereby ensuring the compactness of its completion) while enabling the comparison of distributions with such diverging mass.

To this end, the vague topology is the natural choice. We adopt the canonical metrization of this topology, defined by
\begin{equation*}
d(G_0, G_1) = \sum_{r=1}^\infty \frac{1}{2^r} \left| \int_{\Theta} h_r(\theta)dG_0(\theta) - \int_{\Theta} h_r(\theta)dG_1(\theta) \right|,
\end{equation*}
where $\{h_r\}_{r=1}^\infty$ is a dense sequence in the unit ball of $C_c(\Theta)$ with respect to the supremum norm $\|h\|_{\infty}=\sup_{\theta \in \Theta} |h(\theta)|$.\footnote{Here, $C_c(\Theta)$ denotes the space of continuous real-valued functions on $\Theta$ with compact support.}
Since the test functions $h_r$ vanish at infinity, this metric treats any mass that drifts to infinity as if it were collapsed to a single ``point at infinity,'' irrespective of the direction of divergence.
This choice of metric is compatible with the model because the likelihood cannot distinguish between different parameter values at infinity (i.e., the likelihood vanishes as parameters diverge).
In effect, the metric compactifies the domain by collapsing all unbounded directions into a single point, mirroring the likelihood's asymptotic behavior.
Finally, we note that convergence in the vague topology coincides with standard weak convergence whenever the limiting distribution is itself a probability measure.

The next theorem presents the almost sure consistency of the \NPM\ with respect to the weak metric $d$.

\begin{thm}
\label{thm:consistency}
Let Assumptions~\ref{asm:DGP}, \ref{asm:id-conditions}, and \ref{asm:consistency} hold.
Then, $d(\hat{G}, G_*) \asto 0$ as $N \to \infty$.
\end{thm}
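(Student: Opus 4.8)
The plan is to follow the classical Kiefer--Wolfowitz strategy for consistency of the NPMLE, but carried out with respect to the vague metric $d$ rather than weak convergence on a compact parameter space. The overall architecture has three stages: (i) verify the analogues of Kiefer--Wolfowitz's regularity conditions (Assumptions 1--5 in \citet{kiefer1956consistency}) in our multivariate, partially-compactified setting; (ii) use total boundedness of $(\mathcal{G}, d)$ to extract, from any subsequence along which $\hat G$ does not converge to $G_*$, a further subsequence converging in $d$ to some sub-probability measure $\bar G$ on the one-point compactification $\bar\Theta = \Theta \cup \{\infty\}$; and (iii) show that the likelihood-ratio / Wald-type argument forces $F(\bar G) \ge F(G_*)$ in a suitable sense, which by the identification result of Theorem~\ref{thm:identification} yields $\bar G = G_*$ (in particular no mass escapes to $\infty$), a contradiction.

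More concretely, first I would set up the compactified space: extend each test function $h_r$ and, more importantly, observe that the integrand $\theta \mapsto \ell(Y_i \mid X_i, \theta)$ extends continuously to $\bar\Theta$ by setting it to $0$ at the point at infinity --- this is exactly the ``compatibility'' remark in the text, and it is what makes the vague topology the right one. Uniform integrability of $\log f_G$ needs an envelope: using Assumption~\ref{asm:DGP}(ii) ($\underline c\, I_T \le \Sigma_i \le \bar c\, I_T$) one gets a deterministic lower bound $\ell(Y_i\mid X_i,\theta) \ge c_1 \exp(-c_2\|Y_i - X_i\beta\|^2)$ and an upper bound $\le c_3$, which together with Assumption~\ref{asm:consistency}(i) ($\mathbb{E}\|Y_i\|^2, \mathbb{E}\|X_i\|^2 < \infty$) and (ii) (full rank, so $\betamle_i$ well-defined and $\|\betamle_i\|$ integrable) controls $\mathbb{E}|\log f_G|$ uniformly and gives the one-sided envelope needed to push the SLLN/Glivenko--Cantelli-type step through. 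The continuity of $G \mapsto F(G)$ and $G\mapsto F_N(G)$ in $d$ (or at least upper semicontinuity of the limit and a uniform law) is then the technical heart of verifying the Kiefer--Wolfowitz hypotheses; here one decomposes $\int \ell\, dG$ over a large compact $K\subset\Theta$ and its complement, using that $\ell$ is small off $K$ uniformly (again via the spectral bounds on $\Sigma_i$ forcing the Gaussian kernel to decay in $\beta$), so that vague convergence of $G_n \to \bar G$ implies $f_{G_n}(Y_i,X_i) \to f_{\bar G}(Y_i,X_i)$ for a.e.\ $(Y_i,X_i)$.

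Then the contradiction argument: suppose $d(\hat G, G_*) \not\to 0$ on an event of positive probability; by total boundedness pass to a subsequence with $\hat G \to \bar G$ vaguely, where $\bar G$ is a (possibly deficient) measure on $\bar\Theta$. The NPMLE property $F_N(\hat G) \ge F_N(G_*)$ combined with the uniform SLLN gives $F(\bar G) \ge F(G_*)$, interpreting $F$ on $\bar\Theta$ with $\ell \equiv 0$ at $\infty$ --- intuitively, mass at infinity contributes nothing to the likelihood, so it is weakly dominated and cannot help. Jensen/strict-concavity of $\log$ together with this inequality forces $f_{\bar G}(Y_i,X_i) = f_{G_*}(Y_i,X_i)$ a.s.\ and, crucially, that $\bar G$ assign no mass to $\{\infty\}$ (otherwise $f_{\bar G} < f_{G_*}$ on a positive-measure set, contradicting the inequality). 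Hence $\bar G$ is a genuine probability measure on $\Theta$ with $f_{\bar G} = f_{G_*}$ a.s., so Theorem~\ref{thm:identification} gives $\bar G = G_*$, contradicting $d(\hat G, G_*)\not\to 0$.

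The main obstacle I expect is the ``no escape of mass'' step done carefully and uniformly over the data: one must rule out the possibility that $\hat G$ places a vanishing-but-nonnegligible amount of mass very far out in $\beta$-space in a way that, for the particular realizations $(Y_i, X_i)$, still inflates the likelihood --- and do so simultaneously with the sub-probability limit argument. This is where the multivariate nature bites, since mass can leak along infinitely many directions in $\mathbb{R}^{\dbeta}$; the resolution is precisely that the vague metric collapses all these directions and that the Gaussian likelihood, thanks to the uniform spectral bounds $\underline c I_T \le \Sigma_i \le \bar c I_T$, decays uniformly in $\|\beta\|$, so escaping mass is strictly penalized in the limiting objective $F$. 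A secondary (more routine) obstacle is the measurability/selection issue for $\hat G$, since the NPMLE may be non-unique in the multivariate case; this is handled by working with any measurable selection and noting the argument only uses $F_N(\hat G) \ge F_N(G_*)$.
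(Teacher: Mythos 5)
Your proposal is correct and follows essentially the same route as the paper: the paper's proof also compactifies $\mathcal{G}$ in the vague metric (taking the closure in the space of sub-probability measures), verifies the Kiefer--Wolfowitz regularity conditions via continuity of $\mu\mapsto f_\mu$ and the spectral-bound envelope for $\log f_{G_*}$, and rules out escaping mass by the same total-mass/Jensen argument (a deficient limit $\mu$ has $\int f_\mu\,dy=\mu(\Theta)<1$, so $f_\mu\ne f_{G_*}$ on a set of positive probability) before invoking Theorem~\ref{thm:identification}. The only cosmetic difference is that you unfold the subsequence-contradiction argument that the paper leaves implicit inside the citation of \citet{kiefer1956consistency}.
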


Alternative metrics are discussed in the \NPM\ literature, though they serve theoretical roles distinct from our metric $d$.
The first is the Hellinger distance between the marginal densities, $\frac{1}{2}\int (\sqrt{f_{\hat{G}}(y)} - \sqrt{f_{G_*}(y)})^2 dy$.
While Hellinger distance is frequently considered in mixture models (\citeay{jiangGeneralMaximumLikelihood2009}; \citeay{Bodhi:JRSSB}), it measures proximity in the data space rather than the parameter space. In our setting, the convergence of the fitted marginal density $f_{\hat{G}}$ to the truth $f_{G_*}$ is of secondary importance.
Another alternative is the Wasserstein distance (e.g., $W_2$ as in \citeay{Bodhi:JRSSB}).
However, $W_2$ requires the convergence of second moments and consequently becomes unbounded if probability mass escapes to infinity. This renders it unsuitable for our goal of establishing consistency through compactification.
The metric $d$ allows us to apply the Kiefer-Wolfowitz machinery. It strikes a necessary balance: it induces a topology sufficiently weak to ensure compactness (accommodating mass at infinity), yet strong enough to imply the consistency of downstream EB estimation.

The consistency of the \NPM\ extends to the constrained \NPM, defined as
\begin{equation*}
\tilde{G} \in \argmax_{G \in \tilde{\mathcal{G}}} F_N(G),
\end{equation*}
where $\tilde{\mathcal{G}} \subseteq \mathcal{G}$ is a subset that contains $G_*$.
For instance, \citet{GK:2017} employ a profile likelihood approach to estimate a dynamic panel model similar to \eqref{eq:simple-dynamic-panel} under the restriction $\rho_i = \rho_*$. In our framework, imposing homogeneity ($\rho_i \equiv \rho$) corresponds to optimizing over a restricted subspace $\tilde{\mathcal{G}}$.

The estimator $\hat{G}$ allows for the recovery of various features of the data-generating process that can be expressed as continuous functionals. 
In particular, consistent estimators of prior and posterior moments can be obtained from the \NPM\ under suitable regularity conditions.
For example, the covariance matrix $\var_{G_*}(\theta_i)$ of the true prior can be estimated by $\var_{\hat{G}}(\theta_i)$, provided that $\|\theta\|^2$ is uniformly integrable with respect to the sequence of estimated priors.
See the discussion following Assumption~\ref{asm:EB} for sufficient conditions ensuring such uniform integrability.

\section{Regret Consistency and $\ell_p$ Convergence}
\label{subsec:EB-consistency}

Let $(\Tau_i)_{i=1}^N = (\Tau(\theta_i))_{i=1}^N$ denote the unit-level latent parameters of interest, each defined as a function of $\theta_i$.\footnote{We focus on the case of a scalar parameter here. However, the discussion and results in this section readily extend to the vector-valued case.}
The estimation problem consists of constructing a compound decision rule $(Y_i, \X_i)_{i=1}^N \mapsto (\hat{\Tau}_i)_{i=1}^N$. G-modeling yields the estimator:
\begin{equation*}
\taueb_i := \mathbb{E}_{\hat{G}}[\Tau_i \mid Y_i, \X_i] \equiv \frac{\int_{\Theta}\Tau(\theta) \ell(Y_i \mid \X_i,\theta) d\hat{G}(\theta)}{f_{\hat{G}}(Y_i,\X_i)}.
\end{equation*}
This section establishes asymptotic guarantees for $\taueb_i$ in terms of regret 
$R(\taueb, \Tau) - R(\taupm, \Tau)$,
where   $R(\hat{\Tau}, \Tau) := \mathbb{E}_{G_*}[L(\hat{\Tau},\Tau)]$ is compound risk. Specifically, we consider  quadratic loss $L(\hat{\Tau}, \Tau) :=\frac{1}{N}\sum_{i=1}^N (\hat\tau_i-\tau_i)^2$, and hence the \emph{oracle} posterior mean is defined by:
\begin{equation}
\label{eq:PM-tau}
\taupm_i := \mathbb{E}_{G_*}[\Tau_i \mid (Y_i,\X_i)_{i=1}^N] = \mathbb{E}_{G_*}[\Tau_i \mid Y_i, \X_i], \quad i=1,\ldots, N,
\end{equation}
where the second equality follows from the independence across units. Write
\begin{equation*}
R(\taueb, \Tau) - R(\taupm, \Tau)
= \mathbb{E}_{G_*}\left[\frac{1}{N} \sum_{i=1}^N (\taueb_i-\Tau_i)^2 - (\taupm_i-\Tau_i)^2\right] = \mathbb{E}\left[ \frac{1}{N} \sum_{i=1}^N (\taueb_i - \taupm_i)^2\right] \ge 0,
\end{equation*}
since $\mathbb{E}_{G_*}[(\taueb_i - \taupm_i)(\taupm_i - \Tau_i)] = 0$ by the properties of conditional expectation.
The condition $\lim_{N \to \infty} [R(\taueb, \Tau) - R(\taupm, \Tau)] = 0$ is referred to as \emph{regret consistency} (see, e.g., \citet{AGT:EBAdaptive}) or \emph{asymptotic optimality} (see, e.g., \citet{Zhang:1997}). %We mainly refer to this condition as regret consistency.

Consider the case when  $\Tau_i$ is the  prediction of $Y_{i T+1}$.
The oracle predictor $\taupm_i$ in the HIVDX model is given by
\begin{align}\label{eq:forecast-formula}
\taupm_i &= \mathbb{E}_{G_*}[a_i + b_i X_{2,i T+1} + \rho_i Y_{iT} \mid Y_{i}, \X_{i}, X_{2,i T+1}]  - \mathbb{E}_{G_*}[\rho_i a_i + \rho_i b_i X_{2,iT} \mid Y_{i}, \X_{i}].
\end{align}
The EB one-period-ahead prediction $\taueb_i$ is then obtained by substituting the \NPM\ for $G_*$ in \eqref{eq:forecast-formula}.  
%This predictor is expressed as a linear combination of the posterior means of the latent parameters, interacted with lagged dependent variables and covariates.
Evaluating the EB method in this setting requires controlling the compound error arising from the interaction between the estimation discrepancy $\hat{\theta}^{\mathrm{EB}}_i - \thetapm_i$ and these observed variables.
Consequently, the analysis necessitates controlling the $\ell_p$ distance (where $p \ge 2$, with $p=4$ being a particularly important case) between the EB and oracle estimates.

\begin{assumption}
\label{asm:EB}
Assume that $\mathbb{E}_{G_*}[|\Tau_i|^p] < \infty$ for some $p \in [2, \infty)$. Moreover, the following conditions hold.
\begin{enumerate}
    \item [(i)] The function $\Tau(\theta)$ is continuous in $\theta \in \Theta$ and, for all $c > 0$, satisfies
    \begin{equation*}
        \lim_{\|\hc\|\to\infty} \exp(-c \|\hc\|^2) |\Tau(\theta)| = 0.
    \end{equation*}
    \item [(ii)] For all $\epsilon > 0$, there exists $M < \infty$ such that %\textcolor{red}{We have boldface M (matrix). No confusion, right?}
    \begin{align*}
        \limsup_{N\to\infty} \mathbb{E} \left[ \int_{\Theta} |\Tau(\theta)|^p \mathbf{1}\{|\Tau(\theta)| \ge M\} d\hat{G}_N(\theta) \right] &\le \epsilon,
    \end{align*}
    where $\hat{G}_N$ is the \NPM\ defined in \eqref{eq:npmle-sample-program}, with $N$ denoting the sample size.
\end{enumerate}
\end{assumption}

Assumption~\ref{asm:EB}(i) requires that $\Tau(\theta)$ depends continuously on $\theta$ and grows slower than any quadratic exponential function of $\|\hc\|$.
This ensures that $|\taueb_i - \taupm_i| \to 0$ for each unit $i$ as $d(\hat{G}_N, G_*) \to 0$. The continuity requirement can be relaxed to allow $\Tau(\theta)$ to be discontinuous on a set where $G_*$ places zero mass.

Assumption~\ref{asm:EB}(ii) imposes a high-level condition on the \NPM\ sequence, requiring $|\Tau_i|^p$ to be uniformly integrable with respect to the sequence of estimated priors.
This condition is automatically satisfied when $\Tau_i$ has bounded support (e.g., $\Tau_i = \delta_i$) but can be nontrivial for unbounded $\Tau_i$ such as slope parameters.
For such cases, Assumption~\ref{asm:EB}(ii) requires specific moment conditions on $(Y_i,X_i)$;
for example, the $p$-th moment of $\hc_i$ in the HIVDX model is bounded by the $p$-th moment of $Y_i$, provided $\X_i$ satisfies certain regularity conditions.

%Specifically, under the HC model assumptions, the $p$-th moment of $\hc_i$ under the \NPM\ is bounded by the $p$-th moment of $Y_i$, provided $\X_i$ satisfies certain regularity conditions.

%Lemma~\ref{lem:prior-mean-bound} in the Appendix formalizes this connection for the HIVDX model; this result readily extends to general HC models.

% To address such cases, it is noteworthy that Assumption~\ref{asm:EB}(ii) can be replaced by a slightly stronger condition that $\Tau_i$ is $L^{p'}$-bounded with respect to the \NPM\ for some $p' > p$, which requires
% \begin{equation}
% \label{eq:asm-alternative-UI}
% \limsup_{N\to\infty} \int_{\Theta} \|\Tau(\theta)\|^{p'} d\hat{G}_N(\theta) < \infty,\quad \text{a.s.}
% \end{equation}
% In the case where $\Tau_i = \hc_i$, Lemma~\ref{lem:prior-mean-bound} establishes that \eqref{eq:asm-alternative-UI} is guaranteed by 
% $$
% \limsup_{N\to\infty} \frac{1}{N} \sum_{i=1}^N \|Y_i\|^{p'} < \infty.
% $$

%Under Assumption~\ref{asm:EB} and the consistency of the \NPM, Theorem~\ref{thm:EB-consistency} shows that the EB estimation rule is regret-consistent.

\begin{thm}
\label{thm:EB-consistency}
Let Assumptions~\ref{asm:DGP}, \ref{asm:id-conditions}, \ref{asm:consistency}, and \ref{asm:EB} hold for some $p \in [2, \infty)$.
Then $(\taueb_i)_{i=1}^N$ is regret-consistent, i.e.,
\begin{equation*}
    R(\taueb, \Tau) - R(\taupm, \Tau) \to 0 \quad \text{as } N \to \infty.
\end{equation*}
Moreover, the EB estimators converge in $\ell_p$-norm:
\begin{equation*}
    \mathbb{E}\left[\frac{1}{N}\sum_{i=1}^N |\taueb_i-\taupm_i|^p\right] \to 0 \quad \text{as } N \to \infty.
\end{equation*}
\end{thm}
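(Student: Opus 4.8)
The plan is to prove the stronger $\ell_p$ conclusion; regret consistency then follows because $p\ge 2$: writing $a_i:=|\taueb_i-\taupm_i|$, the power‑mean inequality gives $\tfrac1N\sum_i a_i^2\le\big(\tfrac1N\sum_i a_i^p\big)^{2/p}$, and Jensen (for the concave map $x\mapsto x^{2/p}$) then gives $R(\taueb,\Tau)-R(\taupm,\Tau)=\mathbb{E}\big[\tfrac1N\sum_i a_i^2\big]\le\big(\mathbb{E}\big[\tfrac1N\sum_i a_i^p\big]\big)^{2/p}$. Throughout I write $T^g_G(y,x):=\big(\int g(\theta)\ell(y\mid x,\theta)\,dG(\theta)\big)/f_G(y,x)$ for the generalized posterior mean, so $\taueb_i=T^{\Tau}_{\hat G_N}(Y_i,X_i)$ and $\taupm_i=T^{\Tau}_{G_*}(Y_i,X_i)$, and for $M>0$ I let $g_M:=\psi_M\circ\Tau$ with $\psi_M(r):=\max(-M,\min(M,r))$, which is continuous, bounded by $M$, and satisfies $|\Tau-g_M|=(|\Tau|-M)_+\le|\Tau|\mathbf{1}\{|\Tau|\ge M\}$.

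I would first record two ingredients. \emph{(a) A continuity lemma.} If $d(G_N,G_*)\to 0$ then, since $G_*$ is a probability measure, $G_N$ converges weakly to $G_*$ (vague convergence to a probability limit is weak convergence, as noted in the text). For fixed $(y,x)$ with $\rank(x)=\dbeta$, the map $\theta\mapsto\ell(y\mid x,\theta)$ is continuous, bounded (Assumption~\ref{asm:DGP}(ii)), and vanishes as $\|\hc\|\to\infty$ uniformly over the compact $\K_\delta$ (using $\rank(x)=\dbeta$), hence lies in $C_0(\Theta)$; therefore $f_{G_N}(y,x)\to f_{G_*}(y,x)>0$, and for any $g$ with $g\cdot\ell(y\mid x,\cdot)\in C_0(\Theta)$ — in particular $g=g_M$, and also $g=\Tau$ by the growth condition in Assumption~\ref{asm:EB}(i) — one gets $T^g_{G_N}(y,x)\to T^g_{G_*}(y,x)$. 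Together with Theorem~\ref{thm:consistency} and Assumption~\ref{asm:consistency}(ii), this gives $\taueb_i\to\taupm_i$ a.s.\ for each fixed $i$. \emph{(b) The NPMLE stationarity inequality.} Since $\hat G_N$ maximizes the concave functional $G\mapsto F_N(G)$ over the convex set of probability measures on $\Theta$ and $f_{\hat G_N}>0$, the one‑sided derivative of $F_N$ at $\hat G_N$ along $(1-t)\hat G_N+t\delta_\theta$ (the point mass at $\theta$) is nonpositive, which yields $\tfrac1N\sum_{i=1}^N \ell(Y_i\mid X_i,\theta)/f_{\hat G_N}(Y_i,X_i)\le 1$ for every $\theta\in\Theta$ (existence of $\hat G_N$ is inherited from the proof of Theorem~\ref{thm:consistency}).

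With $\|Z\|_{p,N}:=\big(\mathbb{E}[\tfrac1N\sum_i Z_i^p]\big)^{1/p}$, Minkowski's inequality gives, for each $M$,
\[
\|\taueb-\taupm\|_{p,N}\le\underbrace{\|T^{g_M}_{\hat G_N}-T^{g_M}_{G_*}\|_{p,N}}_{A_N(M)}+\underbrace{\|T^{\Tau-g_M}_{\hat G_N}\|_{p,N}}_{B_N(M)}+\underbrace{\|T^{\Tau-g_M}_{G_*}\|_{p,N}}_{C(M)}.
\]
For $C(M)$: conditional Jensen bounds $|T^{\Tau-g_M}_{G_*}(Y_i,X_i)|^p$ by $\mathbb{E}_{G_*}[((|\Tau|-M)_+)^p\mid Y_i,X_i]$, and the tower property (using $\theta\indep X_i$ and $\int\ell(y\mid x,\theta)\,dy=1$) gives $C(M)^p\le\mathbb{E}_{G_*}[|\Tau_i|^p\mathbf{1}\{|\Tau_i|\ge M\}]$, which $\to 0$ as $M\to\infty$ since $\mathbb{E}_{G_*}[|\Tau_i|^p]<\infty$. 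For $B_N(M)$: conditional Jensen again bounds $|T^{\Tau-g_M}_{\hat G_N}(Y_i,X_i)|^p$ by $\mathbb{E}_{\hat G_N}[((|\Tau|-M)_+)^p\mid Y_i,X_i]$; averaging over $i$, swapping sum and integral, and invoking (b) yields $\tfrac1N\sum_i\mathbb{E}_{\hat G_N}[((|\Tau|-M)_+)^p\mid Y_i,X_i]\le\int((|\Tau(\theta)|-M)_+)^p\,d\hat G_N(\theta)\le\int|\Tau(\theta)|^p\mathbf{1}\{|\Tau(\theta)|\ge M\}\,d\hat G_N(\theta)$, so $\limsup_N B_N(M)^p\le\limsup_N\mathbb{E}\big[\int|\Tau|^p\mathbf{1}\{|\Tau|\ge M\}\,d\hat G_N\big]$, which Assumption~\ref{asm:EB}(ii) makes arbitrarily small by choice of $M$. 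For $A_N(M)$: by exchangeability of the i.i.d.\ data and permutation‑invariance of the NPMLE, $A_N(M)^p=\mathbb{E}[|T^{g_M}_{\hat G_N}(Y_1,X_1)-T^{g_M}_{G_*}(Y_1,X_1)|^p]$; the integrand is bounded by $(2M)^p$ and tends to $0$ a.s.\ by (a), so $A_N(M)\to 0$ by bounded convergence (alternatively, without exchangeability, via a uniform SLLN for $\tfrac1N\sum_i\sup_{d(G,G_*)\le\eta}|T^{g_M}_G(Y_i,X_i)-T^{g_M}_{G_*}(Y_i,X_i)|^p$ followed by $\eta\downarrow0$). Combining: given $\epsilon>0$, pick $M$ so that $\mathbb{E}_{G_*}[|\Tau_i|^p\mathbf{1}\{|\Tau_i|\ge M\}]\le\epsilon$ and $\limsup_N\mathbb{E}[\int|\Tau|^p\mathbf{1}\{|\Tau|\ge M\}\,d\hat G_N]\le\epsilon$; then $\limsup_N\|\taueb-\taupm\|_{p,N}\le 2\epsilon^{1/p}$, and $\epsilon\downarrow0$ gives the $\ell_p$ conclusion, whence regret consistency by the first paragraph.

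The main obstacle is term $A_N(M)$: upgrading the a.s.\ consistency $\hat G_N\to G_*$ of Theorem~\ref{thm:consistency} to $L^p$ convergence of an \emph{average over all $N$ units} of a posterior functional of $\hat G_N$, even though $\hat G_N$ depends on every unit. Exchangeability collapses this to one representative unit, but it rests on choosing a permutation‑invariant measurable version of the (possibly non‑unique) NPMLE, and on the continuity lemma, which must respect that $d$ metrizes the \emph{vague} topology: this is legitimate precisely because the limit $G_*$ is a probability measure (no escape of mass) and $\ell(y\mid x,\cdot)\in C_0(\Theta)$ annihilates any mass that would otherwise escape. A secondary delicate point is verifying $\Tau\cdot\ell(y\mid x,\cdot)\in C_0(\Theta)$ — exactly where the super‑quadratic‑exponential growth bound of Assumption~\ref{asm:EB}(i) enters — together with justifying the stationarity inequality (b) and the existence of $\hat G_N$, both standard but worth stating carefully.
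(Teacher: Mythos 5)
Your proof is correct and its skeleton coincides with the paper's: truncate $\Tau$ at level $M$, control the two remainder terms by conditional Jensen plus (for the $\hat G_N$-remainder) the NPMLE stationarity condition $\tfrac1N\sum_i\ell(Y_i\mid X_i,\theta)/f_{\hat G_N}(Y_i,X_i)\le 1$ followed by Assumption~\ref{asm:EB}(ii), and then let $M\to\infty$. Your $B_N(M)$ is the paper's $I_{21}$ almost verbatim, and your direct treatment of $C(M)$ via the tower property and $\mathbb{E}_{G_*}[|\Tau_i|^p]<\infty$ is in fact a shortcut past the paper's Portmanteau--Fatou detour for its $I_{22}$. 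The genuine divergence is in the truncated term. The paper splits the data space into a compact region $\mathcal{R}$ (chosen so that $\mathbb{P}((Y_i,X_i)\notin\mathcal{R})\le\epsilon/(2M)^p$) and its complement, and on $\mathcal{R}$ invokes Lemma~\ref{lem:EB-unif-conv}, a \emph{uniform-in-$(Y,X)$} convergence of posterior means proved by showing $\{\varphi_{Y,X}:(Y,X)\in\mathcal{R}\}$ is relatively compact in $C_0(\Theta)$ via Arzel\`a--Ascoli. You instead collapse the average over units to a single representative unit by exchangeability and then apply pointwise convergence plus bounded convergence. Your route is more elementary (no equicontinuity or $\epsilon$-net argument) and works for any selection that is symmetric in the data; what it buys you in simplicity it costs you in one unresolved measure-theoretic step, namely the existence of a measurable, permutation-invariant selection from the (possibly non-unique) argmax correspondence $G\mapsto\arg\max F_N(G)$. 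You flag this, but it does need an actual argument (e.g., a measurable selection as a function of the empirical distribution); the paper's $\mathcal{R}$-based bound $\mathbb{E}[I_{12}]\le\mathbb{E}\sup_{(Y,X)\in\mathcal{R}}|\cdot|^p$ is selection-agnostic precisely to avoid this. Your fallback via a uniform SLLN over $\{G:d(G,G_*)\le\eta\}$ also closes the gap and is closer in spirit to what Lemma~\ref{lem:EB-unif-conv} delivers. The derivation of regret consistency from the $\ell_p$ conclusion via the power-mean and Jensen inequalities is a clean alternative to the paper's observation that Assumption~\ref{asm:EB} for $p$ implies it for $p=2$.
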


Theorem~\ref{thm:EB-consistency} provides the theoretical justification for using $(\taueb_i)_{i=1}^N$ in large samples.
The primary challenge in establishing this result is that the weak convergence of the priors, $d(\hat{G}, G_*) \asto 0$, implies the convergence of integrals only for \emph{bounded} continuous functions.
However, $\Tau(\theta)$ and the squared-error loss are generally unbounded functions of the parameters.
Assumption~\ref{asm:EB}(ii) is critical because uniform integrability of the estimated prior moments ensures uniform integrability of the posterior moments across units.
This allows us to upgrade the weak convergence of $\hat{G}$ to the convergence of the posterior expectations (the EB estimators) in $\ell_p$-norm.
Notably, the second result of the theorem, that is, convergence in $\ell_p$-norm, is stronger than simple regret consistency ($p=2$) and is particularly useful when the latent parameters enter the decision problem nonlinearly or interact with unbounded covariates.

\subsection{Regret Consistency in the HIVDX Model}
\label{subsec:EB-estimation-HIVDX}

This subsection applies Theorem \ref{thm:EB-consistency} to the HIVDX model, focusing on several quantities of interest: the individual-level parameters $\theta_i$ and the one-period-ahead prediction. These examples provide the theoretical underpinnings for the empirical analysis presented in Section~\ref{sec:empirical}.

% When $\Tau_i = \theta_i$, \textcolor{red}{and $\theta_i=(\beta_i',\delta_i')'$ in the HIVDX model}, the EB estimator of $\theta_i$ \textcolor{blue}{is given by 
% \begin{equation*}
% \hat{\theta}^{\mathrm{EB}}_i := \frac{\int_{\Theta} \theta \ell(Y_i \mid \X_i,\theta) d\hat{G}(\theta)}{f_{\hat{G}}(Y_i,\X_i)},
% \end{equation*}
% which} \textcolor{red}{do we need the equation again?}  can be partitioned as $\hat{\theta}^{\mathrm{EB}}_i = (\hat{\hc}^{\mathrm{EB}}_i, \hat{\delta}^{\mathrm{EB}}_i)$.

When $\Tau_i = \theta_i$ and $\theta_i=(\beta_i',\delta_i')'$ in the HIVDX model, the EB estimator of $\theta_i$ can be partitioned as $\hat{\theta}^{\mathrm{EB}}_i = (\hat{\hc}^{\mathrm{EB}}_i, \hat{\delta}^{\mathrm{EB}}_i)$.
Regret consistency of $\hat{\delta}^{\mathrm{EB}}_i$ follows immediately from Theorem~\ref{thm:EB-consistency} and Assumption~\ref{asm:EB}, as $\delta_i$ lies in a compact set.
For $\hat{\hc}^{\mathrm{EB}}_i$, however, Assumption~\ref{asm:EB}(ii) requires $\|\hc_i\|^p$ to be uniformly integrable with respect to the sequence of estimated priors.
This condition is guaranteed by a moment condition on $(Y_i,X_i)$. %combined with mild regularity assumptions on the time-series variation in $X_{2i}$, as formalized in Proposition~\ref{prop:EB-HIVDX}.

Once $(\theta_i)_{i=1}^N$ have been estimated, we turn to the one-period-ahead EB prediction $\hat{Y}^{\mathrm{EB}}_{i T+1}$ of $Y_{i T+1}$. Given the oracle estimator in \eqref{eq:forecast-formula}, 
 the prediction error involves interaction of the errors in  estimating the random coefficients and $(Y_i,X_i)$.

 Let $\bar{X}_{2,i} = T^{-1}\sum_{t=1}^T X_{2,it}$ denote the time average of $X_{2,it}$ for unit $i$.

\begin{prop}
\label{prop:EB-HIVDX}
Let Assumption~\ref{asm:DGP} hold.
Assume that $|\bar{X}_{2,i}| \le M$ and $\sum_{t=1}^T (X_{2,it} - \bar{X}_{2,i})^2 \ge c$ for some constants $M, c > 0$ and for all $i \ge 1$. 
\begin{enumerate}[leftmargin=0.3in]

\item [(i)] $(\sigmaeb_i)_{i=1}^N$ and $(\rhoeb_i)_{i=1}^N$ are regret-consistent.

\item [(ii)] In addition, assume that $\mathbb{E}[\|Y_i\|^{2+\varepsilon}] < \infty$ for some $\varepsilon > 0$. Then $(\hat{a}^{\EB}_i)_{i=1}^N$ and $(\hat{b}^{\EB}_i)_{i=1}^N$ are regret-consistent.

\item [(iii)] Assume further that $\mathbb{E}[\|Y_i\|^{4+\varepsilon}] < \infty$ for some $\varepsilon > 0$,
$\mathbb{E}[|X_{2,iT}|^{4}] < \infty$, and $\mathbb{E}[|X_{2,i T+1}|^{4}] < \infty$.
Then the predictions $(\hat{Y}^{\EB}_{i T+1})_{i=1}^N$ are regret-consistent.

\end{enumerate}
\end{prop}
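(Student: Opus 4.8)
The plan is to obtain all three parts from Theorem~\ref{thm:EB-consistency} specialized to the HIVDX model, reducing the task to verifying Assumptions~\ref{asm:DGP}, \ref{asm:id-conditions}, \ref{asm:consistency} and \ref{asm:EB} for each target $\Tau_i$ at the appropriate exponent ($p=2$ for (i) and (ii), $p=4$ for (iii)). Assumption~\ref{asm:DGP} is maintained; Assumption~\ref{asm:id-conditions} holds for the HIVDX model by the $(\xmat,\Q)$ constructions of Section~\ref{sec:identification}; and Assumption~\ref{asm:consistency}(ii) holds because $\sum_t(X_{2,it}-\bar X_{2,i})^2\ge c>0$ forces $X_i=[\iota_T,X_{2,i}]$ to have full column rank $\dbeta=2$ almost surely. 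The preliminary fact I would record and reuse is that $|\bar X_{2,i}|\le M$ together with $\sum_t(X_{2,it}-\bar X_{2,i})^2\ge c$ implies a \emph{uniform} lower bound $\sigma_{\min}(X_i)\ge\kappa_0>0$ (with $\kappa_0$ depending only on $c,M,T$, obtained by passing to the orthogonal pair $[\iota_T,\,X_{2,i}-\bar X_{2,i}\iota_T]$ and bounding the conditioning of the change of basis by $1+M$); hence $\|\hat\beta_i^{\mathrm{OLS}}\|\le\kappa_0^{-1}\|Y_i\|$ and $\var(\hat\beta_i^{\mathrm{OLS}}\mid\theta_i,X_i)\preceq\bar c\,\kappa_0^{-2}I$ uniformly in $i$, so that $\mathbb{E}_{G_*}\|\beta_i\|^{q}\le\kappa_0^{-q}\mathbb{E}\|Y_i\|^{q}$ whenever the right side is finite, which in particular delivers Assumption~\ref{asm:consistency}(i) under the maintained moment conditions. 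Part (i) is then immediate: since $\K_{(\sigma^2,\rho)}$ is compact, $\Tau(\theta)=\sigma^2$ (resp.\ $\rho$) is bounded and continuous on $\Theta=\mathbb{R}^2\times\K_{(\sigma^2,\rho)}$, so $\mathbb{E}_{G_*}|\Tau_i|^2<\infty$, the growth condition in Assumption~\ref{asm:EB}(i) is vacuous, and Assumption~\ref{asm:EB}(ii) holds because $\mathbf 1\{|\Tau(\theta)|\ge M\}\equiv0$ once $M>\sup_\Theta|\Tau|$; Theorem~\ref{thm:EB-consistency} with $p=2$ closes (i).

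For (ii) and (iii) the growth condition and the finite-moment requirement are straightforward. In (ii), $\Tau(\theta)$ is a coordinate of $\beta$, so $|\Tau(\theta)|\le\|\hc\|$ verifies Assumption~\ref{asm:EB}(i) and $\mathbb{E}_{G_*}|\Tau_i|^2<\infty$ follows from $\mathbb{E}\|Y_i\|^{2+\varepsilon}<\infty$ via the OLS bound above. In (iii), conditioning on $(Y_i,X_i,X_{2,iT+1})$ the oracle predictor reads $\taupm_i=\mathbb{E}_{G_*}[g(\theta_i)\mid Y_i,X_i]$ with $g(\theta)=(1-\rho)a+(X_{2,iT+1}-\rho X_{2,iT})b+\rho Y_{iT}$, affine in $(\beta,\rho)$ with data-dependent coefficients; treating $g$ as the effective target (the version of Theorem~\ref{thm:EB-consistency} allowing data-dependent $\Tau$), one has $|g(\theta)|\le C(1+|X_{2,iT}|+|X_{2,iT+1}|)(1+\|\hc\|)+|Y_{iT}|$, so Assumption~\ref{asm:EB}(i) holds, and by the independence of $(X_i)$ from $\theta_i$ in Assumption~\ref{asm:DGP}(i) combined with Hölder, $\mathbb{E}|g(\theta_i)|^4<\infty$ under $\mathbb{E}\|Y_i\|^{4+\varepsilon}<\infty$, $\mathbb{E}|X_{2,iT}|^4<\infty$, $\mathbb{E}|X_{2,iT+1}|^4<\infty$.

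What remains — and is the main obstacle — is Assumption~\ref{asm:EB}(ii) for the unbounded targets, i.e.\ uniform integrability of $\|\hc_i\|^q$ with respect to the \NPM\ sequence $\hat G_N$ ($q=2$ in (ii); in (iii) an exponent that the $4+\varepsilon$ and fourth-moment assumptions absorb after a Hölder split of $\|\hc\|^q$ against the $X$-coefficients). My plan has four steps: (a) from $\sigma_{\min}(X_i)\ge\kappa_0$ and the orthogonal decomposition $Y_i-X_i\beta=\hat u_i+X_i(\hat\beta_i^{\mathrm{OLS}}-\beta)$, derive the uniform likelihood bound $\ell(Y_i\mid X_i,\theta)\le C_\ell\exp\!\big(-\tfrac{\kappa_0^2}{2\bar c}\|\beta-\hat\beta_i^{\mathrm{OLS}}\|^2\big)$; (b) invoke the first-order (stationarity) characterization of the \NPM, $\int h\,d\hat G_N=N^{-1}\sum_{i=1}^N\mathbb{E}_{\hat G_N}[h\mid Y_i,X_i]$ for $h\ge0$, with $h(\theta)=\|\beta\|^q\mathbf 1\{\|\beta\|\ge M\}$, and bound the numerators using $\|\beta\|^q\lesssim\|\beta-\hat\beta_i^{\mathrm{OLS}}\|^q+\|\hat\beta_i^{\mathrm{OLS}}\|^q$, $\|\hat\beta_i^{\mathrm{OLS}}\|\le\kappa_0^{-1}\|Y_i\|$ and the Gaussian tail, obtaining a bound of the form $N^{-1}\sum_i f_{\hat G_N}(Y_i,X_i)^{-1}\big(\epsilon_M+C'\|Y_i\|^q\mathbf 1\{\|Y_i\|\gtrsim M\}\big)$ with $\epsilon_M\to0$; (c) control the denominators \emph{on average} through $F_N(\hat G_N)\ge F_N(G_*)$ and $f_{\hat G_N}(Y_i,X_i)\le C_\ell$, which yield $\mathbb{E}[-\log f_{\hat G_N}(Y_1,X_1)]\le\mathbb{E}[-\log f_{G_*}(Y_1,X_1)]<\infty$ uniformly in $N$; and (d) combine (b)–(c) with the strict-inequality moment conditions so the tail contribution vanishes as $M\to\infty$ uniformly in $N$. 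The genuinely delicate point is (c)–(d): transferring moment control from $Y_i$ to the estimated prior when $f_{\hat G_N}$ carries no pointwise lower bound — precisely where the uniform invertibility of $X_i$ and the extra $\varepsilon$ of moments are indispensable, and where I would spend most of the effort. Once Assumption~\ref{asm:EB}(ii) is verified, Theorem~\ref{thm:EB-consistency} delivers regret consistency (indeed $\ell_p$-convergence) of $(\sigmaeb_i)$, $(\rhoeb_i)$, of $(\hat a^{\EB}_i)$, $(\hat b^{\EB}_i)$, and of $(\hat Y^{\EB}_{iT+1})$, completing (i)–(iii).
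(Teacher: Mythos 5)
Your overall architecture matches the paper's: part (i) by compactness, parts (ii)--(iii) by verifying Assumption~\ref{asm:EB} and invoking Theorem~\ref{thm:EB-consistency} with $p=2$ and $p=4$, and a Cauchy--Schwarz split of the prediction error against the moments of $X_{2,iT}$, $X_{2,iT+1}$, $Y_{iT}$. Part (i) and the reductions are fine. However, your plan for the central step --- uniform integrability of $\|\hc\|^q$ under $\hat{G}_N$ (Assumption~\ref{asm:EB}(ii)) --- has a genuine gap at exactly the point you flag as delicate. After bounding the numerators in step (b), you are left with $N^{-1}\sum_i f_{\hat{G}_N}(Y_i,\X_i)^{-1}\bigl(\epsilon_M + C'\|Y_i\|^q\mathbf{1}\{\|Y_i\|\gtrsim M\}\bigr)$, and your step (c) proposes to control the denominators via $F_N(\hat{G}_N)\ge F_N(G_*)$, i.e.\ a bound on $N^{-1}\sum_i(-\log f_{\hat{G}_N}(Y_i,\X_i))$. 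That does not control $N^{-1}\sum_i f_{\hat{G}_N}(Y_i,\X_i)^{-1}$: a bounded average of $-\log f$ is compatible with a fraction $\delta$ of units having $f\le e^{-C/\delta}$, so the harmonic-type sum can be exponentially large in the log-likelihood bound, and the $\epsilon_M\to 0$ factor cannot be shown to dominate it uniformly in $N$. As written, steps (c)--(d) do not close.

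The paper avoids the denominator problem entirely with a different device (Lemma~\ref{lem:prior-mean-bound}). Besides the stationarity condition $N^{-1}\sum_i \ell(Y_i\,|\,\X_i,\theta)/f_{\hat{G}_N}(Y_i,\X_i)=1$ on $\supp\hat{G}_N$ that you use, it also differentiates the NPMLE first-order conditions in $\hc$: the resulting score equation shows that every $\hc$ in the support of $\hat{G}_N$ is a posterior-weighted GLS coefficient, whence $\|\hc\|^p \le A^p N^{-1}\sum_i \frac{\ell(Y_i\mid\X_i,\theta)}{f_{\hat{G}_N}(Y_i,\X_i)}\|Y_i\|^p$ for $\hat{G}_N$-a.e.\ $\theta$ (with $A$ depending only on $\bar c,\underbar{c},M,c,T$ via the uniform lower bound on $\lambda_{\min}(\X_i'\X_i)$ that you also derive). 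Integrating this against $\hat{G}_N$ makes the $f_{\hat{G}_N}^{-1}$ weights cancel exactly, since $\int\ell(Y_i\,|\,\X_i,\theta)\,d\hat{G}_N(\theta)=f_{\hat{G}_N}(Y_i,\X_i)$, yielding $\int\|\hc\|^p d\hat{G}_N\le A^p N^{-1}\sum_i\|Y_i\|^p$ with no lower bound on $f_{\hat{G}_N}$ required; Markov's inequality with the extra $\varepsilon$ of moments then gives Assumption~\ref{asm:EB}(ii). You should replace your (a)--(d) with this argument. A secondary, smaller point: in (iii) you invoke an unstated version of Theorem~\ref{thm:EB-consistency} for a data-dependent target $g(\theta)$; the paper instead decomposes the predictor into the five pure-$\theta$ targets $a_i,b_i,\rho_i,a_i\rho_i,b_i\rho_i$ (each dominated by $\|\hc_i\|$ since $|\rho_i|\le 1$) and pairs them with $X_{2,iT+1}$, $Y_{iT}$, $X_{2,iT}$ via Cauchy--Schwarz and the $\ell_4$ conclusion of Theorem~\ref{thm:EB-consistency}, which is the cleaner route and consistent with the theorem as stated.
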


Proposition~\ref{prop:EB-HIVDX}(iii) requires fourth (and slightly higher) moments of $(X_{2,iT}, X_{2,i T+1})$ and $Y_i$ to control the components of the prediction risk $\mathbb{E}[N^{-1} \sum_{i=1}^N (\hat{Y}^{\EB}_{i T+1} - \hat{Y}^{\oracle}_{i T+1})^2]$.
For instance, applying the Cauchy-Schwarz inequality to the component involving the slope coefficient yields
\begin{equation*}
\mathbb{E}\left[ \frac{1}{N}\sum_{i=1}^N (\hat{b}^{\EB}_i-\hat{b}^{\oracle}_i)^2 X_{2, i T+1}^2 \right] \le \sqrt{\mathbb{E}\left[ \frac{1}{N}\sum_{i=1}^N (\hat{b}^{\EB}_i-\hat{b}^{\oracle}_i)^4\right]} \sqrt{\mathbb{E}\left[ \frac{1}{N}\sum_{i=1}^N X_{2, i T+1}^4\right]}.
\end{equation*}
The moment assumption on $Y_i$ ensures that the first term on the right-hand side vanishes asymptotically (by Theorem~\ref{thm:EB-consistency} with $p=4$), while the fourth moment of $X_{2, i T+1}$ ensures that the second term remains bounded.
%Empirical illustrations and simulation results supporting these theoretical properties are presented in Sections~\ref{sec:empirical} and \ref{sec:Monte:Carlo}.

%%%%%%%%%%%%%%%%%%%%

\section{Computational Algorithms}
\label{sec:implementation}

A variety of computational methods have been proposed for implementing the \NPM. Although the optimization problem is convex, computation is challenging because the optimization variable $G$ is an infinite-dimensional object, rendering direct implementation intractable. A common remedy is to approximate $G$ by a discrete distribution supported on $m$ atoms (with $m \in \mathbb{N}$), thereby reducing the problem to fitting an $m$-component mixture model. This approach is justified by the well-known result that a discrete \NPM\ with at most $N_0$ support points exists (possibly among multiple maximizers), where $N_0$ is the number of distinct data points (\citeay{lindsay1983geometry}).\footnote{\citet{shen2022empirical} show that the required number of components $m$ grows at the rate $O(\log N)$ in the one-dimensional case under subgaussian priors, and conjecture that in higher dimensions $m$ increases on the order of $O((\log N)^C)$, where $C$ depends on the dimension of $\theta_i$. For one-dimensional Gaussian location mixtures, \citet{Polyanskiy-Sellke:2025} establish computational guarantees, including an algorithm that computes the exact support size.}

Specifically, let $\pmb \theta = (\theta^j)_{j=1}^m$ denote a grid of support points for $G$, and let $\pmb \omega = (\omega^j)_{j=1}^m$ be the associated weights satisfying $\omega^j \ge 0$ for all $j$ and $\sum_{j=1}^m \omega^j = 1$.  
Let $G^{\pmb \theta, \pmb \omega}$ denote the discrete distribution induced by $(\pmb \theta, \pmb \omega)$, i.e.,  
\[
G^{\pmb \theta, \pmb \omega}(\Theta') = \sum_{j=1}^m \omega^j \, \ind{\theta^j \in \Theta'}
\qquad \text{for all subsets } \Theta' \subseteq \Theta,
\]
where we use the superscript $j$ to denote the grid $(\theta^j)_{j=1}^m \in \Theta^m$, as distinct  from the individual-specific parameters $(\theta_i)_{i=1}^N \in \Theta^N$.   The \NPM\ of the HC model is given by  
\begin{equation*}
\argmax_{\pmb \theta \in \Theta^m, \, \pmb \omega \ge 0} 
F_N(G^{\pmb \theta, \pmb \omega})
= \argmax_{\pmb \theta \in \Theta^m, \, \pmb \omega \ge 0} 
\frac{1}{N} \sum_{i=1}^N \log f_{G^{\pmb \theta, \pmb \omega}}(Y_i, \X_i)
\quad \text{s.t. } \sum_{j=1}^m \omega^j = 1,
\end{equation*}
where
\begin{equation*}
f_{G^{\pmb \theta, \pmb \omega}}(Y_i, \X_i) 
= \int_{\Theta} \ell(Y_i \,|\, \X_i, \theta) \, dG^{\pmb \theta, \pmb \omega}(\theta) 
= \sum_{j=1}^m \ell(Y_i \,|\, \X_i, \theta^j) \, \omega^j.
\end{equation*}
% By the Karush-Kuhn-Tucker theorem, for some $\lambda \ge 0$, this problem is equivalent to  
% \begin{equation}
% \label{eq:numerical-objective}
% \argmax_{\pmb \theta \in \Theta^m, \, \pmb \omega \ge 0} 
% \frac{1}{N} \sum_{i=1}^N \log \left( \sum_{j=1}^m \ell(Y_i \,|\, \X_i, \theta^j) \omega^j \right) 
% - \lambda \sum_{j=1}^m \omega^j,
% \end{equation}
% where the value of $\lambda = 1$ can be determined by investigating the first-order conditions for each weight $\omega^j$.\textcolor{red}{why do we need to determine $\lambda$ if it is 1?}
By the Karush-Kuhn-Tucker theorem, this problem is equivalent to maximizing the Lagrangian
\begin{equation}
\label{eq:numerical-objective}
\mathcal{L}(\pmb \theta, \pmb \omega) = 
\frac{1}{N} \sum_{i=1}^N \log \left( \sum_{j=1}^m \ell(Y_i \,|\, \X_i, \theta^j) \omega^j \right) 
- \lambda \left(\sum_{j=1}^m \omega^j - 1\right).
\end{equation}
The first-order conditions with respect to $\pmb \omega$ imply that the optimal Lagrange multiplier is fixed at $\lambda = 1$.

Most EB estimators are solutions to the discretized problem described above.
For example, \citet{jiangGeneralMaximumLikelihood2009} introduce a fixed-point EM algorithm that mirrors the standard EM routine and establish that it attains a global optimum up to an explicit error bound.
\citet{GK:2017} estimate a dynamic panel model with two-dimensional heterogeneity using the interior-point algorithm of \citet{koenkerConvexOptimizationShape2014} to fit weights on a fixed grid.
More recently, \citet{KCSA:2020}, \citet{ZCST:2024}, and \citet{WIM:2025} propose alternative methods based on general-purpose convex programming to improve scalability.

A feature shared by all these methods is that the location of the support points is fixed, so optimization is performed solely with respect to the weights $\pmb \omega$.
Unlike in simple mixture models, determining where to position the grid points is not obvious when the dimension of $\theta$ exceeds two.
\citet{ZCST:2024} suggest using observed data points as the grid when the dimension is three or higher.
The choice of grid size is also delicate: a grid dense enough to minimize approximation error becomes computationally intractable due to the curse of dimensionality, while a sparse grid risks significant discretization bias.

To overcome the limitations of fixed grids, we consider an algorithm that allows the support points to move adaptively.
This necessitates a shift from optimizing weights on a static grid to optimizing the probability distribution $G$ itself within the space of probability measures $\mathcal{P}(\Theta)$.
In this infinite-dimensional setting, the standard gradient descent used in Euclidean spaces is replaced by \emph{gradient flow}, where the specific trajectory depends critically on the geometric structure imposed on $\mathcal{P}(\Theta)$.
Because $\mathcal P(\Theta)$   can have a complex geometry,  different metrics can generate markedly different gradient-flow trajectories.

Consider the gradient flow induced by two commonly used metrics: the  Fisher-Rao (FR) and the  Wasserstein (W).  
The FR gradient flow updates $G_t^{\fr}$ according to
\[
G_{t + \Delta t}^{\fr} - G_t^{\fr} \approx G_t^{\fr}\big(\alpha_t - \mathbb{E}_{G_t^{\fr}}[\alpha_t]\big),
\]
where the reweighting strategy $\alpha_t: \Theta \to \mathbb{R}$ is chosen to maximize the likelihood gain $F_N(G_{t + \Delta t}^{\fr}) - F_N(G_t^{\fr})$ subject to the normalization $\mathbb{E}_{G_t^{\fr}}[(\alpha_t - \mathbb{E}_{G_t^{\fr}}[\alpha_t])^2] = 1$.  
%The limiting path $(G_t^{\fr})_{t \ge 0}$ as $\Delta t \to 0$ defines the FR gradient flow. %\footnote{Formally, gradient flow can be defined via Jordan-Kinderleherer-Otto (JKO) scheme that explicitly balances between the gain in terms of the objective and the loss measured by the squared metric.}
Intuitively, the FR flow increases probability mass at locations with high fitness but cannot create mass at new locations or move existing support points.
In contrast, the W gradient flow evolves according to the continuity (transport) equation
\[
G_{t + \Delta t}^{\was} - G_t^{\was} \approx - \mathrm{div}(G_t^{\was} V_t),
\]
where the velocity field $V_t: \Theta \to \mathbb{R}^{\dwhole}$ describes the movement of mass and is chosen to maximize the gain in $F_N$ under the normalization $\mathbb{E}_{G_t^{\was}}[\|V_t\|^2] = 1$.  
%The resulting trajectory $(G_t^{\was})_{t \ge 0}$ is referred to as the W gradient flow; 
This trajectory
effectively moves the support points across the parameter space $\Theta$.
%{\color{blue}Thus, while the FR geometry is efficient for adjusting weights on a fixed grid, the W geometry provides the necessary mechanism to transport support points adaptively.}

  \citet{yan2024learning} show for the Gaussian location mixture case that the FR flow is guaranteed to reach a global maximum when initialized with full support, while $W$ may be stuck at a local maximum. To also exploit the gains from transporting mass across space, 
\citet{yan2024learning}
consider a discrete-time \WFR\ (WFR) gradient flow algorithm  that combines both reweighting of FR and transport dynamics of W when updating $G_t^{\wfr}$:
\[
G_{t+ \Delta t}^{\wfr} - G_{t}^{\wfr} \approx - \mathrm{div}(G_{t}^{\wfr} V_t) + G_{t}^{\wfr}\big(\alpha_t - \mathbb{E}_{G_t^{\wfr}}[\alpha_t]\big).
\]
The algorithm  adaptively relocates support points so that  $m$ remains moderate and need not exceed the sample size $N$.  It thus inherits the global convergence guarantees of the FR flow, and adaptivity of  W through mass transport. Again in a Gaussian location mixture setting,  WFR is shown to be superior to the EM algorithm, and gradient-descent methods based solely on the FR or W flow. Motivated by these encouraging results, 
 we extend  the WFR from a Gaussian location mixture to the HC model.

%Intuitively, WFR balances exploration through transport with refinement through reweighting, which explains its superior performance in both theory and practice.  
%Further details are presented in Section~\ref{subsec:alg:wfr}, with a comparison to the EM algorithm in Section~\ref{subsec:alg:compare}.   

\subsection{WFR Algorithm for the HC Model}\label{subsec:alg:wfr}

The WFR gradient-descent algorithm updates $(\pmb \theta_n, \pmb \omega_n)_{n \ge 0}$ by alternately reweighting and moving the mass along the steepest descent path.  
Let $\eta \in (0,1]$ denote the step size (learning rate), and define the posterior weights
\begin{equation*}
\pi_{ij}(\pmb \theta, \pmb \omega) 
= \frac{\ell(Y_i \,|\, \X_i, \theta^{j}) \, \omega^{j}}{\sum_{k=1}^m \ell(Y_i \,|\, \X_i, \theta^{k}) \, \omega^{k}},
\qquad i=1,\ldots,N,\ j=1,\ldots,m,
\end{equation*}
which correspond to the E-step in the EM algorithm.  
Holding $\pmb \theta_n$ fixed, the algorithm first updates the weights $\pmb \omega_n$ via
\begin{align}
\label{eq:FR-step}
\begin{aligned}
\omega_{n+1}^j 
&= \omega_{n}^j \left(1 + \eta \left[ \frac{1}{N} \sum_{i=1}^N \frac{\ell(Y_i \,|\, \X_i, \theta_n^j)}{\sum_{k=1}^m \ell(Y_i \,|\, \X_i, \theta_n^{k}) \, \omega_n^{k}} - 1 \right]\right) \nonumber \\
&= (1-\eta) \omega_n^j +  \frac{\eta}{N} \sum_{i=1}^N \pi_{ij}(\pmb \theta_n, \pmb \omega_n),
\qquad j=1,\ldots,m,
\end{aligned} \tag{FR-step}
\end{align}
which we refer to as the FR-step.  
The reweighting function is chosen as the gradient of the objective function in \eqref{eq:numerical-objective} with respect to $\omega^j$, thereby ensuring descent.  
Next, with $\pmb \omega_{n+1}$ fixed, the grid $\pmb \theta_n$ is updated according to
\begin{align}
\label{eq:W-step}
\begin{aligned}
\tilde{\theta}_{n+1}^j 
&= \theta_n^j + \frac{\eta}{N} \sum_{i=1}^N \pi_{ij}(\pmb \theta_n, \pmb \omega_{n+1}) 
\frac{\partial}{\partial \theta_n^j} \log \ell(Y_i \,|\, \X_i, \theta_n^j),  \\
\theta_{n+1}^j 
&= P_{\Theta}(\tilde{\theta}_{n+1}^j), 
\qquad j=1,\ldots,m,
\end{aligned}
\tag{W-step}
\end{align}
where $P_{\Theta}(\cdot)$ denotes the metric projection onto $\Theta$, ensuring that $\theta_{n+1}^j \in \Theta$.\footnote{Formally, for any $\tilde{\theta} \in \mathbb{R}^{d_\theta}$, the metric projection is % defined as
$
P_\Theta(\tilde{\theta}) := \argmin_{\theta \in \Theta} \|\tilde{\theta} - \theta\|.
$
It yields the closest point in $\Theta$ to $\tilde{\theta}$ under the Euclidean norm. For example, if $\Theta = [a,b]^{d_\theta}$ is a box constraint, then $P_\Theta(\tilde{\theta})$ simply clips each coordinate of $\tilde{\theta}$ to lie within $[a,b]$.
This ensures that each updated grid point $\theta_{n+1}^j$ remains feasible, even if the gradient step temporarily leaves the parameter space $\Theta$.}  
This W-step moves each grid point in the direction of the posterior-weighted average score, corresponding to a discrete version of the W gradient flow.  
%Closed-form expressions for the score terms are provided in the Appendix.  
Starting from \(n = 0\), the FR- and W-steps are applied alternately until \(n = \overline{n}\), the maximum number of iterations, or until the following convergence criterion is met: namely,  
\[
\max_{j} \left\{ \frac{1}{N} \sum_{i=1}^N \pi_{ij}(\pmb{\theta}_n, \pmb{\omega}_n) - 1 \right\} \le \tol
\]
for some tolerance parameter \(\tol>0\).\footnote{A necessary and sufficient condition for \(\hat G\) to be \NPM\ is
$
\frac{1}{N} \sum_{i=1}^N 
\frac{\ell(Y_i \,|\, X_i, \theta)}{f_{\hat G}(Y_i, X_i)} \le 1
\text{ for all } \theta \in \Theta.
$
}
Let \(\hat n\) denote the stopping index (either the first \(n\) satisfying the convergence criterion or \(\overline n\) if it is not met). The resulting estimate is the \NPM\ \(\hat G = G^{\pmb{\theta}_{\hat n}, \pmb{\omega}_{\hat n}}\).
%The resulting estimate is the \NPM\ \(\hat G = G^{\pmb{\theta}_{\overline{n}}, \pmb{\omega}_{\overline{n}}}\).
 Implementation of the HC model is summarized in Algorithm~\ref{alg:iteration}.

\begin{algorithm}[h]
\caption{\WFR\ algorithm for the HC model}
\label{alg:iteration}

\KwIn{
Dataset $(Y_i, \X_i)_{i=1}^N$; initial grid and weights $(\theta_0^j, \omega_0^j)_{j=1}^m$; step size $\eta \in (0,1]$; maximum iterations $\overline{n}$; tolerance parameter \(\tol\).
% {\color{red}Use \texttt{tol} instead of $\epsilon$.}
}
Set $n \leftarrow 0$.

\While{$n < \overline{n} \;\; \mathrm{and} \; \max_{j} \{ \frac{1}{N} \sum_{i=1}^N \pi_{ij}(\pmb{\theta}_n, \pmb{\omega}_n) - 1 \} > \tol$}{%

  \tcc{FR-step: reweighting}
  \For{$j = 1,\ldots,m$}{%
    Compute posterior weights
    $
      \pi_{ij}(\pmb\theta_n,\pmb\omega_n)
      = \frac{\ell(Y_i \mid \X_i,\theta_n^j)\,\omega_n^j}
             {\sum_{k=1}^m \ell(Y_i \mid \X_i,\theta_n^k)\,\omega_n^k},\quad i=1,\ldots,N.
    $
    Update
    $
      \omega_{n+1}^j
      = (1-\eta)\,\omega_n^j + \frac{\eta}{N}\sum_{i=1}^N \pi_{ij}(\pmb\theta_n,\pmb\omega_n).
    $
  }

  \tcc{W-step: transport}
  \For{$j = 1,\ldots,m$}{%
    Compute updated posteriors (with new weights)
    $
      \pi_{ij}(\pmb\theta_n,\pmb\omega_{n+1})
      = \frac{\ell(Y_i \mid \X_i,\theta_n^j)\,\omega_{n+1}^j}
             {\sum_{k=1}^m \ell(Y_i \mid \X_i,\theta_n^k)\,\omega_{n+1}^k},\quad i=1,\ldots,N.
    $
    
    Take a gradient step
    $
      \tilde{\theta}_{n+1}^j
      = \theta_n^j + \frac{\eta}{N}\sum_{i=1}^N
      \pi_{ij}(\pmb\theta_n,\pmb\omega_{n+1})\,
      \frac{\partial}{\partial \theta_n^j}\log \ell(Y_i \,|\, \X_i,\theta_n^j).
    $
    
    Project to the feasible set
    $
      \theta_{n+1}^j \leftarrow P_{\Theta}(\tilde{\theta}_{n+1}^j).
    $
  }

  Set $n \leftarrow n+1$.
}

\KwOut{Final grid and weights $(\theta_{\hat{n}}^j, \omega_{\hat{n}}^j)_{j=1}^m$, where $\hat{n}$ denotes the stopping index (the first $n$ satisfying the convergence criterion, or $\overline{n}$ if it is not met).}

\end{algorithm}

\paragraph*{Starting values}

Choosing an initial grid that is well spread over $\Theta$ is essential for the WFR gradient-descent algorithm to converge to a global optimum (\citeay{yan2024learning}, Theorem~4).  
If, instead, all atoms are initialized at the same location, i.e., $\theta_0^j = \tilde{\theta}$ for all $j=1,\ldots,m$, then the subsequent updates satisfy $\theta_n^1 = \cdots = \theta_n^m$ for every $n \ge 0$.  
In this case, the distribution $G^{\pmb \theta_n, \pmb \omega_n}$ degenerates to a single point mass at $\theta_n^j = \tilde{\theta}$, failing to capture any heterogeneity in the $\theta_i$’s and hence unable to recover the \NPM.  

In practice, we generate a dispersed initialization by computing MLEs on random subsamples of the data.
The procedure is as follows.
Fix the subsample size $B$. If the time dimension $T$ is sufficient to ensure that individual MLEs are well-defined, one may choose $B=1$ to maximize the diversity of the grid points.
For $j=1,\ldots,m$:
\begin{itemize}
\item[(i)] Sample a set of indices $\mathcal{I}_j = \{i(1), \ldots, i(B)\}$ uniformly at random from $\{1,\ldots,N\}$.
\item[(ii)] Compute the subsample MLE using $\mathcal{I}_j$:
\[
\theta_0^j = \argmax_{\theta \in \Theta} \sum_{b=1}^B \log \ell(Y_{i(b)} \,|\, \X_{i(b)}, \theta)
\]
and assign a uniform weight $\omega_0^j = 1/m$.
\end{itemize}
The resulting collection $(\theta_0^j, \omega_0^j)_{j=1}^m$ serves as the initial grid and weights.

%%%%%%%%%%%%%%%%%%%%
\subsection{Comparison with the EM Algorithm}
\label{subsec:alg:compare}

The EM algorithm is often used to find a (local) solution to mixture problems of the form \eqref{eq:numerical-objective}.  
Let $D_{ij} = \ind{\theta_i = \theta^j}$ denote an unobserved indicator of unit $i$’s membership in component $j$.  
If the true $D_{ij}$ were known, the ``complete-data'' maximum likelihood problem would be
\begin{equation*}
\argmax_{\pmb \theta \in \Theta^m}\frac{1}{N} \sum_{i=1}^N \sum_{j=1}^m D_{ij}\log \ell(Y_i \,|\, \X_i, \theta^j) 
= \argmax_{\pmb \theta \in \Theta^m}\sum_{j=1}^m \sum_{i : D_{ij} = 1} \log \ell(Y_i \,|\, \X_i, \theta^j),
\end{equation*}
which can be optimized directly with respect to $\pmb \theta$.
However, this is infeasible because the true $D_{ij}$ are unobserved.
The EM algorithm circumvents this difficulty by replacing $D_{ij}$ with its conditional expectation,
\begin{equation*}
\mathbb{E}_{\pmb \theta, \pmb \omega}[D_{ij} \,|\, Y_i, \X_i] 
= \frac{\ell(Y_i \,|\, \X_i, \theta^{j}) \, \omega^{j}}{\sum_{k=1}^m \ell(Y_i \,|\, \X_i, \theta^{k}) \, \omega^{k}} = \pi_{ij}(\pmb \theta, \pmb \omega),
\end{equation*}
evaluated at the current iterate $(\pmb \theta, \pmb \omega)$.  
This step is referred to as the E-step.  
Let $(\pmb \theta_n^{\mathrm{EM}}, \pmb \omega_n^{\mathrm{EM}})$ denote the $n$th iterate of the support points and their weights.  
Given these values, the expected log-likelihood can be written as
\begin{equation*}
Q_N(\pmb \theta \,|\, \pmb \theta_n^{\mathrm{EM}}, \pmb \omega_n^{\mathrm{EM}}) 
:= \frac{1}{N} \sum_{j=1}^m \sum_{i=1}^N 
\pi_{ij}(\pmb \theta_n^{\mathrm{EM}}, \pmb \omega_n^{\mathrm{EM}}) 
\log \ell(Y_i \,|\, \X_i, \theta^j).
\end{equation*}
In the subsequent M-step, $(\pmb \theta_n^{\mathrm{EM}}, \pmb \omega_n^{\mathrm{EM}})$ are updated by solving
\begin{align*}
\theta_{n+1}^{j,\mathrm{EM}} 
&= \argmax_{\theta \in \Theta} \frac{1}{N} \sum_{i=1}^N 
\pi_{ij}(\pmb \theta_n^{\mathrm{EM}}, \pmb \omega_n^{\mathrm{EM}}) 
\log \ell(Y_i \,|\, \X_i, \theta), \\
\omega_{n+1}^{j,\mathrm{EM}} 
&= \frac{1}{N} \sum_{i=1}^N \pi_{ij}(\pmb \theta_n^{\mathrm{EM}}, \pmb \omega_n^{\mathrm{EM}}), 
\qquad j=1,\ldots,m.
\end{align*}
The E- and M-steps are iterated until convergence, $(\pmb \theta_n^{\mathrm{EM}}, \pmb \omega_n^{\mathrm{EM}}) \to (\hat{\pmb \theta}^{\mathrm{EM}}, \hat{\pmb \omega}^{\mathrm{EM}})$, yielding the EM estimator $\hat G^{\mathrm{EM}} = G^{\hat{\pmb \theta}^{\mathrm{EM}}, \hat{\pmb \omega}^{\mathrm{EM}}}$.  

When $\eta$ in our algorithm is set to $1$, the FR-step coincides with the weight update in the EM algorithm.  
For smaller values of $\eta$, the weights are adjusted more conservatively, leading to more stable updates than in EM.  
Similarly, the W-step can be viewed as moving $\pmb \theta_n$ incrementally along the gradient of the $Q_N$-function.  
Because it is formulated as a gradient-descent procedure, the WFR algorithm enjoys stronger convergence guarantees than EM.  
By contrast, the EM algorithm generally requires more careful initialization to reach the global maximum (\citeay{balakrishnanStatisticalGuaranteesEM2017}).

%%%%%%%%%%%%%%%%%%%%
% Empirical
%%%%%%%%%%%%%%%%%%%%
\section{Empirical Analysis of Income Dynamics}
\label{sec:empirical}

There is a large body of literature seeking to understand heterogeneity in earnings. 
%Notable studies in the frequentist setting include, among others, \citet{macurdy:82}, \citet{baker:97}, \citet{meghir2004income}, \citet{guvenen-red:09}, \citet{bonhomme-robin:10}, \citet{browning-restud:10}, \citet{Arellano-et-al:2017}, and \citet{botosaru:23}. 
See \citet{Altoniji:2023} and \citet{browning-ejrnaes:13} for recent reviews of this literature. 
We highlight several closely related methodological contributions. 
\citet{Chamberlain:Hirano:1999} consider a parametric Bayesian approach that allow for heterogeneity in $\sigma_i^2$. 
\citet{geweke-keane-joe:00} also adopt a Bayesian framework with a finite mixture model for the composite errors $\sigma_i e_{it}$, while \citet{hirano-ecma:02} extends this framework using an infinite mixture model. 
\citet{GK:2017} propose an EB approach that was closest in spirit to ours. 
These four papers did not allow for random $(b_i, \rho_i)$. 
\citet{Giacomini-et-al:2025} develop an individual-weight shrinkage estimator that optimizes unit-level (rather than aggregate) accuracy by exploiting each individual's own past history, with feasible weights motivated by minimax regret.	
\citet{moon-schorfheide-zhang} consider random components in all $(a_i, b_i, \rho_i, \sigma_i^2)$ within a parametric Bayesian framework. 
Specifically, they adopt a spike-and-slab prior that accommodate either dense or sparse structures but assume independence across random components. 
In contrast, we adopt a nonparametric EB approach for the prior distribution and allow for dependence among the random components.

%In our application in Section~\ref{sec:psid:empirical:baseline}, we set $c_0 = \sqrt[4]{3} \approx 1.32$ and $n = 9$. As $c_0 \to 1$ and $c_0^n \ge \bar{\zeta}/\underbar{\zeta}$, the grid $C \zeta_0$ can approximate $[\underbar{\zeta}, \bar{\zeta}]$ with arbitrary degree of precision by appropriately choosing the factor $C$.

% \subsection{Extension of the GK Model}

In this section, we re-examine the application in \citet[GK hereafter]{GK:2017}, which analyzes log real earnings data studied in  \citet{meghir2004income}. The extract consists of  938 individuals who have continuous earnings records from age 25 onward in the Panel Study of Income Dynamics (PSID) for the period 1968--1993.
The panel is unbalanced with varying numbers of time periods \(T_i\).\footnote{The only minor difference between the current model and the original HIVDX model in \eqref{def:model:hivxd} is that \(T_i\) may vary across \(i\). The likelihood function can be easily modified to accommodate unbalanced panels.}
As in \citet{GK:2017}, \(Y_{it}\) denotes residuals obtained from year-specific regressions of log real earnings on a vector of covariates. 
Though the HIVDX model motivates GK's analysis, they assume $\rho_i$ is homogeneous and $b_i=0$ in the application. Furthermore, their likelihood, expressed in terms of sufficient statistics, is conditioned on the initial observation \(Y_{i1}\), which implicitly assumes that the random component is independent of \(Y_{i1}\). Thus, their marginal likelihood depends on $Y_{i1}$ only through $Y_{i2}-\rho_* Y_{i1}$. %Although \eqref{eq:GK-full-model} is the full model, they noted that
%\begin{quote}
%``...we have made some tentative estimation efforts for the full model. However, this is challenging not only due to the jump from 2D to 3D grids, but because the trend term invalidates our sufficient statistic dimension reduction device.''
%\end{quote}
%In light of these challenges, 
%Furthermore, it implicitly assumes that the random component is independent of \(y_{i0}\), resulting in the marginal likelihood of the form
%\begin{align*}
%f_{G,\rho}(y_{iT},\ldots,y_{i1}|y_{i0})=\int_{\Theta}\ell(y_{iT},\ldots, y_{i1} | \theta, y_{i0})dG(\theta) = \int_{\Theta}\prod_{t=1}^{T}\ell(y_{it}-\rho y_{it-1} | \theta)dG(\theta),
%\end{align*}
%which depends on $y_{i0}$ only through $y_{i1}-\rho y_{i0}$.

Using GK's PSID sample extract and their definition of $Y_{it}$ (residualized incomes), we estimate the HIVDX model  
with $\theta_i := (a_i, b_i, \sigma_i^2, \rho_i)$ and $X_{2,it} := \mathrm{Exp}_{it}/10$, so that $b_i$ is scaled up by a factor of $10$, and
$X_{2,it}$ is potential experience constructed as $\mathrm{Exp}_{it} \;:=\; \text{age} \;-\; \max\{\text{years of schooling},\,12\} \;-\; 6$.\footnote{An equivalent form  used in the literature is $\mathrm{Exp}_{it} = \min\{ \text{age} - \text{years of schooling} - 6,  \text{age}  - 18 \}$.}
Furthermore,  our analysis differs from GK in three other respects.
First, we model $Y_{i1}$ under  a stationarity assumption without assuming independence between \(Y_{i1}\) and $\theta_i$.
Second,  the likelihood is directly evaluated from the observed data since sufficient statistics are not available in the presence of $X_{2,it}$ and $\rho_i$. Third,
we fit the model and estimate the prior distribution of $\theta_i$ using the WFR algorithm described in Section~\ref{subsec:alg:wfr}. In empirical analyses, we used $\overline{n} = 100,000$ as the maximum number of iterations and did not implement the stopping criterion.
%: $\max_j (1/N)\sum_{i=1}^N \pi_{ij}(\pmb \theta_n, \pmb \omega_n)-1 \le \epsilon$.
The initialization algorithm with $B = 1$ and $m=500$ and the step size of $\eta = 0.0005$ are used. 
%\textcolor{red}{The WFR estimates are close to the EM ones which converge, albeit more slowly, and takes advantage of the grid chosen by WFR.}\todo{to be discussed at the meeting}
%A combination of a large $\overline{n}$ and a small $\eta$ is chosen to ensure computational stability.

{
\begin{table}[h]
   \normalfont
   \centering
   \caption{First and Second Moments of the Estimated Prior $\hat{G}$}
   \label{tab:PSID:prior:moments}
   \medskip
   \begin{tabular}{C{0.125\textwidth}
S[table-format=1.4, table-column-width=0.125\textwidth]
S[table-format=1.4, table-column-width=0.125\textwidth]
S[table-format=1.4, table-column-width=0.125\textwidth]
S[table-format=1.4, table-column-width=0.125\textwidth]
}
\toprule
\multirow[c]{2.5}{*}{$\mathbb{V}_{\hat G}(\theta_i)$} & \multicolumn{4}{C{0.5\textwidth}}{$\theta_i$} \\ 
\cmidrule(lr){2-5}
& \multicolumn{1}{c}{$a_i$} & \multicolumn{1}{c}{$b_i$} & \multicolumn{1}{c}{$\sigma_i^2$} & \multicolumn{1}{c}{$\rho_i$} \\ 
\midrule
$a_i$       & 0.2052  & -0.1003 & -0.0105 & -0.0039 \\[0.8em]
$b_i$       & -0.1003 & 0.1007  & -0.0104 & 0.0074  \\[0.8em]
$\sigma_i^2$& -0.0105 & -0.0104 & 0.0229  & -0.0013 \\[0.8em]
$\rho_i$    & -0.0039 & 0.0074  & -0.0013 & 0.1052  \\[0.3em]
\cmidrule(lr){1-5}
$\mathbb{E}_{\hat G}[\theta_i]$ & 0.0134 & -0.0166 & 0.0787 & 0.4219 \\[0.25em]
\bottomrule
\end{tabular}
   \medskip
   {\begin{center}
   \parbox{5.5in}{\footnotesize{Notes: 
   The first four rows form a $4\times 4$ covariance matrix.
   The last row presents means.}}
   \end{center}
   }

\end{table}
}

Table~\ref{tab:PSID:prior:moments} shows the first and second moments from the estimated prior $\hat{G}$.
The marginal variances suggest that the coefficients are heterogeneous in all four dimensions. 
The variance of \(a_i\) is substantial, especially relative to the variance of \(Y_{it}\) across all units and time ($0.22$). Regarding \(b_i\), the variance of \(b_i X_{2,it}\) when \(X_{2,it} = x\) is approximately \(0.1 x^2\). The value of $x$ ranges over $[0.2, 3.2]$ within the sample, indicating notable heterogeneity in experience profiles. The variance of \(\sigma_i^2\) is 0.0229 and is non-negligible.
Finally, the distribution of $\rho_i$ is also spread out, with approximately $73\%$ of probability mass lying in $[0.2,0.8]$.

Turning to the covariances of the random coefficients,  the covariance between individual intercepts \((a_i)\) and variances \((\sigma_i^2)\) is negative, as in GK. However, allowing for additional sources of heterogeneity leads to some new findings.  We find a negative correlation between $\sigma_i^2$ and $b_i$, and  weak associations between $\rho_i$ with the other parameters ($a_i,b_i,\sigma_i^2$).
The most interesting result is the strong negative correlation between the individual intercepts 
($a_i$) and slopes ($b_i$).

{
\begin{figure}[!htbp]
	\caption{Shrinkage Effects: Comparison of $(\hat{\theta}^{\mathrm{MLE}}_i)_{i=1}^N$ and $(\hat{\theta}^{\mathrm{EB}}_i)_{i=1}^N$}
	\label{fig:PSID:prior:dist}
	% \centering
    \begin{subfigure}[b]{0.5\textwidth}
        \centering
	\includegraphics[width=0.85\linewidth,keepaspectratio]{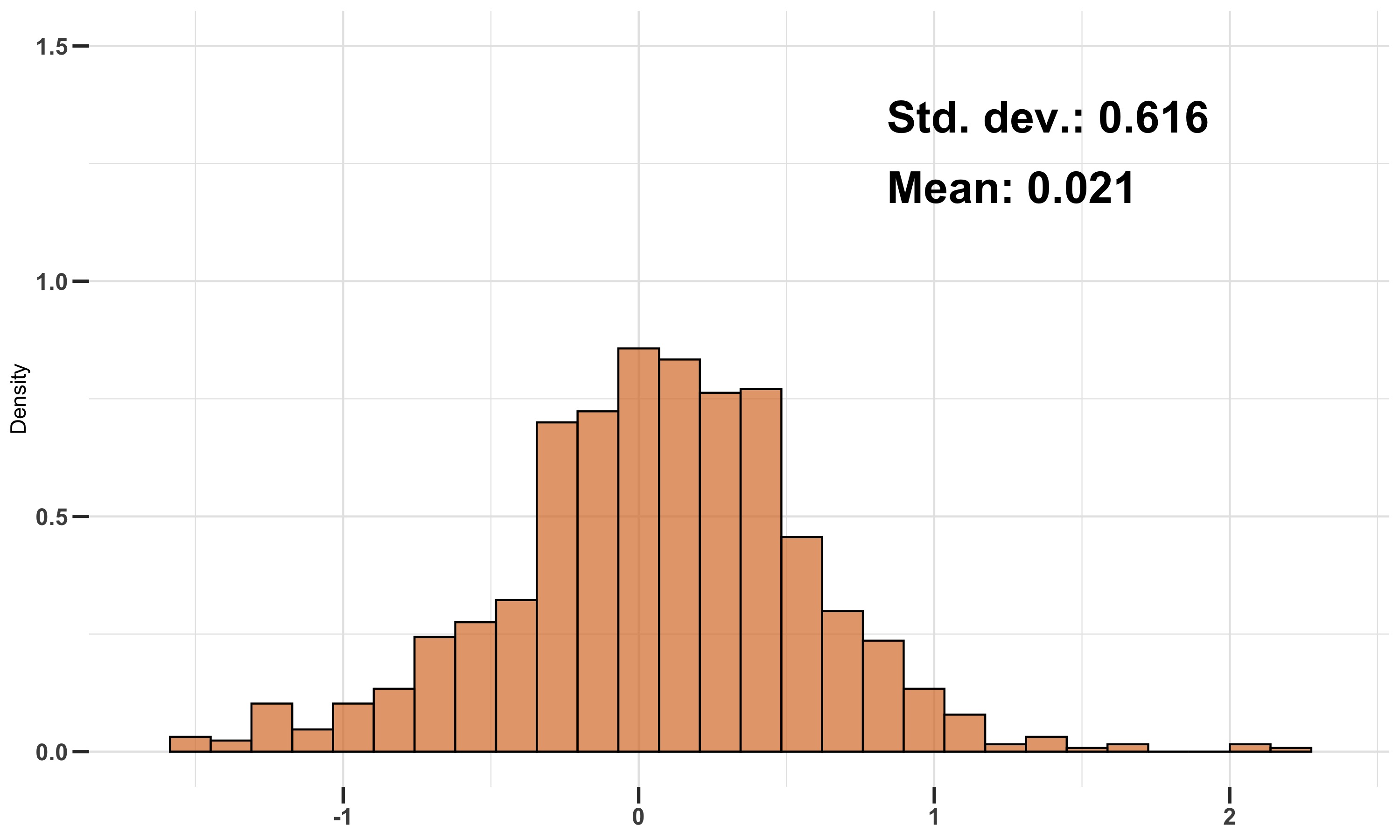} 
    \subcaption{MLEs of $a_i$.}
    \end{subfigure}
    \begin{subfigure}[b]{0.5\textwidth}
        \centering
        \includegraphics[width=0.85\linewidth,keepaspectratio]{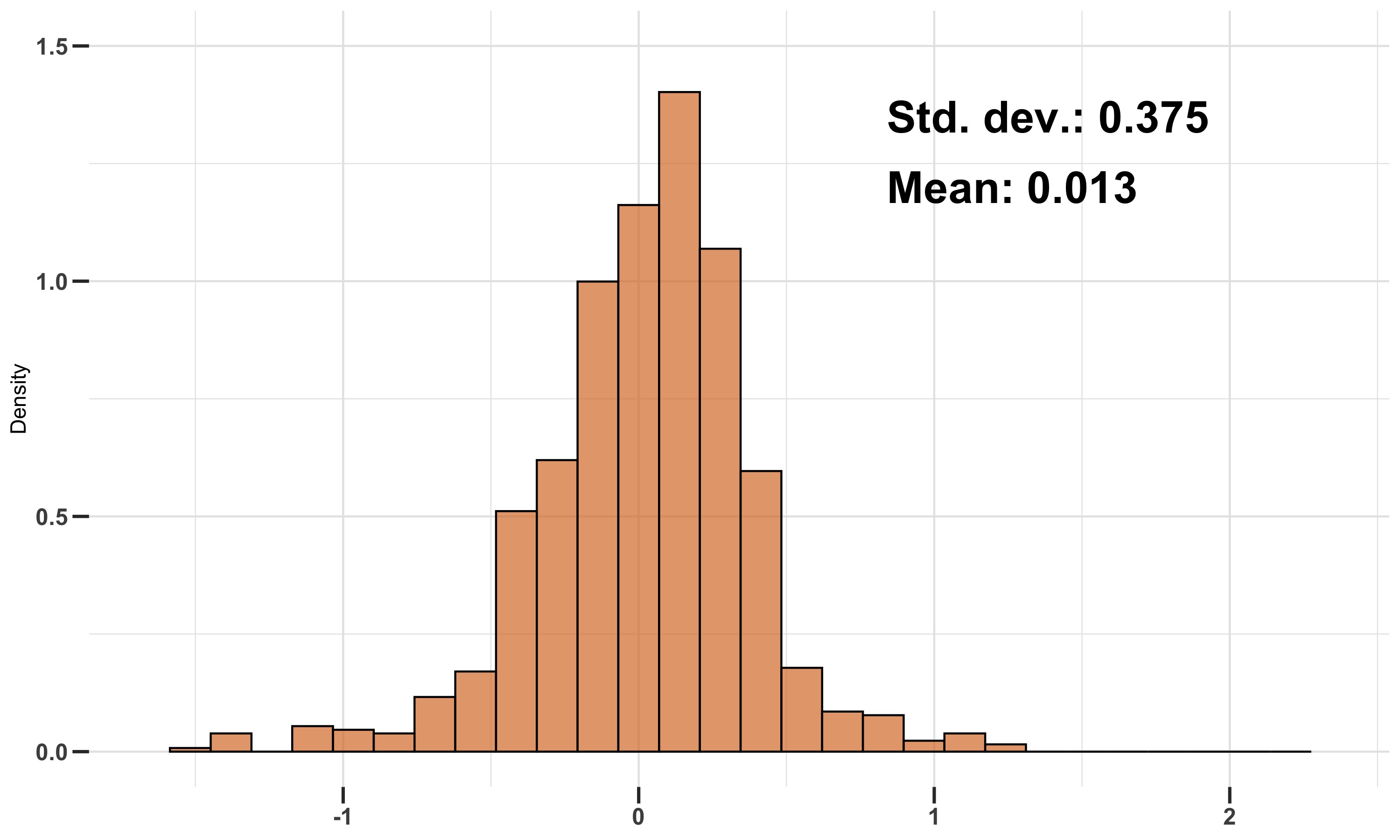}
    \subcaption{EB estimates of $a_i$.}
    \end{subfigure}

    \begin{subfigure}[b]{0.5\textwidth}
        \centering
	\includegraphics[width=0.85\linewidth,keepaspectratio]{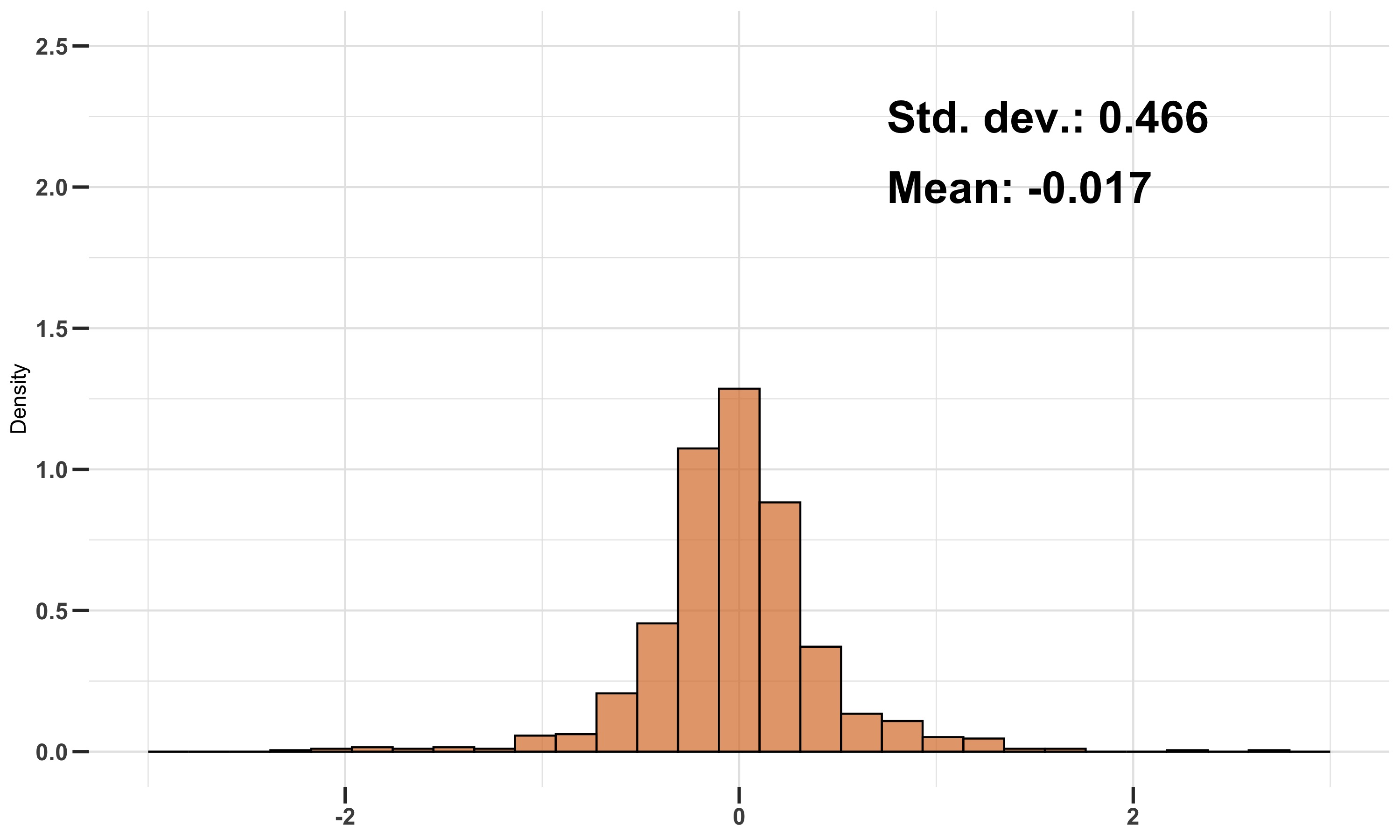} 
    \subcaption{MLEs of $b_i$.}
    \end{subfigure}
    \begin{subfigure}[b]{0.5\textwidth}
        \centering
        \includegraphics[width=0.85\linewidth,keepaspectratio]{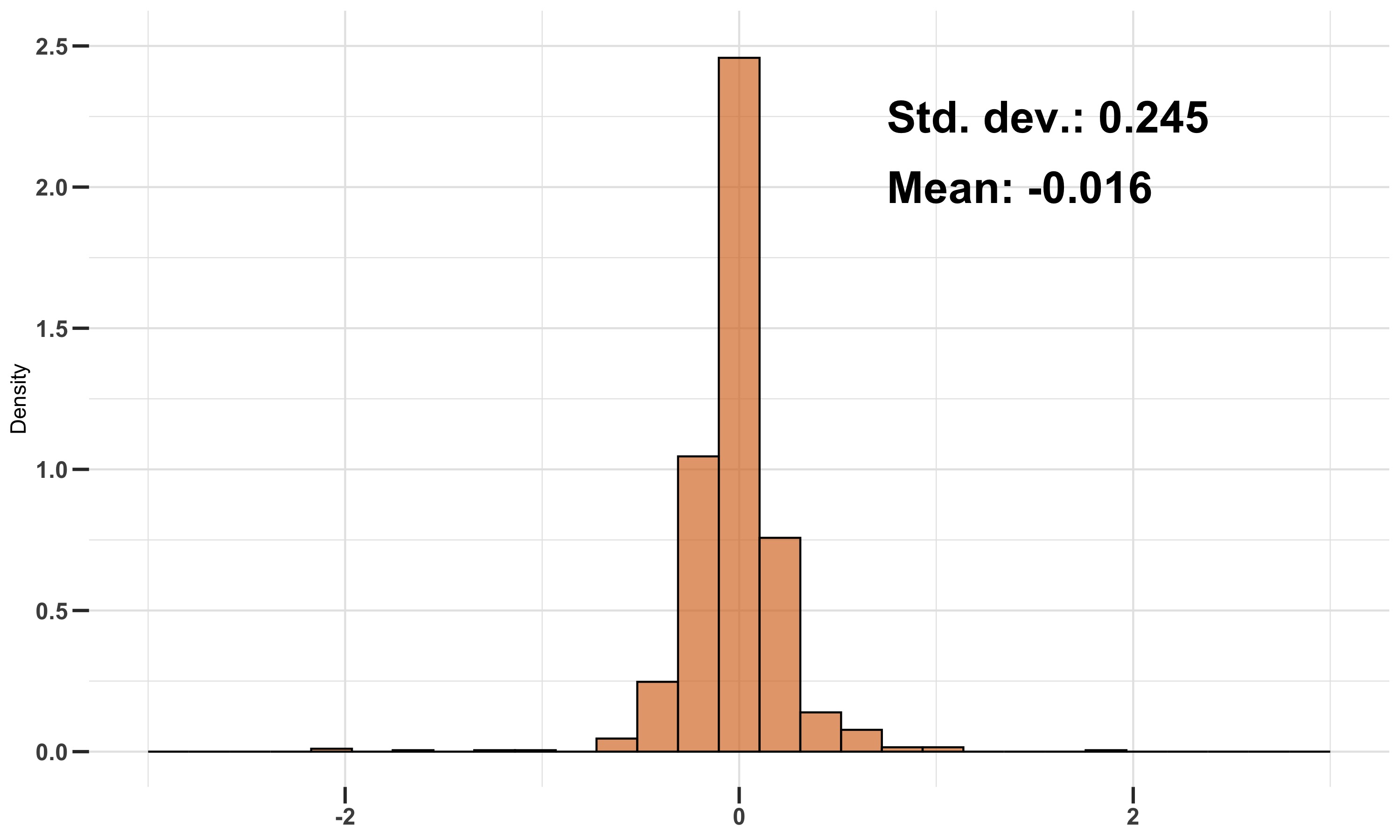}
    \subcaption{EB estimates of $b_i$.}
    \end{subfigure}

    \begin{subfigure}[b]{0.5\textwidth}
        \centering
	\includegraphics[width=0.85\linewidth,keepaspectratio]{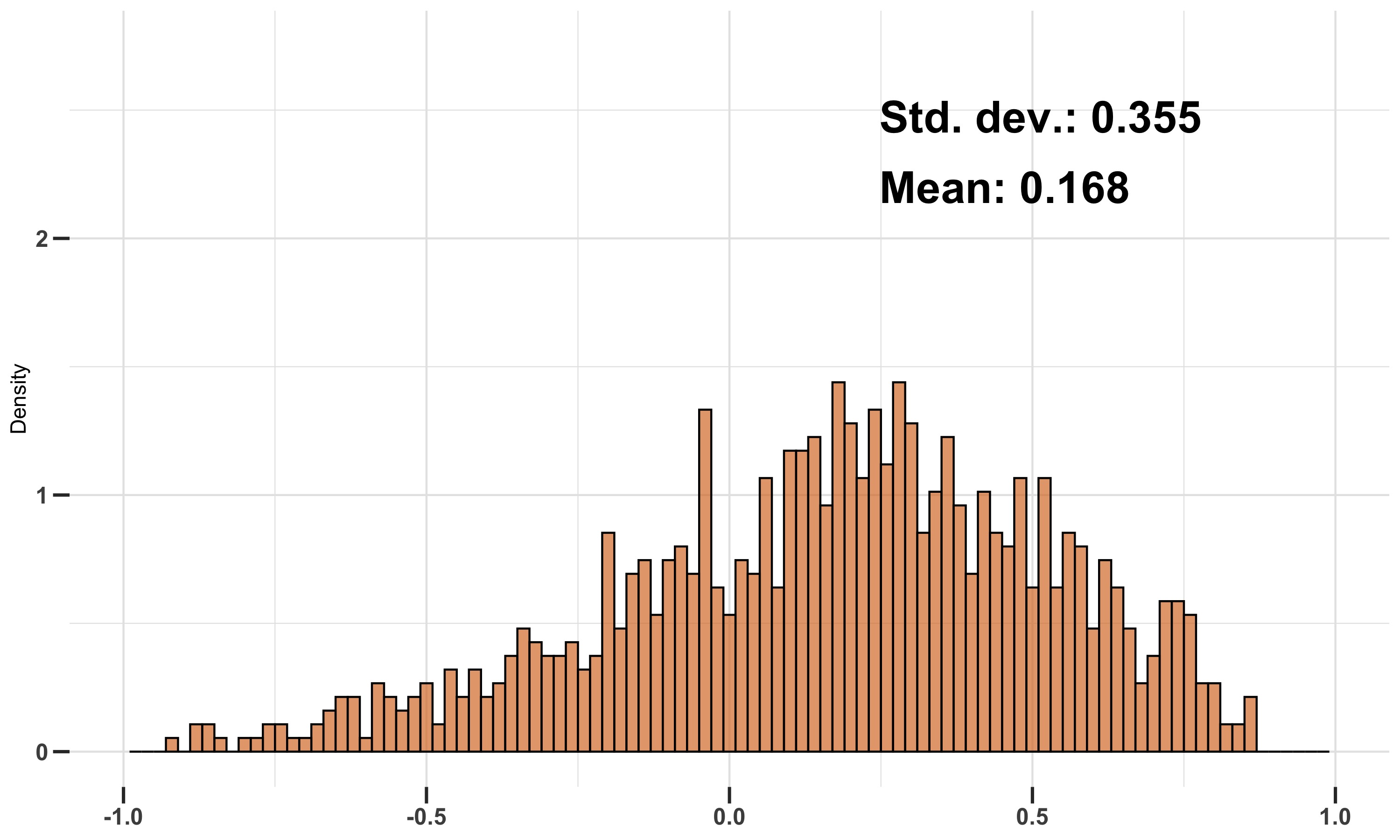} 
    \subcaption{MLEs of $\rho_i$.}
    \end{subfigure}
    \begin{subfigure}[b]{0.5\textwidth}
        \centering
        \includegraphics[width=0.85\linewidth,keepaspectratio]{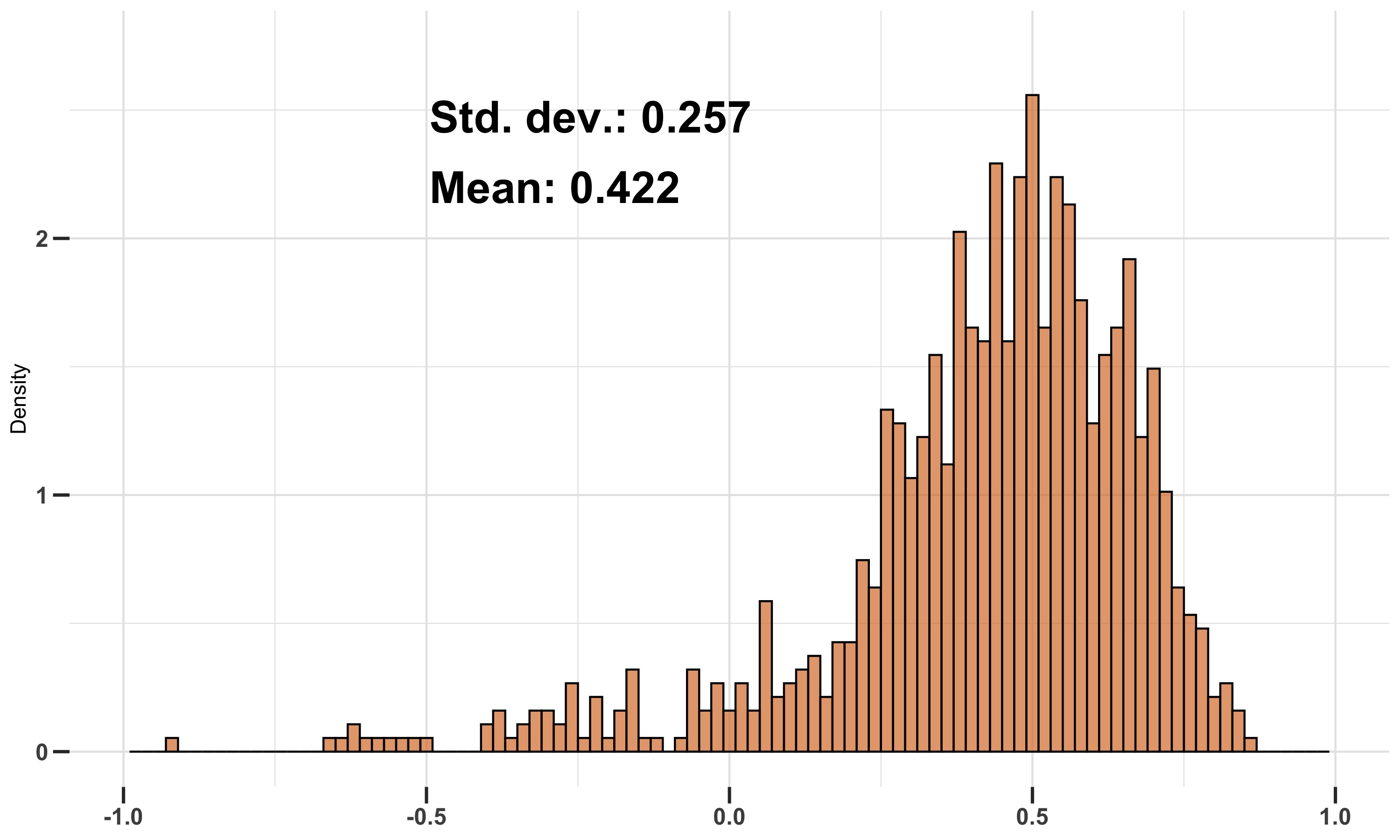}
    \subcaption{EB estimates of $\rho_i$.}
    \end{subfigure}
    
    \begin{subfigure}[b]{0.5\textwidth}
        \centering
        \includegraphics[width=0.85\linewidth,keepaspectratio]{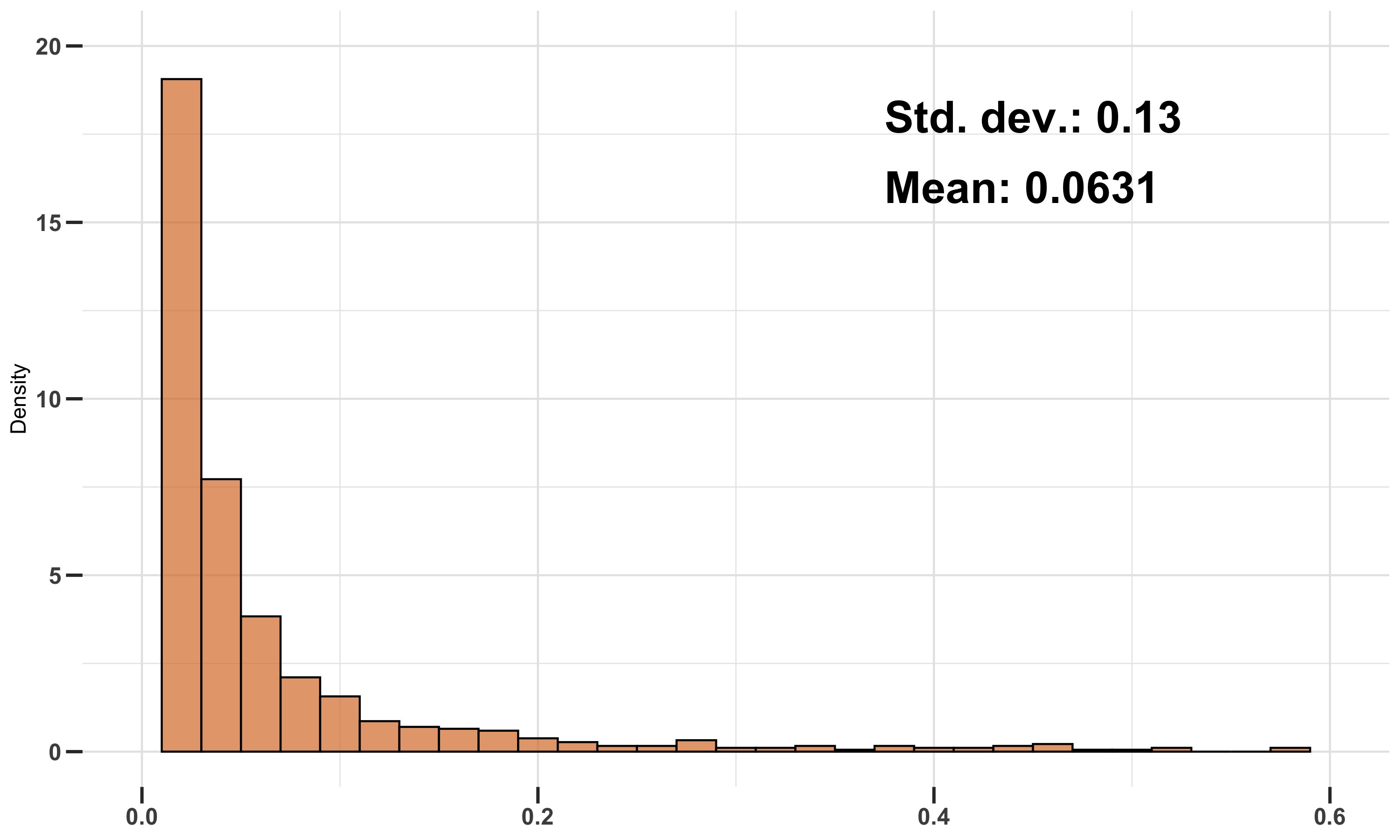}
    \subcaption{MLEs of $\sigma_i^2$.}
    \end{subfigure}
    \begin{subfigure}[b]{0.5\textwidth}
        \centering
        \includegraphics[width=0.85\linewidth,keepaspectratio]{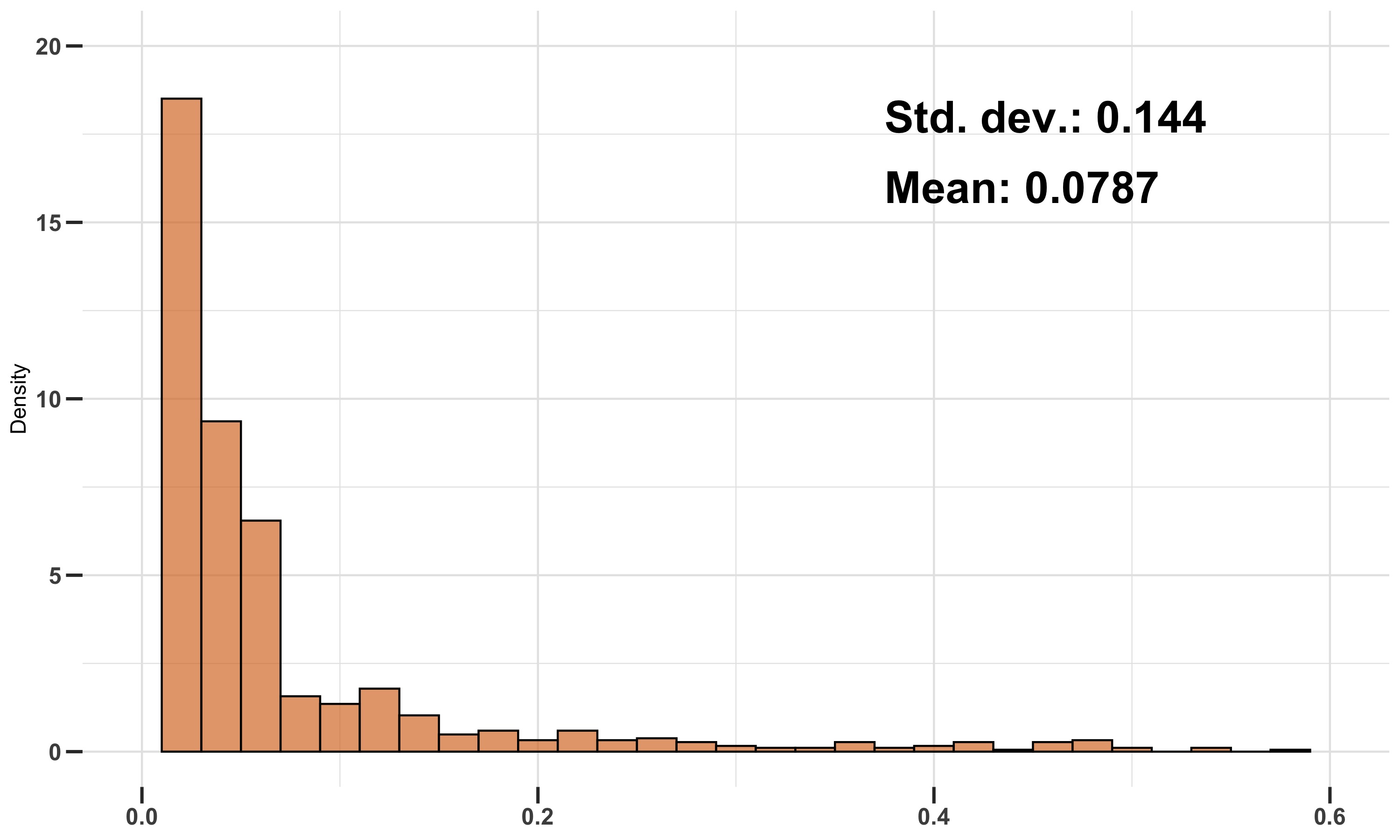}
    \subcaption{EB estimates of $\sigma_i^2$.}
    \end{subfigure}

    {\begin{center}
        \parbox{0.95\textwidth}{\footnotesize Note: The left and right panels display the histograms of the individual MLE and EB estimates for each component of $(\theta_i)_{i=1}^N$.}
    \end{center}
}
\end{figure}
}

%\textcolor{red}{get rid of this line} We now present the results from EB estimates obtained using the procedure developed in Section~ \ref{subsec:EB-estimation-HIVDX}. 

A side-by-side comparison between the individual MLEs and EB estimates of $\theta_i$ is presented in Figure~\ref{fig:PSID:prior:dist}.
As we work with residual earnings, it is not surprising that the means of the individual MLEs and EB estimates of $a_i$ and $b_i$ are close to zero. A notable feature is that the individual MLEs of $\rho_i$ are highly dispersed, with a substantial fraction taking negative values, whereas the EB estimates are mostly positive and considerably more concentrated around their mean value of 0.422.
The cross-sectional variances of $\hat a_i^{\mathrm{EB}}$, $\hat b_i^{\mathrm{EB}}$, and $\hat \rho_i^{\mathrm{EB}}$ are reduced by approximately 63\%, 72\%, and 48\%, respectively, relative to the corresponding individual MLEs.
While no conspicuous shrinkage is observed for $\hat \sigma_i^{2,\mathrm{EB}}$, these reductions suggest substantial compound mean squared error gains from the EB approach.
%in estimating unit-specific slopes on $1$, $X_{2,it}$, and on $Y_{i,t-1}$, as implied by \eqref{eq:EB-var-gain}.

Lastly, we consider one-period-ahead predictions of $(Y_{i,T_i})_{i=1}^N$ using a sample that holds out final-period observations.
Figure~\ref{fig:PSID:post:prec} compares the resulting predictions and prediction errors from the individual MLE and the EB methods.
The one-period-ahead prediction of $Y_{i,T_i}$ is computed as
\begin{align}\label{eq:forecast-formula:est}
\hat Y_{i,T_i} &= \hat a_i + X_{2,iT_i} \hat b_i + \hat \rho_i Y_{iT_i-1} - \widehat {\rho_i a_i} - X_{2,iT_i-1} \widehat {\rho_i b_i},    
\end{align}
where the hats indicate either EB or individual-level MLE estimates.
% The EB estimators are obtained by reestimating the \NPM\ using the sample that excludes the last-period observations, i.e., $\{(Y_{it}, X_{2,it}) : i=1,\ldots, N; t=1,\ldots, T_i-1\}$.
% As shown in Figure~\ref{fig:PSID:prior:dist}, both predictions and prediction errors from EB exhibit substantial shrinkage compared to those based on MLE, resulting in \textcolor{blue}{variance reductions of approximately $25\%$ for predictions and $20\%$ for prediction errors} \textcolor{red}{do not understand this.}
As shown in Figure~\ref{fig:PSID:prior:dist}, the EB method exhibits substantial shrinkage, reducing the variance of the predicted values by approximately 25\% compared to the MLE. This translates into improved forecasting performance, leading to a variance reduction of approximately 20\% in the prediction errors.

% \begin{table}[htbp]
%     \normalfont
%    \centering
%    \caption{Second and Fourth Moments of $u_{it}$}
%    \label{tab:2nd:4th:moments:errors}
%    \medskip
%    \begin{tabular}{C{0.15\textwidth}C{0.2\textwidth}C{0.2\textwidth}C{0.275\textwidth}}\toprule
%    {Moments} & $u_{it} \sim \mathcal{N}(0,\sigma^2)$ & $u_{it} \sim \mathcal{N}(0,\sigma_i^2)$ & $u_{it} \sim \sigma_i \sum_{k=1}^9 \p_k \mathcal{N}(0,\zeta_k^2)$  \\ \midrule % \cmidrule(lr){*-*}
%    {$\mu_2$} & $0.1477$  &   $0.1314$  &  $0.1412$    \\
%    {$\mu_4$} & $0.0655$  &   $0.1240$  &  $0.2437$   \\
%    {$\mu_4/\mu_2^2$} & $3.0000$  &   $7.1789$  & $12.2191$  \\
%     \bottomrule
%    \end{tabular}
%    \medskip
%    \begin{center}
%    \parbox{5.9in}{\footnotesize{Notes: In each row, $\mu_k$ represents the $k$th central moment of $u_{it}\equiv \sigma_i e_{it}$ computed from the distribution fitted by each method in the columns.
%    The first column estimates the model with homogeneous $\sigma^2$ and $K = 1$.
%    The second column corresponds to the baseline model with $K=1$.
%    The last column estimates the extended model \eqref{eq:extended-model} using a grid of size $K=9$.
%    }}
%     \end{center}
% \end{table}

% --------------------------------------------------
\begin{figure}[h]
	\caption{Pseudo-out-of-sample predictions and prediction errors}
	\label{fig:PSID:post:prec}
    \begin{subfigure}[b]{0.45\textwidth}
        \centering
    \subcaption{Predictions: $\hat{Y}^{\mathrm{MLE}}_{i,T_i}$}
	\includegraphics[width=0.875\linewidth,keepaspectratio]{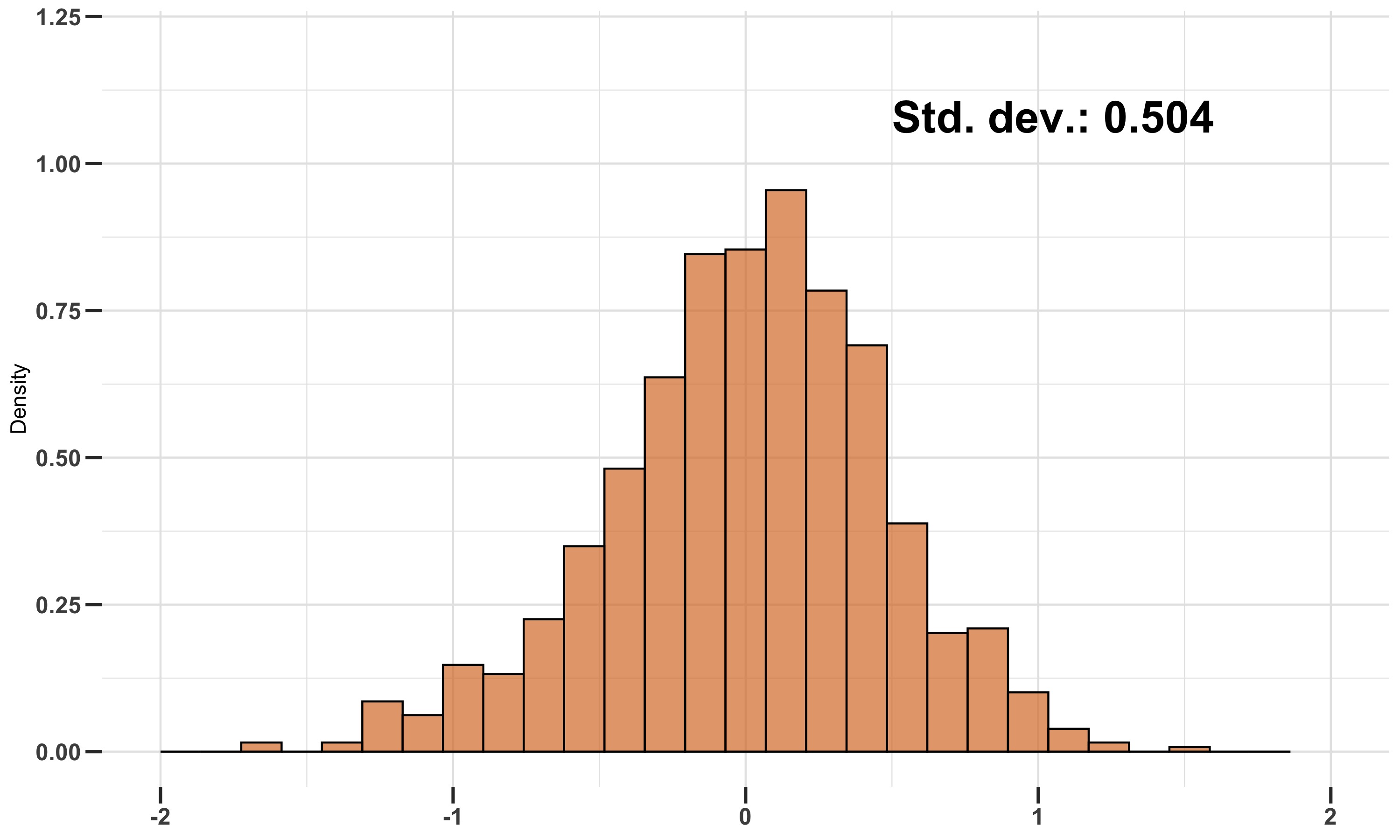} 
    \end{subfigure}
    \begin{subfigure}[b]{0.45\textwidth}
        \centering
    \subcaption{Predictions: $\hat{Y}^{\mathrm{EB}}_{i,T_i}$}
    \includegraphics[width=0.875\linewidth,keepaspectratio]{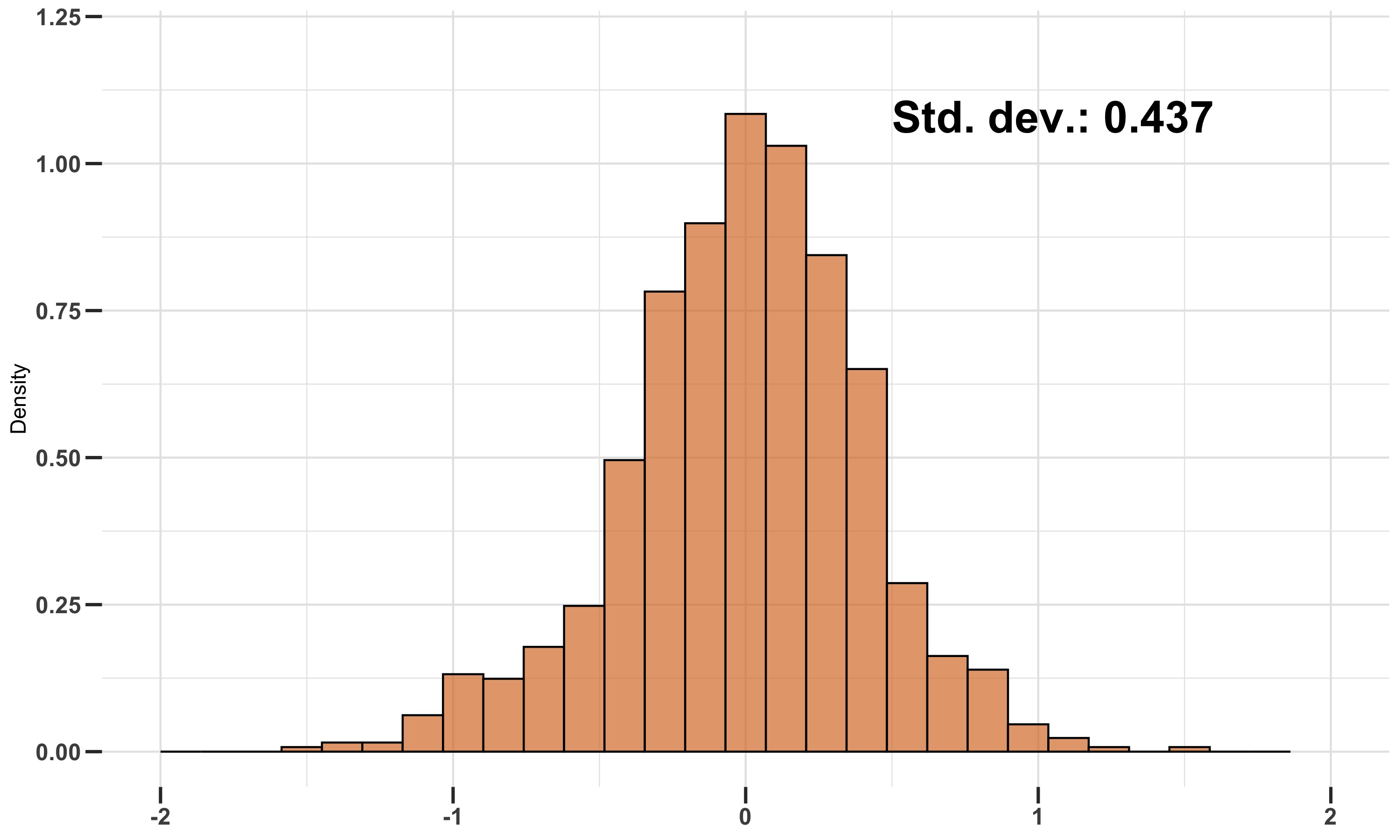}
    \end{subfigure}

    \begin{subfigure}[b]{0.45\textwidth}
        \centering
    \subcaption{Errors: $\hat{Y}^{\mathrm{MLE}}_{i,T_i}-Y_{i,T_i}$}
	\includegraphics[width=0.875\linewidth,keepaspectratio]{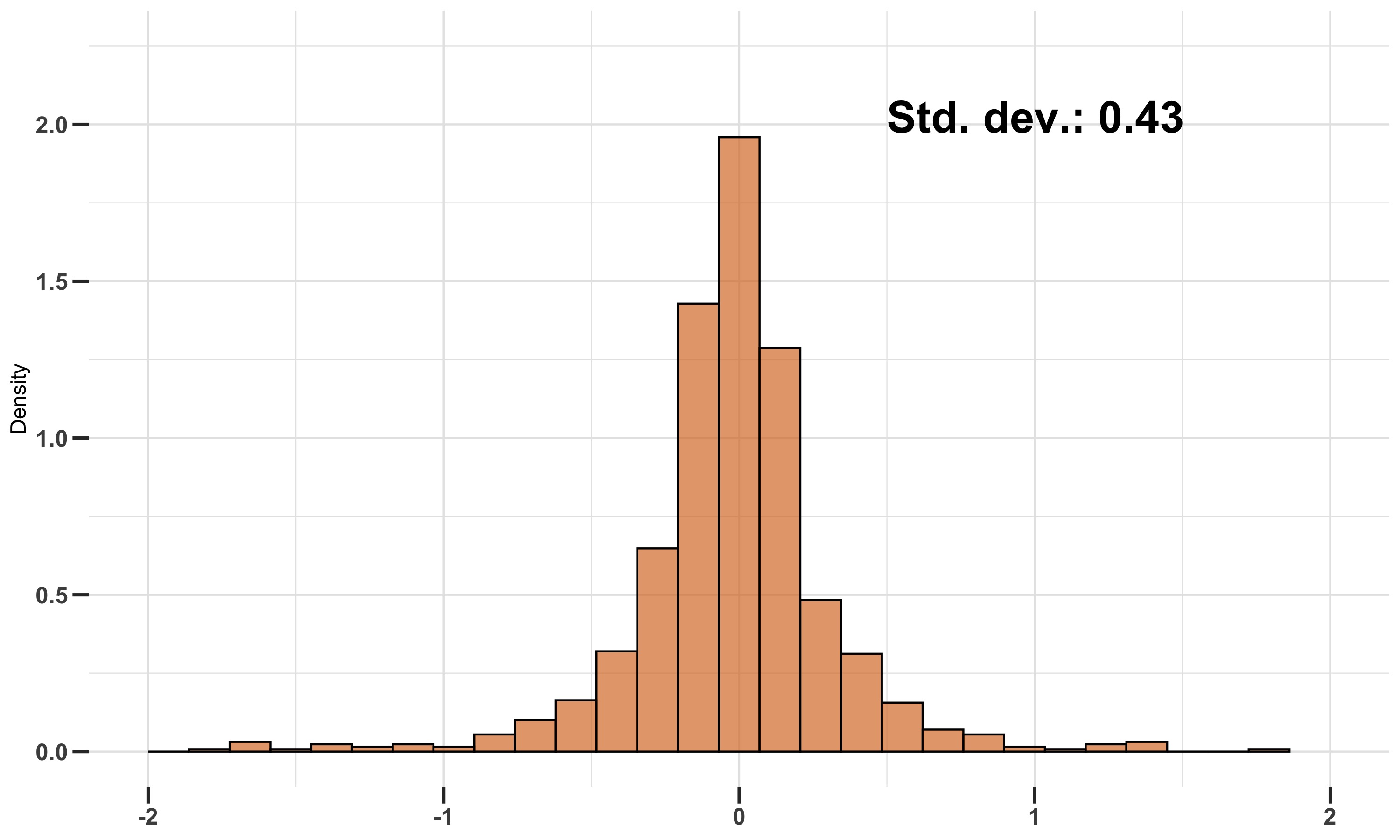} 
    \end{subfigure}
    \begin{subfigure}[b]{0.45\textwidth}
        \centering
    \subcaption{Errors: $\hat{Y}^{\mathrm{EB}}_{i,T_i} - Y_{i,T_i}$}
    \includegraphics[width=0.875\linewidth,keepaspectratio]{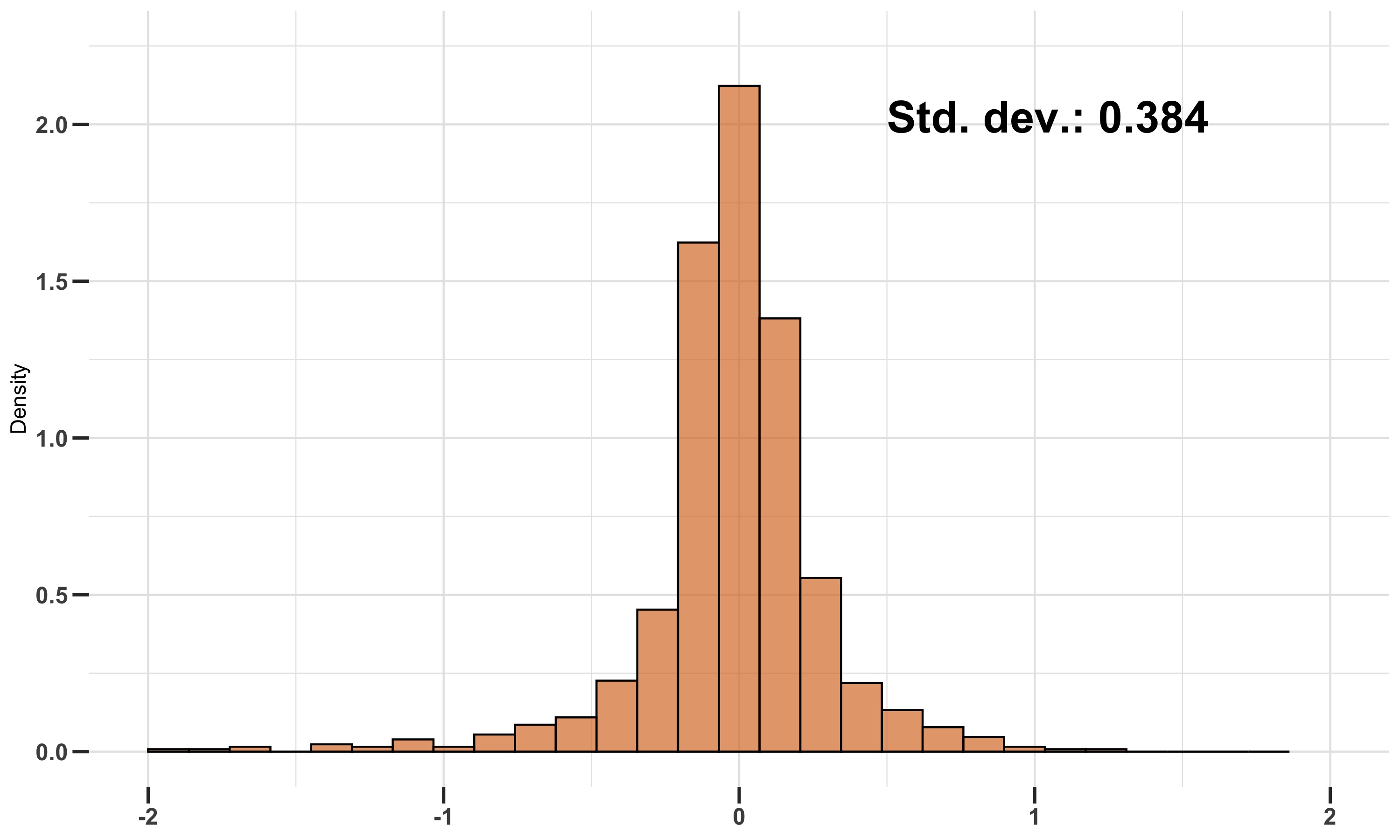}
    \end{subfigure}
    
    {\centering
\parbox{0.95\textwidth}{\footnotesize Notes: The reported standard deviations of the prediction errors are computed after excluding a single outlier observation whose prediction error falls below $-4.5$ under both methods; if this observation is included, the standard deviations for the MLE and EB errors are $0.456$ and $0.411$, respectively.}
}
\end{figure}

% prior distribution

% - three marginals
% - three pairwise distributions
% - covariance / variance (3 by 3 table)

% posterior
% raw data vs. posterior mean in the XY graph
%%%%%%%%%%%%%%%%%%%%
%%%%%%%%%%%%%%%%%%%%

% % --------------------------------------------------
% \begin{figure}[htbp]
% 	\caption{Distributions of $u_{it}$ Estimated under Different Specifications}
% 	\label{fig:PSID-composite-densities}
% 	\begin{center}
% 		\includegraphics[width=0.875\linewidth,keepaspectratio]{results/Fitted_error_dist.jpg}
% 	\end{center}
% \parbox{6.4in}{Notes: 
% The left panel displays the distribution under $u_{it} \sim \mathcal{N}(0,\sigma^2)$, the middle panel $u_{it} \sim \mathcal{N}(0,\sigma_i^2)$, and the right panel $u_{it} \sim \sigma_i \sum_{k=1}^K \p_k \mathcal{N}(0,\zeta_k^2)$. 
% }	
% \end{figure}

As shown in Table~\ref{tab:PSID:prior:moments}, a notable feature is the negative correlation between $a_i$ and $b_i$. 
To further examine this relationship, the left panel of Figure~\ref{fig:exp-var-profile} plots the EB estimate pairs $(\hat a_i^{\mathrm{EB}}, \hat b_i^{\mathrm{EB}})$ for $N = 938$ individuals, revealing a clear negative association between the individual intercepts $\hat a_i^{\mathrm{EB}}$ and slopes $\hat b_i^{\mathrm{EB}}$.\footnote{ Figure~\ref{fig:PSID:post:dist} in Appendix~\ref{appendix:pairwise:EB} 
presents all pairwise scatter plots of the EB estimates for \(\theta_i\).}
To explore the experience profile in the cross-section, we use the estimated moments in Table~\ref{tab:PSID:prior:moments} to compute the cross-sectional prior variance of \(a_i + b_i x\), which captures the deviation of an individual's income trajectory at \(X_{2,it} = x\) from the average. 
The mapping \(x \mapsto \mathbb{V}_{\hat G}(a_i + b_i x)\), shown in the right panel of Figure~\ref{fig:exp-var-profile}, indicates that the variance declines up to about \(x = 1\) (ten years of experience) and increases thereafter. 
This non-monotonic pattern arises from the negative covariance between \(a_i\) and \(b_i\). 
Interestingly, a similar U-shaped relationship between the log variance of residual earnings and experience is documented in \citet[Chapter~6, p.104]{Mincer:1974},  who attributes it to ``a weak correlation between post-school investment ratios and earning capacity.''
%\footnote{See p.~104 of \citet[Chapter~6]{Mincer:1974} for the quotation.} 
%A promising direction for future research is to combine Mincer’s insight with the random coefficient perspective developed here to obtain a more structural understanding of income dynamics.

{
\begin{figure}[h]
    \caption{EB Estimates of $(a_i, b_i)$ and Cross-Sectional Variance of Earnings}
    \label{fig:exp-var-profile}
    \begin{center}

    \begin{subfigure}[b]{0.45\textwidth}
        \centering
        \includegraphics[width=0.99\linewidth,keepaspectratio]{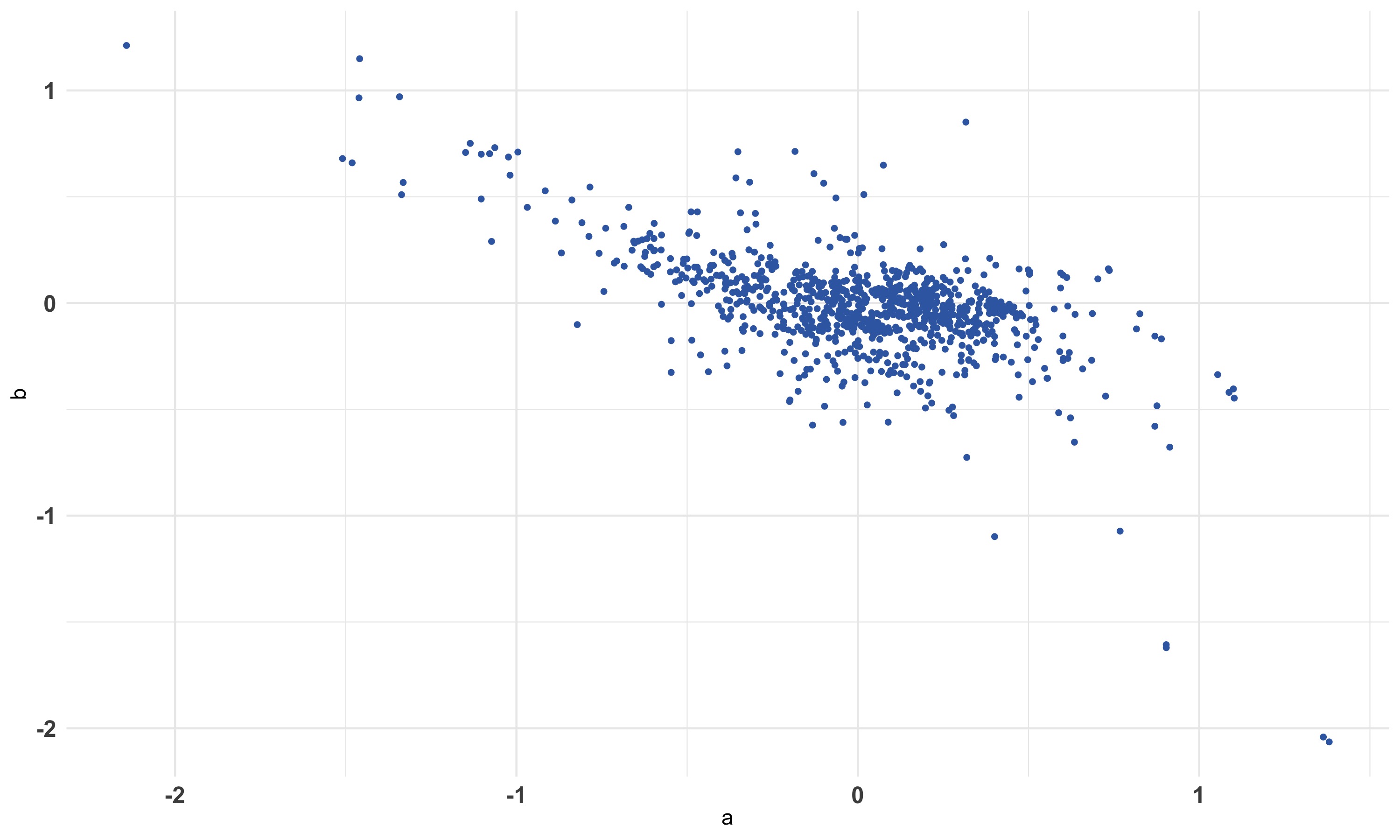}
        \subcaption{Pairs $(\hat a_i^{\mathrm{EB}}, \hat b_i^{\mathrm{EB}})$.}
    \end{subfigure}
    \begin{subfigure}[b]{0.45\textwidth}
        \centering
        \includegraphics[width=0.99\linewidth,keepaspectratio]{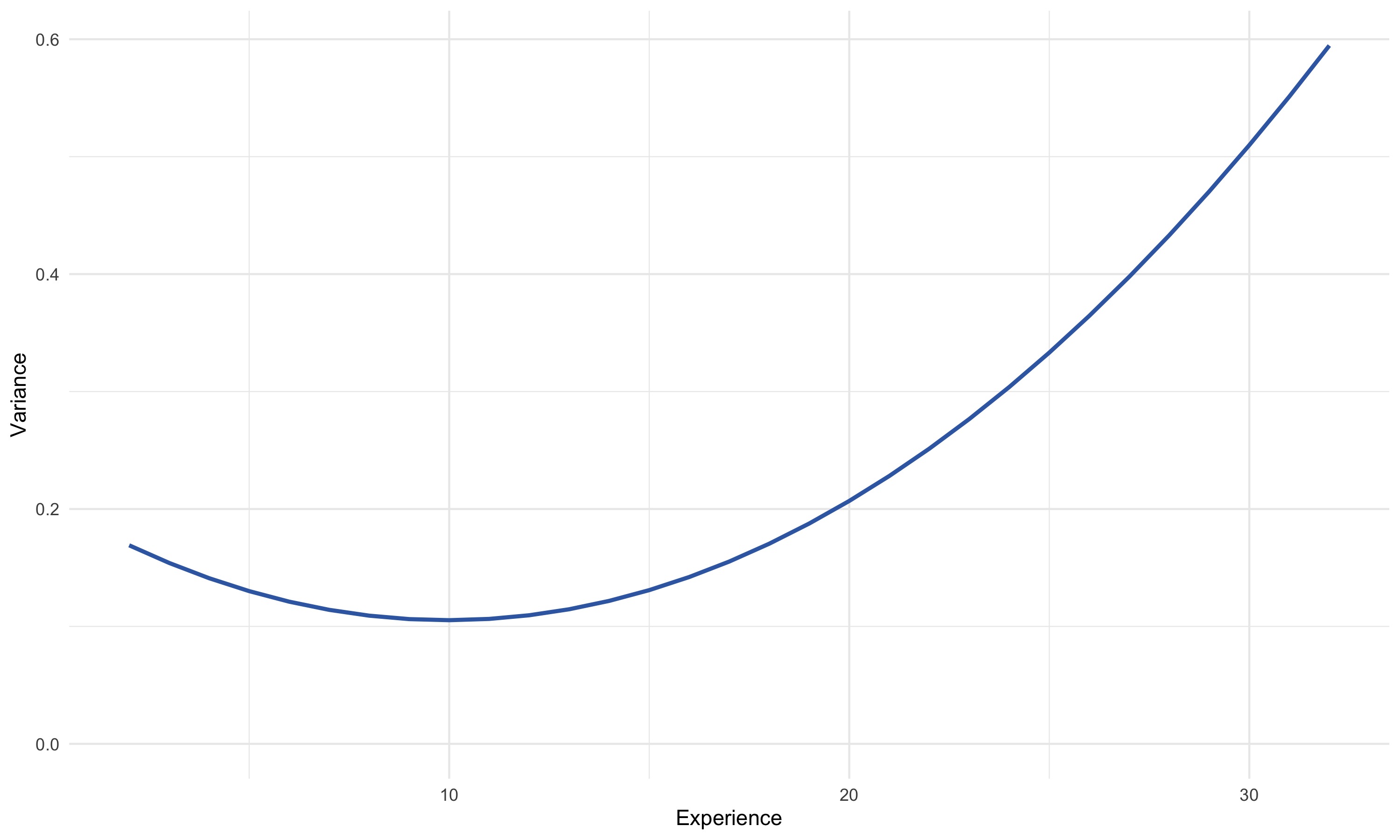}
        \subcaption{Variance profile by experience.}
    \end{subfigure}

    \end{center}

    \begin{center}
        \parbox{0.95\textwidth}{\footnotesize 
        Note: The left panel plots the EB estimates $(\hat a_i^{\mathrm{EB}}, \hat b_i^{\mathrm{EB}})_{i=1}^N$. 
        The right panel displays the cross-sectional variance of earnings as a function of years of experience, 
        $\text{Exp} \mapsto \mathbb{V}_{\hat G}(a_i + b_i \text{Exp})$.
        }
    \end{center}
\end{figure}
}

For completeness, we also computed estimates using the EM algorithm discussed in Section~\ref{subsec:alg:compare}, initializing it with the same starting values used for WFR (as described in Section~\ref{subsec:alg:wfr}).
We find that the EM estimates are very similar to the WFR solution.
While the EM algorithm attains a slightly higher likelihood value by optimizing weights over the dense initial grid, the WFR algorithm yields an approximate solution with fewer mass points.
Furthermore, although the EM algorithm generally requires fewer iterations to converge, the computational cost per iteration is higher.
These results are encouraging, as they suggest WFR is well-suited for higher-dimensional problems where fixed-grid algorithms become computationally challenging.

%%%%%%%%%%%%%%%%%%%%
% Monte Carlo
%%%%%%%%%%%%%%%%%%%%
\section{Monte Carlo Experiments}\label{sec:Monte:Carlo}

We conduct Monte Carlo experiments to evaluate the performance of the EB estimator in dynamic panel models relative to (i) the oracle decision rule and (ii) the individual MLE. 

% \textcolor{red}{I don't think we need this whole paragraph} Let $\Tau_i$ denote unit-level quantities of interest.   The oracle decision rule for $(\Tau_i)_{i=1}^N$ is simply the posterior mean under the true prior distribution $G_*$, as in \eqref{eq:PM-tau}.The EB estimator replaces $G_*$ with the \NPM\ estimated from the sample. For comparison with separate decision rules, we also consider the individual-level MLE, which is obtained by evaluating the best predictor $\mathbb{E}[\Tau_i \,|\, \theta_i, Y_i, \X_i]$ at $\thetamle_i$, where $\thetamle_i$ is defined in Section~\ref{sec:model}.
% When estimating a component of $\theta_i$, $\hat \Tau_i^{\mathrm{MLE}}$ is the corresponding component of $\thetamle_i$.
% When $\Tau_i$ is a one-period-ahead prediction of $Y_{i T+1}$,
% both 
% $\taumle_i$
% and
% $\taueb_i$ can be computed 
% using \eqref{eq:forecast-formula:est}.

The following DGPs are considered. 

\begin{enumerate}

\item The HIVD model 
%Consider the HIVD model in \eqref{eq:simple-dynamic-panel} 
%with the three sources of heterogeneity 
with $\theta_i = (a_i, \sigma_i^2, \rho_i)$.
The individual intercepts $(a_i)_{i=1}^N$ are drawn from the $\mathrm{Gamma}(1/2,\sqrt{2})$ distribution independently of $(\sigma_i^2, \rho_i)$.
The  joint distribution of $(\sigma_i^2, \rho_i)$  assigns probability mass \(1/6\) to \((\sigma_i^2, \rho_i) = (0.1, 0.8)\) and \((0.3, 0.2)\), and probability mass \(1/3\) to \((\sigma_i^2, \rho_i) = (0.1, 0.2)\) and \((0.3, 0.8)\). 
This design features a negative correlation of $-0.33$ between $\rho_i$ and $\sigma_i^2$.
The initial $u_{i0}$ is drawn from the stationary distribution $\mathcal{N}(0,\sigma_i^2/(1-\rho_i^2))$.
The sample size for estimation is $(N,T)=(1000,5)$. An additional observation $Y_{i T+1}$ is generated for each unit  to compute prediction errors.

\item The Restricted HIVD model follows the HIVD model above but  imposes $\rho_i \equiv 0.5$ and assumes $\sigma_i^2$ has a two-point distribution satisfying $G_*(\sigma_i^2 = 0.1) = G_*(\sigma_i^2 = 0.3) = 0.5$.
The initial distribution of $u_{i0}$ is $\mathcal{N}(0, \sigma_i^2/(1-0.5^2))$. 
The distribution of $a_i$ and
the sample size is the same as in HIVD.
This DGP mimics a scenario in which a researcher does not know that $\rho_i$ is a constant and continue to estimate the HIVD model.

\item The HIDVX model generates   $\theta_i = (a_i, b_i, \sigma_i^2, \rho_i)$  from the prior distribution estimated from the PSID sample. The covariate (interpreted as experience divided by 10) is   assumed  to evolve  according to $X_{2,it} = X_{2,i,t-1} + 0.1$, with  $(N,T) = (1000, 13)$, approximately as in  the PSID sample.
The initial value $X_{2,i1}$ is drawn from a discrete distribution with probabilities $\mathbb{P}(X_{2,i1}=0.3) = 0.25$, $\mathbb{P}(X_{2,i1}=0.4) = 0.05$, $\mathbb{P}(X_{2,i1}=0.5) = \mathbb{P}(X_{2,i1}=0.6) = 0.1$, and $\mathbb{P}(X_{2,i1}=0.7) = 0.5$.
\end{enumerate}

%\noindent\textbf{Computational Details.}
%To implement the WFR gradient-descent algorithm, we initialize the starting values following the procedure in Section~\ref{subsec:alg:wfr}, using $B = 1$ and $m = 334 = \lceil N/3\rceil$. Specifically, for each $j=1,\ldots, m$, we sample a unit $i_j$ uniformly at random, compute its individual MLE, $\hat \theta_{i_j} = (\hat a_{i_j}, \hat \sigma_{i_j}^2, \hat \rho_{i_j})$ for both restricted HIVD and HIVD, or $\hat \theta_{i_j} = (\hat a_{i_j}, \hat b_{i_j}, \hat \sigma_{i_j}^2, \hat \rho_{i_j})$ for HIVDX, and set $\theta^j \equiv \hat \theta_{i_j}$.

For the first two models,   the WFR parameters are $\eta = 0.1$,  $\overline{n} = 2000$, and  $\tol = 10^{-4}$. 
For the more complex HIVDX,  $\eta = 0.005$,  $\overline{n} = 10000$ without applying the early stopping criterion to ensure stable and reliable convergence of the algorithm.

Let $\hat{\varepsilon}_i = \hat{\Tau}_i - \Tau_i$ denote the estimation error for unit $i$, where $\tau_i$ is one of the model parameters, or the one step ahead prediction.
For each method and replication, we compute 
\begin{equation*}
\mathrm{Bias} = \frac{1}{N}\sum_{i=1}^N \hat{\varepsilon}_i,\quad
\mathrm{SD} = \left( \frac{1}{N}\sum_{i=1}^N \left(\hat{\varepsilon}_i - \frac{1}{N}\sum_{i=1}^N \hat{\varepsilon}_i\right)^2 \right)^{1/2},\quad
\mathrm{RMSE} = \left( \frac{1}{N}\sum_{i=1}^N \hat{\varepsilon}_i^{2} \right)^{1/2}.
\end{equation*}
Additionally, a measure of predictability, denoted by R$^2$, is computed as one minus the ratio of the mean squared prediction error to the sample variance of $Y_{i T+1}$.

%\subsection{Monte Carlo Results}

%{\color{blue}In the experiments, we use the same estimation methods for both HIVD and restricted HIVD designs. The purpose of adding the restricted design is to check how our EB estimator performs in a degenerate case.} 

Table~\ref{tab:MC-HIVD-HIVDR} reports results  for  the HIVD model and its  restricted variant.
In both designs, the oracle estimator achieves the lowest estimation error, but the EB estimator improves upon the individual MLE with uniformly smaller RMSE across all parameters.
The gains are particularly pronounced for $\rho_i$. 
The gains from EB are even larger in the restricted HIVD design, where the prior for $\rho_i$ is more informative. Most of the RMSEs of the EB estimator are within 10\% of those of the oracle.

%While not reported in Table~\ref{tab:MC-HIVD-HIVDR}, the square roots of the regret are $(0.172, 0.036, 0.186, 0.080)$ for $(a_i, \sigma_i^2, \rho_i, Y_{i T+1})$, respectively, in the HIVD design, and $(0.156, 0.034, 0.183, 0.074)$ in the restricted HIVD design. These values appear small compared to the magnitudes of the corresponding parameters.

Table~\ref{tab:MC-HIVDX} presents the results for the HIVDX model.
Similarly to Table~\ref{tab:MC-HIVD-HIVDR}, the EB estimator outperforms the individual MLE across all parameters.
For both $a_i$ and $b_i$, the EB method achieves MSEs that are approximately 55\% of those of the MLE.
The EB estimator substantially reduces the MSE for $\rho_i$ as well, primarily attributable to bias reduction.
It is interesting to note that the Monte Carlo results for the bias of the MLE of $\rho_i$ echo the pattern observed in the empirical application in Section~\ref{sec:empirical}. In terms of out-of-sample prediction accuracy, R$^2$ increases from 0.591 for the MLE to 0.662 for EB.

Finally, Table~\ref{tab:prior-moment} shows the prior moment estimation results  in the HIVDX design. The first moments of $\theta_i$ are generally well estimated, with the exception of a modest downward bias in the estimate of $\mathbb{E}_{G_*}[\rho_i]$. The variances of $\theta_i$ tend to be somewhat overestimated, which is expected given that estimation noise generally inflates the estimated variance of $\theta_i$ relative to the true variance.
The negative covariance between $a_i$ and $b_i$ is estimated with a downward bias but the sign is reliably recovered.
%The downward bias in the covariance between $a_i$ and $b_i$ is mitigated by upward biases in the variances of $a_i$ and $b_i$, yielding a reasonably accurate estimate of the correlation.
%The estimated correlation between $a_i$ and $b_i$ exhibits a bias of $-0.047$ and a standard deviation of $0.035$, while the true value is $-0.708$.

%%%%%%%%%%%%%%%%%%%%
% Table 2
%%%%%%%%%%%%%%%%%%%%
\begin{table}[htbp!]
\caption{Errors from 500 Monte Carlo Replications}
\label{tab:MC-HIVD-HIVDR}
\centering

\begin{tabular}{
L{2.8cm}
C{1.6cm}
L{1.6cm}
S[table-format=1.3, table-column-width=2cm]
S[table-format=1.3, table-column-width=2cm]
S[table-format=1.3, table-column-width=2cm]}
\toprule
 & & & \multicolumn{3}{c}{Estimator} \\
\cmidrule(lr){4-6}
Errors & DGP & Metric & \multicolumn{1}{c}{\text{Oracle}} & \multicolumn{1}{c}{\text{MLE}} & \multicolumn{1}{c}{\text{EB}} \\
\midrule

\multirow{6}{*}{$\hat a_i - a_i$}
 & \multirow{3}{*}{HIVD} & Bias & -0.004 & 0.000 & -0.002 \\
 &  & SD   & 0.378 & 0.449 & 0.414 \\
 &  & RMSE & 0.378 & 0.449 & 0.414 \\
\cmidrule(lr){2-6}
 & \multirow{3}{*}{HIVDR} & Bias & 0.001 & 0.000 & -0.003 \\
 &  & SD   & 0.303 & 0.346 & 0.339 \\
 &  & RMSE & 0.303 & 0.346 & 0.339 \\
\midrule

\multirow{6}{*}{$\hat\sigma_i^{2} - \sigma_i^{2}$}
 & \multirow{3}{*}{HIVD} & Bias & 0.000 & -0.077 & -0.006 \\
 &  & SD   & 0.078 & 0.117 & 0.085 \\
 &  & RMSE & 0.078 & 0.141 & 0.086 \\
\cmidrule(lr){2-6}
 & \multirow{3}{*}{HIVDR} & Bias & 0.000 & -0.077 & -0.007 \\
 &  & SD   & 0.079 & 0.118 & 0.085 \\
 &  & RMSE & 0.079 & 0.141 & 0.086 \\
\midrule

\multirow{6}{*}{$\hat\rho_i - \rho_i$}
 & \multirow{3}{*}{HIVD} & Bias & -0.003 & -0.552 & -0.062 \\
 &  & SD   & 0.273 & 0.482 & 0.323 \\
 &  & RMSE & 0.273 & 0.733 & 0.330 \\
\cmidrule(lr){2-6}
 & \multirow{3}{*}{HIVDR} & Bias & 0.000 & -0.546 & -0.048 \\
 &  & SD   & 0.000 & 0.447 & 0.174 \\
 &  & RMSE & 0.000 & 0.706 & 0.182 \\
\midrule

\multirow{8}{*}{\footnotesize $\hat Y_{i T+1} - Y_{i T+1}$}
 & \multirow{4}{*}{HIVD} & Bias & -0.001 & 0.000 & 0.000 \\
 &  & SD   & 0.485 & 0.560 & 0.490 \\
 &  & RMSE & 0.485 & 0.560 & 0.490 \\
 &  & $\mathrm{R}^2$ & 0.825 & 0.766 & 0.821 \\
\cmidrule(lr){2-6}
 & \multirow{4}{*}{HIVDR} & Bias & -0.001 & 0.000 & -0.001 \\
 &  & SD   & 0.471 & 0.562 & 0.477 \\
 &  & RMSE & 0.471 & 0.562 & 0.477 \\
 &  & $\mathrm{R}^2$ & 0.824 & 0.749 & 0.819 \\
\bottomrule
\end{tabular}
\begin{center}
\parbox{0.9\textwidth}{\footnotesize{Notes: Each row reports averages computed from $500$ replications. 
Oracle refers to the true posterior mean, and MLE corresponds to the individual MLE.}}
\end{center}
\end{table}

%%%%%%%%%%%%%%%%%%%%
% Table 3
%%%%%%%%%%%%%%%%%%%%
\begin{table}[!htbp]
\caption{Errors from 500 Monte Carlo Replications (HIVDX)}
\label{tab:MC-HIVDX}
\centering
\begin{tabular}{L{2.5cm}L{2cm}
S[table-format=1.3, table-column-width=2.5cm]
S[table-format=1.3, table-column-width=2.5cm]
S[table-format=1.3, table-column-width=2.5cm]}
\toprule
 &  & \multicolumn{3}{c}{Estimator} \\ 
\cmidrule(lr){3-5}
Errors & Metric & \text{Oracle} & \text{MLE} & \text{EB} \\
\midrule
\multirow{3}{*}{$\hat a_i - a_i$} 
 & Bias & 0.001 & -0.001 & -0.001 \\
 & SD & 0.244 & 0.426 & 0.318 \\
 & RMSE & 0.244 & 0.426 & 0.318 \\
\midrule
\multirow{3}{*}{$\hat b_i - b_i$} 
 & Bias & -0.001 & 0.001 & 0.001 \\
 & SD & 0.188 & 0.338 & 0.251 \\
 & RMSE & 0.188 & 0.338 & 0.251 \\
\midrule
\multirow{3}{*}{$\hat\sigma_i^2 - \sigma_i^2$}
 & Bias & 0.000 & -0.020 & -0.006 \\
 & SD & 0.047 & 0.066 & 0.064 \\
 & RMSE & 0.047 & 0.069 & 0.064 \\
\midrule
\multirow{3}{*}{$\hat\rho_i - \rho_i$}
 & Bias & -0.002 & -0.326 & -0.082 \\
 & SD & 0.212 & 0.308 & 0.275 \\
 & RMSE & 0.212 & 0.448 & 0.288 \\
\midrule
\multirow{4}{*}{\footnotesize $\hat Y_{i T+1} - Y_{i T+1}$}
 & Bias & 0.000 & 0.001 & 0.001 \\
 & SD & 0.299 & 0.341 & 0.314 \\
 & RMSE & 0.299 & 0.341 & 0.314 \\ 
 & $\mathrm{R}^2$ & 0.692 & 0.591 & 0.662 \\ 
\bottomrule
\end{tabular}
\begin{center}
\parbox{0.9\textwidth}{\footnotesize{Notes: Each row reports averages computed from $500$ replications. 
Oracle refers to the true posterior mean, and MLE corresponds to the individual MLE.}}
\end{center}
\end{table}

% ------------------------------------------------------------------------------------------------------------------------

\begin{table}[!htbp]
\caption{Monte Carlo Results for Moment Estimates of $G_*$ (HIVDX)}
\label{tab:prior-moment}
\centering
\renewcommand{\arraystretch}{1.2}

\begin{tabular}{L{1.6cm} *{6}{S[table-format=-1.3]}} 
\toprule\midrule

% --- PANEL A ---
\multicolumn{7}{l}{\text{Panel A. Means}} \\
& & {$\mathbb{E}_{\hat{G}}[a_i]$} & {$\mathbb{E}_{\hat{G}}[b_i]$} &  {$\mathbb{E}_{\hat{G}}[\sigma_i^2]$} & {$\mathbb{E}_{\hat{G}}[\rho_i]$}  \\
\midrule
Truth & &  0.013 & -0.017 &  0.079 &  0.422  \\
Bias  & &-0.001 &  0.001 & -0.006 & -0.082  \\
SD    & & 0.022 &  0.017 &  0.005 &  0.016  \\
RMSE  & & 0.023 &  0.018 &  0.011 &  0.080  \\
\midrule\midrule

% --- PANEL B ---
\multicolumn{7}{l}{\text{Panel B. Variances}} \\
& &{$\mathbb{V}_{\hat{G}}(a_i)$} & {$\mathbb{V}_{\hat{G}}(b_i)$} & {$\mathbb{V}_{\hat{G}}(\sigma_i^2)$} & {$\mathbb{V}_{\hat{G}}(\rho_i)$}  \\
\midrule
Truth & & 0.205 & 0.101 &  0.023 & 0.105  \\
Bias  & & 0.054 & 0.030 & -0.007 & 0.017  \\
SD    & & 0.037 & 0.025 &  0.006 & 0.009  \\
RMSE  & & 0.065 & 0.039 &  0.009 & 0.020  \\
\midrule\midrule

% --- PANEL C (Merged) ---
\multicolumn{7}{l}{\text{Panel C. Pairwise Covariances}} \\
& {$(a_i,b_i)$} & {$(a_i,\sigma_i^2)$} & {$(a_i,\rho_i)$} & {$(b_i,\sigma_i^2)$} & {$(b_i,\rho_i)$} & {$(\rho_i,\sigma_i^2)$} \\
\midrule
Truth & -0.100 & -0.011 & -0.004 & -0.010 &  0.007 & -0.001 \\
Bias  & -0.038 & -0.003 &  0.005 &  0.007 & -0.004 &  0.003 \\
SD    &  0.028 &  0.011 &  0.009 &  0.010 &  0.006 &  0.002 \\
RMSE  &  0.047 &  0.010 &  0.011 &  0.012 &  0.008 &  0.003 \\
\bottomrule
\end{tabular}

\begin{center}
\parbox{0.95\textwidth}{\footnotesize Notes: The row labeled `Truth' reports the true values of the corresponding moments.
Results are based on 500 replications.}
\end{center}
\end{table}

\section{Conclusions}\label{sec:conclusions}

We have proposed a nonparametric empirical Bayes framework for analyzing short-panel data with rich forms of unobserved heterogeneity. The analysis generalizes classical G-modeling to accommodate heterogeneous slopes and non-spherical error structures. Identification, consistency, and optimality results are established under general conditions, with primitive sufficient conditions derived for specific cases of interest. 
However, we have not developed any accompanying inference methods. 
Recent contributions such as \citet{AKPM:2022} and \citet{ignatiadis2022confidence} can be useful for future research.

%  \textcolor{red}{I would get rid of this}
% Future research may further develop the asymptotic theory, specifically by establishing rates of convergence for the \NPM. While we have established consistency using a metric compatible with the vague topology, deriving rates in our metric or stronger metrics (e.g., the Wasserstein distance) remains an important open challenge.
% Such results would be useful to quantify the rate at which the regret of Empirical Bayes estimators vanishes, clarifying the trade-off between the complexity of the heterogeneity and sample size.
% Another natural direction is the development of methods for frequentist inference within the empirical Bayes framework. Recent contributions such as \citet{AKPM:2022} and \citet{ignatiadis2022confidence} provide promising starting points.
% While a full treatment of these issues lies beyond the scope of this paper, both represent important avenues for future investigation.

\bibliographystyle{econometrica}
\bibliography{EB.bib}

\clearpage
\newpage

\appendix

%%%%%%%%%%%%%%%%%%%%

\renewcommand{\thepage}{S-\arabic{page}}
        %\pagenumbering{roman}
        \setcounter{page}{1}
        
        \renewcommand{\thesection}{S-\arabic{section}}
        \setcounter{section}{0}

        \begin{center}
            \Large{Supplement to ``Empirical Bayes Estimation in Heterogeneous Coefficient Panel Models''}
        \end{center}

 %%%%%%%%%%%%%%%%%%%%

\section{Proofs}
\label{sec:proof}

\subsection{Proofs of Main Results}

\begin{proof}[Proof of Theorem~\ref{thm:identification}]
We divide the proof into two parts.
Part (i) establishes identification of the marginal distribution of $\delta_i$; part (ii) builds on this to show that the full mixing distribution $G_*$ is identified.

\medskip

\noindent \textit{Part~(i)}.
Take any solution $G_\oo \in \mathcal{G}$ to \eqref{eq:npmle-sample-program}, which must satisfy $f_{G_\oo}(Y_i, \X_i) = f_{G_*}(Y_i, \X_i)$ almost surely.
Let $H_*$ and $H_\oo$ denote the marginal distributions of $\delta_i$ induced by $G_*$ and $G_\oo$, respectively.
In what follows, we show that $H_* = H_\oo$.

Let $\xmat$ and $\Q $ be given as in Assumption~\ref{asm:id-conditions}.
Conditioning on $\X_i= \xmat$ and premultiplying both sides of \eqref{eq:model} by $\Q'$, we obtain $\Q' Y_i =  \Q' P_i e_i$.
Taking conditional characteristic functions of $\Q' Y_i$ according to $f_{G_\oo}(\cdot \,|\, \xmat)$ and $f_{G_*}(\cdot \,|\, \xmat)$, respectively, we have
\begin{align*}
\mathbb{E}_{G_\oo}[e^{\ii t' \Q' Y_i} | \X_i = \xmat] &= \int  e^{-t' \Q'P(\delta)P(\delta)'\Q t/2} dH_\oo(\delta),  \\
\mathbb{E}_{G_*}[e^{\ii t' \Q' Y_i} | \X_i = \xmat] &= \int  e^{-t' \Q'P(\delta)P(\delta)'\Q t/2} dH_*(\delta).
\end{align*}
Equating $\mathbb{E}_{G_\oo}[e^{\ii t' \Q' Y_i} | \X_i = \xmat]$ with $\mathbb{E}_{G_*}[e^{\ii t' \Q' Y_i} | \X_i = \xmat]$, we have
\begin{align}
\label{eq:Ho-Hstar}
& \int_{\mathcal{K}_{\delta}} \frac{|\Q'P(\delta)P(\delta)'\Q|^{1/2}}{(2\pi)^{(T-\dbeta)/2}} e^{-t' \Q'P(\delta)P(\delta)'\Q t/2}\cdot |\Q'P(\delta)P(\delta)'\Q|^{-1/2} dH_\oo(\delta) \nonumber \\
= &\ \int_{\mathcal{K}_{\delta}} \frac{|\Q'P(\delta)P(\delta)'\Q|^{1/2}}{(2\pi)^{(T-\dbeta)/2}} e^{-t' \Q'P(\delta)P(\delta)'\Q t/2}\cdot |\Q'P(\delta)P(\delta)'\Q|^{-1/2} dH_*(\delta),
\end{align}
which can be rewritten as
\begin{equation*}
    \mathcal{T}[0, (\Q'P(\delta)P(\delta)'\Q)^{-1},\tilde{H}_\oo](t) = \mathcal{T}[0, (\Q'P(\delta)P(\delta)'\Q)^{-1},\tilde{H}_*](t), \quad \forall t\in \mathbb{R}^{T-\dbeta},
\end{equation*}
with $\mathcal{T}$ defined before Lemma~\ref{lem:Bruni-Koch}. The probability measures $\tilde{H}_\oo$ and $\tilde{H}_*$ are defined via
\begin{equation*}
d\tilde{H}_\oo(\delta) := \frac{|\Q'P(\delta)P(\delta)'\Q|^{-1/2}}{C} dH_\oo(\delta),\quad d\tilde{H}_*(\delta) := \frac{|\Q'P(\delta)P(\delta)'\Q|^{-1/2}}{C} dH_*(\delta),
\end{equation*}
where the norming constant $C>0$ is
\begin{equation*}
C = \int_{\mathcal{K}_\delta} |\Q'P(\delta)P(\delta)'\Q|^{-1/2} dH_*(\delta) = \int_{\mathcal{K}_\delta} |\Q'P(\delta)P(\delta)'\Q|^{-1/2} dH_*(\delta).
\end{equation*}
This equality follows from integrating both sides of \eqref{eq:Ho-Hstar} with respect to $t \in \mathbb{R}^{T-\dbeta}$, ensuring that both $\tilde{H}_*$ and $\tilde{H}_\oo$ are probability measures.
We are prepared to apply Lemma~\ref{lem:Bruni-Koch}.
% By Assumption~\ref{asm:DGP}, we can apply  with $\delta$ and $\mathcal{K}_\delta$ treated as $\theta$ and $\Theta$, respectively.
Since $\delta \mapsto (\Q'P(\delta)P(\delta)'\Q)^{-1}$ is injective by Assumption~\ref{asm:id-conditions}(ii), \eqref{eq:identifying-restriction-Bruni-Koch} is satisfied by $\Lambda(\mathcal{K}_{\delta})$ which consists only of $(\lambda_1(\delta),\lambda_2(\delta)) := (0, (\Q'P(\delta)P(\delta)'\Q)^{-1})$.
By Lemma~\ref{lem:Bruni-Koch}, it follows that $\tilde{H}_* = \tilde{H}_\oo$, and hence $H_* = H_\oo$.

\bigskip

\noindent \textit{Part~(ii)}.
Next, we show that the conditional distribution of $\xmat \hc_i$ given $\delta_i = \delta$ is identified for each $\delta$ in the support of $H_*$.

Let $\xmat$ and $\Q $ be again given as in Assumption~\ref{asm:id-conditions}.
Computing the conditional characteristic functions of $Y_i$ given $\X_i = \xmat$ according to the densities $f_{G_\oo}(\cdot\,|\,\xmat)$ and $f_{G_*}(\cdot\,|\,\xmat)$, respectively, we obtain:
\begin{align*}
\varphi_\oo(t) &:= \mathbb{E}_{G_\oo}[e^{\ii t' Y_i} | \X_i = \xmat]  =   \int \mathbb{E}_{G_\oo}[e^{\ii t'\xmat \hc_i} |\delta_i = \delta] e^{-t'P(\delta)P(\delta)'t/2} d H_\oo(\delta),   \\
\varphi_*(t) &:= \mathbb{E}_{G_*}[e^{\ii t' Y_i} | \X_i = \xmat] = \int \mathbb{E}_{G_*}[e^{\ii t'\xmat \hc_i} |\delta_i = \delta] e^{-t'P(\delta)P(\delta)'t/2} d H_*(\delta).
\end{align*}
These function coincide with each other since $f_{G_\oo} = f_{G_*}$
By Part~(i), we also have $H_\oo = H_*$.
Fix $t \in \mathbb{R}^T$, and let
\begin{equation*}
\psi(\delta) := \mathbb{E}_{G_*}[e^{\ii t'\xmat \hc_i} |\delta_i = \delta]-\mathbb{E}_{G_\oo}[e^{\ii t'\xmat \hc_i} |\delta_i = \delta].
\end{equation*}
Equating $\varphi_\oo(t+ \Q s)$ with $\varphi_*(t + \Q s)$, we then obtain
\begin{align}
\label{eq:ch-ftn}
\int \psi(\delta) \exp\left( -\frac{1}{2} \|P(\delta)'(\Q s + t)\|^2 \right) d H_*(\delta) = 0
\end{align}
for all $s \in \mathbb{R}^{T-\dbeta}$.

For each $\delta \in \mathcal{K}_{\delta}$, we write 
\begin{equation*}
t = \Q  \alpha_\delta + e_\delta,
\end{equation*}
where $\alpha_\delta = (\Q' P(\delta)P(\delta)' \Q)^{-1} \Q' P(\delta)P(\delta)' t$ is chosen so that $\Q' P(\delta)P(\delta)' e_\delta = 0$.
This allows us to rewrite \eqref{eq:ch-ftn} as
\begin{align}
\label{eq:ch-ftn-2}
    0 &= \int \psi(\delta)  \exp\left( -\frac{1}{2} \|P(\delta)'e_\delta\|^2  -\frac{1}{2} (s + \alpha_\delta)'\Q 'P(\delta)P(\delta)'\Q (s + \alpha_\delta) \right)  d H_*(\delta) \nonumber\\
    &= \mathlarger{\int}
    \begin{aligned}
        & (2\pi)^{(T-\dbeta)/2}|\Q 'P(\delta)P(\delta)'\Q|^{-1/2} \psi(\delta)  \exp\left( -\frac{1}{2} \|P(\delta)'e_\delta\|^2\right) \\
    &\quad \times \frac{|\Q 'P(\delta)P(\delta)'\Q|^{1/2}}{(2\pi)^{(T-\dbeta)/2}}\exp\left(  -\frac{1}{2} (s + \alpha_\delta)'\Q 'P(\delta)P(\delta)'\Q (s + \alpha_\delta) \right)   d H_*(\delta)
    \end{aligned}
\end{align}
for all $s \in \mathbb{R}^r$.
Define a complex Borel measure
\begin{equation*}
 d\mu(\delta) := (2\pi)^{(T-\dbeta)/2}|\Q 'P(\delta)P(\delta)'\Q|^{-1/2} \psi(\delta)  \exp\left( -\frac{1}{2} \|P(\delta)'e_\delta\|^2\right) d H_*(\delta).
\end{equation*}
Our next goal is to establish that the real and imaginary parts of $\mu$ are both zero.
By construction of $\mu$, \eqref{eq:ch-ftn-2} implies that
\begin{align}
\label{eq:ch-ftn-3}
       & \mathcal{T}[-\alpha_\delta, (\Q'P(\delta)P(\delta)'\Q)^{-1}, \mu](s) \nonumber \\
       =&\
       \int \frac{|\Q 'P(\delta)P(\delta)'\Q|^{1/2}}{(2\pi)^{(T-\dbeta)/2}}\exp\left(  -\frac{1}{2} (s + \alpha_\delta)'\Q 'P(\delta)P(\delta)'\Q (s + \alpha_\delta) \right)   d \mu(\delta)  = 0
\end{align}
for all $s \in \mathbb{R}^{T-\dbeta}$.
Using the Jordan decomposition theorem, we decompose $\mu$ into
\begin{equation*}
\mu = \mu_+^R  - \mu_-^R + \ii (\mu_+^I - \mu_-^I),
\end{equation*}
where $\mu_\pm^R$ are mutually singular non-negative Borel measures, representing the positive and negative real parts of $\mu$, respectively.
The same holds for the positive and negative imaginary parts $\mu_\pm^I$.
By linearity of $\mathcal{T}$ with respect to $\mu$, it follows from \eqref{eq:ch-ftn-3} that
\begin{align}
\label{eq:real-part}
 \mathcal{T}[-\alpha_\delta, (\Q'P(\delta)P(\delta)'\Q)^{-1}, \mu^R_+](s) &= \mathcal{T}[-\alpha_\delta, (\Q'P(\delta)P(\delta)'\Q)^{-1}, \mu^R_-](s), \\
 \label{eq:imag-part}
 \mathcal{T}[-\alpha_\delta, (\Q'P(\delta)P(\delta)'\Q)^{-1}, \mu_I^+](s) &= \mathcal{T}[-\alpha_\delta, (\Q'P(\delta)P(\delta)'\Q)^{-1}, \mu_I^-](s).
\end{align}
Integrating both sides of \eqref{eq:real-part} with respect to $s \in \mathbb{R}^{T-\dbeta}$, by Fubini's theorem, it follows
\begin{align*}
\mu^R_+(\mathcal{K}_\delta) &= \int_{\mathbb{R}^{T-\dbeta}} \mathcal{T}[-\alpha_\delta, (\Q'P(\delta)P(\delta)'\Q)^{-1}, \mu^R_+](s) ds \\
& = \int_{\mathbb{R}^{T-\dbeta}} \mathcal{T}[-\alpha_\delta, (\Q'P(\delta)P(\delta)'\Q)^{-1}, \mu^R_-](s) ds = \mu^R_-(\mathcal{K}_\delta).
\end{align*}
We examine two possible cases.
If $\mu_-^R(\mathcal{K}_\delta) = \mu_+^R(\mathcal{K}_\delta) = 0$, the real part of $\mu$ vanishes trivially.
Otherwise, we may without loss of generality assume that both $\mu_-^R$ and $\mu_+^R$ are probability measures.
Lemma~\ref{lem:Bruni-Koch} then implies that $\mu^R_+ = \mu^R_-$.
However, since $\mu^R_+$ and $\mu^R_-$ are mutually singular, this is only possible when both $\mu^R_+$ and $\mu^R_-$ have total mass of zero, which leads to a contradiction.
Thus, we conclude $\mu_-^R = \mu_+^R \equiv 0$, and hence the real part of $\mu$ is identically zero.
The same applies to $\mu_I^+$ and $\mu_I^-$. It follows $\mu \equiv 0$, thus implying that $\psi(\delta) = 0$ $H_*$-a.e. on $\mathcal{K}_\delta$.

Thus far, we have established that 
\begin{equation*}
\mathbb{E}_{G_*}[e^{\ii t'\xmat \hc_i} |\delta_i = \delta] - \mathbb{E}_{G_\oo}[e^{\ii t'\xmat \hc_i} |\delta_i = \delta]   = 0
\end{equation*}
for $H_*$-a.e. $\delta$ for each fixed $t\in\mathbb{R}^T$.
Define the function 
\begin{equation*}
\omega(t,\delta) := \mathbb{E}_{G_*}[e^{\ii t'\xmat \hc_i}|\delta_i = \delta] - \mathbb{E}_{G_\oo}[e^{\ii t'\xmat \hc_i}|\delta_i = \delta]
\end{equation*}
which admits a jointly measurable version, and hence can be regarded as Borel-measurable in $(t,\delta)$.
By Fubini's theorem, we have
\begin{align*}
     0 = \int_{\mathbb{R}^T} H_*\left( \{ \delta \in \K_\delta : \omega(t,\delta) \ne 0 \}\right)dt &= (\operatorname{Leb}\times H_*) \left( \{ (t,\delta) \in \mathbb{R}^T\times \K_\delta : \omega(t,\delta) \ne 0 \}\right) \\
     &=\int_{\mathcal{K}_\delta} \operatorname{Leb}\left( \{ t \in \mathbb{R}^{T} : \omega(t,\delta) \ne 0 \}\right)dH_*(\delta),
\end{align*}
where $\operatorname{Leb}\times H_*$ denotes the product measure of the Lebesgue measure and $H_*$.
It follows that $\operatorname{Leb}(\{t:\omega(t,\delta) \ne 0\}) = 0$ for $H_*$-a.e. $\delta \in \K_\delta$.
Observe that $\omega(\cdot, \delta)$ is a uniformly continuous function of $t \in \mathbb{R}^T$ for $\delta \in \mathcal{K}_{\delta}$.
Since the set of all zeros of $\omega(\cdot, \delta)$ is dense in $\mathbb{R}^T$, $\omega(\cdot, \delta)$ must vanish everywhere.
It follows that for $H_*$-a.e. $\delta$,
\begin{equation*}
\mathbb{E}_{G_\oo}[e^{\ii t'\xmat \hc_i}|\delta_i = \delta] = \mathbb{E}_{G_*}[e^{\ii t'\xmat \hc_i}|\delta_i = \delta],\quad \forall t \in \mathbb{R}^T.
\end{equation*}
This establishes that the conditional distribution of $\xmat \hc_i$ given $\delta_i$ is identified.
Finally, since $\xmat$ has full column rank, this is equivalent to identifying the conditional distribution of $\hc_i$ given $\delta_i$, implying that $G_*$ is identified from the conditional distribution $\hc_i \,|\, \delta_i$ and the marginal distribution of $\delta_i$.
\end{proof}

%%%%%%%%%%%%%%%%%%%%
\begin{proof}[Proof of Theorem~\ref{thm:consistency}]

We verify a multivariate version of Assumptions~1-5 in \citet{kiefer1956consistency}.

%\textcolor{red}{After saying that the proof is nontrivial, it suffices to verify conditions in KW?}

KW~Assumption~1 is trivially met, since $Y \mapsto f_{G}(Y,\X)$ is a legitimate density with respect to the Lebesgue measure on $\mathbb{R}^T$.
KW~Assumption~2 requires 
\begin{equation*}
f_{\tilde{\mu}}(Y,\X) \to f_{\mu}(Y,\X)\quad \text{as}\quad {d}(\tilde{\mu}, \mu) \to 0
\end{equation*}
for any $\mu$ and $\mathbb{P}$-a.e. $(Y,\X)$, which is established in Lemma~\ref{lem:likelihood-continuity}.
KW~Assumptions~3 and 4 are verified in Lemma~\ref{lem:key-to-consistency}(ii) and (iii), respectively.
Finally, KW~Assumption~5 follows from the fact that $\mathbb{E}[|\log f_{G_*}(Y_i,\X_i)|] < \infty$ and the random variable $\log \bar{f}_{\mu, \varepsilon}(Y_i,\X_i)$ is uniformly bounded above for all $\mu\in\bar{\mathcal{G}}$ and $\varepsilon>0$, as established in Lemma~\ref{lem:key-to-consistency}(i) and (ii).
\end{proof}

%%%%%%%%%%%%%%%%%%%%

\begin{proof}[Proof of Theorem~\ref{thm:EB-consistency}]
We first show that $ \mathbb{E} [N^{-1}\sum_{i=1}^N |\taueb_i - \taupm_i|^p] \to 0$ as $N \to \infty$.
Let $\epsilon > 0$ be fixed, and let $M>0$ be given as in Assumption~\ref{asm:EB}(ii).
Consider the following decomposition
\begin{equation*}
\Tau_i = \Tau_{M,i} + r_{M,i},
\end{equation*}
where
\begin{equation*}
\Tau_{M,i} \equiv \Tau_M (\theta_i) := \frac{M}{\max\{|\Tau(\theta_i)|, M\}}\Tau(\theta_i) \equiv \begin{cases}
\Tau(\theta_i) & \text{if }|\Tau(\theta_i)| \le M \\
\frac{M}{|\Tau(\theta_i)|}\Tau(\theta_i) & \text{if }|\Tau(\theta_i)|> M
\end{cases} 
\end{equation*}
denotes the truncation of $\Tau_i$ at level $M$, and $r_{M,i} = r_M(\theta_i) := \Tau(\theta_i) - \Tau_M(\theta_i)$ is the remainder term.
By construction, part~(i) of Assumption~\ref{asm:EB} implies that $\Tau_M(\theta)$ is a bounded continuous function of $\theta$, and part~(ii) implies
\begin{equation*}
\limsup_{N\to\infty} \mathbb{E} \int_{\Theta} \|r_M(\theta)\|^p d \hat{G}_N(\theta) \le \epsilon.
\end{equation*}
Write $\Tau_{M,i}^{*}$ and $\taueb_{M,i}$ for the oracle and the EB decision rules for $\Tau_{M,i}$, and similarly for $r_{M,i}^{*}$ and $\hat{r}_{M,i}$.
Using the fact that $\taupm_i=\Tau_{M,i}^{*}+r_{M,i}^{*}$ and $\taueb_i = \taueb_{M,i} + \hat{r}^{\EB}_{M,i}$, we have
\begin{align*}
\frac{1}{N} \sum_{i=1}^N |\taueb_i - \taupm_i|^p &\le C_p \left( \frac{1}{N} \sum_{i=1}^N |\taueb_{M,i} - \Tau_{M,i}^{*}|^p + \frac{1}{N} \sum_{i=1}^N |\hat{r}_{M,i} - r_{M,i}^{*}|^p\right) \\
&=: C_p(I_1 + I_2)
\end{align*}
for some $C_p<\infty$ that only depends on $p$.
In the following steps, we show that both $\mathbb{E}[I_1]$ and $\mathbb{E}[I_2]$ converge to zero.

Let $M_0 > 0$ be a sufficiently large number such that $\mathbb{P}(\|Y_i\| \le M_0, \, M_0^{-1} \I_{\dbeta} \le \X_i'\X_i \le M_0 \I_{\dbeta}) \ge 1 - \epsilon/(2M)^p$.
Let $\mathcal{R} = \{(Y,\X) : \|Y\| \le M_0, \, M_0^{-1} \I_{\dbeta} \le \X'\X \le M_0 \I_{\dbeta}\}$, and let $\mathcal{R}_i = 1\{\|Y_i\| \le M_0, \, M_0^{-1} \I_{\dbeta} \le \X_i'\X_i \le M_0 \I_{\dbeta}\}$ be the corresponding indicator for unit $i$.
We decompose $I_1$ into two terms:
\begin{align*}
I_{11} = \frac{1}{N} \sum_{i=1}^N  |\taueb_{M,i} - \Tau_{M,i}^{*}|^p (1-\mathcal{R}_i), \quad
I_{12} = \frac{1}{N} \sum_{i=1}^N  |\taueb_{M,i} - \Tau_{M,i}^{*}|^p \mathcal{R}_i.
\end{align*}
It is straightforward to see that
\begin{equation*}
\mathbb{E}[I_{11}] \le (2M)^p \mathbb{E}\left[ \frac{1}{N} \sum_{i=1}^N (1-\mathcal{R}_i)\right]  \le \epsilon
\end{equation*}
by construction of $\mathcal{R}_i$.
On the other hand,
\begin{align*}
\mathbb{E}[I_{12}] \le \mathbb{E}\sup_{i : (Y_i,\X_i) \in \mathcal{R}} |\taueb_{M,i} - \Tau_{M,i}^{*}|^p \le \mathbb{E} \sup_{(Y,\X) \in \mathcal{R}} |\taueb_{M}(Y,\X) - \Tau_{M}^{*}(Y,\X)|^p \to 0
\end{align*}
by Lemma~\ref{lem:EB-unif-conv} and the bounded convergence theorem (\citeay{RoydenFitzpatrick2010}).
These results imply that $\limsup_{N\to\infty} \mathbb{E}[I_1] \le \epsilon$.

To show $\limsup_{N\to\infty}\mathbb{E}[I_2] \le \epsilon$, first observe that
\begin{align*}
I_2 &\le C_p \left( \frac{1}{N} \sum_{i=1}^N |\hat{r}^{\EB}_{M,i}|^p + \frac{1}{N} \sum_{i=1}^N |r_{M,i}^{*}|^p \right)    =: C_p (I_{21} + I_{22}).
\end{align*}
For the first term, we have, by the Jensen's inequality,
\begin{equation*}
|\hat{r}^{\EB}_{M,i}|^p \le \mathbb{E}_{\hat{G}}[|r_M(\theta_i)|^p \,|\, Y_i, \X_i] = \frac{\int_{\Theta} |r_M(\theta)|^p \ell(Y_i\,|\, \X_i, \theta) d\hat{G}_N(\theta)}{f_{\hat{G}}(Y_i, \X_i)},  \quad i=1,\ldots, N.
\end{equation*}
This implies
\begin{align*}
\frac{1}{N} \sum_{i=1}^N |\hat{r}^{\EB}_{M,i}|^p &\le  \frac{1}{N} \sum_{i=1}^N \frac{\int_{\Theta} |r_M(\theta)|^p \ell(Y_i\,|\, \X_i, \theta) d\hat{G}_N(\theta)}{f_{\hat{G}}(Y_i, \X_i)}\\
&= \int_{\Theta} \frac{1}{N} \sum_{i=1}^N \frac{\ell(Y_i\,|\, \X_i, \theta) }{f_{\hat{G}}(Y_i, \X_i)} |r_M(\theta)|^p d\hat{G}_N(\theta).
\end{align*}
The first-order conditions for the \NPM\ state that $N^{-1}\sum_{i=1}^N {\ell(Y_i\,|\, \X_i, \theta) }/{f_{\hat{G}}(Y_i, \X_i)} = 1$ for $\hat{G}_N$-a.e. $\theta$, which imply
\begin{equation*}
\int_{\Theta} \frac{1}{N} \sum_{i=1}^N \frac{ \ell(Y_i\,|\, \X_i, \theta) }{f_{\hat{G}}(Y_i, \X_i)} |r_M(\theta)|^p d\hat{G}_N(\theta) = \int_{\Theta}|r_M(\theta)|^p d\hat{G}_N(\theta).
\end{equation*}
It follows that $\limsup_{N\to\infty} \mathbb{E}[I_{21}] \le \epsilon$ by Assumption~\ref{asm:EB}(ii).

Turning to the term $I_{22}$, we have, by Jensen's inequality,
\begin{align*}
I_{22} & \le \frac{1}{N} \sum_{i=1}^N \mathbb{E}_{G_*}[|r_M(\theta_i)|^p \,|\, Y_i, \X_i] ,
\end{align*}
which implies that $\limsup_{N\to\infty}\mathbb{E}[I_{22}] \le \mathbb{E}_{G_*}[|r_M(\theta_i)|^p]$.
Since $\theta \mapsto |r_M(\theta)|^p$ is a nonnegative, continuous function of $\theta$ and $\hat{G}_N$ converges weakly to $G_*$ almost surely, Portmanteau theorem implies that
\begin{equation*}
    \int_{\Theta} |r_M(\theta)|^p dG_*(\theta) \le \liminf_{N\to\infty} \int_{\Theta}|r_M(\theta)|^p d\hat{G}_N(\theta)\quad \text{a.s.}
\end{equation*}
By Fatou's lemma, $\mathbb{E}[\liminf_{N\to\infty} \int_{\Theta}|r_M(\theta)|^p d\hat{G}_N(\theta)] \le \liminf_{N\to\infty} \mathbb{E}[\int_{\Theta}|r_M(\theta)|^p d\hat{G}_N(\theta)]$.
By Assumption~\ref{asm:EB}(ii), this implies that $\limsup_{N\to\infty} \mathbb{E}[I_{22}]  \le \epsilon$.

Putting these pieces together, we conclude that
\begin{equation*}
\limsup_{N\to\infty} \mathbb{E}\left[ \frac{1}{N} \sum_{i=1}^N |\taueb_i - \taupm_i|^p\right]\le 3 C_p     \epsilon.
\end{equation*}
Since $\epsilon>0$ was arbitrarily chosen, we have $\mathbb{E}[ N^{-1}\sum_{i=1}^N |\taueb_i - \taupm_i|^p] \to 0$.

To prove the regret consistency of $\taueb_i$, observe that Assumption~\ref{asm:EB} holding for some $p \in [2,\infty)$ automatically implies that it also holds for $p=2$.
By the same argument, we have $\mathbb{E}[ N^{-1}\sum_{i=1}^N |\taueb_i - \taupm_i|^2] \to 0$, from which regret consistency follows.
\end{proof}

%%%%%%%%%%%%%%%%%%%%

\begin{proof}
[Proof of Proposition~\ref{prop:dyp-identification}]
It suffices to verify Assumption~\ref{asm:id-conditions} and it is verified in the main text. See Section~\ref{subsubsec:HIVD-AR1} for details.
\end{proof}

%%%%%%%%%%%%%%%%%%%%

\begin{proof}
[Proof of Proposition~\ref{prop:identification-ARMA}]
Again, it suffices to verify Assumption~\ref{asm:id-conditions}. See Appendix~\ref{app:choice:x:m}. % in the arXiv version of this paper for details.
\end{proof}

%%%%%%%%%%%%%%%%%%%%

\begin{proof}[Proof of Proposition~\ref{prop:HIVDX-id-T=4}]

By assumption, there exists $\pmb x = (\pmb x_1,\pmb x_2,\pmb x_3,\pmb x_4) \in \supp(X_{2,i})$ such that either $\Delta \pmb x_2 = \pm \Delta \pmb x_3 \ne 0$ or $\Delta \pmb x_2 \ne \Delta \pmb x_4$ where $\Delta \pmb x_t = \pmb x_t - \pmb x_{t-1}$.
Upon inspecting the proof of Proposition~\ref{prop:HIVDX-id-T=5}, we find that the assumption $T \ge 5$ is explicitly invoked only in Cases~1(c), 2(c), and 3(c); in the other cases, $T = 4$ suffices for identification.
Further inspection reveals that Case~1(c) requires $(\Delta \pmb x_2, \Delta \pmb x_3, \Delta \pmb x_4) = (0, z, 0)$ for some $z\ne 0$.
Similarly, Cases~2(c) and 3(c) require that $(\Delta \pmb x_2, \Delta \pmb x_3, \Delta \pmb x_4)$ take the forms $(z,0,z)$ for some $z\ne 0$, and $(z,z',z)$ for some $z \ne \pm z'$, $z,z' \ne 0$, respectively.
Since the given $\pmb x$ belongs to none of these cases, the identification follows from the proof of Proposition~\ref{prop:HIVDX-id-T=5}.
\end{proof}

%%%%%%%%%%%%%%%%%%%%

\begin{proof}[Proof of Proposition~\ref{prop:HIVDX-id-T=5}]
We prove Proposition~\ref{prop:HIVDX-id-T=5} by verifying Assumption~\ref{asm:id-conditions}.
Let $\Delta a_t : = a_{t+1} - a_t$ denote the forward first difference of $a_t$.
By the assumption that $\Delta X_{2,it}$ is not identically zero, there is $\pmb x = (\pmb x_t)_{t=1}^T \in \supp(\X_{2,i})$ such that $\Delta {\pmb x} = (\pmb x_{t+1}-\pmb x_t)_{t=1}^{T-1} \ne 0$.
For notational ease, write $\zmat = \Delta {\pmb x} \in \mathbb{R}^{T-1}$.
Conditioning on $X_{2,i} = \pmb x$, we obtain, after first-differencing,
\begin{equation*}
\Delta Y_{it} = b_i \zmat_t + \Delta u_{it},
\end{equation*}
which yields
\begin{equation*}
\frac{\Delta Y_{it}}{\zmat_t} = b_i + \frac{\Delta u_{it}}{\zmat_t}\quad \text{if}\quad \zmat_t \ne 0,
\end{equation*}
and
\begin{equation*}
\Delta Y_{it} = \Delta u_{it}\quad \text{if}\quad \zmat_t = 0.
\end{equation*}
Let $\mathcal{T} := \{t : \zmat_t \ne 0\}$ be the set of indices at which $\zmat_t \ne 0$, with $t(1) < \cdots < t(K)$ being the increasing enumeration of $\mathcal{T}$ and $K = |\mathcal{T}|$.
The annihilator matrix $\Q$ can be chosen so that
\begin{equation*}
(\Q' Y_i)_k = 
\frac{\Delta u_{i{t(k+1)}}}{\zmat_{t(k+1)}} - \frac{\Delta u_{i{t(k)}}}{\zmat_{t(k)}}  \quad \text{for}\ k=1,\ldots, K-1,
\end{equation*}
for the first $K-1$ elements, and 
\begin{equation*}
((\Q' Y_i)_{K}, \ldots, (\Q' Y_i)_{T-1}) = (\Delta u_{it})_{t \in \mathcal{T}^c}.
\end{equation*}
Our objective is to determine $\delta = (\sigma^2, \rho)$ from the covariance $\Q' P(\delta)P(\delta)'\Q = \var(\Q' Y_i\,|\,\delta_i = \delta)$ of the normalized second differences $\frac{\Delta u_{i{t(k+1)}}}{\zmat_{t(k+1)}} - \frac{\Delta u_{i{t(k)}}}{\zmat_{t(k)}}$ for $k=1,\ldots, K-1$ and the first differences $\Delta u_{it}$ for $t \in \mathcal{T}^c$.

The argument proceeds by nested case analysis.
We first consider Case~1.

\medskip

\noindent{\text{\textit{\underline{Case 1}: (Two or more zeros in $\zmat$.)}}} Assume $|\mathcal{T}^c| \ge 2$. For any $s > t$ satisfying $\zmat_t = 0$ and $\zmat_s = 0$, we find that
\begin{equation*}
\var\left( \Delta u_{it}\,|\,\delta_i = \delta \right) = \frac{2\sigma^2}{1+\rho}    
\end{equation*}
and
\begin{equation*}
\cov\left( \Delta u_{it},\Delta u_{is}\,|\,\delta_i = \delta\right) = -\frac{\rho^{s-t-1}(1-\rho)\sigma^2}{1+\rho}.
\end{equation*}
Using these equations, we can pin down the values of
\begin{equation}
\label{eq:pf:case-1}
\frac{\sigma^2}{1+\rho}\quad \text{and}\quad \rho^{|s-t|-1}(1-\rho)\quad \forall t,s \in \mathcal{T}^c.
\end{equation}
We further divide into three subcases and show that $\delta$ can be uniquely determined in each subcase.

\medskip

\noindent{– \text{\textit{Case 1(a):}}} When $s$ and $t$ are consecutive, i.e., $s = t+1$, it follows from \eqref{eq:pf:case-1} that both $1-\rho$ and $\sigma^2/(1+\rho)$ are uniquely determined. These together identify $\delta = (\sigma^2, \rho)$.

\medskip

\noindent{– \text{\textit{Case 1(b):}}} Suppose there are at least three elements $t_1 < t_2 < t_3$ in $\mathcal{T}^c$ with $d_1 := t_2-t_1 \ge 1$ and $d_2 := t_3 - t_2 \ge 1$.
Then, the values of $\rho^{d_1-1}(1-\rho)$, $\rho^{d_2-1}(1-\rho)$, and $\rho^{d_1+d_2-1}(1-\rho)$ are identified from \eqref{eq:pf:case-1}.
If all three are zero, this implies $\rho = 0$, and vice versa. Thus, $\rho$ is identified in this case, and $\sigma^2$ is also identified from $\sigma^2/(1+\rho)$.
Thus, $\delta$ is identified.

Otherwise, all three have nonzero values, and one can pin down
\begin{equation*}
   \frac{\rho^{d_1-1}(1-\rho) \cdot \rho^{d_2-1}(1-\rho)}{\rho^{d_1+d_2-1}(1-\rho)} = \rho^{-1}(1-\rho) = \frac{1}{\rho}-1,
\end{equation*}
implying that $\delta = (\sigma^2, \rho)$ is identified in this case as well.

\medskip

\noindent{– \text{\textit{Case 1(c):}}} Suppose that $|\mathcal{T}^c| = 2$ and the elements $t, s \in \mathcal{T}^c$ ($t<s$) are not consecutive.
Note that $|\mathcal{T}| = T-1-|\mathcal{T}^c|=T-3 \ge 2$, and hence there are at least two time indices for which $\zmat_t \ne 0$. It must therefore be either $t-1,t+1 \in \mathcal{T}$ or $s-1,s+1 \in \mathcal{T}$.
Without loss of generality, assume that $t-1,t+1 \in \mathcal{T}$.
We then observe
\begin{equation*}
\cov\left( \frac{\Delta u_{i{t+1}}}{\zmat_{t+1}} - \frac{\Delta u_{i{t-1}}}{\zmat_{t-1}}, \Delta u_{it}\,|\,\delta_i = \delta\right) = -\frac{(1-\rho)\sigma^2}{1+\rho} \left(\frac{1}{\zmat_{t+1}}-\frac{1}{\zmat_{t-1}} \right),
\end{equation*}
and
\begin{equation*}
\cov\left( \frac{\Delta u_{i{t+1}}}{\zmat_{t+1}} - \frac{\Delta u_{i{t-1}}}{\zmat_{t-1}}, \Delta u_{is}\,|\,\delta_i = \delta\right) = -\frac{(1-\rho)\rho^{s-t-2}\sigma^2}{1+\rho} \left(\frac{1}{\zmat_{t+1}}-\frac{\rho^2}{\zmat_{t-1}} \right).
\end{equation*}
If $\zmat_{t+1}\ne \zmat_{t-1}$, the first equation identifies ${(1-\rho)\sigma^2}/(1+\rho)$. Since $\sigma^2/(1+\rho)$ is already identified by \eqref{eq:pf:case-1}, these together uniquely determine $\rho$ and $\sigma^2$, and hence $\delta$.
If $\zmat_{t+1} = \zmat_{t-1} \ne 0$, we use the second equation to identify
\begin{equation*}
    \frac{(1-\rho)\rho^{s-t-2}\sigma^2}{1+\rho} (1-\rho^2) = \frac{\sigma^2}{1+\rho} \rho^{s-t-1}(1-\rho) \frac{1-\rho^2}{\rho}.
\end{equation*}
Together with \eqref{eq:pf:case-1}, this further identifies
\begin{equation*}
\frac{\rho}{1-\rho^2},
\end{equation*}
which is a strictly increasing function of $\rho \in (-1,1)$.
Hence, $\rho$ is identified, and so is $\delta$.
% This completes the proof of Case 1(c).
\bigskip

\noindent{\text{\textit{\underline{Case 2}: (Only one zero in $\zmat$.)}}}
Let $t$ be the only index for which $\zmat_t = 0$. As noted earlier in Case~1, we can then identify
\begin{equation*}
\var\left( \Delta u_{it}\,|\,\delta_i = \delta \right) = \frac{2\sigma^2}{1+\rho}.
\end{equation*}
Again consider three subcases as follows.
\medskip

\noindent{– \text{\textit{Case 2(a):}}} Assume that $t=1$ or $t = T-1$, i.e., that the only zero of $\zmat$ occurs either the first or last time period.
Without loss of generality, let $t = 1$.
Since $2,3 \in \mathcal{T}$, we observe that
\begin{equation*}
\var\left( \frac{\Delta u_{i{3}}}{\zmat_{3}} - \frac{\Delta u_{i{2}}}{\zmat_{2}}\,|\,\delta_i = \delta \right) = \frac{\sigma^2}{1+\rho}\left( \frac{2}{\zmat_{3}^2}+\frac{2}{\zmat_{2}^2}+\frac{2(1-\rho)}{\zmat_{3}\zmat_{2}}\right),
\end{equation*}
from which we can identify
\begin{equation*}
    \left( \frac{2}{\zmat_{3}^2}+\frac{2}{\zmat_{2}^2}+\frac{2(1-\rho)}{\zmat_{3}\zmat_{2}}\right).
\end{equation*}
Clearly, this uniquely determines $\rho$, and $\sigma^2$ subsequently follows from $\sigma^2/(1+\rho)$.
\medskip

\noindent{– \text{\textit{Case 2(b):}}} 
Assume that $1 < t < T-1$ and that $\zmat_{t+1} \ne \zmat_{t-1}$, which must be nonzero.
Observe that
\begin{equation*}
\cov\left( \frac{\Delta u_{i{t+1}}}{\zmat_{t+1}} - \frac{\Delta u_{i{t-1}}}{\zmat_{t-1}}, \Delta u_{it}\,|\,\delta_i = \delta \right) = -\frac{(1-\rho)\sigma^2}{1+\rho} \left( \frac{1}{\zmat_{t+1}} - \frac{1}{\zmat_{t-1}}\right).
\end{equation*}
From this equation, we identify $(1-\rho)\sigma^2/(1+\rho)$.
Combined with $\sigma^2/(1+\rho)$, we identify $\delta$.

\bigskip

\noindent{– \text{\textit{Case 2(c):}}} 
Assume that $1 < t < T-1$ and $\zmat_{t+1} = \zmat_{t-1} \ne 0$. 
Since $t$ is the only zero, either $\zmat_{t+2} \ne 0$ or $\zmat_{t-2} \ne 0$ must hold, depending on whether $t\ge 2$ or $t \le T-2$. (For this, we invoke $T\ge 5$.) Without loss of generality, assume the first.
Let $\zmat \ne 0$ denote the value shared by $\zmat_{t+1} = \zmat_{t-1}$. We use
\begin{equation*}
\var\left( \frac{\Delta u_{i{t+1}}}{\zmat_{t+1}} - \frac{\Delta u_{i{t-1}}}{\zmat_{t-1}}\,|\,\delta_i = \delta \right) = \frac{2 \sigma^2}{\zmat^2(1+\rho)}\left( 2 +  \rho (1-\rho) \right),
\end{equation*}
along with $\sigma^2/(1+\rho)$, to identify $2+\rho(1-\rho)$.
Next, we employ
\begin{equation*}
\cov\left( \frac{\Delta u_{i{t+2}}}{\zmat_{t+2}} - \frac{\Delta u_{i{t+1}}}{\zmat_{t+1}}, \Delta u_{it}\,|\,\delta_i = \delta \right) = -\frac{(1-\rho)\sigma^2}{1+\rho}\left( \frac{\rho}{\zmat_{t+2}} - \frac{1}{\zmat_{t+1}}\right),
\end{equation*}
together with $\sigma^2/(1+\rho)$, to identify $\frac{(1-\rho)\rho}{\zmat_{t+2}} - \frac{(1-\rho)}{\zmat_{t+1}}$.
Lastly, $\frac{(1-\rho)\rho}{\zmat_{t+2}} - \frac{(1-\rho)}{\zmat_{t+1}}$, combined with $2+\rho(1-\rho)$ identified earlier, can be used to identify $\rho$. $\sigma^2$ is subsequently identified from $\sigma^2/(1+\rho)$.

\bigskip

\noindent{\text{\textit{\underline{Case 3}: (No zeros in $\zmat$.)}}} 
We show that $\delta$ can be identified using the covariance structure of
\begin{equation*}
\left( \frac{\Delta u_{i{2}}}{\zmat_{2}} - \frac{\Delta u_{i{1}}}{\zmat_{1}},\quad \frac{\Delta u_{i{3}}}{\zmat_{3}}-\frac{\Delta u_{i{2}}}{\zmat_{2}},\quad \frac{\Delta u_{i{4}}}{\zmat_{4}} - \frac{\Delta u_{i{3}}}{\zmat_{3}}\right).
\end{equation*}
Although the last term requires $T \ge 5$, Cases~3(a) and(b) remain applicable under $T = 4$, in which the first two terms are sufficient for identification.

We begin by noting that
\begin{equation*}
\var\left( \frac{\Delta u_{i{t+1}}}{\zmat_{t+1}} - \frac{\Delta u_{i{t}}}{\zmat_{t}}\,|\,\delta_i = \delta \right) = \frac{2\sigma^2}{1+\rho}\left( 1 + \frac{\zmat_{t+1}}{\zmat_{t}}+\frac{\zmat_{t}}{\zmat_{t+1}} - \rho \right)\frac{1}{\zmat_{t+1}\zmat_{t}},\quad t = 1,2,3.
\end{equation*}
Consider the following three subcases.

\medskip

\noindent{– \text{\textit{Case 3(a):}}}
Assume that there exist $s < t$ such that $\frac{\zmat_{s+1}}{\zmat_{s}}+\frac{\zmat_{s}}{\zmat_{s+1}} \ne \frac{\zmat_{t+1}}{\zmat_{t}}+\frac{\zmat_{t}}{\zmat_{t+1}}$.
We may assume $s=1$ and $t = 2$. Then, we can use the equation
\begin{equation*}
\frac{\zmat_{2}\zmat_{1}\var\left( \frac{\Delta u_{i{2}}}{\zmat_{2}} - \frac{\Delta u_{i{1}}}{\zmat_{1}}\,|\,\delta_i = \delta \right)}{\zmat_{3}\zmat_{2}\var\left( \frac{\Delta u_{i{3}}}{\zmat_{3}} - \frac{\Delta u_{i{2}}}{\zmat_{2}}\,|\,\delta_i = \delta \right)}
= \frac{1 + \frac{\zmat_{2}}{\zmat_{1}}+\frac{\zmat_{1}}{\zmat_{2}} - \rho}{1 + \frac{\zmat_{3}}{\zmat_{2}}+\frac{\zmat_{2}}{\zmat_{3}} - \rho} = 1 + \frac{(\frac{\zmat_{2}}{\zmat_{1}}+\frac{\zmat_{1}}{\zmat_{2}}) - (\frac{\zmat_{3}}{\zmat_{2}}+\frac{\zmat_{2}}{\zmat_{3}})}{1 + \frac{\zmat_{3}}{\zmat_{2}}+\frac{\zmat_{2}}{\zmat_{3}} - \rho}
\end{equation*}
to pin down $\rho$, which then identifies $\delta$.

\medskip 

If Case 3(a) does not hold, $\frac{\zmat_{t+1}}{\zmat_{t}}+\frac{\zmat_{t}}{\zmat_{t+1}}$ must be constant in $t$, which is equivalent to the existence of some $\kappa \ne 0,-1,1$ such that
\begin{equation*}
   r_t := \frac{\zmat_{t+1}}{\zmat_{t}} \in \left\{\kappa , \frac{1}{\kappa} \right\} \ \text{for}\ t=1,2,3.
\end{equation*}
This implies that if Case 3(a) does not hold, we can identify
\begin{equation}
\label{eq:pf:id-quantity-1}
\frac{\sigma^2}{1+\rho}\left( 1 + \kappa + \frac{1}{\kappa} - \rho \right)
\end{equation}
from the equation above.
Further using covariances, we identify
\begin{align}
\label{eq:pf:id-quantity-2}
& \cov \left( \frac{\Delta u_{i{t+2}}}{\zmat_{t+2}} - \frac{\Delta u_{i{t+1}}}{\zmat_{t+1}}, \frac{\Delta u_{i{t+1}}}{\zmat_{t+1}} - \frac{\Delta u_{i{t}}}{\zmat_{t}}\,|\,\delta_i = \delta \right) \nonumber \\
= &\ \frac{1}{\zmat_{t+1}^2}  \cov \left( \frac{1}{r_{t+1}}\Delta u_{i{t+2}} -  \Delta u_{i{t+1}}, \Delta u_{i{t+1}} - r_t \Delta u_{i{t}}\,|\,\delta_i = \delta \right) \nonumber \\
= &\ -\frac{1}{\zmat_{t+1}^2} \frac{\sigma^2}{1+\rho} \left( 2 + (1-\rho)\left(\frac{1}{r_{t+1}}+r_{t}-\frac{r_t}{r_{t+1}}\rho\right)\right).
\end{align}
Now proceed to Cases~3(b) and 3(c).

\medskip

\noindent{– \text{\textit{Case 3(b):}}}
Assume that there are consecutive equal terms in $r_t$.
Without loss of generality, assume that $r_1 = r_2 = \kappa$.
Substituting $r_1$ and $r_2$ into \eqref{eq:pf:id-quantity-2} with $t=1$, the ratio of \eqref{eq:pf:id-quantity-2} to \eqref{eq:pf:id-quantity-1} identifies
\begin{equation}
\label{eq:pf:id-quantity-3}
\frac{-\frac{1}{\zmat_{t+1}^2} \frac{\sigma^2}{1+\rho} \left( 2 + (1-\rho)\left(\frac{1}{r_{t+1}}+r_{t}-\frac{r_t}{r_{t+1}}\rho\right)\right)}{\frac{\sigma^2}{1+\rho}\left( 1 + \kappa + \frac{1}{\kappa} - \rho \right)} = C \frac{ (1-\rho)( \kappa + 1/\kappa -\rho)+2}{1 + \kappa + 1/\kappa - \rho},
\end{equation}
where $C \ne 0$ is a known constant.
Define $\alpha := \kappa + \frac{1}{\kappa}$, which satisfies $\alpha > 2$ when $\kappa > 0$, and $\alpha < -2$ when $\kappa < 0$.
From the expression above, we can identify
\begin{equation*}
-\frac{ (1-\rho)( \kappa + 1/\kappa -\rho)+2}{1 + \kappa + 1/\kappa - \rho} = \rho + \frac{\alpha + 2}{\rho - (\alpha + 1)},
\end{equation*}
whose value is denoted by $A$.
If $\alpha < -2$, the function $\rho \mapsto \rho + {\alpha + 2}/(\rho - (\alpha + 1))$ is strictly increasing in $\rho \in (-1,1)$, which identifies $\rho$, and subsequently $\sigma^2$ from $\sigma^2/(1+\rho)(1+\kappa +1/\kappa - \rho)$.
Otherwise, if $\alpha > 2$, the equation
\begin{equation}
\label{eq:pf:id-equation}
\rho + \frac{\alpha + 2}{\rho - (\alpha + 1)} = A
\end{equation}
admits at most two distinct solutions $\rho_1$ and $\rho_2$, which are linked by
\begin{equation*}
\rho_2 = \alpha + 1 + \frac{\alpha+2}{\rho_1-(\alpha+1)},
\end{equation*}
and vice versa symmetrically.
Suppose $\rho_1$ lies in $(-1,1)$. Consequently,
\begin{equation*}
\rho_2 = \alpha + 1 + \frac{\alpha+2}{\rho_1-(\alpha+1)} \ge \alpha + 1 + \frac{\alpha+2}{1-(\alpha+1)} = \alpha - \frac{2}{\alpha} \ge 1,
\end{equation*}
which necessarily falls outside $(-1,1)$.
Hence, \eqref{eq:pf:id-equation} yields at most one solution in $(-1,1)$, thereby identifying $\rho$, and subsequently $\delta$.
\medskip

\noindent{– \text{\textit{Case 3(c):}}}\ \
Consider the case in which no consecutive terms in $(r_1,r_2,r_3)$ are equal, i.e., $(r_1,r_2,r_3)$ are alternating between $\kappa$ and $1/\kappa$ with $\kappa \ne \pm 1$.
Without loss of generality, let $r_1 = r_3 = \kappa$ and $r_2 = 1/\kappa$.
We observe that
\begin{align*}
\zmat_{3}\zmat_{2} \cov \left( \frac{\Delta u_{i{3}}}{\zmat_{3}} - \frac{\Delta u_{i{2}}}{\zmat_{2}}, \frac{\Delta u_{i{2}}}{\zmat_{2}} - \frac{\Delta u_{i{1}}}{\zmat_{1}}\,|\,\delta_i = \delta \right) 
&= -\frac{\sigma^2}{1+\rho} \left( \frac{2}{\kappa} + (1-\rho)\left(2-\kappa\rho\right)\right),\\
\zmat_{4}\zmat_{3}\cov \left( \frac{\Delta u_{i{4}}}{\zmat_{4}} - \frac{\Delta u_{i{3}}}{\zmat_{3}}, \frac{\Delta u_{i{3}}}{\zmat_{3}} - \frac{\Delta u_{i{2}}}{\zmat_{2}}\,|\,\delta_i = \delta \right)
&=  - \frac{\sigma^2}{1+\rho} \left( 2\kappa + (1-\rho)\left(2-\rho/\kappa\right)\right).
\end{align*}
Taking the difference of these equations, we identify
\begin{equation*}
\left( \kappa - \frac{1}{\kappa}\right)\frac{\sigma^2}{1+\rho} \left( 2 + (1-\rho)\rho \right),
\end{equation*}
where $\kappa - 1/\kappa \ne 0$ is known.
Additionally, we know the value of
\begin{equation*}
\zmat_{4}\zmat_{2}\cov \left( \frac{\Delta u_{i{4}}}{\zmat_{4}} - \frac{\Delta u_{i{3}}}{\zmat_{3}}, \frac{\Delta u_{i{2}}}{\zmat_{2}} - \frac{\Delta u_{i{1}}}{\zmat_{1}}\,|\,\delta_i = \delta \right)
= \frac{\sigma^2}{1+\rho} (1-\rho)(\kappa\rho - 1)(\rho-\kappa).
\end{equation*}
Combined with \eqref{eq:pf:id-quantity-1} identified earlier, these two terms are used to identify the following:
\begin{equation*}
-\frac{2 + (1-\rho)\rho}{\rho - (\alpha+1)} = \rho+\alpha  + \frac{(\alpha+2)(\alpha-1)}{\rho - (\alpha+1)}
\end{equation*}
and
\begin{equation*}
-\frac{(1-\rho)(\kappa\rho - 1)(\rho-\kappa)}{\rho-(\alpha+1)} = \rho^2 +  \alpha-1 + \frac{\alpha(\alpha+2)}{\rho- (\alpha+1)} .
\end{equation*}
Since $\alpha$ is known, this information is equivalent to knowing $\rho + {(\alpha+2)(\alpha-1)}/({\rho - (\alpha+1)})$ and $\rho^2 + {\alpha(\alpha+2)}/({\rho- (\alpha+1)})$, which can be used to identify
\begin{equation*}
\left[ \rho^2 + \frac{\alpha(\alpha+2)}{\rho- (\alpha+1)}\right] - \frac{\alpha}{\alpha-1} \left[ \rho + \frac{(\alpha+2)(\alpha-1)}{\rho - (\alpha+1)}\right] = \rho^2 - \frac{\alpha}{\alpha-1}\rho.    
\end{equation*}
Label the known values of $\rho + {(\alpha+2)(\alpha-1)}/({\rho - (\alpha+1)})$ and $\rho^2 - \alpha\rho/(\alpha-1)$ as $B$ and $C$, respectively.
Viewed as rational equations for $\rho$, this leads to the following system of quadratic equations:
\begin{equation}
\label{eq:pf:system-id-eqs}
\begin{aligned}    
& \rho^2 - (B+1)\rho + B(\alpha+1)-2 = 0,\\
& \rho^2 - \frac{\alpha}{\alpha-1}\rho -C = 0.
\end{aligned}
\end{equation}
We show by way of contradiction that \eqref{eq:pf:system-id-eqs} can be consistent with at most one $\rho$ in $(-1,1)$, thus uniquely determining $\rho$.
For the quadratic system \eqref{eq:pf:system-id-eqs} to admit multiple solutions, note that the two equations must share the same roots, and hence coincide, i.e.,
\begin{equation*}
B = \frac{1}{\alpha-1},\quad C = 1-\frac{2}{\alpha-1}.
\end{equation*}
Plugging $C = 1-\frac{2}{\alpha-1}$ into the second equation, the proof is complete once we verify that
\begin{equation*}
   \rho^2 - \frac{\alpha}{\alpha-1}\rho + \frac{3-\alpha}{\alpha-1} = 0
\end{equation*}
admits at most one solution in $(-1,1)$.
To show this, evaluate this function at $\rho = 1$:
\begin{equation*}
1 -    \frac{\alpha}{\alpha-1} + \frac{3-\alpha}{\alpha-1} = \frac{2-\alpha}{\alpha-1}.
\end{equation*}
Similarly, at $\rho = -1$, we obtain
\begin{equation*}
1 + \frac{\alpha}{\alpha-1}+\frac{3-\alpha}{\alpha-1} = \frac{\alpha+2}{\alpha-1}.
\end{equation*}
Since $\alpha = \kappa + 1/\kappa$ satisfies $|\alpha| > 2$, we have
\begin{equation*}
\frac{2-\alpha}{\alpha-1} \frac{\alpha+2}{\alpha-1} = \frac{4-\alpha^2}{(\alpha-1)^2} < 0,
\end{equation*}
which shows that the function $\rho \mapsto \rho^2 - \frac{\alpha}{\alpha-1}\rho + \frac{3-\alpha}{\alpha-1}$ takes values of opposite signs at $\rho = 1$ and $\rho = -1$. Hence, it must have a unique root in $(-1,1)$, leading to a contradiction.
This concludes Case~3(c).

\end{proof}

%%%%%%%%%%%%%%%%%%%%
\begin{proof}[Proof of Proposition~\ref{prop:EB-HIVDX}]
We verify parts~(i), (ii), and (iii) sequentially.

\noindent \textit{Proof of part~(i):}
The regret consistency of $\hat{\sigma}^{2,\EB}_i$ and $\hat{\rho}^{\EB}_i$ is an immediate consequence of Theorem~\ref{thm:EB-consistency} because $(\sigma_i^2, \rho_i)$ have compact support by Assumption~\ref{asm:DGP}.

\bigskip

\noindent \textit{Proof of part~(ii):}
% We first note that $\mathbb{E}[\|Y_i\|^{2+\varepsilon}] < \infty$ implies $\mathbb{E}_{G_*}[\|\hc_i\|^{2+\varepsilon}] < \infty$, and vice versa, since $Y_i$ and $\hc_i$ are linked via $\X_i$, which satisfies $\X_i' \X_i \ge \kappa \I_{\dbeta}$ for all $i$ and some $\kappa = \kappa(M,c)>0$ due to the assumption that $|\bar{X}_{2,i}|\le M$ and $\sum_{t=1}^T (X_{2,it} - \bar{X}_{2,i})^2 \ge c$.
We show that Assumption~\ref{asm:EB}(ii) holds for $\Tau_i \in \{a_i, b_i\}$ with $p = 2$.
By Markov's inequality, we have
\begin{align*}
\int_{\Theta} \|\hc\|^2 1\{\|\beta\| \ge  M\} d\hat{G}_N(\theta) \le \frac{1}{M^\varepsilon}\int_{\Theta} \|\hc\|^{2+\varepsilon} d\hat{G}_N(\theta).
\end{align*}
It follows from Lemma~\ref{lem:prior-mean-bound} that
\begin{equation*}
\int_{\Theta} \|\hc\|^{2} 1\{\|\beta\| \ge  M\} d\hat{G}_N(\theta) \le \frac{1}{M^\varepsilon}\frac{A^{2+\varepsilon}}{N} \sum_{i=1}^N \|Y_i\|^{2+\varepsilon},
\end{equation*}
and thus,
\begin{equation*}
\limsup_{N \to \infty} \mathbb{E} \left[ \int_{\Theta} \|\hc\|^2 1\{\|\beta\| \ge  M\} d\hat{G}_N(\theta) \right] \le \frac{A^{2+\varepsilon}}{M^\varepsilon} \mathbb{E}[\|Y_i\|^{2+\varepsilon}].
\end{equation*}
One can make the right-hand side arbitrarily close to zero by choosing a sufficiently large $M$.
Since $a_i$ and $b_i$ are dominated in absolute value by $\|\hc_i\|$, this proves Assumption~\ref{asm:EB}(ii) for these parameters.
It follows from Theorem~\ref{thm:EB-consistency} that $\hat{a}^{\EB}_i$ and $\hat{b}^{\EB}_i$ are regret-consistent.

\bigskip 

\noindent \textit{Proof of part~(iii):}
By Cauchy-Schwarz inequality, we observe that
\begin{align*}
& \frac{1}{N} \sum_{i=1}^N (\hat{Y}^{\EB}_{iT+1} - Y_{iT+1}^{\oracle})^2/5 \\
\le &\ \frac{1}{N} \sum_{i=1}^N(\hat{a}_{i}^{\EB}- a_{i}^{\oracle})^2 + \frac{1}{N} \sum_{i=1}^N(\hat{b}_{i}^{\EB}- b_{i}^{\oracle})^2 |X_{2,i T+1}|^2 + \frac{1}{N} \sum_{i=1}^N(\hat{\rho}_{i}^{\EB}- \rho_{i}^{\oracle})^2 |Y_{iT}|^2 \\
&\quad + \frac{1}{N} \sum_{i=1}^N(\widehat{a\rho}_{i}^{\EB}- (a\rho)_{i}^{\oracle})^2 +  \frac{1}{N} \sum_{i=1}^N(\widehat{b\rho}_{i}^{\EB}- (b\rho)_{i}^{\oracle})^2  |X_{2,i,T}|^2 \\
\le &\ \frac{1}{N} \sum_{i=1}^N(\hat{a}_{i}^{\EB}- a_{i}^{\oracle})^2 + \sqrt{\frac{1}{N} \sum_{i=1}^N(\hat{b}_{i}^{\EB}- b_{i}^{\oracle})^4} \sqrt{\frac{1}{N} \sum_{i=1}^N |X_{2,i T+1}|^4}  \\
&\quad + \sqrt{\frac{1}{N} \sum_{i=1}^N(\hat{\rho}_{i}^{\EB}- \rho_{i}^{\oracle})^4}\sqrt{ \frac{1}{N} \sum_{i=1}^N |Y_{iT}|^4 }\\
&\quad + \frac{1}{N} \sum_{i=1}^N(\widehat{a\rho}_{i}^{\EB}- (a\rho)_{i}^{\oracle})^2 + \sqrt{\frac{1}{N} \sum_{i=1}^N(\widehat{b\rho}_{i}^{\EB}- (b\rho)_{i}^{\oracle})^4} \sqrt{\frac{1}{N} \sum_{i=1}^N |X_{2,i,T}|^4},
\end{align*}
where $(a\rho)_i := a_i \rho_i$ and $(b\rho)_i := b_i \rho_i$.
Taking expectations on both sides, it suffices to show that the expectations of the terms on the RHS converge to zero.

Observe that $\Tau_i \in \{a_i, b_i, a_i\rho_i, b_i\rho_i\}$ are all dominated in absolute value by $\|\hc_i\|$ since $|\rho_i|\le 1$.
Combining the moment condition $\mathbb{E}[\|Y_i\|^{4+\varepsilon}] < \infty$ with the argument from Part~(ii), we find that Assumption~\ref{asm:EB}(ii) holds for $\Tau_i \in \{a_i, b_i, a_i\rho_i, b_i\rho_i\}$ with $p = 4$.
By Theorem~\ref{thm:EB-consistency} (with $p=2$), we have $\mathbb{E}[N^{-1} \sum_{i=1}^N(\hat{a}_{i}^{\EB}- a_{i}^{\oracle})^2] \to 0$ and $\mathbb{E}[N^{-1} \sum_{i=1}^N(\widehat{a\rho}_{i}^{\EB}- (a\rho)_{i}^{\oracle})^2]\to 0$ immediately.
This addresses the first and fourth terms on the RHS.
Applying Cauchy-Schwarz inequality, the second term on the RHS is bounded above by
\begin{multline*}
\mathbb{E}\left[ \sqrt{\frac{1}{N} \sum_{i=1}^N(\hat{b}_{i}^{\EB}- b_{i}^{\oracle})^4} \sqrt{\frac{1}{N} \sum_{i=1}^N |X_{2,i T+1}|^4}\right] \\
\le \sqrt{\mathbb{E}\left[ \frac{1}{N} \sum_{i=1}^N(\hat{b}_{i}^{\EB}- b_{i}^{\oracle})^4\right]}\sqrt{\mathbb{E}\left[ \frac{1}{N} \sum_{i=1}^N |X_{2,i T+1}|^4\right]}.
\end{multline*}
Theorem~\ref{thm:EB-consistency} implies that $\mathbb{E}[ N^{-1} \sum_{i=1}^N(\hat{b}_{i}^{\EB}- b_{i}^{\oracle})^4] \to 0$, while $\mathbb{E}[ N^{-1}\sum_{i=1}^N |X_{2,i T+1}|^4] = \mathbb{E}[|X_{2,i T+1}|^4]$ is finite by assumption. Thus, the second term converges to zero.
The third and fifth terms can be handled in an analogous way. We omit the details to avoid repetition.
Each term on the right-hand side has expectation converging to zero. Hence
$\mathbb{E}[N^{-1}\sum_{i=1}^N (\hat{Y}^{\EB}_{iT+1} - Y_{iT+1}^{\oracle})^2 ]\to 0$ as desired, which establishes the regret consistency of $\hat{Y}_{iT+1}^{\EB}$.

\end{proof}

%%%%%%%%%%%%%%%%%%%%

\subsection{Technical Lemmas}
\label{sec:tech:lem}
\subsubsection{Identification Lemma (Theorem~\ref{thm:identification})}

For the sake of completeness, we reproduce the key identification result from \citet{bruni1985identifiability} as Lemma~\ref{lem:Bruni-Koch}, which is primarily used in this paper.
Let $n \in \mathbb{N}$, $d \in \mathbb{N}$, and let $\Theta \subseteq \mathbb{R}^n$ denote a compact parameter space for $\theta$.
Following the notation in \citet{bruni1985identifiability}, we define the following class of functions:
\begin{align*}
\Lambda(\Theta) := \left\{ 
 (\lambda_1, \lambda_2) : \Theta \to \mathbb{R}^d \times S_{+,d}\, {\Bigg |}\, \begin{aligned} &\ \|\lambda_1(\theta)\| + \|\dt \lambda_1(\theta)\| \le K_1, \\
&\ \| \lambda_2(\theta)\|+ \|\dt \lambda_2(\theta)\| \le K_1,\ K_2 \I_T \le \lambda_2(\theta) \le K_3 \I_T
\end{aligned}
\right\},
\end{align*}
where $S_{+,d}$ denotes the set of $d\times d$ positive definite matrices, and $K_j$, $j=1,2,3$ are some positive constants.
Each function $(\lambda_1(\theta), \lambda_2(\theta))$ represents the conditional mean and covariance of $Y_i \,|\, \theta_i \sim \mathcal{N}(\lambda_1(\theta_i), \lambda_2(\theta_i))$, where $\theta_i \in \Theta$ represents a random parameter.
We also need the following restriction as in \citet{bruni1985identifiability}: for any $(\lambda_1,\lambda_2), (\tilde{\lambda}_1, \tilde{\lambda}_2) \in \Lambda(\Theta)$ and $\theta, \tilde{\theta} \in \Theta$, it holds that
\begin{equation}
\label{eq:identifying-restriction-Bruni-Koch}
\lambda_1(\theta) = \tilde{\lambda}_1(\tilde{\theta}), \lambda_2(\theta) = \tilde{\lambda}_2(\tilde{\theta})\quad \Longrightarrow \quad \theta = \tilde{\theta}.
\end{equation}
The condition in \eqref{eq:identifying-restriction-Bruni-Koch} requires that, for any $(\mu, \Sigma) \in \mathbb{R}^d \times S_{+,d}$, there exists at most one $\theta = \theta(\mu,\Sigma) \in \Theta$ such that 
\begin{equation*}
\exists (\lambda_1,\lambda_2) \in \Lambda(\Theta)  : (\mu, \Sigma) = (\lambda_1(\theta), \lambda_2(\theta)),
\end{equation*}
thus preventing multiple labels within $\Theta$.
This condition is trivially satisfied when $\Lambda(\Theta)$ is a singleton consisting of a one-to-one function.

We write $\mathcal{M}(\Theta)$ for the set of complex Borel measures on $\Theta$, and $\mathcal{P}(\Theta)$ for its subset consisting of probability measures.
Finally, define $\mathcal{T}: \Lambda(\Theta) \times \mathcal{M}(\Theta) \to C(\mathbb{R}^d,\mathbb{C})$ by
\begin{equation*}
\mathcal{T}[\lambda_1,\lambda_2, \mu](y) := \int \frac{1}{(2\pi)^{d/2} |\lambda_2(\theta)|^{1/2}}\exp\left( -\frac{1}{2}\|\lambda_2(\theta)^{-1/2} (y - \lambda_1(\theta))\|^2\right) d\mu(\theta),
\end{equation*}
where $C(\mathbb{R}^d,\mathbb{C})$ denotes the class of all complex-valued continuous functions on $\mathbb{R}^d$.
Note that, for $\mu \in \mathcal{P}(\Theta)$, $\mathcal{T}[\lambda_1,\lambda_2,\mu](x)$ represents the density of $\int_{\Theta} \mathcal{N}(\lambda_1(\theta), \lambda_2(\theta)) d\mu(\theta)$, i.e., the marginal density of
\begin{equation*}
Y_i = \lambda_1(\theta_i) + \lambda_2(\theta_i)^{1/2} e_i
\end{equation*}
under $\theta_i \sim \mu$ and $e_i \,|\, \theta_i \sim \mathcal{N}(0, I_d)$.
The following lemma from \citet{bruni1985identifiability} shows that $\mathcal{T}$ is an injective mapping, meaning that the mean and variance mixture components, as well as the mixing distribution, are identifiable from general Gaussian mixture distributions under the assumption of compact support.
\begin{lem}
[Identification of Gaussian Mixture Models, \citeay{bruni1985identifiability}]
\label{lem:Bruni-Koch}
Assume that $\mathcal{T}[\lambda_1,\lambda_2, \mu](y) = \mathcal{T}[\tilde{\lambda}_1,\tilde{\lambda}_2, \tilde{\mu}](y)$ for all $y \in \mathbb{R}^d$, where $(\lambda_1,\lambda_2), (\tilde{\lambda}_1,\tilde{\lambda}_2) \in \Lambda(\Theta)$ and $\mu, \tilde{\mu} \in \mathcal{P}(\Theta)$.
Then, $\lambda_1 = \tilde{\lambda}_1$, $\lambda_2 = \tilde{\lambda}_2$, and $\mu = \tilde{\mu}$.
\end{lem}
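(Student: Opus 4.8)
\textbf{Proof proposal for Lemma~\ref{lem:Bruni-Koch}.}

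The plan is to invoke the identification results of \citet{bruni1985identifiability} essentially verbatim, since the lemma is a restatement of their main theorem; nonetheless, let me sketch how the argument would run in case a self-contained proof is desired. First I would reduce the equality $\mathcal{T}[\lambda_1,\lambda_2,\mu]=\mathcal{T}[\tilde\lambda_1,\tilde\lambda_2,\tilde\mu]$ on $\mathbb{R}^d$ to an equality of Fourier transforms. Taking the (inverse) Fourier transform of both densities turns each Gaussian kernel $\mathcal{N}(\lambda_1(\theta),\lambda_2(\theta))$ into $\exp(\ii t'\lambda_1(\theta) - \tfrac12 t'\lambda_2(\theta) t)$, so the hypothesis becomes
\begin{equation*}
\int_\Theta e^{\ii t'\lambda_1(\theta) - \frac12 t'\lambda_2(\theta) t}\, d\mu(\theta) = \int_\Theta e^{\ii t'\tilde\lambda_1(\theta) - \frac12 t'\tilde\lambda_2(\theta) t}\, d\tilde\mu(\theta)
\end{equation*}
for all $t\in\mathbb{R}^d$. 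The compactness of $\Theta$ and the bounds defining $\Lambda(\Theta)$ (in particular $K_2 I_d \le \lambda_2(\theta)\le K_3 I_d$) guarantee that both integrands are bounded and that the pushforward measures of $(\lambda_1,\lambda_2)$ and $(\tilde\lambda_1,\tilde\lambda_2)$ live on a common compact subset of $\mathbb{R}^d\times S_{+,d}$.

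The heart of the argument is an analyticity/tail-behavior step: one views the left-hand side, for real $t$, as determined by the joint law of $(\lambda_1(\theta),\lambda_2(\theta))$ under $\mu$, and then examines the asymptotics along rays $t=s v$ as $s\to\infty$ for fixed directions $v$. Because $\lambda_2(\theta)$ is uniformly positive definite, the quadratic term $-\tfrac12 s^2 v'\lambda_2(\theta)v$ dominates, and a Laplace/saddle-point analysis recovers first the infimum $\min_\theta v'\lambda_2(\theta)v$ over the support, then—peeling off successive exponential orders—the full distribution of $\theta\mapsto \lambda_2(\theta)$, and finally, from the oscillatory factor $e^{\ii s v'\lambda_1(\theta)}$ restricted to the level sets of $\lambda_2$, the conditional distribution of $\lambda_1(\theta)$. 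Carrying this out for all directions $v$ and applying the Cramér–Wold device identifies the joint pushforward law of $(\lambda_1(\theta),\lambda_2(\theta))$ under $\mu$, which by the single-valuedness condition \eqref{eq:identifying-restriction-Bruni-Koch} coincides with $\mu$ itself (and similarly for $\tilde\mu$). Equating the two pushforward laws then forces $\lambda_1=\tilde\lambda_1$, $\lambda_2=\tilde\lambda_2$ on the common support and $\mu=\tilde\mu$.

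The main obstacle is precisely this Laplace-type asymptotic extraction: making rigorous the claim that the real-variable Fourier transform of a compactly supported Gaussian scale-location mixture determines the mixing measure. The cleanest route, and the one I would ultimately follow, is simply to cite \citet{bruni1985identifiability}, who establish exactly this under the stated compactness and regularity hypotheses; the restriction \eqref{eq:identifying-restriction-Bruni-Koch} is what lets one pass from identifying the law of $(\lambda_1(\theta),\lambda_2(\theta))$ back to identifying $\mu$ on $\Theta$. Since in every application of the lemma within this paper $\Lambda(\Theta)$ is taken to be a singleton given by a one-to-one map (so \eqref{eq:identifying-restriction-Bruni-Koch} holds trivially), no additional work beyond this citation is needed.
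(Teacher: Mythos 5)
Your proposal is correct and matches the paper's treatment exactly: the paper states Lemma~\ref{lem:Bruni-Koch} as a reproduction of the result in \citet{bruni1985identifiability} and offers no independent proof, relying on the citation just as you ultimately do. Your heuristic sketch of the tail/identification-at-infinity argument is consistent with the mechanism the paper itself attributes to \citet{bruni1985identifiability}, so no further work is needed.
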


\subsubsection{Consistency Lemmas (Theorem~\ref{thm:consistency})}
% In the remaining part of Appendix~\ref{sec:tech:lem}, we develop lemmas for the strong consistency of the \NPM\ in Theorem~\ref{thm:consistency}.
% Assume the setting in Section~\ref{sec:consistency}.
Let $\psub$ denote the class of all subprobability measures on $\Theta$.
We extend the metric $d$ to $\psub$ by defining
\begin{equation}
\label{eq:def-metric}
    d(\mu_0, \mu_1) := \sum_{r=1}^\infty \frac{1}{2^r} \left| \int_{\Theta} h_r(\theta)d\mu_0(\theta) -  \int_{\Theta} h_r(\theta)d\mu_1(\theta) \right|,
\end{equation}
where $(h_r)_{r=1}^\infty$ is the same as in Section~\ref{sec:consistency}.
Let $\bar{\mathcal{G}}$ denote the closure of $\mathcal{G}$ relative to $\psub$ with respect to the metric $d$.
Since $d$ induces the vague topology on $\psub$, and $\psub$ is compact under this topology, $(\bar{\mathcal{G}}, d)$ is itself a compact metric space (\citeay{kallenbergFoundationsModernProbability2021}).
We extend $f_{G}(Y, \X)$, originally defined for $(G,Y,\X) \in \mathcal{G} \times \mathbb{R}^T \times \SX$, to $(\mu,Y,\X) \in \bar{\mathcal{G}}\times \mathbb{R}^T \times \SX$ by
\begin{equation}
\label{eq:extend-likelihood}
f_\mu(Y,\X) := \int_{\Theta} \ell(Y \,|\, \X , \theta) d\mu(\theta).    
\end{equation}
The next lemma establishes that the extended likelihood is continuous in $\mu \in \bar{\mathcal{G}}$.

% --------------------------------------------------
\begin{lem}
\label{lem:likelihood-continuity}
Let Assumptions~\ref{asm:DGP}, \ref{asm:id-conditions}, and \ref{asm:consistency} hold.
Then, the mapping $\mu \mapsto f_\mu(Y,\X) : \bar{\mathcal{G}} \to \mathbb{R}$ is continuous for all $(Y,\X) \in \mathbb{R}^{T} \times \SX$.
\end{lem}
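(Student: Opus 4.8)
The plan is to reduce the claim to two ingredients: that for a fixed $(Y,\X)$ with $\rank(\X)=\dbeta$ the integrand $g(\theta):=\ell(Y\mid\X,\theta)$ lies in $C_0(\Theta)$, the space of continuous functions on $\Theta=\mathbb{R}^{\dbeta}\times\mathcal{K}_\delta$ that vanish at infinity, and that vague convergence of \emph{sub}probability measures integrates every $C_0$ function continuously. Since $(\bar{\mathcal{G}},d)$ is a metric space, continuity of $\mu\mapsto f_\mu(Y,\X)$ is equivalent to sequential continuity, so it suffices to show $f_{\mu_n}(Y,\X)\to f_\mu(Y,\X)$ whenever $d(\mu_n,\mu)\to0$.

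First I would verify $g\in C_0(\Theta)$. Joint continuity of $g$ in $\theta=(\hc,\delta)$ holds because $\delta\mapsto P(\delta)$ is continuously differentiable and, by Assumption~\ref{asm:DGP}(ii), $P(\delta)$ is invertible with $\underbar{c}\,I_T\le P(\delta)P(\delta)'\le\bar{c}\,I_T$ on $\mathcal{K}_\delta$, so $(\hc,\delta)\mapsto(\hc,P(\delta)^{-1},|P(\delta)|)$ is continuous and the Gaussian kernel is a continuous function of these; the same bound gives $|P(\delta)|\ge\underbar{c}^{T/2}$, hence $0\le g\le(2\pi\underbar{c})^{-T/2}$, so $g$ is bounded. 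For decay at infinity—on $\Theta$, ``$\theta\to\infty$'' means $\|\hc\|\to\infty$ with $\delta$ confined to the compact set $\mathcal{K}_\delta$—I would use $\|P(\delta)^{-1}(Y-\X\hc)\|^2=(Y-\X\hc)'(P(\delta)P(\delta)')^{-1}(Y-\X\hc)\ge\bar{c}^{-1}\|Y-\X\hc\|^2$ together with $\|Y-\X\hc\|\ge s_{\min}(\X)\|\hc\|-\|Y\|$, where $s_{\min}(\X)>0$ by the full-rank hypothesis. Thus $\sup_{\delta\in\mathcal{K}_\delta}g(\hc,\delta)\to0$ as $\|\hc\|\to\infty$, i.e.\ $g\in C_0(\Theta)$.

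Next I would run the standard approximation that promotes vague convergence to convergence against $C_0$ functions, using crucially that every $\nu\in\psub$ satisfies $\nu(\Theta)\le1$. Fix $\epsilon>0$, pick a compact $K\subseteq\Theta$ with $\sup_{\theta\notin K}|g(\theta)|<\epsilon$, and choose $\phi\in C_c(\Theta)$ with $0\le\phi\le1$ and $\phi\equiv1$ on $K$. Then $g\phi\in C_c(\Theta)$, so $\int g\phi\,d\mu_n\to\int g\phi\,d\mu$ because $d$ metrizes the vague topology (density of $\{h_r\}$ in the unit ball of $C_c(\Theta)$ together with the uniform mass bound forces $\int h\,d\mu_n\to\int h\,d\mu$ for all $h\in C_c(\Theta)$). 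Since $0\le g(1-\phi)\le\epsilon\,\mathbf{1}_{K^{c}}$, we get $\bigl|\int g(1-\phi)\,d\nu\bigr|\le\epsilon\,\nu(\Theta)\le\epsilon$ for $\nu\in\{\mu_n,\mu\}$, and therefore $\limsup_n\bigl|f_{\mu_n}(Y,\X)-f_\mu(Y,\X)\bigr|\le2\epsilon$; letting $\epsilon\downarrow0$ completes the argument.

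The main obstacle is precisely the non-compactness of $\Theta$: vague convergence of probability measures does not in general preserve integrals of bounded continuous functions, since mass can drift to infinity—exactly the pathology the vague topology of Section~\ref{sec:consistency} was chosen to accommodate. What rescues the argument is the combination of two features: (i) the likelihood kernel genuinely decays as $\|\hc\|\to\infty$, a consequence of the full column rank of $\X$ and the uniform ellipticity of $P(\delta)P(\delta)'$ from Assumption~\ref{asm:DGP}(ii), so escaping mass carries a vanishing contribution; and (ii) the total-mass bound $\nu(\Theta)\le1$ on $\psub$, which renders the truncation error uniformly small rather than merely finite. (If $\rank(\X)<\dbeta$—a $\mathbb{P}$-null event under Assumption~\ref{asm:consistency}(ii), hence irrelevant for the consistency proof—$g$ need not vanish along the null space of $\X$ and the statement would require the obvious modification.)
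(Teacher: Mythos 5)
Your proposal is correct and follows essentially the same route as the paper: establish that $\theta\mapsto\ell(Y\mid\X,\theta)$ lies in $C_0(\Theta)$ (using the uniform ellipticity of $P(\delta)P(\delta)'$ and the full column rank of $\X$), then upgrade vague convergence to convergence of the integral by exploiting the uniform mass bound $\nu(\Theta)\le 1$ on $\psub$. The only cosmetic difference is that the paper approximates the integrand uniformly by a finite linear combination of the $h_r$'s (via $C_0(\Theta)=\overline{C_c(\Theta)}$), whereas you truncate with a bump function and control the tail directly; both steps are standard and equivalent here.
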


\begin{proof}[Proof of Lemma~\ref{lem:likelihood-continuity}]
Note that
\begin{equation*}
\ell(Y\,|\,\X,\theta) = \frac{1}{(2\pi)^{T/2}|P(\delta)|} \exp \left( -\frac{1}{2} \|P(\delta)^{-1}(Y - \X \hc)\|^2\right) \le C_1 \exp \left( -C_2 \|Y - \X \hc\|^2\right)
\end{equation*}
for some constants $C_1, C_2 > 0$ by Assumption~\ref{asm:id-conditions}(ii).
Since $\rank(\X) = \dbeta$ for all $\X \in \SX$ by Assumption~\ref{asm:consistency}(ii), it follows that $\|Y - \X \hc\|^2\to\infty$ as $\|\hc\|\to\infty$, and hence $\lim_{\|\theta\|\to\infty} |\ell(Y \,|\, \X, \theta)| = 0$.
Let $\varphi_{Y,\X}(\theta) := \ell(Y\,|\,\X,\theta)$.
This implies that $\varphi_{Y,\X} \in C_0(\Theta)$ for all pairs $(Y, \X) \in \mathbb{R}^T \times \SX$, where $C_0(\Theta)$ denotes the class of continuous functions $h : \Theta \to \mathbb{R}$ such that $\lim_{\|\theta\|\to \infty} |h(\theta)| = 0$.
It is well-known that $C_0(\Theta)$ is the completed space of $C_c(\Theta)$ with respect to the uniform norm $\|h\|_\infty = \sup_{\theta\in\Theta}|h(\theta)|$ (see, e.g., Section 21.4 of \citeay{RoydenFitzpatrick2010}).

Now, consider a sequence $\mu_n \in \bar{\mathcal{G}}$ such that $\mu_n$ vaguely converges to $\mu \in \bar{\mathcal{G}}$, i.e., $\lim_{n \to \infty} d(\mu_n, \mu) = 0$.
Fix a pair $(Y,\X) \in \mathbb{R}^T \times \SX$.
Since $(h_r)_{r=1}^\infty$ in \eqref{eq:def-metric} spans a dense subset of $C_c(\Theta)$, and hence of $C_0(\Theta)$, for any $\varepsilon > 0$, there exists a real sequence $(a_r)_{r=1}^{\bar{r}}$ with $\bar{r}<\infty$ such that $\left\|\varphi_{Y,\X} - \sum_{r=1}^{\bar{r}} a_r h_r\right\|_\infty < \varepsilon$.
This implies
\begin{align*}
|f_{\mu_n}(Y,\X)-f_\mu(Y,\X)| & = \left| \int_{\Theta}\varphi_{Y,X}(\theta)d\mu_n(\theta) - \int_{\Theta}\varphi_{Y,X}(\theta)d\mu(\theta) \right| \\
&\le \left| \int_{\Theta} \sum_{r=1}^{\bar{r}} a_r h_r(\theta) d\mu_n(\theta) - \int_{\Theta} \sum_{r=1}^{\bar{r}} a_r h_r(\theta) d\mu(\theta) \right|  + 2\varepsilon \\
&\le \sup_{1\le r \le \bar{r}} |a_r 2^r| \cdot d(\mu_n, \mu) + 2\varepsilon
\end{align*}
by construction of $d$, and hence
\begin{equation*}
\limsup_{n\to\infty} |f_{\mu_n}(Y,\X)-f_\mu(Y,\X)| \le 2\varepsilon.
\end{equation*}
Since $\varepsilon>0$ can be made arbitrarily small, it follows that $|f_{\mu_n}(Y,\X)-f_\mu(Y,\X)| \to 0$, completing the proof.

\end{proof}
% --------------------------------------------------

\begin{lem}
\label{lem:key-to-consistency}

Let Assumptions~\ref{asm:DGP}, \ref{asm:id-conditions}, and \ref{asm:consistency} hold.
Then, the following hold.
\begin{enumerate}

\item [(i)]
$
\mathbb{E}_{G_*}[ |\log f_{G_*}(Y_i, \X_i)|] < \infty.
$

\item [(ii)] 
For any $\varepsilon > 0$ and $\mu \in \bar{\mathcal{G}}$,
\begin{equation*}
\bar{f}_{\mu,\varepsilon}(Y_i,\X_i) := \sup \{ f_{\tilde{\mu}}(Y_i,\X_i) : d(\tilde{\mu},\mu) \le \varepsilon \}
\end{equation*}
is a uniformly bounded random variable.
Moreover,
\begin{equation*}
\lim_{\varepsilon \to 0}\mathbb{E}[\log \bar{f}_{\mu,\varepsilon}(Y_i,\X_i)] = \mathbb{E}[\log f_{\mu}(Y_i,\X_i)].
\end{equation*}

\item [(iii)]
For any $\mu\in\bar{\mathcal{G}}$ with $\mu \ne G_*$, $f_\mu(Y_i, \X_i) \ne f_{G_*}(Y_i, \X_i)$ with positive probability.

% \item [(d)] For any $\mu \in \bar{\mathcal{G}}$ with $\mu\ne G_*$, there exists $\varepsilon > 0$ such that
% \begin{align*}
%     \lim_{N \to \infty} \frac{1}{N}\sum_{i=1}^N  \left(  \log f^*_{\mu,\varepsilon}(Y_i, \X_i) -  \log f_{G_*}(Y_i, \X_i) \right) < 0.
% \end{align*}

\end{enumerate}
\end{lem}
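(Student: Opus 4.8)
The plan is to verify the three parts in turn, relying on two elementary consequences of Assumption~\ref{asm:DGP}(ii), i.e.\ of $\underbar{c}\,I_T \le \Sigma_i \le \bar{c}\,I_T$: the uniform upper bound $\ell(Y_i\mid\X_i,\theta)\le c_1$ with $c_1:=(2\pi\underbar{c})^{-T/2}$, valid for every $\theta\in\Theta$, and the matching lower bound $\ell(Y_i\mid\X_i,\theta)\ge c_2\exp(-c_3\|Y_i-\X_i\hc\|^2)$ for some constants $c_2,c_3>0$. Together with the moment conditions in Assumption~\ref{asm:consistency}(i), the separability of $(\bar{\mathcal{G}},d)$ and the continuity in Lemma~\ref{lem:likelihood-continuity}, and the identification result Theorem~\ref{thm:identification}, these bounds supply everything needed.

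For part~(i), the upper bound gives $f_{G_*}(Y_i,\X_i)\le c_1$, so the positive part of $\log f_{G_*}(Y_i,\X_i)$ is bounded and hence integrable. For the negative part I would \emph{not} integrate the lower bound over all of $\Theta$ (that would require a Jensen step involving $\mathbb{E}_{G_*}\|\hc\|^2$, which is not assumed finite), but instead restrict the mixing integral to a ball $\{\|\hc\|\le R\}$ chosen so that $p_R:=G_*(\{\|\hc\|\le R\})>0$; using $\|Y-\X\hc\|^2\le 2\|Y\|^2+2\|\X\|^2\|\hc\|^2$ this yields
\[
f_{G_*}(Y_i,\X_i)\ge c_2\,p_R\exp\!\left(-2c_3\|Y_i\|^2-2c_3R^2\|\X_i\|^2\right),
\]
so that $(\log f_{G_*}(Y_i,\X_i))^-\le|\log(c_2p_R)|+2c_3\|Y_i\|^2+2c_3R^2\|\X_i\|^2$, which is integrable by Assumption~\ref{asm:consistency}(i).

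For part~(ii), the uniform bound is immediate: for any subprobability measure $\tilde\mu$ one has $f_{\tilde\mu}(Y_i,\X_i)\le c_1\tilde\mu(\Theta)\le c_1$, so $\bar{f}_{\mu,\varepsilon}\in[0,c_1]$ regardless of $(Y_i,\X_i)$ and $\varepsilon$; measurability follows because $(\bar{\mathcal{G}},d)$ is separable and $\tilde\mu\mapsto f_{\tilde\mu}(Y,\X)$ is continuous by Lemma~\ref{lem:likelihood-continuity}, so the supremum over the closed ball equals that over a countable dense subset. For the limit, the ball $\{\tilde\mu\in\bar{\mathcal{G}}:d(\tilde\mu,\mu)\le\varepsilon\}$ is compact and $f_{(\cdot)}(Y,\X)$ continuous, so the supremum is attained; letting $\varepsilon\downarrow 0$ and applying Lemma~\ref{lem:likelihood-continuity} again shows $\bar{f}_{\mu,\varepsilon}(Y,\X)\downarrow f_\mu(Y,\X)$ for every $(Y,\X)$ (monotone, since the ball shrinks, and bounded below by $f_\mu$). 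Then $\log c_1-\log\bar{f}_{\mu,\varepsilon}(Y_i,\X_i)$ increases to $\log c_1-\log f_\mu(Y_i,\X_i)$, and the monotone convergence theorem gives $\mathbb{E}[\log\bar{f}_{\mu,\varepsilon}(Y_i,\X_i)]\to\mathbb{E}[\log f_\mu(Y_i,\X_i)]$, the right-hand side being well-defined in $[-\infty,\infty)$ since $\log f_\mu\le\log c_1$.

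For part~(iii), I would argue by contrapositive: suppose $f_\mu(Y_i,\X_i)=f_{G_*}(Y_i,\X_i)$ almost surely for some $\mu\in\bar{\mathcal{G}}$. Since $f_{G_*}(\cdot,\X)=\int\ell(\cdot\mid\X,\theta)\,dG_*(\theta)$ is strictly positive everywhere, the conditional law of $Y_i$ given $\X_i$ has full support on $\mathbb{R}^T$, so the two marginal densities agree Lebesgue-a.e.\ in $Y$ for a.e.\ $\X$; integrating in $Y$ and applying Tonelli together with $\int_{\mathbb{R}^T}\ell(y\mid\X,\theta)\,dy=1$ forces $\mu(\Theta)=1$, hence $\mu$ is a probability measure on $\Theta$, i.e.\ $\mu\in\mathcal{G}$. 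Then $\mu$ belongs to the identified set in \eqref{eq:G-IDset-via-MLE}, which is the singleton $\{G_*\}$ by Theorem~\ref{thm:identification}, so $\mu=G_*$. The only step requiring genuine care is the lower bound in part~(i) — the truncation to $\{\|\hc\|\le R\}$ is essential precisely because $G_*$ is not assumed to have finite moments in $\hc$; parts~(ii) and (iii) are then routine assemblies of Lemma~\ref{lem:likelihood-continuity} and Theorem~\ref{thm:identification}.
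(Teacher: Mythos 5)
Your proposal is correct on all three parts, and it diverges from the paper's argument in one substantive place. For part~(i), the paper does take the Jensen route you avoid: it bounds $\log f_{G_*}$ below by $\log C_1 - C_2\int\|Y_i-\X_i\hc\|^2\,dG_*(\theta)$ and then controls $\mathbb{E}_{G_*}[\|\X_i\hc_i\|^2]$ not by assuming moments on $\hc_i$ but by observing that $\mathbb{E}_{G_*}[\|\X_i\hc_i\|^2]\le\mathbb{E}_{G_*}[\|\X_i\hc_i\|^2]+\mathbb{E}_{G_*}[\|P_ie_i\|^2]=\mathbb{E}[\|Y_i\|^2]<\infty$, exploiting the orthogonality of the signal and noise components of $Y_i$. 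So your worry that the Jensen step needs $\mathbb{E}_{G_*}\|\hc\|^2<\infty$ is slightly overstated — the paper circumvents it — but your truncation to $\{\|\hc\|\le R\}$ with $p_R>0$ is an equally valid and arguably more self-contained alternative that only uses $\mathbb{E}[\|Y_i\|^2]<\infty$ and $\mathbb{E}[\|\X_i\|^2]<\infty$ directly. For part~(ii), your measurability and uniform-boundedness arguments (countable dense subset of the compact ball plus Lemma~\ref{lem:likelihood-continuity}, and $f_{\tilde\mu}\le c_1\tilde\mu(\Theta)\le c_1$) match the paper's; notably, the paper's written proof stops there and does not actually establish the ``Moreover'' limit $\lim_{\varepsilon\to0}\mathbb{E}[\log\bar{f}_{\mu,\varepsilon}]=\mathbb{E}[\log f_\mu]$, whereas your monotone-convergence argument ($\bar{f}_{\mu,\varepsilon}\downarrow f_\mu$ pointwise by compactness and continuity, then apply MCT to $\log c_1-\log\bar{f}_{\mu,\varepsilon}\uparrow\log c_1-\log f_\mu$) supplies exactly the missing step. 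Part~(iii) is essentially identical to the paper's: both reduce to $\mu(\Theta)<1$ via Fubini/Tonelli when $\mu\in\bar{\mathcal{G}}\setminus\mathcal{G}$ and invoke Theorem~\ref{thm:identification} when $\mu\in\mathcal{G}$; you merely package the two cases as a single contrapositive.
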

% --------------------------------------------------
\begin{proof}[Proof of Lemma~\ref{lem:key-to-consistency}]

Part~(i): Since $\underbar{c}I_T \le P(\delta)P(\delta)'\le \bar{c} I_T$ for all $\delta \in \mathcal{K}_\delta$ by Assumption~\ref{asm:DGP}(ii), we have
\begin{equation*}
C_1 \int \exp(-C_2\|Y_i - \X_i \hc \|^2) dG_*(\theta) \le f_{G_*}(Y_i,\X_i) \le C_3
\end{equation*}
for some absolute constants $C_1,C_2,C_3>0$.
Taking logarithms on both sides and applying Jensen's inequality, it follows that
\begin{equation*}
   \log(C_1) - C_2 \int \|Y_i - \X_i \hc\|^2 dG_*(\theta) \le \log f_{G_*}(Y_i,\X_i) \le \log(C_3),
\end{equation*}
which, in turn, implies
\begin{align*}
        |\log f_{G_*}(Y_i,\X_i)| &\le |\log(C_1)| + |\log(C_3)| + C_2 \int \|Y_i - \X_i \hc\|^2 dG_*(\theta).
\end{align*}
Using $\|Y_i - \X_i \hc\|^2 \le 2\|Y_i\|^2 + 2 \|X_i \hc\|^2$, it follows that
\begin{equation*}
\mathbb{E}[|\log f_{G_*}(Y_i,\X_i)|] \le |\log(C_1)| + |\log(C_3)| + 2 \mathbb{E}[\|Y_i\|^2] + 2\mathbb{E}_{G_*}[\|X_i \hc_i\|^2] < \infty,
\end{equation*}
since $\mathbb{E}_{G_*}[\|X_i \hc_i\|^2] \le \mathbb{E}_{G_*}[\|X_i \hc_i\|^2] + \mathbb{E}_{G_*}[\|P_i e_i\|^2] = \mathbb{E}[\|Y_i\|^2] < \infty$ by Assumption~\ref{asm:consistency}(i).

\bigskip

\noindent  Part~(ii):
Let $\varepsilon>0$ and $\mu \in \bar{\mathcal{G}}$ be given.
Since $(\bar{\mathcal{G}}, d)$ is a compact metric space, $\{\tilde{\mu} \in \bar{\mathcal{G}} : d(\tilde{\mu}, \mu) \le \varepsilon\}$ is closed and hence compact, and therefore admits a countable dense subset $\{ \tilde{\mu}_j\}_{j \ge 1}$.
We thus have
\begin{equation*}
\bar{f}_{\mu,\varepsilon}(Y_i,\X_i) = \sup_{j \in \mathbb{N}}  f_{\tilde{\mu}_j}(Y_i,\X_i)
\end{equation*}
by the continuity of $\mu \mapsto f_\mu$ established in Lemma~\ref{lem:likelihood-continuity}.
Since each $f_{\tilde{\mu}_j}(Y_i,\X_i)$ is a random variable, we find that $\bar{f}_{\mu,\varepsilon}(Y_i,\X_i)$ is as well.
The uniform boundedness of $\bar{f}_{\mu,\varepsilon}(Y_i,\X_i)$ follows from that of $f_\mu(Y, \X)$ across all $(Y,\X)$ and $\mu \in \psub$.

\bigskip

\noindent  Part~(iii): 
The case $\mu\in \mathcal{G}$ with $\mu \ne G_*$ is established in Theorem~\ref{thm:identification}.
Consider instead the case $\mu \in \bar{\mathcal{G}}\setminus \mathcal{G}$.
Suppose, for contradiction, $f_\mu(Y_i,\X_i) = f_{G_*}(Y_i, \X_i)$ a.s.
This implies
\begin{equation*}
\int_{\mathbb{R}^T} f_\mu(y, \X_i) dy = \int_{\mathbb{R}^T} \frac{f_\mu(y, \X_i)}{f_{G_*}(y,\X_i)} f_{G_*}(y,\X_i)dy = \mathbb{E}\left[ \left. \frac{f_\mu(Y_i,\X_i)}{f_{G_*}(Y_i, \X_i)} \right| \X_i \right]  = 1.
\end{equation*}
This leads to a contradiction, since by Fubini's theorem,
\begin{equation*}
    \int_{\mathbb{R}^T} f_\mu(y, \X_i) dy = \int_{\Theta} \left( \int_{\mathbb{R}^T} \ell(y \,|\, \X_i, \theta) dy\right) d\mu(\theta) = \mu(\Theta) < 1.
\end{equation*}

\end{proof}

%%%%%%%%%%%%%%%%%%%%
\subsubsection{Regret Consistency Lemmas (Theorem~\ref{thm:EB-consistency}, Proposition~\ref{prop:EB-HIVDX})}
%%%%%%%%%%%%%%%%%%%%

\begin{lem}
\label{lem:EB-unif-conv}
Let $\taupm(Y,\X) = \mathbb{E}_{G_*}[\Tau_i \,|\, (Y_i,\X_i) = (Y, \X)]$ and $\taueb(Y,\X) = \mathbb{E}_{\hat{G}}[\Tau_i \,|\, (Y_i,\X_i) = (Y, \X)]$ denote the oracle and EB decision rules for $\Tau_i$, respectively.
Let $\mathcal{R} = \{(Y,X) \in \mathbb{R}^T \times \SX : \|Y\| \le M,\, M^{-1} \I_{\dbeta} \le X'X \le M \I_{\dbeta}\}$ for a positive constant $M>1$.
% Assume that $d(\hat{G}, G_*) \asto 0$.
Let Assumptions~\ref{asm:DGP}, \ref{asm:id-conditions}, \ref{asm:consistency}, and \ref{asm:EB} hold. Then, $\sup_{(Y,\X) \in \mathcal{R}}|\taueb(Y,X) - \taupm(Y,X)| \asto 0$.
\end{lem}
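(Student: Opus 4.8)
The plan is to express both decision rules as ratios---numerator over denominator in the mixing measure---and to reduce uniform convergence of the ratio to uniform convergence of the two pieces. Writing $N_G(Y,\X):=\int_\Theta \Tau(\theta)\ell(Y\,|\,\X,\theta)\,dG(\theta)$ and $D_G(Y,\X):=\int_\Theta \ell(Y\,|\,\X,\theta)\,dG(\theta)=f_G(Y,\X)$, so that $\taueb=N_{\hat G}/D_{\hat G}$ and $\taupm=N_{G_*}/D_{G_*}$, the identity
\begin{equation*}
\taueb(Y,\X)-\taupm(Y,\X)=\frac{(N_{\hat G}-N_{G_*})\,D_{G_*}-N_{G_*}\,(D_{\hat G}-D_{G_*})}{D_{\hat G}\,D_{G_*}}
\end{equation*}
shows it is enough to establish, uniformly over $(Y,\X)\in\mathcal{R}$: (i) $\sup_{\mathcal{R}}|N_{\hat G}-N_{G_*}|\asto0$ and $\sup_{\mathcal{R}}|D_{\hat G}-D_{G_*}|\asto0$; (ii) $\sup_{\mathcal{R}}|N_{G_*}|<\infty$ and $\sup_{\mathcal{R}}D_{G_*}<\infty$; and (iii) $\inf_{\mathcal{R}}D_{G_*}>0$, which, combined with (i), forces $\inf_{\mathcal{R}}D_{\hat G}$ to be eventually bounded below by a positive constant almost surely.

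The engine for (i) is the vague convergence $d(\hat G,G_*)\asto0$ from Theorem~\ref{thm:consistency}; but since $\Tau$ is unbounded, Portmanteau does not apply directly. Instead I would argue that, as $(Y,\X)$ ranges over the compact set $\mathcal{R}$, the maps $(Y,\X)\mapsto\ell(Y\,|\,\X,\cdot)$ and $(Y,\X)\mapsto\Tau(\cdot)\ell(Y\,|\,\X,\cdot)$ are continuous into $(C_0(\Theta),\|\cdot\|_\infty)$, so their images $\mathcal{H}_1,\mathcal{H}_2\subseteq C_0(\Theta)$ are norm-compact. The key calculation behind this is a \emph{uniform} tail bound: on $\mathcal{R}$ we have $\|Y\|\le M$ and $\X'\X\ge M^{-1}\I_{\dbeta}$, so $\|Y-\X\hc\|\ge M^{-1/2}\|\hc\|-M$, and with the Gaussian bound $\ell(Y\,|\,\X,\theta)\le C_1\exp(-C_2\|Y-\X\hc\|^2)$ (Assumption~\ref{asm:id-conditions}(ii)) together with the sub-quadratic-exponential growth of $\Tau$ (Assumption~\ref{asm:EB}(i)), one obtains a single $g\in C_0(\Theta)$ dominating $|\Tau(\theta)|\,\ell(Y\,|\,\X,\theta)$ for every $(Y,\X)\in\mathcal{R}$, with $g(\theta)\to0$ as $\|\theta\|\to\infty$ (here $\delta$ lives in the compact $\K_\delta$, so divergence of $\theta$ means divergence of $\hc$). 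Joint uniform continuity of $\ell$ on each compact slab $\mathcal{R}\times\{\|\theta\|\le R\}$, plus tail control by $g$, then yields continuity of the two maps into $C_0(\Theta)$.

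Once $\mathcal{H}_1,\mathcal{H}_2$ are known to be norm-compact, I would invoke the elementary fact that vague convergence of sub-probability measures integrates uniformly over a norm-compact family of $C_0$ test functions: finitely cover $\mathcal{H}_j$ by $\varepsilon$-balls about $h_1,\dots,h_L$, use $\int h_\ell\,d\hat G\to\int h_\ell\,dG_*$ for each fixed $h_\ell$ (via the density argument already used in the proof of Lemma~\ref{lem:likelihood-continuity}), and absorb the residual through $|\int(h-h_\ell)\,d\nu|\le\|h-h_\ell\|_\infty$ for any $\nu$ of total mass $\le1$. This delivers (i). For (ii)--(iii), $N_{G_*}$ and $D_{G_*}$ are continuous functions of $(Y,\X)$ on the compact $\mathcal{R}$ (being $\int(\cdot)\,dG_*$ applied to the continuous $C_0$-valued maps above) and $D_{G_*}=f_{G_*}>0$ pointwise, so the relevant infima and suprema are attained and are finite/positive. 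Substituting into the displayed identity concludes the proof. I expect the main obstacle to be the uniform tail estimate underpinning norm-compactness of $\mathcal{H}_2$: this is precisely where both the constraints defining $\mathcal{R}$ (bounded outcomes, well-conditioned design) and the growth restriction in Assumption~\ref{asm:EB}(i) are used, and it is what upgrades plain vague convergence of $\hat G$ to uniform convergence of the posterior means of the possibly unbounded $\Tau_i$.
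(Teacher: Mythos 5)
Your proposal is correct and follows essentially the same route as the paper's proof: the same ratio decomposition, the same reduction to uniform convergence of $\int \varphi_{Y,\X}\,d\hat G$ and $\int \Tau\,\varphi_{Y,\X}\,d\hat G$ via norm-compactness of these families in $C_0(\Theta)$ (driven by the uniform Gaussian tail bound on $\mathcal{R}$ and Assumption~\ref{asm:EB}(i)), a finite $\varepsilon$-net to upgrade vague convergence to uniform convergence, and the lower bound $\inf_{\mathcal{R}} f_{G_*}>0$ to control the denominator. The only cosmetic difference is that you obtain compactness as the continuous image of the compact set $\mathcal{R}$, whereas the paper verifies relative compactness directly via Arzel\`a--Ascoli with a gradient/equicontinuity bound; both are valid.
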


\begin{proof}[Proof of Lemma~\ref{lem:EB-unif-conv}]
We first observe that
\begin{equation*}
\taueb(Y,\X) = \frac{\int_{\Theta} \Tau(\theta) \ell(Y\,|\,\X,\theta) d\hat{G}(\theta)}{\int_{\Theta} \ell(Y\,|\,\X,\theta) d\hat{G}(\theta)} = \frac{\int_{\Theta} \Tau(\theta) \varphi_{Y,\X}(\theta) d\hat{G}(\theta)}{\int_{\Theta} \varphi_{Y,\X}(\theta) d\hat{G}(\theta)},
\end{equation*}
where $\varphi_{Y,\X}(\theta) = \ell(Y\,|\,\X,\theta)$ is defined in the proof of Lemma~\ref{lem:likelihood-continuity}.
In that proof, we have established that $\varphi_{Y,\X} \in C_0(\Theta)$ for each $(Y,\X) \in \mathbb{R}^T\times \SX$, and hence that
\begin{equation*}
\int_{\Theta} \varphi_{Y,\X}(\theta) d\hat{G}(\theta) \asto \int_{\Theta} \varphi_{Y,\X}(\theta) dG_*(\theta)
\end{equation*}
as $d(\hat{G}, G_*) \asto 0$.
To further establish that the above convergence occurs uniformly in $(Y,\X) \in \mathbb{R}^T\times \SX$, we verify that the family $\{\varphi_{Y,\X} : (Y, \X) \in \mathcal{R}\}$ forms a relatively compact subset of $C_0(\Theta)$ with respect to the sup-norm.
% This implies that
% \begin{equation*}
% \sup_{(Y,\X) \in \mathcal{R}}\left| \int_{\Theta} \varphi_{Y,\X}(\theta) d\hat{G}(\theta) - \int_{\Theta} \varphi_{Y,\X}(\theta) dG_*(\theta) \right| \asto 0,
% \end{equation*}
% since $\{\varphi_{Y,\X}(\cdot) : (Y, \X) \in \mathcal{R}\}$ admits a finite-$\epsilon$ net for any $\epsilon>0$.
% By the same argument, $\{\Tau(\cdot) \cdot \varphi_{Y,\X}(\cdot) : (Y, \X) \in \mathcal{R}\}$ is also relatively compact.

% To show that $\{\varphi_{Y,\X}(\cdot) : (Y, \X) \in \mathcal{R}\} \subseteq C_0(\Theta)$ is relatively compact, we prove that it admits a finite $\epsilon$-net for any $\epsilon > 0$.
Since $\X'P(\delta)^{-1}P(\delta)^{-1}{}'\X \ge (M^{-1}\underbar{c}) \I_T$ for all $\delta \in \mathcal{K}_\delta$, by Assumption~\ref{asm:DGP} and the construction of $\mathcal{R}$, there exist some positive constants $c_1$ and $c_2$ such that
\begin{equation*}
\sup_{(Y,\X) \in \mathcal{R}} |\varphi_{Y,\X}(\theta)| \le c_1 \exp \left( - c_2 \|\hc\|^2 \right).
\end{equation*}
This implies that there exists $D>0$ such that
\begin{equation}
\label{eq:vanish-at-infinity}
\sup_{\theta:\|\hc\| \ge D} \sup_{(Y,\X) \in \mathcal{R}} |\varphi_{Y,\X}(\theta)| < \epsilon/2.
\end{equation}
Consider the restriction of $\varphi_{Y,\X}$ to $B_D \times \mathcal{K}_\delta$, where $B_D := \{\hc:\|\hc\|\le D\}$.
Arzela-Ascoli theorem implies that the class $\{\left.\varphi_{Y,\X}\right|_{B_D\times \mathcal{K}_\delta} : (Y,\X) \in \mathcal{R}\}$ is relatively compact in $C(B_D \times \mathcal{K}_\delta)$ provided it is uniformly bounded and equi-continuous (\citeay{RoydenFitzpatrick2010}).
The uniform boundedness is obvious from the previous argument.
The uniform equicontinuity follows from the fact that
\begin{equation*}
\sup_{(Y,\X) \in \mathcal{R}}\left\|  \nabla_{\theta}\varphi_{Y,\X}(\theta) \right\| \le c_1' \exp \left( - c_2 \|\hc\|^2 \right) (\|\hc\|^2 + 1),
\end{equation*}
for some $c_1'>0$ and the same $c_2$ as before, where $\nabla_{\theta} = \partial/\partial \theta$.
Taking the supremum over $\theta$, this implies that $\sup_{ \theta \in B_D \times \mathcal{K}_{\delta}} \left( \sup_{(Y,\X) \in \mathcal{R}}\left\|  \nabla_{\theta}\varphi_{Y,\X}(\theta) \right\| \right)< \infty$.
Consequently, Arzela-Ascoli theorem implies that there exist $(Y_1,\X_1),\ldots, (Y_K, \X_K) \in \mathcal{R}$, $K < \infty$ such that, for all $(Y,\X) \in \mathcal{R}$,
\begin{equation*}
\min_{k=1,\ldots, K}\sup_{\theta \in B_D \times \mathcal{K}_\delta} |\varphi_{Y,X}(\theta) - \varphi_{Y_k,\X_k}(\theta)|    < \epsilon
\end{equation*}
from which, combined with \eqref{eq:vanish-at-infinity}, it follows
\begin{equation*}
\min_{k=1,\ldots, K}\sup_{\theta \in \Theta} |\varphi_{Y,X}(\theta) - \varphi_{Y_k,\X_k}(\theta)|    < \epsilon.
\end{equation*}
Therefore, $\{\varphi_{Y,\X}\}_{(Y,\X) \in \mathcal{R}}$ is relatively compact in $C_0(\Theta)$.
The same line of arguments establish that $\{\Tau \cdot \varphi_{Y,\X}\}_{(Y,\X) \in \mathcal{R}}$ is also relatively compact, whose details are omitted for brevity.

Since a relatively compact set admits a finite $\epsilon$-net for every $\epsilon>0$, by the standard approximation argument, we find that
\begin{equation*}
\sup_{(Y,\X) \in \mathcal{R}}\left|  \int_{\Theta} \varphi_{Y,\X}(\theta) d\hat{G}(\theta) -     \int_{\Theta} \varphi_{Y,\X}(\theta) dG_*(\theta)\right| \asto 0
\end{equation*}
and
\begin{equation*}
\sup_{(Y,\X) \in \mathcal{R}}\left|  \int_{\Theta} \Tau(\theta)\varphi_{Y,\X}(\theta) d\hat{G}(\theta) -     \int_{\Theta} \Tau(\theta)\varphi_{Y,\X}(\theta) dG_*(\theta)\right|   \asto 0. 
\end{equation*}
By the continuity of $f_{G_*}$ and the compactness of $\mathcal{R}$, we observe that
\begin{equation*}
\inf_{(Y,\X) \in \mathcal{R}}\int_{\Theta} \varphi_{Y,\X}(\theta) dG_*(\theta) = \inf_{(Y,\X) \in \mathcal{R}} f_{G_*}(Y,\X) \ge c > 0
\end{equation*}
for some $c>0$, and hence 
\begin{equation*}
\liminf_{N\to\infty}\inf_{(Y,\X) \in \mathcal{R}}\int_{\Theta} \varphi_{Y,\X}(\theta) d\hat{G}(\theta)  = \liminf_{N\to\infty}\inf_{(Y,\X) \in \mathcal{R}} f_{\hat{G}}(Y,\X) \ge c/2 > 0 \quad \text{a.s.}
\end{equation*}
Moreover, we have
\begin{equation*}
\sup_{(Y,\X) \in \mathcal{R}} \left(  \left| \int_{\Theta} \Tau(\theta) \varphi_{Y,\X}(\theta) dG_*(\theta) \right|  + \left| f_{G_*}(Y,\X) \right| \right) < \infty.
\end{equation*}
This implies that
\begin{align*}
\sup_{(Y,\X) \in \mathcal{R}}\left| \taueb(Y,\X) -\taupm(Y,\X) \right| &=  \sup_{(Y,\X) \in \mathcal{R}}\left|\frac{\int_{\Theta} \Tau(\theta) \varphi_{Y,\X}(\theta) d\hat{G}(\theta)}{f_{\hat{G}}(Y,\X)}-\frac{\int_{\Theta} \Tau(\theta) \varphi_{Y,\X}(\theta) dG_*(\theta)}{f_{G_*}(Y,\X)}\right| \\
&\le \sup_{(Y,\X) \in \mathcal{R}} \left| \frac{\int_{\Theta} \Tau(\theta) \varphi_{Y,\X}(\theta) d\hat{G}(\theta) -\int_{\Theta} \Tau(\theta) \varphi_{Y,\X}(\theta) dG_*(\theta)}{f_{\hat{G}}(Y,\X)} \right|\\
& \quad + \sup_{(Y,\X) \in \mathcal{R}} \left| \frac{\int_{\Theta} \Tau(\theta) \varphi_{Y,\X}(\theta) dG_*(\theta)}{f_{G_*}(Y,\X)}\frac{f_{\hat{G}}(Y,\X)-f_{G_*}(Y,\X)}{f_{\hat{G}}(Y,\X)} \right|\\
& \asto 0.
\end{align*}

\end{proof}

%%%%%%%%%%%%%%%%%%%%

\begin{lem}
\label{lem:prior-mean-bound}
Assume the HIVDX model in \eqref{def:model:hivxd}. Let Assumption~\ref{asm:DGP} hold.
Further suppose that $|\bar{X}_{2,i}| \le M$ and $\sum_{t=1}^T (X_{2,it} - \bar{X}_{2,i})^2 \ge c$ for some constants $M, c>0$ and all $i=1,\ldots, N$.
Then, there exists an absolute constant $A < \infty$ depending only on $\bar{c}$, $\underbar{c}$, $M$, $c$, and $T$, such that
\begin{equation*}
\left( \int_{\Theta} \|\hc\|^p d \hat{G}_N(\theta) \right)^{1/p}\le A \left( \frac{1}{N} \sum_{i=1}^N \| Y_i\|^p\right)^{1/p} \quad \text{for all }p \ge 2,
\end{equation*}
where $\hat{G}_N$ denotes the \NPM\ obtained from the sample $(Y_i, \X_i)_{i=1}^N$.
\end{lem}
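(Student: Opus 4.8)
The plan is to turn the stationarity of the \NPM\ into a \emph{self-consistency identity} via a radial perturbation of $\hat G_N$, and then close the estimate with two applications of Cauchy--Schwarz arranged so that the unknown moment cancels. Throughout write $X_i=[\iota_T,X_{2,i}]$, $\Sigma(\delta)=P(\delta)P(\delta)'$, $Q_i(\delta)=X_i'\Sigma(\delta)^{-1}X_i$, $\hat\beta_i(\delta)=Q_i(\delta)^{-1}X_i'\Sigma(\delta)^{-1}Y_i$ (the GLS coefficient), and $\|v\|_Q^2=v'Qv$.

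\emph{Step 1 (design and likelihood preliminaries).} Since $\det(X_i'X_i)=T\sum_t(X_{2,it}-\bar X_{2,i})^2\ge Tc$ while $\operatorname{tr}(X_i'X_i)\le T(1+M^2)+\sum_t(X_{2,it}-\bar X_{2,i})^2$, one gets $\lambda_{\min}(X_i'X_i)\ge\lambda_0:=Tc/(c+T(1+M^2))>0$, a constant depending only on $T,M,c$ (in particular each $X_i$ has full column rank, so $\hat\beta_i(\delta)$ is well defined). Combining this with $\underbar{c}\,I_T\preceq\Sigma(\delta)\preceq\bar c\,I_T$ from Assumption~\ref{asm:DGP}(ii) yields, for all $\beta$ and $\delta\in\mathcal{K}_\delta$: (i) $\|\beta\|_{Q_i(\delta)}^2\ge(\lambda_0/\bar c)\|\beta\|^2$; (ii) $\|\hat\beta_i(\delta)\|_{Q_i(\delta)}^2=Y_i'\Sigma(\delta)^{-1/2}H_i(\delta)\Sigma(\delta)^{-1/2}Y_i\le Y_i'\Sigma(\delta)^{-1}Y_i\le\|Y_i\|^2/\underbar{c}$, where $H_i(\delta)$ is the orthogonal GLS hat projection; (iii) $\nabla_\beta\log\ell(Y_i\mid X_i,(\beta,\delta))=-Q_i(\delta)(\beta-\hat\beta_i(\delta))$. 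Finally, $\hat G_N$ is compactly supported in $\beta$: if $\|\beta\|>R_0:=\tfrac{2}{\sqrt{\lambda_0}}\sqrt{\bar c/\underbar{c}}\max_i\|Y_i\|$, then by (i)--(ii) $\|\beta\|_{Q_i(\delta)}>2\|\hat\beta_i(\delta)\|_{Q_i(\delta)}$ for every $i,\delta$, so replacing $\beta$ by $R_0\beta/\|\beta\|$ strictly decreases each $\|\beta-\hat\beta_i(\delta)\|_{Q_i(\delta)}$ and hence strictly raises every $\ell(Y_i\mid X_i,(\cdot,\delta))$ and thus $F_N$ — impossible at a maximizer.

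\emph{Step 2 (a gradient-flow perturbation).} As $\hat G_N$ maximizes the concave $F_N$ over the convex set $\mathcal{G}$, the first-order condition $N^{-1}\sum_i\ell(Y_i\mid X_i,\theta)/f_{\hat G}(Y_i,X_i)\le1$ holds for every $\theta$ with equality $\hat G_N$-a.e.\ (footnote after Algorithm~\ref{alg:iteration}); equivalently $N^{-1}\sum_if_G(Y_i,X_i)/f_{\hat G}(Y_i,X_i)\le1$ for every $G\in\mathcal{G}$. Fix $p\ge2$, put $\psi_s(\beta):=(1+s\|\beta\|^{p-2})\beta$, and let $G_s\in\mathcal{G}$ be the pushforward of $\hat G_N$ under $(\beta,\delta)\mapsto(\psi_s(\beta),\delta)$. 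Then $\rho(s):=N^{-1}\sum_if_{G_s}(Y_i,X_i)/f_{\hat G}(Y_i,X_i)$ satisfies $\rho(s)\le1=\rho(0)$ for all $s$, so $\rho'(0)=0$ (differentiation under the integral is legitimate by the compact support from Step~1). Computing $\rho'(0)$ with (iii) gives
\[
\frac1N\sum_{i=1}^N\mathbb{E}_{\hat G}\!\big[\|\beta\|^{p-2}\|\beta\|_{Q_i(\delta)}^2\,\big|\,Y_i,X_i\big]=\frac1N\sum_{i=1}^N\mathbb{E}_{\hat G}\!\big[\|\beta\|^{p-2}\,\hat\beta_i(\delta)'Q_i(\delta)\beta\,\big|\,Y_i,X_i\big],
\]
where $\mathbb{E}_{\hat G}[\cdot\mid Y_i,X_i]$ is expectation under the posterior $\propto\ell(Y_i\mid X_i,\theta)\,d\hat G_N(\theta)$ with dummy variable $(\beta,\delta)$.

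\emph{Step 3 (closing the estimate).} Let $P:=\int\|\beta\|^p\,d\hat G_N$ and let $L$ denote the common value in the identity. By (i) together with the averaging identity $N^{-1}\sum_i\mathbb{E}_{\hat G}[\varphi(\beta)\mid Y_i,X_i]=\int\varphi\,d\hat G_N$ (itself a consequence of the first-order condition), $L\ge(\lambda_0/\bar c)P$. For the right-hand side, Cauchy--Schwarz in the $Q_i(\delta)$-inner product and (ii) bound the integrand by $\underbar{c}^{-1/2}\|Y_i\|\,\|\beta\|^{p-2}\|\beta\|_{Q_i(\delta)}$; then Cauchy--Schwarz inside the posterior and once more over $i$ give $L\le\underbar{c}^{-1/2}\big(N^{-1}\sum_i\|Y_i\|^2\,\mathbb{E}_{\hat G}[\|\beta\|^{p-2}\mid Y_i,X_i]\big)^{1/2}L^{1/2}$, i.e.\ $L\le\underbar{c}^{-1}N^{-1}\sum_i\|Y_i\|^2\,\mathbb{E}_{\hat G}[\|\beta\|^{p-2}\mid Y_i,X_i]$. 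Using $L\ge(\lambda_0/\bar c)P$ and then Hölder with exponents $(p/2,p/(p-2))$, Jensen ($\mathbb{E}_{\hat G}[\|\beta\|^{p-2}\mid\cdot]^{p/(p-2)}\le\mathbb{E}_{\hat G}[\|\beta\|^p\mid\cdot]$) and the averaging identity, one obtains
\[
\tfrac{\lambda_0}{\bar c}\,P\le\tfrac1{\underbar{c}}\Big(\tfrac1N\sum_i\|Y_i\|^p\Big)^{2/p}P^{(p-2)/p};
\]
dividing by $P^{(p-2)/p}$ (the case $P=0$ being trivial) gives $P^{2/p}\le\tfrac{\bar c}{\lambda_0\underbar{c}}\big(N^{-1}\sum_i\|Y_i\|^p\big)^{2/p}$, i.e.\ the claim with $A=\sqrt{\bar c/(\lambda_0\underbar{c})}$, which depends only on $\bar c,\underbar{c},M,c,T$.

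\emph{Main obstacle.} The crux is obtaining a constant free of both $N$ and $p$. A unit-by-unit bound on the posterior moment $\mathbb{E}_{\hat G}[\|\beta\|^p\mid Y_i,X_i]$ genuinely fails: the \NPM\ may assign a tiny ($\approx 1/N$) weight to a far-out atom that dominates some unit's posterior, so the per-unit posterior moment is not controlled — only the average over $i$ is, and this must be harvested globally. The mechanism that does so is the perturbation of Step~2, which is the only place the optimality of $\hat G_N$ is used beyond the pointwise first-order condition; the two Cauchy--Schwarz steps in Step~3 are then set up precisely so that $P$ reappears on the right and cancels. Subsidiary care is needed only for the compact-support argument and the justification of $\rho'(0)=0$, both routine.
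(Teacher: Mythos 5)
Your proof is correct and delivers the paper's bound with the same constant $A=\sqrt{\bar c/(\underbar{c}\,\underbar{\lambda})}$ (your $\lambda_0$ equals the paper's $\underbar{\lambda}(M,c)$), but it takes a genuinely different route. The paper works pointwise on the support of $\hat G_N$: because $\theta\mapsto N^{-1}\sum_i \ell(Y_i\mid \X_i,\theta)/f_{\hat G}(Y_i,\X_i)$ attains its maximum value $1$ at $\hat G_N$-a.e.\ $\theta$ and $\beta$ is unconstrained, its $\beta$-gradient vanishes there; solving the resulting normal equations shows that every support point $\beta$ is itself a weighted GLS coefficient with weights $\ell(Y_i\mid \X_i,\theta)/f_{\hat G}(Y_i,\X_i)$, so the projection inequality yields the \emph{pointwise} bound $\|\beta\|^2\le A^2 N^{-1}\sum_i w_i(\theta)\|Y_i\|^2$ on the support, after which Jensen and $\int w_i\,d\hat G_N=1$ finish in two lines. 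You never establish this pointwise bound; your Step~2 identity is exactly the paper's pointwise stationarity condition paired with $\|\beta\|^{p-2}\beta$ and integrated against $\hat G_N$ (you reach it instead via a radial pushforward perturbation and $\rho'(0)=0$), and you then have to recover the moment bound through a Cauchy--Schwarz/H\"older bootstrap engineered so that $P=\int\|\beta\|^p\,d\hat G_N$ cancels. Both arguments rest on the same quantitative inputs --- $\lambda_{\min}(\X_i'\X_i)\ge\underbar{\lambda}$, $\underbar{c}\,I_T\le\Sigma(\delta)\le\bar c\,I_T$, and the averaging identity $N^{-1}\sum_i\mathbb{E}_{\hat G}[\,\cdot\mid Y_i,\X_i]=\int\cdot\,d\hat G_N$ --- but the paper's pointwise route is shorter, requires neither your compact-support lemma nor the justification of differentiating under the integral, and gives the strictly stronger intermediate fact that the $\beta$-support of $\hat G_N$ lies in a data-determined ball; your variational route buys robustness in settings where only one-sided variational inequalities are available (e.g.\ a constrained $\beta$), though that generality is not needed here.
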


\begin{proof}[Proof of Lemma~\ref{lem:prior-mean-bound}]

By the first-order conditions for the \NPM, we have
\begin{equation*}
\frac{1}{N} \sum_{i=1}^N \frac{\ell(Y_i\,|\,\X_i,\theta)}{f_{\hat{G}}(Y_i,\X_i)} \le 1 \quad \text{for all } \theta \in \Theta,
\end{equation*}
and $\hat{G}\large(\{ \theta \in \Theta: N^{-1}\sum_{i=1}^N {\ell(Y_i\,|\,\X_i,\theta)}/f_{\hat{G}}(Y_i,\X_i) = 1\}\large)=1$.
This implies
\begin{equation*}
\frac{1}{N} \sum_{i=1}^N \frac{\ell(Y_i\,|\,\X_i,\theta)}{f_{\hat{G}}(Y_i,\X_i)} \frac{\partial \log \ell(Y_i\,|\,\X_i, \theta)}{\partial \hc}  = 0
\end{equation*}
for $\hat{G}$-a.e. $\theta$, and hence
\begin{equation*}
\frac{1}{N} \sum_{i=1}^N \frac{\ell(Y_i\,|\,\X_i,\theta)}{f_{\hat{G}}(Y_i,\X_i)} \X_i' (P(\delta)P(\delta)')^{-1}(Y_i - \X_i \hc) = 0.
\end{equation*}
Rearranging the equation above, we obtain
\begin{equation*}
\hc = \left( \frac{1}{N} \sum_{i=1}^N \frac{\ell(Y_i\,|\,\X_i,\theta)}{f_{\hat{G}}(Y_i,\X_i)} \X_i' (P(\delta)P(\delta)')^{-1} \X_i\right)^{-1}  \left( \frac{1}{N} \sum_{i=1}^N \frac{\ell(Y_i\,|\,\X_i,\theta)}{f_{\hat{G}}(Y_i,\X_i)} \X_i' (P(\delta)P(\delta)')^{-1} Y_i\right).  
\end{equation*}
for $\hat{G}$-a.e. $\theta$.
Regarded as a weighted regression of $P(\delta)^{-1}Y_i$ on $P(\delta)^{-1} X_i$, this yields the following inequality:
\begin{equation*}
    \hc' \left(  \frac{1}{N} \sum_{i=1}^N \frac{\ell(Y_i\,|\,\X_i,\theta)}{f_{\hat{G}}(Y_i,\X_i)} \X_i' (P(\delta)P(\delta)')^{-1} \X_i \right) \hc \le \frac{1}{N} \sum_{i=1}^N \frac{\ell(Y_i\,|\,\X_i,\theta)}{f_{\hat{G}}(Y_i,\X_i)} \|P(\delta)^{-1}Y_i\|^2.
\end{equation*}
By Assumption~\ref{asm:DGP}, we have $\bar{c}^{-1} \I_{T} \le (P(\delta)P(\delta)')^{-1} \le \underbar{c}^{-1} \I_{T}$.
Let $\mu_i = (1/T)\sum_{t=1}^T X_{2,it}$ and $s_i^2 = (1/T)\sum_{t=1}^T (X_{2,it} - \mu_i)^2$.
Then, the smallest eigenvalue of $\X_i'\X_i$ satisfies
\begin{equation*}
\lambda_{\min}(\X_i'\X_i) \ge \frac{ T s_i^2 }{1 + \mu_i^2 + s_i^2} \ge \underbar{\lambda} >0 \quad \text{for all }i=1,,\ldots,N,
\end{equation*}
where $\underbar{\lambda} = \underbar{\lambda}(M,c) = {c}/({1 + M^2 + c/T})$ is a positive constant.
This implies that
\begin{equation*}
\|\hc\|^2 = \|\hc\|^2 \frac{1}{N} \sum_{i=1}^N \frac{\ell(Y_i\,|\,\X_i,\theta)}{f_{\hat{G}}(Y_i,\X_i)} \le \frac{\bar{c}}{\underbar{c}\underbar{\lambda}} \frac{1}{N} \sum_{i=1}^N \frac{\ell(Y_i\,|\,\X_i,\theta)}{f_{\hat{G}}(Y_i,\X_i)} \|Y_i\|^2 = A^2 \frac{1}{N} \sum_{i=1}^N \frac{\ell(Y_i\,|\,\X_i,\theta)}{f_{\hat{G}}(Y_i,\X_i)} \|Y_i\|^2
\end{equation*}
where $A^2 = {\bar{c}}/(\underbar{c}\underbar{\lambda}(M,c))$.
By the Jensen's inequality, it follows that, for $p \ge 2$,
\begin{equation*}
\|\hc\|^{p} \le A^p \left( \frac{1}{N}\sum_{i=1}^N \frac{\ell(Y_i\,|\,\X_i,\theta)}{f_{\hat{G}}(Y_i,\X_i)} \|Y_i\|^2\right)^{p/2} \le A^p \frac{1}{N} \sum_{i=1}^N \frac{\ell(Y_i\,|\,\X_i,\theta)}{f_{\hat{G}}(Y_i,\X_i)} \|Y_i\|^{p}.
\end{equation*}
Integrating both sides with respect to $\hat{G}$ yields
\begin{equation*}
\int_{\Theta} \|\hc\|^p d\hat{G}(\theta) \le A^p \frac{1}{N} \sum_{i=1}^N \|Y_i\|^p.    
\end{equation*}
\end{proof}
%%%%%%%%%%%%%%%%%%%%

%%%%%%%%%%%%%%%%%%%%

\clearpage
\newpage

        \pagenumbering{roman}
        \setcounter{page}{1}
        
        \renewcommand{\thesection}{O-\arabic{section}}
        \setcounter{section}{0}

%        \begin{center}
%            \Large{Online Appendix only for arXiv (do not include this for ``Supplemental Material'')}
%        \end{center}

 %%%%%%%%%%%%%%%%%%%%

\section{Auxiliary Identification Results}\label{app:choice:x:m}

\subsection{Verification of Assumption~\ref{asm:id-conditions} for ARMA(1,1) Errors}

We specify the choice of $(\xmat,\Q)$ for Assumption~\ref{asm:id-conditions} in the context of the HIVD model with ARMA(1,1) errors.
For simplicity, we assume $T = 4$.
To illustrate the construction, let $\xmat$ be a column of ones (corresponding to the random intercept) and define the differencing matrix
\begin{equation*}
\Q = \begin{pmatrix}
    -1 & 0 & 0 \\
    1 & -1 & 0 \\
    0 & 1 & -1 \\
    0 & 0 & 1
\end{pmatrix}.
\end{equation*}
This yields the differenced vector $\Q'Y_i = (\Delta u_{i2}, \Delta u_{i3}, \Delta u_{i4})'$.
Consequently, the transformed covariance matrix $\mathcal{V} := \Q' P(\delta) P(\delta)' \Q = \var(\Q' Y_i \mid \delta_i = \delta)$ has elements:
\begin{align*}
\mathcal{V}_{11} &= 2\,\frac{1+\varphi^{2} - \varphi (1-\rho)}{1+\rho}\,\sigma^{2},\\[6pt]
\mathcal{V}_{12} &= \frac{\varphi^{2}(\rho - 1) + \varphi(\rho^{2} - \rho + 2) + (\rho - 1)}{1+\rho}\,\sigma^{2},\\[6pt]
\mathcal{V}_{13} &= \frac{(\rho - 1)(\rho + \varphi)(1+\rho\varphi)}{1+\rho}\,\sigma^{2}.
\end{align*}
Using these moments, $\rho$ is explicitly identified as
\begin{equation*}
    \rho = 1 + \frac{\mathcal{V}_{13}}{\mathcal{V}_{11}/2 + \mathcal{V}_{12}},
\end{equation*}
where the condition $\rho+\varphi \ne 0$ guarantees that the denominator is non-zero.
Once $\rho$ is determined, $\varphi$ and $\sigma^2$ are recovered from the following system:
\begin{align}
    \frac{1+\rho}{2} \mathcal{V}_{11} + (1-\rho) (\rho \mathcal{V}_{12} - \mathcal{V}_{13}) &= (1+\varphi^2) \sigma^2, \label{eq:sys1} \\
    \rho \mathcal{V}_{12} - \mathcal{V}_{13} &= \varphi \sigma^2. \label{eq:sys2}
\end{align}
We distinguish two cases based on the left-hand side of \eqref{eq:sys2}:
\begin{enumerate}
    \item If $\rho \mathcal{V}_{12} - \mathcal{V}_{13} = 0$, then $\varphi = 0$ (since $\sigma^2 > 0$). In this scenario, $\sigma^2$ is directly identified by \eqref{eq:sys1}.
    \item If $\rho \mathcal{V}_{12} - \mathcal{V}_{13} \ne 0$, then $\varphi \ne 0$. We take the ratio of \eqref{eq:sys1} to \eqref{eq:sys2} to eliminate $\sigma^2$:
    \begin{equation*}
        \frac{1+\varphi^2}{\varphi} = \frac{\frac{1+\rho}{2} \mathcal{V}_{11} + (1-\rho) (\rho \mathcal{V}_{12} - \mathcal{V}_{13})}{\rho \mathcal{V}_{12} - \mathcal{V}_{13}} \equiv \mathcal{R}.
    \end{equation*}
    Rearranging this expression leads to the quadratic equation $\varphi^2 - \mathcal{R}\varphi + 1 = 0$.
    From the properties of quadratic equations, $\mathcal{R}$ represents the sum of the two roots, while the constant $1$ represents their product. This implies that the roots are reciprocals, say $z$ and $1/z$.
    The condition $|\mathcal{R}| \ge 2$ (which holds by construction for any real $\varphi$) ensures that the roots are real.
    Finally, the assumption $|\varphi| \le 1$ allows us to uniquely identify $\varphi$ as the root that is less than or equal to 1 in absolute value.
\end{enumerate}
Having identified $\varphi$, $\sigma^2$ is determined by $\sigma^2 = (\rho \mathcal{V}_{12} - \mathcal{V}_{13}) / \varphi$.
Therefore, we have verified the required one-to-one mapping between $\mathcal{V}$ and $\delta$.

\subsection{Identification Failure of the HIVDX model with $T=4$}

Proposition~\ref{prop:HIVDX-id-T=4} indicates that $G_*$ is not identified when $T = 4$ and the sequence of first differences $(\Delta X_{2,it})_{t=2}^4$ takes the symmetric form
\begin{equation*}
(\Delta X_{2,i2}, \Delta X_{2,i3}, \Delta X_{2,i4}) = (x, \tilde{x}, x),
\end{equation*}
with $|\tilde{x}| \ne |x|$.
To see this, consider the case where $X_{2,it} = \mathbf{1}\{t \ge 3\}$ for all $i$:
\begin{align}\label{model:event-study}
Y_{it} = a_i + b_i \mathbf{1}\{t \ge 3\} + u_{it}, \quad t=1,\ldots, 4,
\end{align}
with $u_{it}$ following the structure in \eqref{def:model:hivxd:2}.
Here, the difference sequence is $(0, 1, 0)$.
To eliminate the random slope $b_i$ (which appears only in the difference $\Delta Y_{i3}$), we construct $\Q$ to select only the second and fourth differences:
\begin{equation*}
\Q = \begin{pmatrix}
    -1 & 0 \\
    1 & 0 \\
    0 & -1 \\
    0 & 1
\end{pmatrix}.
\end{equation*}
This yields $\Q' Y_i = (\Delta u_{i2}, \Delta u_{i4})' = (u_{i2}-u_{i1}, u_{i4}-u_{i3})'$.
The corresponding covariance matrix $\mathcal{V} := \Q'P(\delta)P(\delta)'\Q = \var(\Q' Y_i \mid \delta_i = \delta)$ is $2\times 2$ and imposes the following two restrictions on $\delta=(\sigma^2, \rho)$:
\begin{align*}
\mathcal{V}_{11} & =\mathcal{V}_{22} = \frac{2\sigma^2}{1+\rho},\\
\mathcal{V}_{12} &= \mathcal{V}_{21} = -\frac{\rho(1-\rho)\sigma^2}{1+\rho}.
\end{align*}
These relations are insufficient to uniquely determine $\delta$. For instance, the distinct parameter pairs $(1.04, 0.2)$ and $(1.56, 0.8)$ yield identical covariance matrices $\mathcal{V}$.

The intuition for this identification failure is straightforward.
Because the differenced regressor sequence $(0, 1, 0)$ is symmetric and the error process $u_{it}$ is stationary, the joint distribution of the utilized differences $(\Delta Y_{i2}, \Delta Y_{i4})$ is symmetric.
Specifically, $\mathcal{V}_{11} = \mathcal{V}_{22}$ provides the scale of the variance, while the correlation depends on the ratio $\mathcal{V}_{12}/\mathcal{V}_{11} = -\rho(1-\rho)/2$.
Since the function $f(\rho) = \rho(1-\rho)$ is not injective on $(-1, 1)$, solving for $\rho$ involves a quadratic equation with potentially two distinct valid roots, precluding point identification.

This failure highlights that sufficient time variation in the covariates is a prerequisite for identification in short panels.
However, as established in Proposition~\ref{prop:HIVDX-id-T=5}, the additional variation afforded by a fifth time period breaks this symmetry and ensures that $\delta$ is uniquely determined.

%\newpage

{
\begin{figure}[!htbp]
	\caption{Pairwise scatter plots of $\hat \theta_i^{\mathrm{EB}}$}
	\label{fig:PSID:post:dist}
	{
    \begin{center}

    \begin{subfigure}[b]{0.45\textwidth}
        \centering
	\includegraphics[width=0.99\linewidth,keepaspectratio]{ab_eb.jpg} 
    \subcaption{Pair $(\hat a_i^{\mathrm{EB}}, \hat b_i^{\mathrm{EB}})$.}
    \end{subfigure}
    \begin{subfigure}[b]{0.45\textwidth}
        \centering
        \includegraphics[width=0.99\linewidth,keepaspectratio]{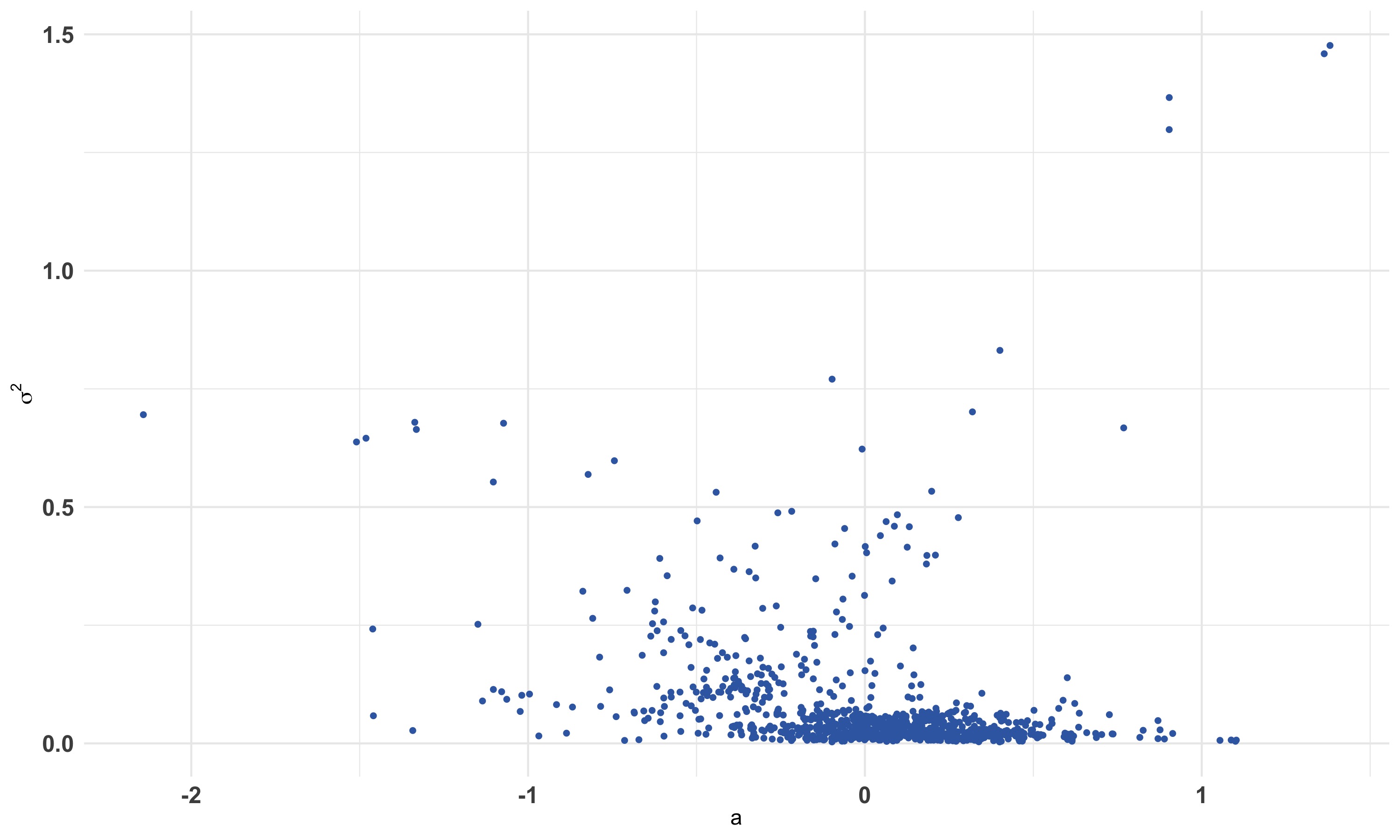}
        \subcaption{Pair $(\hat a_i^{\mathrm{EB}},\hat \sigma_i^{2,\mathrm{EB}})$.}
    \end{subfigure}

    \begin{subfigure}[b]{0.45\textwidth}
        \centering
        \includegraphics[width=0.99\linewidth,keepaspectratio]{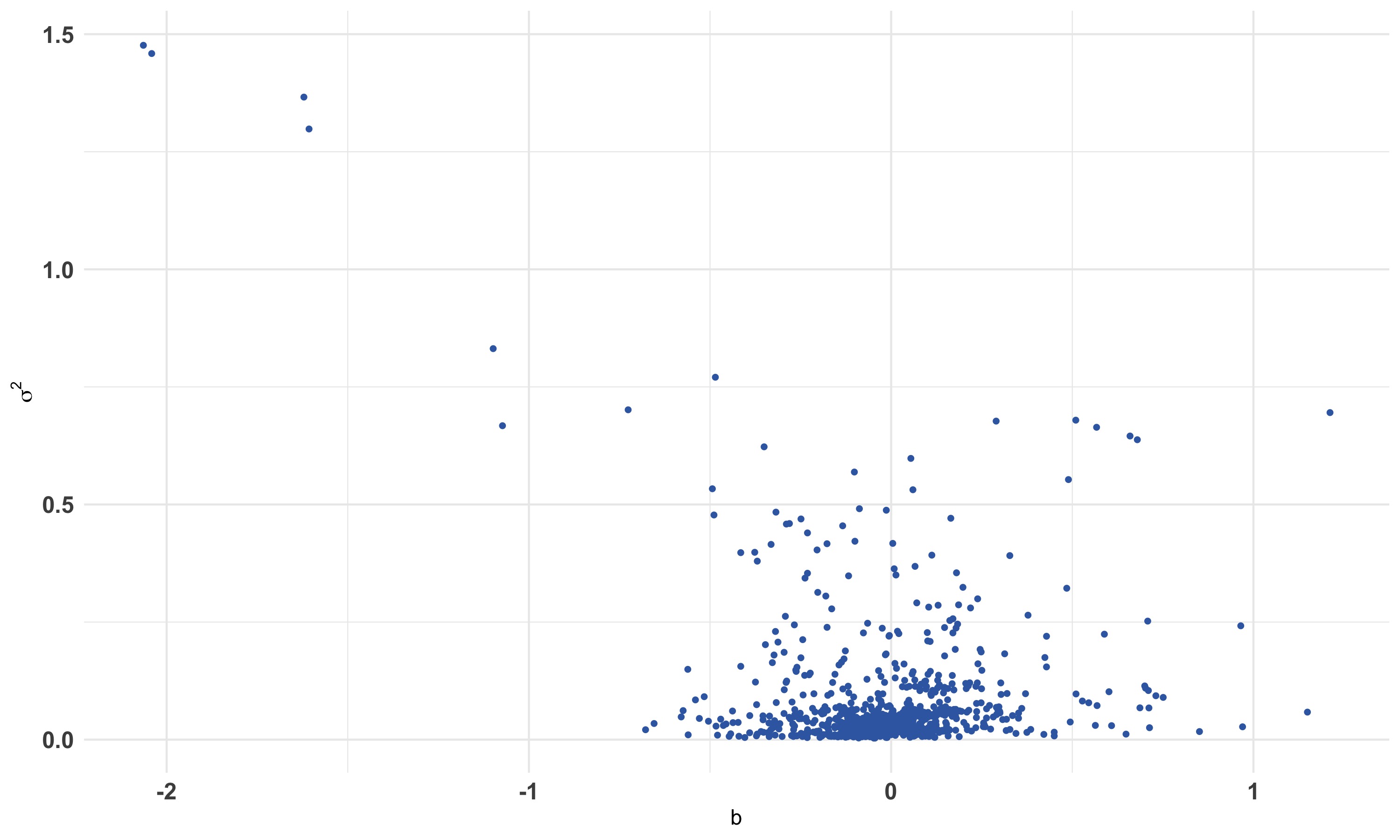}
        \subcaption{Pair $(\hat b_i^{\mathrm{EB}},\hat \sigma_i^{2,\mathrm{EB}})$.}
    \end{subfigure}
    \begin{subfigure}[b]{0.45\textwidth}
        \centering
        \includegraphics[width=0.99\linewidth,keepaspectratio]{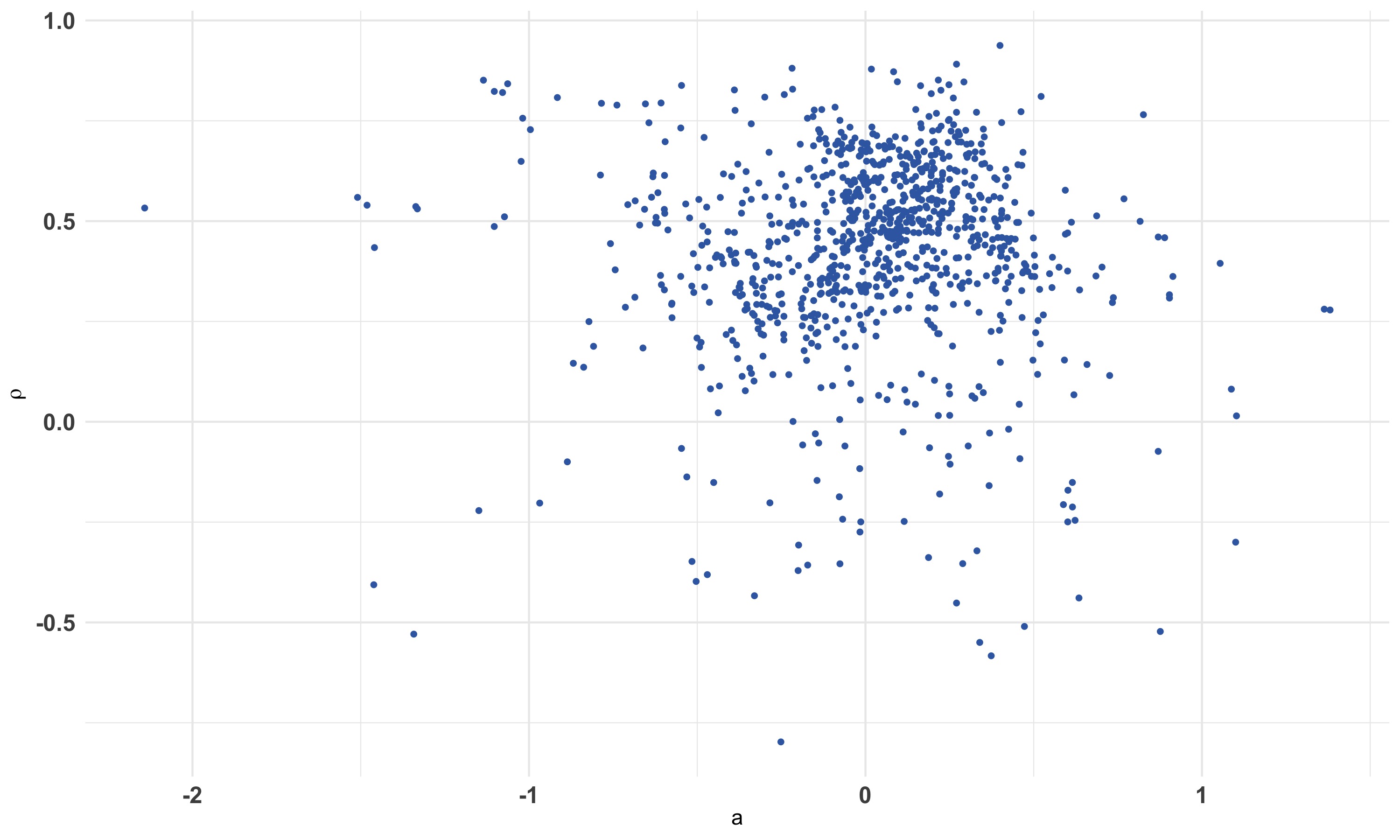}
        \subcaption{Pair $(\hat a_i^{\mathrm{EB}},\hat \rho_i^{\mathrm{EB}})$.}
    \end{subfigure}

    \begin{subfigure}[b]{0.45\textwidth}
        \centering
        \includegraphics[width=0.99\linewidth,keepaspectratio]{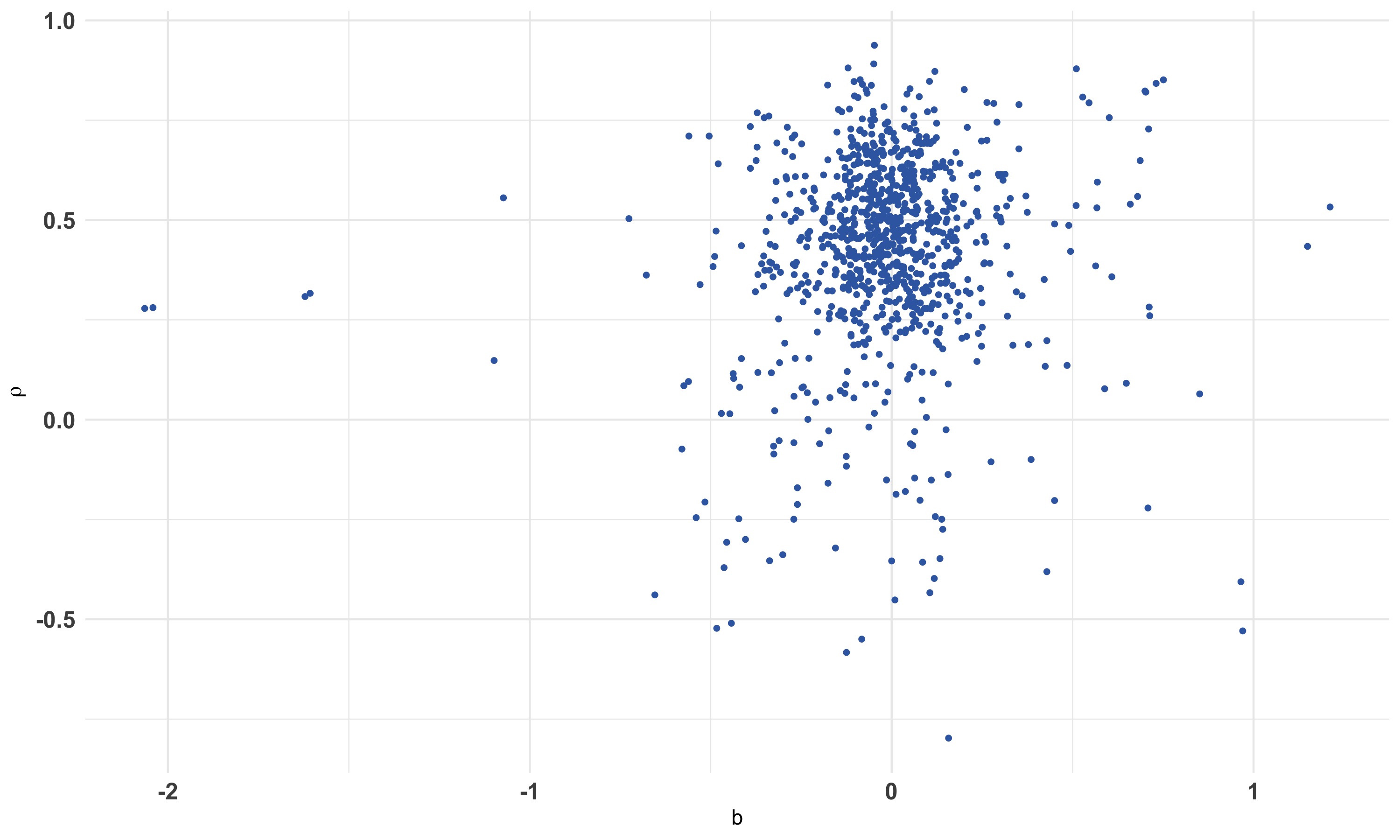}
        \subcaption{Pair $(\hat b_i^{\mathrm{EB}},\hat \rho_i^{\mathrm{EB}})$.}
    \end{subfigure}
    \begin{subfigure}[b]{0.45\textwidth}
        \centering
        \includegraphics[width=0.99\linewidth,keepaspectratio]{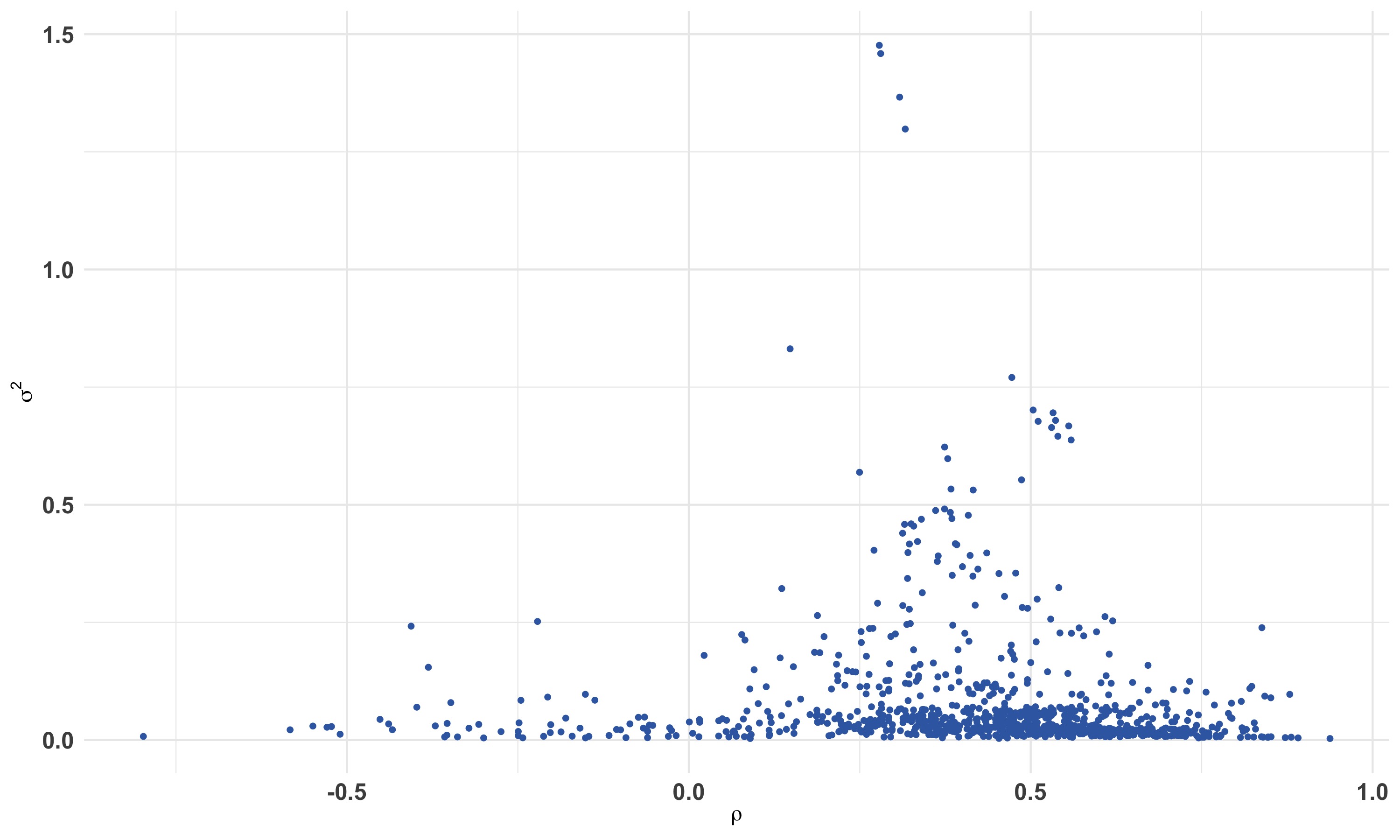}
        \subcaption{Pair $(\hat \rho_i^{\mathrm{EB}},\hat \sigma_i^{2,\mathrm{EB}})$.}
    \end{subfigure}
    \end{center}

    {\begin{center}
        \parbox{0.95\textwidth}{\footnotesize Note: Each panel shows the scatter plot of a pair of components of $(\hat{\theta}_i^{\mathrm{EB}})_{i=1}^N$.
        }
    \end{center}
    }
    }
\end{figure}
}

\section{Additional Empirical Results}\label{appendix:pairwise:EB}

Figure~\ref{fig:PSID:post:dist}  shows the pairwise scatter plots of the EB estimates for \(\theta_i\) computed from the sample of $N=938$ individuals, supporting the heterogeneity pattern reported in Table~\ref{tab:PSID:prior:moments}.\footnote{To be precise, by the law of total variance, $\var(\theta_i) = \var(\hat{\theta}_i^{\mathrm{EB}}) + \mathbb{E}[\var(\theta_i | (Y_{it},X_{2,it})_{t=1}^T)]$, where the second term represents the expected posterior variance, reflecting the average variance of the EB estimator.} Panel~(a) illustrates that individual intercepts ($\hat{a}_i^{\mathrm{EB}}$) and slopes ($\hat{b}_i^{\mathrm{EB}}$) are negatively correlated. Panels~(b) and (c) also reveal negative associations between $\hat a_i^{\mathrm{EB}}$ and $\hat \sigma_i^{2, \mathrm{EB}}$, and between $\hat b_i^{\mathrm{EB}}$ and $\hat \sigma_i^{2, \mathrm{EB}}$, respectively.  In the remaining panels, individual persistence levels appear only weakly correlated with the other individual-specific parameters.

%\bibliographystyle{harvard}
% \bibliographystyle{econometrica}
% \bibliography{EB.bib}

\end{document}